\begin{document}

\title{Efficient quantum tomography II}

\author{Ryan O'Donnell\thanks{Department of Computer Science, Carnegie Mellon University.  Supported by NSF grant CCF-1618679. \texttt{odonnell@cs.cmu.edu}}
\and
John Wright\thanks{Department of Physics, Massachusetts Institute of Technology. Most of this work performed while affiliated with the Department of Computer Science, Carnegie Mellon University. Partially supported by a Simons Fellowship in Theoretical Computer Science.
\texttt{jswright@mit.edu}}}

\maketitle

\begin{abstract}
  Following~\cite{OW16}, we continue our analysis of: (i) ``Quantum tomography'', i.e., learning a quantum state, i.e., the quantum generalization of learning a discrete probability distribution; (ii)~The distribution of Young diagrams output by the RSK algorithm on random words.  Regarding~(ii), we introduce two powerful new tools:
  \begin{itemize}
  	\item A precise upper bound on the expected length of the longest union of $k$ disjoint increasing subsequences in a random length-$n$ word with letter distribution $\alpha_1 \geq \alpha_2 \geq \cdots \geq \alpha_d$.  Our bound has the correct main term and second-order term, and holds for \emph{all}~$n$, not just in the large-$n$ limit.
     \item A new majorization property of the RSK algorithm that allows one to analyze the Young diagram formed by the \emph{lower} rows $\lambda_k, \lambda_{k+1}, \dots$ of its output.
  \end{itemize}
  These tools allow us to prove several new theorems concerning the distribution of random Young diagrams in the \emph{nonasymptotic} regime, giving concrete error bounds that are optimal, or nearly so, in all parameters. As one example, we give a fundamentally new proof of the celebrated fact that the expected length of the longest increasing sequence in a random length-$n$ permutation is bounded by $2\sqrt{n}$. This is the $k = 1$, $\alpha_i \equiv \frac1d$, $d \to \infty$ special case of a much more general result we prove: the expected length of the $k$th Young diagram row produced by an $\alpha$-random word is $\alpha_k n \pm 2\sqrt{\alpha_kd n}$.

  From our new analyses of random Young diagrams we derive several new results in quantum tomography, including:
  \begin{itemize}
  	\item Learning the eigenvalues of an unknown state to $\eps$-accuracy in Hellinger-squared, chi-squared, or KL distance, using $n = O(d^2/\eps)$ copies.
    \item Learning the top-$k$ eigenvalues of an unknown state to $\eps$-accuracy in Hellinger-squared or chi-squared distance using $n = O(kd/\eps)$ copies or in $\ell_2^2$ distance using $n = O(k/\eps)$ copies.
    \item Learning the optimal rank-$k$ approximation of an unknown state to $\eps$-fidelity (Hellinger-squared distance) using $n = \wt{O}(kd/\eps)$ copies.
   \end{itemize}
   We believe our new techniques will lead to further advances in quantum learning; indeed, they have already subsequently been used for efficient von Neumann entropy estimation.
\end{abstract}

\setcounter{page}{0}
\thispagestyle{empty}
\newpage

\section{Introduction}

The \emph{Robinson--Schensted--Knuth (RSK) algorithm}
is a well-known combinatorial algorithm
with diverse applications throughout mathematics, computer science, and physics.
Given a word~$w$ with~$n$ letters from the alphabet~$[d]$,
it outputs two semistandard Young tableaus $(P,Q) = \rsk{w}$
with common shape given by some Young diagram~$\lambda \in \N^d$ ($\lambda_1 \geq \cdots \geq \lambda_d$).  We write $\lambda = \shRSK{w}$, and mention that $\lambda$ can be defined independently of the RSK algorithm as in Theorem~\ref{thm:grn} below.
In the RSK algorithm, the process generating the first row is sometimes called \emph{patience sorting},
and it is equivalent to the basic dynamic program for computing~$w$'s longest (weakly) increasing subsequence.
\begin{definition}
Given a word $w \in [d]^n$, a \emph{subsequence}
is a sequence of letters $(w_{i_1}, \ldots, w_{i_\ell})$
such that $i_1 < \cdots < i_\ell$.
The \emph{length} of the subsequence is~$\ell$.
We say that the subsequence is \emph{weakly increasing},
or just \emph{increasing}, if $w_{i_1} \leq \cdots \leq w_{i_\ell}$.
We write $\lis{w}$ for the length of the longest weakly increasing subsequence in~$w$.
\end{definition}
\noindent
Hence~$\lambda_1 = \lis{w}$,
a result known as Schensted's Theorem~\cite{Sch61}.
Further rows of~$\lambda$ are characterized by Greene's Theorem
as giving the ``higher order LIS statistics" of~$w$.
\begin{theorem}[\cite{Gre74}] \label{thm:grn}
Suppose $\lambda = \shRSK{w}$.  Then for each $k$,
$\lambda_1 + \cdots +\lambda_k$
is equal to the length of the longest union of~$k$ disjoint (weakly) increasing subsequences in~$w$.
\end{theorem}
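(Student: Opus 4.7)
The plan is to use the classical theory of Knuth equivalence. Define $L_k(w)$ as the length of the longest union of $k$ disjoint weakly increasing subsequences of $w$; the goal is to prove that $L_k(w) = \lambda_1 + \cdots + \lambda_k$ where $\lambda = \shRSK{w}$. The proof proceeds in three steps: (i) $L_k$ is invariant under Knuth moves; (ii) every word $w$ is Knuth-equivalent to the reading word of its insertion tableau $P(w)$; (iii) for a reading word, $L_k$ is directly computable and equals $\lambda_1 + \cdots + \lambda_k$.

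For Step (i), one verifies that the elementary Knuth moves (local swaps of three adjacent letters, which by definition preserve the insertion tableau) also preserve $L_k$. Given an optimal family of $k$ disjoint weakly increasing subsequences, I would enumerate how the three swapped letters are distributed among these subsequences and show, by a short case analysis, that the family can be locally modified to remain valid and of at least equal length after the swap. Step~(ii) is standard: induct on $|w|$, tracking how the bumping procedure of RSK realizes the reading word of $P(w)$ as a Knuth-equivalent representative of $w$.

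For Step (iii), let $w_\lambda$ denote the reading word of a semistandard tableau of shape $\lambda$, obtained by concatenating the rows from bottom to top, each read left to right. The top $k$ rows, read in reading order, form $k$ disjoint weakly increasing subsequences of total length $\lambda_1 + \cdots + \lambda_k$, giving the lower bound. For the upper bound, observe that the entries of column $j$ of the tableau appear consecutively in $w_\lambda$ in strictly decreasing order (columns are strictly increasing top-to-bottom, and the reading word traverses them bottom-to-top). Any weakly increasing subsequence of $w_\lambda$ therefore picks at most one entry per column, so $k$ disjoint weakly increasing subsequences together contribute at most $\sum_j \min(\lambda'_j, k) = \lambda_1 + \cdots + \lambda_k$, where $\lambda'$ is the conjugate partition and the equality counts cells of the top $k$ rows column-by-column.

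The main obstacle is the case analysis in Step~(i). Although the Knuth move permutes only three letters, the surrounding subsequences constrain the possible rearrangements, and one must enumerate how the three letters are distributed across the $k$ chosen subsequences, performing a local repair in each case. Steps~(ii) and~(iii) are comparatively routine; Step~(iii) in particular requires only the elementary partition identity $\sum_j \min(\lambda'_j, k) = \lambda_1 + \cdots + \lambda_k$ together with the strictly-increasing column property of semistandard tableaux.
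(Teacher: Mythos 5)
The paper does not prove this statement; it is Greene's 1974 theorem, cited as a classical result and used as a black box throughout. So there is no proof in the paper to compare against, and I assess your proposal on its own terms.

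Your route via Knuth equivalence is the standard textbook proof (it appears, for instance, in Sagan's \emph{The Symmetric Group} and in appendices to Fulton's \emph{Young Tableaux}), and it is essentially sound. Two remarks. First, a slip in Step~(iii): the entries of column~$j$ do \emph{not} appear ``consecutively'' in the reading word --- the reading word is read row by row (bottom to top), so a given column's entries are interleaved with entries from other columns. What is true, and what your argument actually needs, is only that as one scans the reading word from left to right the entries of any fixed column appear in strictly decreasing order (larger row index comes first, and columns are strictly increasing downward), which already forces any weakly increasing subsequence to take at most one entry per column; the rest of your counting, including the identity $\sum_j \min(\lambda'_j, k) = \lambda_1 + \cdots + \lambda_k$, is fine. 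Second, you correctly flag Step~(i) as the crux. The case analysis does close, but the characteristic difficulty is not a purely local repair: in the move $xzy \to zxy$ (with $x \le y < z$), if one chain uses both $x$ and $z$ while a second chain uses $y$, you must swap the two chains' \emph{tails} --- e.g., send the first chain to $(\ldots,a,x,d,\ldots)$ and the second to $(\ldots,c,y,z,b,\ldots)$ --- rather than just permuting which chain owns which of the three letters. Your writeup should make that tail-exchange explicit, since ``the family can be locally modified'' undersells the fact that the surgery reaches into the suffixes of two chains simultaneously.
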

\noindent
For background on the RSK algorithm, see e.g.~\cite{Ful97,Rom14} and the references therein.

Many applications involve studying the behavior of the RSK algorithm
when its input is drawn from some random distribution.
A famous case is the uniform distribution over length-$n$ permutations $\bpi \sim S_n$ (in which case $d = n$); here the resulting random Young diagram~$\blambda = \rsk{\bpi}$ is said to have \emph{Plancherel distribution}.
Starting with the work of Ulam~\cite{Ula61},
a line of research has studied the distribution of the longest increasing subsequence of~$\bpi$;
its results are summarized as follows: $\E[\lis{\bpi}] \rightarrow 2\sqrt{n}$ as $n \rightarrow \infty$~\cite{LS77,VK77}
(in fact, $\E[\lis{\bpi}] \leq 2\sqrt{n}$ for all~$n$~\cite{VK85,Pil90}), and the deviations of $\lis{\bpi}$ from this value
can be characterized by the Tracy--Widom distribution from random matrix theory~\cite{BDJ99}.
The RSK algorithm has played a central role in many of these developments,
and these results have been shown to apply not just to the first row $\blambda_1 = \lis{\bpi}$
but also to the entire shape of $\blambda$~\cite{Joh01,BOO00}.
In a different stream of research,
the Plancherel distribution arises naturally in quantum algorithms
which perform Fourier sampling over the symmetric group.
Here, its properties have been used to show that
any quantum algorithm for graph isomorphism
(or, more generally,  the hidden subgroup problem on the symmetric group)
which uses the ``standard approach"
must perform highly entangled measurements across many copies of the coset state~\cite{HRT03,MRS08,HMR+10}.

In this work, we consider a more general setting,
sometimes called the \emph{inhomogeneous random word model},
in which the input to the RSK algorithm
is a random word~$\bw$ whose letters are selected independently from some probability distribution.
\begin{definition}
Given a probability distribution $\alpha = (\alpha_1, \ldots, \alpha_d)$ on alphabet~$[d]$,
an \emph{$n$-letter \mbox{$\alpha$-random} word} $\bw = (\bw_1, \ldots, \bw_n)$,
written as $\bw \sim \alpha^{\otimes n}$,
is a random word in which each letter~$\bw_i$ is independently drawn from~$[d]$ according to~$\alpha$.
The \emph{Schur--Weyl distribution} $\SW{n}{\alpha}$
is the distribution on Young diagrams given by $\blambda = \shRSK{\bw}$.  Although it is not obvious, it is a fact that the distribution $\SW{n}{\alpha}$ does not depend on the ordering of $\alpha$'s components.  Thus unless otherwise stated, we will assume that $\alpha$ is sorted; i.e.,\ $\alpha_1 \geq \cdots \geq \alpha_d$.
\end{definition}
\noindent
(The \emph{homogeneous} random word model is the special case in which $\alpha_i = \frac{1}{d}$,
for each $i \in [d]$.  It is easy to see that in this case,
$\SW{n}{\alpha}$ converges to the Plancherel distribution as $d \rightarrow \infty$.)
Aside from arising naturally in combinatorics and representation theory,
the Schur--Weyl distribution also appears
in a large number of problems in quantum learning and data processing,
as we will see below.

Much of the prior work on the Schur--Weyl distribution has occurred in the \emph{asymptotic} regime,
in which~$d$ and~$\alpha$ are held constant and $n \rightarrow \infty$.
An easy exercise in Chernoff bounds shows that $\lis{\bw}/n\rightarrow \alpha_1$
as $n \rightarrow \infty$.
Generalizing this,
a sequence of works~\cite{TW01,Joh01,ITW01,HX13,Mel12}
have shown that in this regime, $\blambda$
is equal to $(\alpha_1 n,\ldots, \alpha_d n)$ plus some lower-order fluctuations
distributed as the eigenvalues of certain random matrix ensembles.
From these works, we may extract the following ansatz,
coarsely describing the limiting behavior of the rows of~$\blambda$.
\begin{center}
\textbf{Ansatz:} For all $k \in [d]$, $\blambda_k \approx \alpha_k n \pm 2\sqrt{\alpha_k d_k n}$.
\end{center}
Here $d_k$ is the number of times $\alpha_k$ occurs in $(\alpha_1, \ldots, \alpha_d)$.
We survey this literature below in Section~\ref{sec:asymptotic}.

\subsection{A nonasymptotic theory of the Schur--Weyl distribution}\label{sec:nonasymptotic}

In this work, motivated by problems in quantum state learning,
we study the Schur--Weyl distribution in the \emph{nonasymptotic} regime.
Previous efforts in this direction were the works~\cite{HM02,CM06}
and, more extensively, our previous paper~\cite{OW16}.
Our goal is to prove worst-case bounds on the shape of~$\blambda$
which hold for all~$n$, independent of~$d$ and~$\alpha$.
When possible, we would like to translate certain features of the
Schur--Weyl distribution
present in the asymptotic regime --- in particular, the ansatz and its consequences --- down into the nonasymptotic regime.

Clearly, nonasymptotic results cannot depend on the quantity $d_k$,
which can be sensitive to arbitrarily small changes in~$\alpha$
that are undetectable when~$n$ is small.
(Consider especially when~$\alpha$ is uniform versus when~$\alpha$ is uniform but with each entry slightly perturbed.)
Instead, our results are in terms of the quantity
$\min\{1, \alpha_k d\}$, for each $k \in [d]$,
which always upper bounds $\alpha_k d_k$.

Our first result tightly bounds the expected row lengths, in line with the ansatz.
\begin{theorem}\label{thm:row-mean}
For $k \in [d]$, set $\nu_k = \min\{1,\alpha_k d\}$.  Then
\begin{equation*}
\alpha_k n - 2\sqrt{\nu_k n} \leq \E_{\blambda \sim \SW{n}{\alpha}} \blambda_k \leq \alpha_k n + 2\sqrt{\nu_k n},
\end{equation*}
\end{theorem}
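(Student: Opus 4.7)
The plan is to reduce to longest-union-of-increasing-subsequences quantities via Greene's Theorem and then telescope. Set $L_j = \blambda_1 + \cdots + \blambda_j$, which by Theorem~\ref{thm:grn} is the length of the longest union of $j$ disjoint weakly increasing subsequences of $\bw$, so that $\blambda_k = L_k - L_{k-1}$. A deterministic lower bound is immediate: taking the $j$ subsequences to be the positions of letters $1, 2, \ldots, j$ (each trivially weakly increasing) gives $L_j \geq N_1 + \cdots + N_j$, where $N_i$ counts occurrences of letter $i$ in $\bw$; taking expectations, $\E[L_j] \geq (\alpha_1 + \cdots + \alpha_j)\, n$.

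For the matching upper direction I would invoke the main new tool announced in the first bullet of the abstract, namely the precise bound
\begin{equation*}
\E[L_k] \leq (\alpha_1 + \cdots + \alpha_k)\, n + 2\sqrt{\nu_k n},
\end{equation*}
treating it as a separate proposition to be proved elsewhere. Combined with the trivial lower bound on $\E[L_{k-1}]$ from the previous paragraph, this gives the upper half of the theorem at once:
\begin{equation*}
\E[\blambda_k] = \E[L_k] - \E[L_{k-1}] \leq \alpha_k n + 2\sqrt{\nu_k n}.
\end{equation*}

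The lower half is where I expect the real work to lie, and it is the main obstacle. The naive approach---combining the trivial $\E[L_k] \geq (\alpha_1 + \cdots + \alpha_k)\, n$ with the tool's upper bound applied at $k-1$---delivers only $\alpha_k n - 2\sqrt{\nu_{k-1} n}$, which is strictly worse than what we want whenever $\alpha_{k-1} d \geq 1 > \alpha_k d$ (so that $\nu_{k-1} > \nu_k$). To close this gap I would appeal to the second new tool of the abstract: the majorization property for the lower rows $(\blambda_k, \blambda_{k+1}, \ldots)$, which relates them to the RSK shape of the random subword of $\bw$ on the restricted alphabet $\{k, k+1, \ldots, d\}$. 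That subword has expected length $(\alpha_k + \cdots + \alpha_d)\, n$ and letter distribution proportional to $(\alpha_k, \ldots, \alpha_d)$, so applying the first tool there involves only $\alpha_k$ as the top-letter frequency, producing the parameter $\nu_k$ rather than $\nu_{k-1}$. Unravelling this sub-problem bound back to the original word would yield $\E[\blambda_k] \geq \alpha_k n - 2\sqrt{\nu_k n}$.

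In summary, the upper bound of Theorem~\ref{thm:row-mean} is essentially a corollary of the first new tool applied at index $k$ plus the trivial lower bound on $L_{k-1}$, while the lower bound genuinely needs the second new tool: one must decouple $\blambda_k$ from the preceding rows to avoid the $\nu_{k-1}$ slack inherent in any naive telescoping argument, and the majorization property is exactly the mechanism that accomplishes this decoupling.
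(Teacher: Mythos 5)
Your high-level plan — telescoping, Greene's Theorem, and appealing to the two announced tools — has the right ingredients, but both halves of your argument have real problems.

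\textbf{The upper bound.} You take as given the bound $\E[L_k] \leq (\alpha_1+\cdots+\alpha_k)\,n + 2\sqrt{\nu_k n}$, but no such statement is true, nor does the paper claim it. The paper's Theorem~\ref{thm:ITW-intro} bounds $\E[L_k] - \sum_{i\leq k}\alpha_i n$ by $\ITW{k}{\alpha} = \sum_{i\leq k< j}\tfrac{\alpha_j}{\alpha_i-\alpha_j}$, which can be arbitrarily large (indeed infinite) when $\alpha_k\approx\alpha_{k+1}$ — it is not comparable to $2\sqrt{\nu_k n}$. And the bound you wrote down is genuinely false: for $\alpha$ uniform on $[d]$ with $d\gg k$, the Johansson/traceless-GUE asymptotics (Theorem~\ref{thm:homogeneous} plus Fact~\ref{fact:circle-law}) give $\E[L_k]-kn/d \approx 2k\sqrt{n}$, whereas your claim would require $\leq 2\sqrt{n}$. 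What the paper actually does is apply the lower-row majorization (Theorem~\ref{thm:last-rows-majorize}) to get $\blambda_k\leq\geqblambda{k}_1$, where $\geqw{k}$ is the subword of $\bw$ on the alphabet $\{k,\dots,d\}$, and then applies the \emph{first-row} upper bound $\E\bmu_1\leq\beta_1m+2\sqrt{m}$ (itself proved by perturbing $\alpha$ so that $\ITW{1}{\beta}\leq\sqrt{n}$ and invoking the coupling Theorem~\ref{thm:coupling}) to this restricted word; the Binomial fluctuation of the subword length then produces $2\sqrt{\geqsum{k}n}\leq 2\sqrt{\nu_k n}$. So the second tool is what powers the \emph{upper} bound.

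\textbf{The lower bound.} You propose to use the lower-row majorization to derive a \emph{lower} bound on $\E\blambda_k$, but the majorization points the wrong way for this: Theorem~\ref{thm:last-rows-majorize} says $\geqblambda{k}\unrhd_w\blambda[k{:}]$, i.e., $\geqblambda{k}_1\geq\blambda_k$ — an upper bound. In fact the lower bound in the paper does not need the new majorization theorem at all. It uses the elementary observation that $\leqlambda{k}$, the RSK shape of the subword of $\bw$ on letters $\{1,\dots,k\}$, is obtained from $\lambda$ by deleting boxes, so $\leqblambda{k}_k\leq\blambda_k$ coordinate-wise; then one applies the \emph{last-row} lower bound $\E\bmu_d\geq\beta_d m-2\sqrt{\beta_d d\,m}$ (again via the perturb-then-couple argument on $\ITW{d-1}{\beta}$) to the restricted word on a $k$-letter alphabet, yielding $\alpha_k n-2\sqrt{\alpha_k k\,n}\geq\alpha_k n-2\sqrt{\nu_k n}$.

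In short, you have the two tools assigned to the wrong halves of the inequality, the intermediate bound you invoke for the upper half is quantitatively false, and the elementary coordinate-wise comparison to the small-alphabet subword (which makes the lower bound go through without any new majorization theorem) is missing.
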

This improves on a result from~\cite{OW16}, which showed an upper bound in the $k=1$ case with error $+2\sqrt{2}\sqrt{n}$ for general $\alpha$ and with error $+2\sqrt{n}$ for $\alpha$ the uniform distribution.
Setting $\alpha = (\tfrac{1}{d}, \ldots \tfrac{1}{d})$ and letting $d \rightarrow \infty$,
the $k=1$ case of
Theorem~\ref{thm:row-mean} recovers the above-mentioned celebrated fact that
the length of the longest increasing subsequence of a random permutation of~$n$ is at most~$2\sqrt{n}$
in expectation.
Our result gives only the second proof of this statement
since it was originally proved independently by Vershik and Kerov in 1985~\cite{VK85}
and by Pilpel in 1990~\cite{Pil90}.
Next, we bound the mean-squared error of the estimator $\blambda_k/n$ for~$\alpha_k$.
\begin{theorem}\label{thm:mean-squared-intro}
For $k \in [d]$, set $\nu_k = \min\{1,\alpha_k d\}$.  Then
\begin{equation*}
\E_{\blambda \sim \SW{n}{\alpha}} (\blambda_k - \alpha_k n)^2 \leq O(\nu_k n).
\end{equation*}
\end{theorem}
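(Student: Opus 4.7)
My plan is to begin with the bias--variance decomposition
$$\E_{\blambda \sim \SW{n}{\alpha}}[(\blambda_k - \alpha_k n)^2] \;=\; \E\bigl[(\blambda_k - \E\blambda_k)^2\bigr] \;+\; (\E \blambda_k - \alpha_k n)^2,$$
under which the bias term is immediately at most $(2\sqrt{\nu_k n})^2 = 4\nu_k n$ by Theorem~\ref{thm:row-mean}. The task thus reduces to bounding the variance $\E[(\blambda_k - \E\blambda_k)^2] = O(\nu_k n)$.

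By Greene's theorem (Theorem~\ref{thm:grn}), $\blambda_k = L_k(\bw) - L_{k-1}(\bw)$, where $L_j$ is the length of the longest union of $j$ disjoint weakly increasing subsequences of~$\bw$. It is a standard property of RSK that changing one letter of~$\bw$ deletes one cell from the output Young diagram and inserts another, so every row $\blambda_r$ moves by at most~$1$ under any single-letter resampling. Efron--Stein then gives the bounded-differences bound $\E[(\blambda_k - \E\blambda_k)^2] \leq n/2$. This already settles the theorem whenever $\nu_k = \Theta(1)$; in particular it handles every $k = 1$ case, because $\nu_1 = \min(1, \alpha_1 d) = 1$ always (since $\alpha_1 \geq 1/d$ by sortedness).

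The remaining (main) work is to close the gap in the regime $\nu_k = \alpha_k d \ll 1$, which necessarily has $k \geq 2$. Here I would invoke the paper's new majorization property for the lower rows of RSK to compare $(\blambda_k, \blambda_{k+1}, \ldots)$ with the RSK shape of an auxiliary random process supported on the tail alphabet $\{k, k+1, \ldots, d\}$ with rescaled probabilities proportional to $(\alpha_k, \alpha_{k+1}, \ldots, \alpha_d)$. Feeding the sharp first-moment bound of Theorem~\ref{thm:row-mean} through this comparison should yield a self-bounding-type inequality of roughly the form $\E[(\blambda_k - \E\blambda_k)^2] \leq O(\E \blambda_k) + O(\nu_k n)$; since $\alpha_k \leq \nu_k$ (true in both branches of the $\min$ defining $\nu_k$) and $\E \blambda_k \leq \alpha_k n + 2\sqrt{\nu_k n}$ by Theorem~\ref{thm:row-mean}, this delivers the target $O(\nu_k n)$.

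The main obstacle I anticipate is this reduction step: the majorization property has to be strong enough to transfer a \emph{second}-moment bound (not merely the first-moment bound of Theorem~\ref{thm:row-mean}) across to the auxiliary tail process, and the randomness in the ``surviving'' length $n - L_{k-1}(\bw)$ must be handled without losing factors that would swamp the $O(\nu_k n)$ target. Once this reduction is in hand, the rest is bookkeeping via Efron--Stein.
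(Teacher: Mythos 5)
Your plan correctly identifies the easy part (the bias term via Theorem~\ref{thm:row-mean} and a crude $O(n)$ bound on the variance) and the hard part (improving $O(n)$ to $O(\nu_k n)$ when $\nu_k \ll 1$), but the sketch for the hard part contains a genuine gap and also misses a key structural feature of the paper's argument. The lower-row majorization theorem (Theorems~\ref{thm:catan-intro}, \ref{thm:last-rows-majorize}) only supplies the \emph{upper} bound $\blambda_k \leq \geqblambda{k}_1$, which controls the positive deviation $\blambda_k - \alpha_k n > 0$. It says nothing about how small $\blambda_k$ can get. The variance (or $\E[(\blambda_k - \alpha_k n)^2]$) charges both positive and negative deviations, and the negative side needs an entirely separate reduction: the paper uses the elementary fact that $\blambda_k \geq \leqblambda{k}_k$, where $\leqw{k}$ is the restriction of the word to letters $\leq k$, not the tail alphabet. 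Your proposal only discusses the tail alphabet $\{k, k+1, \dots, d\}$, so the lower-deviation side is simply missing. Moreover, the ``self-bounding-type inequality'' you hope for is not what the paper proves, nor is it clear how to derive it from the stated ingredients.

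For comparison, the paper's Theorem~\ref{thm:mean-squared} does \emph{not} go through a bias--variance split on $\blambda_k$. Instead it conditions on the event $\calG = \{\blambda_k \geq \alpha_k n\}$: on $\calG$ it upper-bounds by $\E[(\geqblambda{k}_1 - \alpha_k n)^2]$, on $\bar\calG$ by $\E[(\leqblambda{k}_k - \alpha_k n)^2]$ (with a further mixture trick from~\cite{OW16} to reduce the latter to a uniform sub-alphabet). In each branch it then decomposes around the conditional mean given the random sub-word length $\tilde n$, applies the Azuma-based variance bound $\Var \leq 16\,\tilde n$ (Proposition~\ref{prop:azuma}), and controls the conditional bias by the one-sided first-moment bounds (Theorems~\ref{thm:row-one-upper} and~\ref{thm:row-d-lower}). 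The point is that the sub-word length satisfies $\E[\tilde n] = O(\nu_k n)$, which is what upgrades the crude $O(n)$ variance to $O(\nu_k n)$. Your sketch gestures at the right direction (``reduce to a sub-alphabet of total mass $O(\nu_k)$'') but does not contain this mechanism, and as written the plan cannot be completed.

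One smaller factual error: the single-letter Lipschitz constant for an individual row $\blambda_k$ is $2$, not $1$; see Proposition~\ref{prop:lipschitz} and the remark following it (example: $\shRSK{232122} = (4,1,1)$ versus $\shRSK{233122} = (3,3,0)$, where $\lambda_2$ changes by $2$). Changing one letter of $w$ preserves $|\lambda|$, but the resulting shape change is not necessarily a single cell move. This doesn't affect the $O(n)$ Efron--Stein bound, but the claim as stated is false.
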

Again, this is in line with the ansatz.
This theorem can be used to derive tight
bounds (up to constant factors)
on the convergence of the \emph{normalized Young diagram}~$\underline{\blambda} = (\blambda_1/n, \ldots, \blambda_d/n)$ to~$\alpha$
in a variety of distance measures, including Hellinger-squared distance
and the KL and chi-squared divergences.
Now in fact, using related techniques, in~\cite{OW16} we were able 
to prove convergence bounds for some distance measures with stronger constants:
\begin{theorem}[\cite{OW16}]\label{thm:lp-norm}
$\displaystyle \E_{\blambda \sim \SW{n}{\alpha}} \Vert \underline{\blambda} - \alpha \Vert_2^2 \leq \frac{d}{n}$
~and~
$\displaystyle \E_{\blambda \sim \SW{n}{\alpha}} \Vert \underline{\blambda} -\alpha \Vert_1 \leq \frac{d}{\sqrt{n}}$.
\end{theorem}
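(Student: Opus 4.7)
The plan is to reduce the $\ell_1$ inequality to the $\ell_2^2$ inequality and then to prove the latter by a coordinate-wise second-moment analysis anchored on Theorem~\ref{thm:mean-squared-intro}. The reduction is immediate from Cauchy--Schwarz: for any $v \in \mathbb{R}^d$, $\|v\|_1 \leq \sqrt{d}\,\|v\|_2$, and applying this with $v = \underline{\blambda} - \alpha$ followed by Jensen's inequality yields
\[
\E_{\blambda \sim \SW{n}{\alpha}} \|\underline{\blambda} - \alpha\|_1 \;\leq\; \sqrt{d}\cdot\sqrt{\E_{\blambda \sim \SW{n}{\alpha}} \|\underline{\blambda} - \alpha\|_2^2} \;\leq\; \sqrt{d}\cdot\sqrt{d/n} \;=\; d/\sqrt{n},
\]
so the second bound in Theorem~\ref{thm:lp-norm} follows from the first.

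For the $\ell_2^2$ bound, I would expand coordinate-wise,
\[
\E \|\underline{\blambda} - \alpha\|_2^2 \;=\; \frac{1}{n^2}\sum_{k=1}^{d} \E\bigl[(\blambda_k - \alpha_k n)^2\bigr],
\]
and then invoke Theorem~\ref{thm:mean-squared-intro}, which gives $\E[(\blambda_k - \alpha_k n)^2] \leq O(\nu_k n)$ with $\nu_k = \min\{1,\alpha_k d\}$. Since $\nu_k \leq 1$ for every $k$ we have $\sum_{k=1}^d \nu_k \leq d$, so the right-hand side is at most $O(dn)$, yielding $\E\|\underline{\blambda} - \alpha\|_2^2 \leq O(d/n)$. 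At the level of the \emph{shape} of the bound, this already captures Theorem~\ref{thm:lp-norm}.

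The hard part will be tightening the constant from $O(1)$ down to exactly $1$, as the statement insists. Summing the per-coordinate bounds from Theorem~\ref{thm:mean-squared-intro} slackens by a multiplicative constant, and recovering it essentially requires a single global identity for $\E \sum_k \blambda_k^2$ rather than a term-by-term estimate. The natural route is representation-theoretic: writing $\rho = \mathrm{diag}(\alpha)$, the distribution $\SW{n}{\alpha}$ arises by measuring $\rho^{\otimes n}$ in the Schur basis, and $\E \sum_k \blambda_k^2$ can be expressed as the trace of a specific symmetric-group-central operator on $(\mathbb{C}^d)^{\otimes n}$ and thus evaluated as a symmetric polynomial in $\alpha$. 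Combining such an exact identity with the sharp mean bound from Theorem~\ref{thm:row-mean} should collapse to give the clean constant $d$; executing this calculation and keeping track of the leading terms is where the technical difficulty lies.
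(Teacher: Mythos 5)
Your reduction from the $\ell_1$ bound to the $\ell_2^2$ bound via Cauchy--Schwarz and Jensen is correct, and your derivation of the weaker $O(d/n)$ bound from Theorem~\ref{thm:mean-squared-intro} is also fine.  (Worth noting as a structural matter: Theorem~\ref{thm:lp-norm} is attributed to~\cite{OW16} and is not reproved in this paper, so the reference proof lives in the earlier work; the paper does, however, invoke the equivalent Lemma~3.1 of~\cite{OW16} when remarking that it is the $k=1$ case of Theorem~\ref{thm:truncated-squares}.)

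The genuine gap is in your plan for recovering the exact constant~$d$.  Expanding
\[
\E\Vert\underline{\blambda}-\alpha\Vert_2^2
= \frac{1}{n^2}\Bigl(\E\textstyle\sum_i\blambda_i^2 \;-\; 2n\sum_i\alpha_i\,\E\blambda_i \;+\; n^2\sum_i\alpha_i^2\Bigr),
\]
and feeding in the sharp second-moment bound $\E\sum_i\blambda_i^2 \leq n^2\sum_i\alpha_i^2 + dn$ (Lemma~3.1 of~\cite{OW16}, equivalently $k=1$ of Theorem~\ref{thm:truncated-squares}), you see that obtaining $d/n$ requires the cross term to satisfy $\E\sum_i\alpha_i\blambda_i \geq n\sum_i\alpha_i^2$ with \emph{no slack at all}.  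Theorem~\ref{thm:row-mean}, which you propose to use here, only gives $\E\blambda_i \geq \alpha_i n - 2\sqrt{\nu_i n}$, and summing those errors produces an additive term of order $n^{3/2}$, i.e.\ an error of order $1/\sqrt{n}$ after dividing by $n^2$ --- this swamps $d/n$ for $n\gtrsim d^2$.  So that combination cannot yield the clean constant.  The ingredient that actually closes the gap is much simpler and exact: since $\alpha$ is sorted, Abel summation gives
\[
\sum_i\alpha_i(\blambda_i-\alpha_i n)
= \sum_{k=1}^{d-1}(\alpha_k-\alpha_{k+1})\sum_{i=1}^k(\blambda_i-\alpha_i n),
\]
and each inner sum has nonnegative expectation ($\Exc{n}{k}{\alpha}\geq 0$), because $\blambda$ majorizes the unsorted histogram $\bh$ by Greene's Theorem and $\E\bh_i=\alpha_i n$.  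Hence $\E\sum_i\alpha_i\blambda_i \geq n\sum_i\alpha_i^2$ exactly, the cross term contributes nothing bad, and $\E\Vert\underline{\blambda}-\alpha\Vert_2^2 \leq d/n$ follows.  Your representation-theoretic idea for computing $\E\sum\blambda_i^2$ would indeed supply the needed second-moment control, but Theorem~\ref{thm:row-mean} is the wrong tool for the cross term --- majorization plus Abel summation is what you want there.
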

\noindent
In this work, we extend Theorem~\ref{thm:lp-norm} to other, more challenging distance measures.
\begin{theorem}\label{thm:chi-squared}
Let $d(\cdot,\cdot)$ be any of $\dhellsq{\cdot}{\cdot}$, $\dkl{\cdot}{\cdot}$, or $\dchisq{\cdot}{\cdot}$.  Then
$\displaystyle\E_{\blambda \sim\SW{n}{\alpha}} d(\underline{\blambda},\alpha) \leq \frac{d^2}{n}$.
\end{theorem}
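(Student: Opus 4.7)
The plan is to reduce all three divergences to the chi-squared case and then apply the per-row mean-squared bound in Theorem~\ref{thm:mean-squared-intro}. The standard pointwise inequalities $2\,\dhellsq{p}{q} \leq \dchisq{p}{q}$ and $\dkl{p}{q} \leq \dchisq{p}{q}$ --- the first from $(\sqrt{p_i}+\sqrt{q_i})^2 \geq q_i$, which gives $(\sqrt{p_i}-\sqrt{q_i})^2 \leq (p_i-q_i)^2/q_i$, and the second from $x\log x \leq x^2-x$ applied with $x_i = p_i/q_i$ and summed against $q$ --- reduce the task to showing $\E\,\dchisq{\underline{\blambda}}{\alpha} = O(d^2/n)$.

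Expanding chi-squared term by term and invoking Theorem~\ref{thm:mean-squared-intro},
\[
\E_{\blambda \sim \SW{n}{\alpha}} \dchisq{\underline{\blambda}}{\alpha} \;=\; \sum_{k\,:\,\alpha_k > 0} \frac{\E\,(\blambda_k - \alpha_k n)^2}{\alpha_k\,n^2} \;\leq\; \sum_{k\,:\,\alpha_k > 0} \frac{O(\nu_k)}{\alpha_k\, n}.
\]
Since $\nu_k/\alpha_k = \min(1/\alpha_k,\, d) \leq d$ and the sum ranges over at most $d$ indices, this is $O(d^2/n)$, as desired. Indices with $\alpha_k = 0$ can be dropped because they contribute nothing: since $\alpha$ is sorted, $\alpha_k = 0$ forces $\bw$ to draw only from letters $[k-1]$, and partitioning $\bw$ by letter value exhibits $k-1$ weakly increasing subsequences covering $\bw$ in full, so by Greene's theorem (Theorem~\ref{thm:grn}) $\blambda_j = 0$ for all $j \geq k$.

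What makes Theorem~\ref{thm:mean-squared-intro} dovetail so cleanly with chi-squared is the scaling $\E\,(\blambda_k - \alpha_k n)^2 = O(\nu_k n) \leq O(\alpha_k d\, n)$: the factor of $\alpha_k$ in the variance bound cancels against the $1/\alpha_k$ weighting in chi-squared to yield $O(d/n)$ per row, uniformly in how small $\alpha_k$ is. The result is tight up to constants already on uniform $\alpha$, where each of the $d$ rows contributes $\Theta(d/n)$ and the total is $\Theta(d^2/n)$. The main obstacle here is therefore not the present corollary but Theorem~\ref{thm:mean-squared-intro} itself; once that per-row variance bound is in hand, the present theorem is a couple of lines of algebra plus two standard information-theoretic inequalities.
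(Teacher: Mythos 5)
Your reduction of Hellinger and KL to chi-squared is the same as the paper's (Proposition~\ref{prop:comparing-distances}), but the remaining step---bounding $\E\,\dchi{\underline{\blambda}}{\alpha}$---takes a genuinely different, and lossier, route than the paper, and this matters: the theorem asserts $\leq d^2/n$ with constant exactly~$1$, whereas your per-row approach via Theorem~\ref{thm:mean-squared-intro} only delivers $O(d^2/n)$. The big-$O$ constant in Theorem~\ref{thm:mean-squared-intro} is not small (the proof routes through Azuma's inequality in Proposition~\ref{prop:azuma} and several applications of $(x+y)^2 \le 2x^2+2y^2$, giving a constant of roughly~$84$ after converting to $\nu_k$). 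The paper instead proves Theorem~\ref{thm:chi-squared-spec} by Abel summation: it writes $\sum_i \blambda_i^2/\alpha_i = \sum_i (\alpha_i^{-1}-\alpha_{i-1}^{-1})\sum_{j\geq i}\blambda_j^2$ and applies Theorem~\ref{thm:truncated-squares}, the sharp partial-sum-of-squares bound $\E\sum_{j\geq i}\blambda_j^2 \leq \sum_{j\geq i}(\alpha_j n)^2 + d\,\alpha_{[\geq i]}n$, which has no hidden constants; swapping the order of summation then gives $n^2 + d^2 n$ exactly, hence $\E\,\dchi{\cdot}{\cdot} \leq d^2/n$. Your decomposition loses the telescoping structure that makes the constant come out clean. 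So your proof establishes the right rate but not the stated inequality.

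One small over-claim: you assert $2\,\dhellsq{p}{q} \leq \dchisq{p}{q}$, but your own termwise argument ($(\sqrt{p_i}+\sqrt{q_i})^2 \geq q_i$) only yields $\dhellsq{p}{q} \leq \dchisq{p}{q}$, and the factor-of-$2$ version is in fact false (take $p=(0,1)$, $q=(\tfrac12,\tfrac12)$: $2\dhellsq{p}{q} \approx 1.17 > 1 = \dchisq{p}{q}$). Since the factor-$1$ version is all you need, this is harmless, but the claim should be corrected. The handling of indices with $\alpha_k=0$ via Greene's theorem is correct and a nice touch, though the paper sidesteps this by summing only over indices $j\leq d$ after the change of summation order.
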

\noindent
Not only are Theorems~\ref{thm:lp-norm} and~\ref{thm:chi-squared} in line with the ansatz, they even have the correct constant factors, as predicted below by Theorem~\ref{thm:homogeneous}
in the asymptotic regime.

Finally, we show similar results for truncated distances,
in which only the top~$k$ entries of~$\blambda$
and the top~$k$ entries of~$\alpha$
are compared with each other.
In~\cite{OW16}, this was carried out for truncated~$\ell_1$ distance.
\begin{theorem}[\cite{OW16}]\label{thm:truncated-ellone}
$\displaystyle \E_{\blambda \sim \SW{n}{\alpha}} \dtvk{k}{\underline{\blambda}}{\alpha} \leq \frac{1.92 k + .5}{\sqrt{n}}$.
\end{theorem}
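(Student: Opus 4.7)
Assume $\alpha$ is sorted, so that $\underline{\blambda}$ and $\alpha$ are non-increasing probability vectors summing to~$1$. The truncated TV distance is then comparable to $\sum_{i\leq k}|\underline{\blambda}_i - \alpha_i|$, up to an extra ``boundary'' correction of $|\sum_{i \leq k}(\underline{\blambda}_i - \alpha_i)|$ coming from the mass that leaves the top-$k$ window. Writing $\bS_j := \blambda_1 + \cdots + \blambda_j$ and $A_j := (\alpha_1 + \cdots + \alpha_j)n$, the consecutive row differences $\underline{\blambda}_i - \alpha_i$ telescope into differences of the $\bS_j/n - A_j/n$, so the problem reduces to controlling $\E|\bS_j - A_j|$ for $j \leq k$, with the boundary term $\E|\bS_k - A_k|$ coming along for free.

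The upward direction on each $\bS_j$ follows from Theorem~\ref{thm:row-mean}: summing $\E[\blambda_i] \leq \alpha_i n + 2\sqrt{\nu_i n}$ over $i \leq j$ yields $\E[\bS_j] \leq A_j + 2\sqrt{n}\sum_{i \leq j}\sqrt{\nu_i}$. The matching downward direction is where Greene's theorem (Theorem~\ref{thm:grn}) enters crucially. The positions in $\bw$ of any single letter already form a weakly increasing subsequence, so the positions of letters $1, \ldots, j$ exhibit a union of $j$ disjoint weakly increasing subsequences of combined length $\bC_j := \bc_1 + \cdots + \bc_j$, where $\bc_i$ counts the occurrences of letter $i$ in $\bw$. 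By Theorem~\ref{thm:grn} this forces $\bS_j \geq \bC_j$ pointwise. Since $\bC_j$ is a binomial with mean $A_j \leq n$, its downward fluctuation is $O(\sqrt{A_j}) = O(\sqrt{n})$, giving $\E[(A_j - \bS_j)_+] = O(\sqrt{n})$.

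Telescoping the $k$ partial-sum differences, inserting these upper and lower bounds, and using $\nu_i \leq 1$ together with Cauchy--Schwarz already produces a bound of the correct order $O(k/\sqrt{n})$. The \textbf{main obstacle} will be extracting the sharp leading constant $1.92$ rather than the roughly $4$ delivered by the naive row-by-row triangle inequality. The cleaner route is to bypass absolute telescoping and instead upper-bound $\sum_{i \leq k}|\underline{\blambda}_i - \alpha_i|$ via a single peak quantity, such as $\max_{j \leq k}(\bS_j - A_j)_+/n$: under the Greene majorization $\bS_j \geq \bC_j$ one can control this peak by applying Theorem~\ref{thm:row-mean} only once (to $\bS_k$), which costs only a single factor of $2\sqrt{n}$ and so brings the coefficient down to roughly $2k$. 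The final shaving to $1.92k$ and the additive $0.5/\sqrt{n}$ should come from an integrality correction on the integer-valued $\bS_k$ and from exploiting $\sum_i \alpha_i \leq 1$ to save constants when the $\alpha_i$ decay quickly.
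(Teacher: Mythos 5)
This statement is imported verbatim from \cite{OW16}; the current paper does not prove it but simply cites it, remarking only that re-running the \cite{OW16} argument with the sharper row estimate of Theorem~\ref{thm:row-mean} would replace the constant $1.92$ by $1.5$. So there is no in-paper proof to compare your sketch against, and I will instead assess the sketch on its own terms.

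Your high-level ideas --- comparing $\seqhead{\blambda}{j}$ with $\seqhead{\alpha}{j}n$ for $j\leq k$, using Greene's Theorem to note $\seqhead{\blambda}{j}\geq \bC_j$ for the histogram partial sums $\bC_j$ (a binomial), and using a mean-excess bound for the upper direction --- are sound in outline and do give an $O(k/\sqrt n)$ bound. But the proposal has three problems. First, you invoke Theorem~\ref{thm:row-mean}, which is a contribution of \emph{this} paper, to prove a theorem of the \emph{prior} paper; as the paper itself points out, doing so would give constant $1.5$, not $1.92$. The $1.92$ arises from the weaker \cite{OW16} row bound $\alpha_1 n + 2\sqrt{2n}$ (note $2\sqrt{2}\approx 2.83$). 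So the stated constant cannot emerge from your chosen ingredient, and your concluding suggestion that additional ``shaving'' produces $1.92$ is going in the wrong direction. Second, the ``cleaner route'' of bounding $\sum_{i\leq k}(\underline{\blambda}_i - \alpha_i)_+$ by the single peak $\max_{j\leq k}(\seqhead{\ul{\blambda}}{j}-\seqhead{\alpha}{j})_+$ is not valid: for two non-increasing sequences $a,b$ the identity $\sum_i(a_i-b_i)_+ = \max_j\sum_{i\leq j}(a_i-b_i)$ fails in general (e.g.\ $a=(0.4,0.4,0.2)$, $b=(0.6,0.2,0.2)$), so this step would need a genuine argument specific to the Schur--Weyl structure, which you do not supply. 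Third, the crucial constants $1.92$ and the additive $0.5/\sqrt n$ are never actually derived; they are attributed vaguely to ``integrality'' and ``exploiting $\sum_i\alpha_i\leq1$'', which is hand-waving rather than proof. In short, your plan captures the right landscape but would need to be carried out with the actual \cite{OW16} estimate $\E[\seqhead{\ul{\blambda}}{k}-\seqhead{\alpha}{k}]\leq 2\sqrt 2\,k/\sqrt n$ (Proposition~\ref{prop:OW-Exc}), a correct treatment of the negative fluctuations of $\bC_j$ giving the $0.5/\sqrt n$ term, and a concrete replacement for the incorrect peak identity, before it could reproduce the claimed bound.
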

\noindent
By following the proof of this result, our Theorem~\ref{thm:row-mean} immediately implies the same bound with~$1.5$ in place of~$1.92$.
In addition, we prove similar bounds for truncated~$\ell_2^2$, Hellinger, and chi-squared distances.
\begin{theorem}\label{thm:truncated-elltwo}
$\displaystyle \E_{\blambda \sim \SW{n}{\alpha}} \dltwosqk{k}{\underline{\blambda}}{\alpha} \leq \frac{46k}{n}$.
\end{theorem}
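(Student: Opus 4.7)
The plan is to reduce this immediately to Theorem~\ref{thm:mean-squared-intro} by writing the truncated $\ell_2^2$ distance as a sum of $k$ per-row squared errors. Concretely, by definition
\[
\dltwosqk{k}{\underline{\blambda}}{\alpha} \;=\; \sum_{i=1}^{k} \left(\tfrac{\blambda_i}{n} - \alpha_i\right)^2 \;=\; \frac{1}{n^2} \sum_{i=1}^{k} (\blambda_i - \alpha_i n)^2,
\]
so by linearity of expectation it suffices to control $\E[(\blambda_i - \alpha_i n)^2]$ for each $i \in [k]$ individually.

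For this, I would apply Theorem~\ref{thm:mean-squared-intro} row by row: for each $i \in [k]$, one has $\E[(\blambda_i - \alpha_i n)^2] \leq C\,\nu_i n$ for the universal constant $C$ hidden in that theorem's $O(\cdot)$, with $\nu_i = \min\{1,\alpha_i d\}$. Summing, dividing by $n^2$, and using the elementary bound $\sum_{i=1}^k \nu_i \leq \sum_{i=1}^k 1 = k$ yields
\[
\E_{\blambda \sim \SW{n}{\alpha}} \dltwosqk{k}{\underline{\blambda}}{\alpha} \;\leq\; \frac{C}{n} \sum_{i=1}^k \nu_i \;\leq\; \frac{Ck}{n}.
\]
The numerical constant $46$ in the statement is then just whatever value of $C$ is extracted from the proof of Theorem~\ref{thm:mean-squared-intro}; assuming that proof is done with explicit constants (as is apparently the case from the $1.5$ vs.\ $1.92$ remark after Theorem~\ref{thm:truncated-ellone}), this is a bookkeeping matter.

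I do not expect any real obstacle here: the entire content of the theorem is already packed into Theorem~\ref{thm:mean-squared-intro}, and the truncated $\ell_2^2$ bound is its most direct corollary. The only nontrivial step is the inequality $\sum_{i=1}^k \nu_i \leq k$, which is immediate from $\nu_i \leq 1$; in particular one does \emph{not} need the sharper bound $\sum_i \alpha_i d = d$ since we are summing only the top $k$ rows. This is exactly the ansatz $\blambda_i \approx \alpha_i n \pm 2\sqrt{\nu_i n}$ being squared and summed, giving $k \cdot O(1/n)$ as claimed.
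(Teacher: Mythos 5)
Your decomposition $n^2\,\E\dltwosqk{k}{\underline{\blambda}}{\alpha} = \sum_{i=1}^k \E[(\blambda_i-\alpha_i n)^2]$, combined with a per-row application of Theorem~\ref{thm:mean-squared-intro}, is a perfectly valid way to prove a bound of the form $O(k/n)$, but it does not produce the advertised constant~$46$. If you trace the constant: Theorem~\ref{thm:mean-squared-intro} is derived from Theorem~\ref{thm:mean-squared}, whose explicit form is $\E[(\blambda_i-\alpha_i n)^2] \leq 42\,\alpha_i i\, n + 42\,\geqsum{i}\,n$. Since $\alpha_i i \leq \nu_i$ and $\geqsum{i}\leq \nu_i$ (with $\nu_i = \min\{1,\alpha_i d\}$), the hidden constant is $84$, so summing and dividing by $n^2$ gives $84k/n$, roughly double what is claimed. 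So your argument proves a correct but weaker statement; the ``bookkeeping'' does not close the gap.

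The paper reaches~$46$ by a genuinely different decomposition. Rather than summing self-contained per-row mean-squared bounds, it applies Lemma~\ref{lem:reduce-to-lessthan-k} (with the parameter fixed at $k$) to every $i\in[k]$ and sums, obtaining
\[
n^2\,\E\dltwosqk{k}{\underline{\blambda}}{\alpha} \;\leq\; 2\,\E\sum_{i=1}^k\bigl(\leqblambda{k}_i - \leqalpha{k}_i\leqn\bigr)^2 \;+\; 44kn,
\]
and then controls the first term \emph{in aggregate} via the $\ell_2^2$ convergence bound $\E\Vert \lequblambda{k}-\leqalpha{k}\Vert_2^2\leq k/\leqn$ from Theorem~\ref{thm:lp-norm} (Theorem~1.1 of~\cite{OW16}), applied to the $k$-letter restriction $\leqalpha{k}$. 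This contributes only $2kn$, giving $46kn$ in total. The savings comes precisely from the fact that the aggregate $\ell_2^2$ bound of~\cite{OW16} (a Greene's-theorem argument) is tighter than the sum over rows of Theorem~\ref{thm:mean-squared}, each instance of which separately pays an Azuma-type variance cost of about $16n$. Your row-by-row route is conceptually cleaner but constant-lossy; the paper's aggregate route is what actually certifies~$46$.
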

\begin{theorem}  \label{thm:truncated-chi}
Let $d(\cdot,\cdot)$ be either  $\dhellsqk{k}{\cdot}{\cdot}$ or $\dchik{k}{\cdot}{\cdot}$.  Then
$\displaystyle\E_{\blambda \sim\SW{n}{\alpha}} d(\underline{\blambda},\alpha) \leq \frac{46kd}{n}.$
\end{theorem}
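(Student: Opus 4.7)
The plan is to derive both bounds from Theorem~\ref{thm:mean-squared-intro} by a coordinate-wise argument. Both truncated distances decompose as sums $\sum_{i=1}^k f_i(\underline{\blambda}_i,\alpha_i)$, so linearity of expectation reduces everything to bounding each row's contribution in isolation.

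I would first handle the chi-squared case, which is the cleaner of the two. Writing
\[
\frac{(\underline{\blambda}_i - \alpha_i)^2}{\alpha_i} \;=\; \frac{(\blambda_i - \alpha_i n)^2}{n^2\, \alpha_i}
\]
and invoking Theorem~\ref{thm:mean-squared-intro} gives
\[
\E\!\left[\frac{(\blambda_i - \alpha_i n)^2}{n^2 \alpha_i}\right] \;\leq\; \frac{O(\nu_i n)}{n^2\, \alpha_i} \;=\; \frac{O(1)}{n}\cdot\frac{\nu_i}{\alpha_i}.
\]
The key elementary observation is that $\nu_i/\alpha_i \leq d$ in both defining cases of $\nu_i = \min\{1,\alpha_i d\}$: when $\alpha_i d \leq 1$ the ratio equals $d$ exactly, and when $\alpha_i d > 1$ it equals $1/\alpha_i < d$. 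Summing the per-row bound $O(d/n)$ over $i \in [k]$ then gives $O(kd/n)$, and the implicit constant here is the same $46$ that appears in Theorem~\ref{thm:truncated-elltwo} (both descend directly from the absolute constant in Theorem~\ref{thm:mean-squared-intro}).

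For the Hellinger-squared case I would reduce to the chi-squared case via the pointwise comparison
\[
(\sqrt{p}-\sqrt{q})^2 \;=\; \frac{(p-q)^2}{(\sqrt{p}+\sqrt{q})^2} \;\leq\; \frac{(p-q)^2}{q},
\]
valid whenever $q > 0$. This shows the per-coordinate Hellinger-squared contribution is dominated by the chi-squared one for every row with $\alpha_i > 0$, and those terms inherit the bound just proved. The only remaining concern is rows with $\alpha_i = 0$: since $\alpha$ is sorted, any such $i$ exceeds the support size $s$ of $\alpha$, and a standard consequence of the column-strictness of semistandard tableaus is that RSK applied to a word drawn from an effective alphabet of size $s$ produces a Young diagram with at most $s$ nonzero rows. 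Hence $\blambda_i = 0$ almost surely for these $i$, and such rows contribute nothing to either distance.

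There is no substantive obstacle beyond invoking Theorem~\ref{thm:mean-squared-intro}; the entire argument hinges on the observation $\min\{1,\alpha_i d\}/\alpha_i \leq d$, which converts the variance estimate on $\blambda_i$ into a divergence estimate that absorbs the $1/\alpha_i$ reweighting implicit in both chi-squared and Hellinger-squared distances. If anything, the mildly subtle point is handling $\alpha_i = 0$ in the Hellinger case, which is resolved purely by the structural fact about RSK row count rather than by any probabilistic estimate.
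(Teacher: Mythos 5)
Your strategy---bound the per-coordinate chi-squared contribution via the mean-squared-error theorem, observe $\nu_i/\alpha_i \leq d$, and then pass to Hellinger via $(\sqrt p - \sqrt q)^2 \leq (p-q)^2/q$ (which is exactly Proposition~\ref{prop:comparing-distances})---is a genuinely different and cleaner route than the paper's, and it does deliver the correct order of magnitude $O(kd/n)$. The $\alpha_i = 0$ observation is also right: if $\alpha$ has support size $s$, an $\alpha$-random word has at most $s$ distinct letters, so $\blambda_i = 0$ almost surely for $i > s$, and such rows contribute nothing.

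However, your claim about the constant is wrong. Theorem~\ref{thm:mean-squared-intro} (proved as Theorem~\ref{thm:mean-squared}) gives $\E(\blambda_i - \alpha_i n)^2 \leq 42\,\alpha_i i\, n + 42\,\seqtaileq{\alpha}{i}\,n$, so the constant there is effectively $84$, not $46$. Pushing your per-coordinate bound through,
\[
\frac{\E(\blambda_i - \alpha_i n)^2}{n^2 \alpha_i} \;\leq\; \frac{42 i}{n} + \frac{42\,\seqtaileq{\alpha}{i}}{n\,\alpha_i} \;\leq\; \frac{42 i}{n} + \frac{42 d}{n},
\]
and summing over $i \in [k]$ gives roughly $84 k d / n$, not $46 k d / n$. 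Moreover, the $46$ in Theorem~\ref{thm:truncated-elltwo} does \emph{not} descend from Theorem~\ref{thm:mean-squared-intro} as you assert; it comes from Lemma~\ref{lem:reduce-to-lessthan-k} combined with Theorem~1.1 of~\cite{OW16}. The paper's proof of the chi-squared version likewise goes through Lemma~\ref{lem:reduce-to-lessthan-k} and then invokes the \emph{untruncated} chi-squared bound (Theorem~\ref{thm:chi-squared-spec}, giving $d^2/n$, hence $k^2/\leqn$ for the restriction to the top $k$ letters), yielding $2k^2 n + 44 k d n$ and hence $46kd/n$ after dividing by $n^2$ and using $k \leq d$. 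So your proof establishes a weaker bound with a larger constant; if the goal is only $O(kd/n)$ it's fine and simpler, but as a proof of the theorem as literally stated it falls short by a factor of roughly two.
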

\noindent
These results follow the ansatz,
though our techniques are not yet strong enough to achieve optimal constant factors.

\subsection{Techniques}
Our main techniques include a pair of majorization theorems for the RSK algorithm.
Here we refer to the following definition.
\begin{definition}
For $x, y \in \R^d$,
we say that~$x$ \emph{majorizes}~$y$, denotes $x \succ y$
if $x_{[1]} + \cdots + x_{[k]} \geq y_{[1]} + \cdots + y_{[k]}$ for all $k \in [d]$,
with equality for $k = d$.
Here the notation $x_{[i]}$ means the $i$th largest value among the $x_j$'s.
In the case of Young diagrams~$\lambda$ and~$\mu$,
we also use the standard notation $\lambda \unrhd \mu$.
\emph{Weak majorization}, denoted with either $\succ_w$ or $\unrhd_w$,
is the case when the equality constraint may not necessarily hold.
\end{definition}

Several of our results require understanding the behavior
of an individual row~$\blambda_k$, for $k \in [d]$.
However, the RSK algorithm's sequential behavior makes understanding rows after the first quite difficult. 
So instead, we adopt the strategy of proving bounds only for the first row (which can sometimes be done directly),
and then translating them to the $k$th row via the following new theorem.
\begin{theorem}                                     \label{thm:catan-intro}
    Fix an integer $k \geq 1$ and an ordered alphabet~$\calA$. Consider the RSK algorithm applied to some string $x \in \calA^n$.  During the course of the algorithm, some letters of~$x$ get bumped from the $k$th row and inserted into the $(k+1)$th row.  Let $x^{(k)}$ denote the string formed by those letters \emph{in the order they are so bumped}.  On the other hand, let~$\ol{x}$ be the subsequence of~$x$ formed by the letters of~$x^{(k)}$ \emph{in the order they appear in~$x$}.  Then $\shRSK{\ol{x}} \unrhd \shRSK{x^{(k)}}$.
\end{theorem}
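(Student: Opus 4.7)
My first step is to apply Greene's theorem (Theorem~\ref{thm:grn}) to reformulate the desired majorization $\shRSK{\ol{x}} \unrhd \shRSK{x^{(k)}}$ as the family of inequalities $\mathrm{LIS}_j(\ol{x}) \ge \mathrm{LIS}_j(x^{(k)})$ for every $j \ge 1$, where $\mathrm{LIS}_j(w)$ denotes the length of the longest union of $j$ disjoint weakly increasing subsequences of~$w$.  Since $\ol{x}$ and $x^{(k)}$ contain the same multiset of letters, equality automatically holds once $j$ is sufficiently large, so the content of the statement is that replacing the bump-time order of these letters by their original (insertion-time) order can only make the RSK shape more row-heavy.

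For the heart of the proof I would focus first on the case $k=1$ and use a patience-sorting (``pile'') interpretation of row~$1$ of the growing RSK tableau: row~$1$ evolves as the sequence of tops of a left-to-right collection of piles, each of which is a strictly decreasing sequence of letters of~$x$.  The letters appearing in $\ol{x}$ (equivalently in $x^{(1)}$) are precisely the non-top letters of these piles, and within any single pile the insertion-order and the bump-order agree.  Thus $\ol{x}$ and $x^{(1)}$ are two interleavings of the \emph{same} collection of strictly decreasing sequences, differing only in whether letters are sorted by insertion time or by bump time, and the desired $\mathrm{LIS}_j$ inequality reduces to a statement about how the RSK shape of such an interleaving varies with its scheduling.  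The plan is then to convert any $j$-union of weakly increasing subsequences in $x^{(1)}$ into a $j$-union of the same total length inside $\ol{x}$, handled one disagreeing pair at a time: every pair $\{A,B\}$ of letters from two distinct piles whose relative order differs between $\ol{x}$ and $x^{(1)}$ must satisfy a rigid inequality between the values of $A$ and~$B$ forced by the patience-sorting rules (with the direction depending on which of the two piles is leftmost), and this rigidity constrains how the pair can appear inside each $j$-union.  For $k \ge 2$ I would lift the same argument one row at a time, using the standard fact that the lower rows of RSK on $x$ evolve according to patience sorting applied to the bumped sequence $x^{(k-1)}$.

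The principal obstacle I anticipate is that the forced inequalities do not all run in the ``helpful'' direction: some pairs whose order differs between $\ol{x}$ and $x^{(1)}$ are actually \emph{adverse} for the majorization when viewed in isolation, so a na\"{\i}ve swap-by-swap argument cannot work.  The proof must instead show that every such adverse pair is compensated by structurally-nearby favorable pairs, which likely requires a global construction handling all $j$ subsequences simultaneously---for instance, a potential-function argument on the partial shapes of the $j$-union, or a Viennot ``shadow-line'' lifting in which the $j$-union naturally transports from one scheduling to the other.  Ensuring that the resulting construction produces exactly $j$ disjoint weakly increasing subsequences in~$\ol{x}$, rather than $j+1$, is likely the most delicate point of the entire argument.
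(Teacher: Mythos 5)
Your high-level framing is right and partly matches the paper: you reduce via Greene's theorem to showing that any $j$-union of disjoint increasing subsequences in $x^{(k)}$ can be matched by one at least as long in $\ol{x}$, and you correctly suspect that Viennot's shadow-line picture is the right tool and that a naive swap-by-swap argument will fail. The paper does indeed proceed row by row, applying one geometric lemma (Theorem~\ref{thm:catan-conj}) with $(b,w)=(x^{(j)},x^{(j-1)})$ for $j=k,\dots,1$, using exactly the Viennot jump-line machinery you gesture at.

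The genuine gap is in your stated goal of producing, at each row-to-row step, ``a $j$-union of the same total length'' using \emph{only} the letters of $x^{(k)}$. This cannot be maintained: when you transport chains from $x^{(j)}$ back to $x^{(j-1)}$ by sliding along Viennot jump lines, the chains are forced to pass through letters that are present in $x^{(j-1)}$ but \emph{not} in $x^{(k)}$, and there is no way to avoid landing on them. The paper's key device, which your proposal does not anticipate, is the notion of \emph{admissible} letters: the letters of $x^{(k)}$ start admissible, and during each lift one is allowed to declare $\Delta$ new letters admissible provided the total chain length also increases by exactly $\Delta$. This surplus is tracked all the way up to $x^{(0)}=x$, and at the very end the newly admissible (non-$x^{(k)}$) letters are simply deleted from the chains, leaving disjoint increasing subsequences of $\ol{x}$ of total length at least $L$. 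Maintaining this invariant requires a further structural condition — a set of non-crossing ``curves'' through the chains with all inadmissible points to the southeast of the first curve — which drives the four-phase sliding argument. Without the admissible-letters bookkeeping and the curves invariant, your per-row step has no way to absorb the extra letters it is forced to touch, and the induction does not close.
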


Our other key tool is the following result
allowing us to bound how much larger $\blambda_1 + \cdots +\blambda_k$
is than its intended value $\alpha_1 n + \cdots + \alpha_k n$ in expectation.
\begin{theorem} \label{thm:ITW-intro}
	Let $\alpha$ be a sorted probability distribution on~$[d]$ and let $k \in [d]$.  Then for all $n \in \N$,
    \[
    	\E\Bigl[\sum_{i=1}^k \blambda^{(n)}_i\Bigr] - \sum_{i=1}^k \alpha_i n\leq \ITW{k}{\alpha}, \quad\text{where $\ITW{k}{\alpha} = \sum_{i \leq k < j} \frac{\alpha_j}{\alpha_i - \alpha_j}$}.
    \]
    Furthermore, using the notation $\Exc{n}{k}{\alpha}$ for the left-hand side, it holds that $\Exc{n}{k}{\alpha} \nearrow \ITW{k}{\alpha}$ as $n \to \infty$ provided that all $\alpha_i$'s are distinct.
\end{theorem}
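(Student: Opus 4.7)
The plan is to combine a monotonicity property of $\Exc{n}{k}{\alpha}$ in $n$ with an asymptotic identification of its limit as $\ITW{k}{\alpha}$ under the distinctness hypothesis. From these two facts, $\Exc{n}{k}{\alpha} \nearrow \ITW{k}{\alpha}$ follows in the distinct case (giving the upper bound there), and the upper bound extends to general $\alpha$ by continuity---noting that if $\alpha_k = \alpha_{k+1}$ then $\ITW{k}{\alpha} = \infty$ and the bound is vacuous anyway.

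For monotonicity, I would couple $\bw^{(n+1)} = \bw^{(n)} \bw_{n+1}$. By Greene's Theorem~\ref{thm:grn}, $L_k(w) := \sum_{i \leq k}\blambda_i(w)$ equals the length of a longest $k$-union of weakly increasing subsequences of $w$, so $L_k(\bw^{(n+1)}) - L_k(\bw^{(n)}) \in \{0,1\}$, and the monotonicity $\Exc{n+1}{k}{\alpha} \geq \Exc{n}{k}{\alpha}$ reduces to the combinatorial inequality
\[
\Pr\bigl[L_k \text{ grows when appending } \bw_{n+1}\bigr] \;\geq\; \alpha_1 + \cdots + \alpha_k.
\]
For $k=1$ this can be obtained by writing the LIS as the final value of a positive-drift walk plus the running maximum of a negative-drift reversed walk---making the excess $\Exc{n}{1}{\alpha}$ manifestly a running maximum over a horizon that grows with $n$. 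For $k > 1$, I would attempt the analogous reformulation via multi-path last-passage percolation on the $d \times n$ grid with a coin at $(\bw_l, l)$, or alternatively apply Theorem~\ref{thm:catan-intro} inductively to reduce the lower-row statistics to RSKs of bumped subsequences.

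For the asymptotic identification (when the $\alpha_i$ are distinct), I appeal to the Its--Tracy--Widom analysis of the Schur--Weyl distribution~\cite{ITW01}: the centered rows $(\blambda_i^{(n)} - \alpha_i n)/\sqrt{n}$ converge in distribution to a centered Gaussian vector, and the mean shifts $\E[\blambda_i^{(n)}] - \alpha_i n$ themselves converge to explicit $O(1)$ constants whose $k$th partial sum is $\ITW{k}{\alpha}$. Combined with monotonicity and the uniform integrability afforded by the second-moment bound of Theorem~\ref{thm:mean-squared-intro}, monotone--bounded convergence yields $\Exc{n}{k}{\alpha} \nearrow \ITW{k}{\alpha}$. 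For general $\alpha$ with $\alpha_k > \alpha_{k+1}$, perturb $\alpha$ to a nearby distribution $\alpha^{(\varepsilon)}$ with distinct components and $\alpha^{(\varepsilon)}_k - \alpha^{(\varepsilon)}_{k+1}$ bounded below, apply the distinct-case bound, and let $\varepsilon \to 0$, invoking continuity of $\Exc{n}{k}{\cdot}$ (a polynomial in the $\alpha_i$'s) and of $\ITW{k}{\cdot}$ on the open set $\{\alpha_k > \alpha_{k+1}\}$.

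I expect the main obstacle to be the combinatorial monotonicity inequality above for $k > 1$. The clean random-walk representation that handles $k = 1$ is complicated for larger $k$ by the ordering/overlap constraints among the $k$ disjoint subsequences, making it delicate to separate the linear-in-$n$ term from the monotone $O(1)$ remainder while preserving the tight $\ITW{k}{\alpha}$ constant. A careful inductive argument using Theorem~\ref{thm:catan-intro} to reduce the analysis to the first row of the RSK applied to the bumped subsequence from row $k$ seems the most promising route.
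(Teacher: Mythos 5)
Your overall architecture matches the paper's: prove $\Exc{n}{k}{\alpha}$ is nondecreasing in $n$, identify its limit as $\ITW{k}{\alpha}$ when the $\alpha_i$'s are distinct, and handle degenerate $\alpha$ by continuity (with the $\alpha_k = \alpha_{k+1}$ case being vacuous). However, both halves of the argument have genuine gaps as sketched.

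On the monotonicity step: you correctly sense that this is the crux, but the routes you propose (multi-path last-passage percolation, or induction via Theorem~\ref{thm:catan-intro}) are speculative and do not obviously close the gap. The paper's Lemma~\ref{lem:increasing} resolves it directly for all $k$ with no coupling or walk representation. The trick is to reverse the alphabet (which preserves the law of $\blambda$) and observe that if a newly inserted letter is among the $k$ largest, its RSK bumping chain is strictly increasing and hence has length at most $k$, so it necessarily creates a box in the first $k$ rows. Therefore
\[
\Exc{n}{k}{\alpha} = \sum_{t=1}^n \Pr[\bw_t \text{ is not among the top $k$ letters, yet creates a box in the first $k$ rows}],
\]
which is manifestly nondecreasing in $n$. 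This in particular gives the exact inequality $\Pr[L_k\text{ grows}] \geq \alpha_1 + \cdots + \alpha_k$ that you reduce to, uniformly in $k$, with none of the complications you flag.

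On the asymptotic identification: you write as if the $k$th partial-sum limit $\ITW{k}{\alpha}$ is already available from Its--Tracy--Widom, but \cite{ITW01} proves only the $k=1$ case. The paper must extend this, and does so by introducing the modified $\alpha$-multinomial signed density $\Modpdf{n}{\alpha}$, showing via the ITW pointwise formula that $\Pr_{\SW{n}{\alpha}}[\blambda=\lambda] = \Modpdf{n}{\alpha}(\lambda)\Multpdf{n}{\alpha}(\lambda) \pm O(\Multpdf{n}{\alpha}(\lambda)/n) \pm e^{-\Omega(n)}$, and then computing $\E_{\ModMult{n}{\alpha}}[\seqhead{\blambda}{k}-\seqhead{\alpha}{k}n] = \ITW{k}{\alpha}$ exactly (Proposition~\ref{prop:excess}). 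Your appeal to "the ITW analysis" elides precisely this new computation. Separately, invoking Theorem~\ref{thm:mean-squared-intro} for uniform integrability is circular: that theorem is proved downstream of Theorem~\ref{thm:ITW-intro} (via Theorems~\ref{thm:row-one-upper} and~\ref{thm:row-d-lower}). The paper instead needs only the $O(\sqrt{n})$ bound on $\E_{\Mult{n}{\alpha}}|\Exc{n}{k}{\blambda}|$, which is immediate from multinomial variances; alternatively the independently-proved Azuma bound (Proposition~\ref{prop:azuma}) would serve.
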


\noindent
The fact that $\Exc{n}{1}{\alpha}  \to \ITW{1}{\alpha}$ as $n \to \infty$ when all the $\alpha_i$'s are fixed and distinct was originally proven in by Its, Tracy, and Widom~\cite{ITW01}.  We extend this to the general~$k$ case, and also show that the sequence~$\Exc{n}{1}{\alpha}$ is increasing in~$n$,  so that $\ITW{k}{\alpha}$ is an upper bound for all~$n$. So long as~$\alpha_k$ and~$\alpha_{k+1}$ are sufficiently separated
we have found that $\ITW{k}{\alpha}$ gives a surprisingly accurate bound
on $\E[\blambda_1 + \cdots + \blambda_k] - (\alpha_1 n + \cdots + \alpha_k n)$.
When~$\alpha_k$ and~$\alpha_{k+1}$ are not well-separated,
on the other hand, $\ITW{k}{\alpha}$ can be arbitrarily large.
In this case, we consider a mildly perturbed distribution~$\alpha'$
in which~$\alpha_k'$ and~$\alpha_{k+1}'$ \emph{are} well-separated
and then apply Theorem~\ref{thm:ITW-intro} to $\alpha'$ instead.
Supposing that $\alpha' \succ \alpha$, we may then relate the bounds we get
on~$\SW{n}{\alpha'}$ back to~$\SW{n}{\alpha}$ using Theorem~$1.11$ from~\cite{OW16}.
\begin{theorem}[\cite{OW16}]                                     \label{thm:coupling}
    Let $\alpha$, $\beta \in \R^d$ be sorted probability distributions with $\beta \succ \alpha$. Then for any $n \in \N$ there is a coupling $(\blambda, \bmu)$ of $\SW{n}{\alpha}$ and $\SW{n}{\beta}$ such that $\bmu \unrhd \blambda$ always.
\end{theorem}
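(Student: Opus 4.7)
The plan is to first reduce the problem to a stochastic dominance inequality (via Strassen's theorem), then reduce to the case of a single elementary T-transform (via Hardy--Littlewood--P\'olya), and finally verify the dominance using the explicit Schur-polynomial formula for the Schur--Weyl probabilities.

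First, Strassen's theorem for stochastic dominance in a partially ordered set, applied to the majorization order on partitions of $n$, says that the desired majorizing coupling of $\SW{n}{\alpha}$ and $\SW{n}{\beta}$ exists if and only if $\SW{n}{\beta}$ stochastically dominates $\SW{n}{\alpha}$; that is,
\[
\Pr_{\bmu \sim \SW{n}{\beta}}[\bmu \in U] \;\geq\; \Pr_{\blambda \sim \SW{n}{\alpha}}[\blambda \in U]
\]
for every subset $U$ of partitions of $n$ that is upward-closed under majorization. Next, by Hardy--Littlewood--P\'olya, the relation $\beta \succ \alpha$ decomposes as a finite chain of elementary T-transforms $\beta = \gamma^{(0)} \succ \gamma^{(1)} \succ \cdots \succ \gamma^{(N)} = \alpha$, each of which averages only two coordinates. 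Since majorizing couplings chain together (apply the statement to each link and splice the couplings by conditioning), it suffices to prove the theorem when $\alpha$ differs from $\beta$ in exactly two coordinates: $\alpha_i = \beta_i - \epsilon$ and $\alpha_j = \beta_j + \epsilon$ for some $i < j$ with $\beta_i - \epsilon \geq \beta_j + \epsilon$.

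For this elementary case, the plan is to verify the dominance using the identity $\Pr_\alpha[\blambda = \lambda] = \dim(\lambda)\, s_\lambda(\alpha)$, where $s_\lambda$ is the Schur polynomial and $\dim(\lambda)$ is the $S_n$-irrep dimension. Because the T-transform fixes $\alpha_i + \alpha_j$ and leaves the other entries unchanged, we view each $s_\lambda(\alpha)$ as a function of the single parameter $t = \alpha_i$ (with $\alpha_j = (\beta_i + \beta_j) - t$ and all other entries frozen); the required inequality becomes the statement that, for every upward-closed $U$, the function $t \mapsto \sum_{\lambda \in U} \dim(\lambda)\, s_\lambda$ is nondecreasing on $[\tfrac{\beta_i + \beta_j}{2},\, \beta_i]$. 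Expanding $s_\lambda = \sum_T \alpha^T$ over semistandard Young tableaux $T$ and grouping SSYTs by their content outside $\{i, j\}$, this collapses to a family of inequalities in two-variable Schur-positive expressions weighted by $\dim(\lambda)$.

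The main obstacle is precisely this Schur-convexity step, and it genuinely requires a global argument: any pointwise word-level coupling is doomed, because local letter-edits do not respect shape-majorization in either direction (e.g., changing the $i$ in the word $ji$ to a $j$ sends shape $(1,1)$ to $(2)$, while the reverse edit does the opposite). Hence the coupling must exploit the interplay between SSYT monomial weights and $S_n$-irrep dimensions globally. Should this direct Schur-convexity estimate prove too intricate, a fallback is to construct the coupling inductively on $n$ via RSK insertion, maintaining a strengthened tableau-level invariant and leveraging Theorem~\ref{thm:catan-intro} to control how each newly inserted letter propagates through the lower rows of the two coupled tableaux.
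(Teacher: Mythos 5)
Your outline makes two correct reductions: Strassen's theorem converts the coupling claim to stochastic dominance of $\SW{n}{\beta}$ over $\SW{n}{\alpha}$ in the majorization order, and Hardy--Littlewood--P\'olya lets you assume $\beta$ and $\alpha$ differ by a single elementary T-transform on two coordinates. Your observation that a naive pointwise word-level coupling cannot work is also correct and worth emphasizing --- e.g., under the maximal coupling, $w_\alpha = 2331$ and $w_\beta = 2311$ (replacing a single $2$ by a $1$ because $\beta$ moves mass from letter $2$ to letter $1$) have $\shRSK{w_\alpha} = (3,1)$ and $\shRSK{w_\beta} = (2,2)$, and $(2,2) \ntrianglerighteq (3,1)$; so any correct argument must be global rather than letter-by-letter.

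However, what you have written is not a proof: the heart of the matter is the claim that for every upward-closed set $U$ of partitions of $n$, the function
\[
    t \;\longmapsto\; \sum_{\lambda \in U} \dim(\lambda)\, s_\lambda\bigl(\dots, \underbrace{t}_{i}, \dots, \underbrace{(\beta_i+\beta_j)-t}_{j}, \dots\bigr)
\]
is nondecreasing as $t$ moves from $\tfrac{\beta_i+\beta_j}{2}$ up to $\beta_i$, i.e.\ that the dimension-weighted tail sum is Schur-convex. You explicitly flag this as ``the main obstacle'' and then do not prove it. Expanding into SSYT monomials and ``grouping by content outside $\{i,j\}$'' is the right starting move, but it produces a family of two-variable inequalities whose validity is exactly as hard as the original statement; nothing in the proposal rules out cancellation across partitions in $U$ with the same content (this is precisely where the $\dim(\lambda)$ weights must be used, and you do not use them). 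Likewise the fallback --- an inductive RSK coupling maintaining some unspecified ``strengthened tableau-level invariant'' and invoking Theorem~\ref{thm:catan-intro} --- is a research direction, not an argument; without saying what the invariant is, there is no way to check that the insertion step preserves it. In short, the reductions are sound and your counterexample to naive coupling is a genuine contribution to understanding the problem, but the decisive lemma remains unproven, so the proposal does not establish the theorem. (Note also that this theorem is Theorem~1.11 of~\cite{OW16} and is only quoted in the present paper; its actual proof appears there, and does not proceed by the Schur-function Schur-convexity route you sketch.)
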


\subsection{Quantum state learning}

Our main application of these bounds is to problems in the area of quantum state learning.
Here, one is given~$n$ copies of a mixed state $\rho \in \C^{d \times d}$ and asked to learn some property of~$\rho$.
For example, one might attempt to learn the entire $d \times d$ matrix (\emph{quantum tomography}),
just its spectrum $\alpha = (\alpha_1, \ldots, \alpha_d)$ (\emph{quantum spectrum estimation}),
or some other more specific property such as its von~Neumann entropy, its purity, and so forth.
These problems play key roles in various quantum computing applications,
including current-day verification of experimental quantum devices
and hypothesized future quantum protocols such as entanglement verification.
We allow ourselves arbitrary entangled measurements,
and our goal is to learn while using as few copies~$n$ as possible.

The standard approach to designing
entangled measurements for quantum state learning~\cite{ARS88,KW01}
uses a powerful tool from representation theory called \emph{Schur--Weyl duality}, which states that
\begin{equation*}
(\C^d)^{\otimes n} \cong \bigoplus_{\lambda} \Specht{\lambda} \otimes \Weyl{\lambda}{d}.
\end{equation*}
Here the direct sum ranges over all partitions~$\lambda\vdash n$
of height at most~$d$,
and~$\Specht{\lambda}$ and~$\Weyl{\lambda}{d}$ are the irreps of the symmetric and general linear groups corresponding to~$\lambda$.
Measuring $\rho^{\otimes n}$
according to the projectors $\{\Pi_\lambda\}_\lambda$ corresponding to the $\lambda$-subspaces
is called \emph{weak Schur sampling}
and is the optimal measurement if one is interested only
in learning~$\rho$'s spectrum~$\alpha$ (or some function of~$\alpha$).
The outcome of this measurement is a random~$\blambda$
whose distribution depends only on~$\alpha$; in fact:

\begin{fact}\label{fact:schur-weyl}
When performed on~$\rho^{\otimes n}$,
the measurement outcome~$\blambda$ of weak Schur sampling is distributed exactly as
the Schur--Weyl distribution~$\SW{n}{\alpha}$, where~$\alpha$ is~$\rho$'s spectrum.
\end{fact}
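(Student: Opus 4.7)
The plan is to compute $\Pr[\blambda=\lambda]$ via each of the two definitions and check that both yield the common value $f^\lambda\cdot s_\lambda(\alpha)$, where $f^\lambda := \dim(\Specht{\lambda})$ and $s_\lambda$ is the Schur polynomial. The heart of the matter is then the classical identity expressing $s_\lambda(\alpha)$ both as a $GL_d$-character and as a weighted sum over semistandard tableaux.

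For the quantum side, I would diagonalize $\rho = UDU^\dagger$ with $D=\mathrm{diag}(\alpha_1,\dots,\alpha_d)$; the sought probability is $\mathrm{tr}(\Pi_\lambda\rho^{\otimes n})$. The key observation is that the Schur--Weyl decomposition is $GL_d$-equivariant, so the isotypic projector $\Pi_\lambda$ commutes with $U^{\otimes n}$. Cyclicity of trace then collapses the expression to $\mathrm{tr}(\Pi_\lambda D^{\otimes n})$. On the $\lambda$-summand $\Specht{\lambda}\otimes\Weyl{\lambda}{d}$, the operator $D^{\otimes n}$ acts as $I_{\Specht{\lambda}}\otimes \pi_\lambda(D)$, where $\pi_\lambda$ is the corresponding $GL_d$-irrep on $\Weyl{\lambda}{d}$. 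Hence the trace equals $\dim(\Specht{\lambda})\cdot\mathrm{tr}(\pi_\lambda(D)) = f^\lambda\cdot s_\lambda(\alpha)$, the last step invoking the character-theoretic definition of the Schur polynomial.

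For the combinatorial side, the RSK correspondence bijects words $w\in[d]^n$ with $\shRSK{w}=\lambda$ to pairs $(P,Q)$, where $P$ is a semistandard tableau of shape $\lambda$ with entries in $[d]$ and $Q$ is a standard tableau of shape $\lambda$. A central feature of RSK is that the multiset of letters of $w$ equals the multiset of entries of $P$, so the weight $\alpha_{w_1}\cdots\alpha_{w_n}$ depends only on $P$, equalling $\alpha^P := \prod_i \alpha_i^{c_i(P)}$ (where $c_i(P)$ counts occurrences of $i$ in $P$). Summing over such $w$ yields $f^\lambda\sum_P \alpha^P = f^\lambda\cdot s_\lambda(\alpha)$, this time by the combinatorial (SSYT-sum) definition. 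The main conceptual step in the whole argument is reconciling these two formulas for $s_\lambda(\alpha)$, which is a classical identity (see, e.g., \cite{Ful97}); given it, the two computations match term by term and the fact follows.
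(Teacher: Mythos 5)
Your proof is correct: both sides reduce to the common value $f^\lambda\cdot s_\lambda(\alpha)$, with the quantum side via $GL_d$-equivariance of the isotypic projectors (so $\mathrm{tr}(\Pi_\lambda\rho^{\otimes n})=\mathrm{tr}(\Pi_\lambda D^{\otimes n})=\dim(\Specht{\lambda})\cdot\mathrm{tr}\,\pi_\lambda(D)$) and the combinatorial side via the weight-preserving RSK bijection between words and $(P,Q)$ pairs. The paper does not reprove this fact but simply defers to the discussion in~\cite{OW16}, which rests on essentially the same reasoning you give.
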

\noindent
(See, for example, the discussion of this in~\cite{OW16}.)
Following weak Schur sampling, $\rho^{\otimes n}$ collapses to the subspace corresponding to~$\blambda$,
and if one wishes to learn about more than just~$\rho$'s spectrum, one must perform a further measurement within this subspace.
An algorithm which does so is said to have performed \emph{strong} Schur sampling.
Note that weak Schur sampling refers to a specific measurement, whereas strong Schur sampling refers to a  class of measurements.

Fact~\ref{fact:schur-weyl},
when paired with our results from Section~\ref{sec:nonasymptotic},
immediately suggests the following algorithm for estimating~$\rho$'s spectrum:
perform weak Schur sampling, receive the outcome~$\blambda$, and output $\underline{\blambda}$.
This is exactly the \emph{empirical Young diagram (EYD) algorithm}
introduced independently by Alicki, Ruckinci, and Sadowski~\cite{ARS88}
and Keyl and Werner~\cite{KW01}.
To date, this is the best known spectrum estimation algorithm,
 and it has recently been proposed for current-day experimental implementation~\cite{BAH+16}.
Our Theorem~\ref{thm:chi-squared} immediately implies the following.
\begin{theorem}\label{thm:spectrum}
The spectrum $\alpha$ can be learned in Hellinger-squared distance, KL divergence, and chi-squared divergence using $n=O(d^2/\eps)$ copies.
\end{theorem}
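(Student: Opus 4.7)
The plan is to observe that Theorem~\ref{thm:spectrum} is an essentially immediate corollary of Theorem~\ref{thm:chi-squared} combined with Fact~\ref{fact:schur-weyl}; the hard work has already been done. The algorithm is the empirical Young diagram (EYD) algorithm of Alicki--Ruckinci--Sadowski and Keyl--Werner: given $n$ copies of~$\rho$, form $\rho^{\otimes n}$, perform weak Schur sampling to obtain a random Young diagram $\blambda$, and output the normalized Young diagram $\underline{\blambda} = (\blambda_1/n,\dots,\blambda_d/n)$ as the estimate of~$\alpha$.

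By Fact~\ref{fact:schur-weyl}, the random outcome $\blambda$ of weak Schur sampling on $\rho^{\otimes n}$ is distributed exactly as $\SW{n}{\alpha}$, where $\alpha$ is $\rho$'s spectrum. Hence Theorem~\ref{thm:chi-squared} applies directly to $\underline{\blambda}$ and gives
\[
\E\bigl[d(\underline{\blambda},\alpha)\bigr] \;\leq\; \frac{d^2}{n}
\]
for $d(\cdot,\cdot)$ equal to any of $\dhellsq{\cdot}{\cdot}$, $\dkl{\cdot}{\cdot}$, or $\dchisq{\cdot}{\cdot}$. Taking $n = \lceil 3 d^2/\eps \rceil = O(d^2/\eps)$ makes the right-hand side at most $\eps/3$, and then Markov's inequality applied to the nonnegative random variable $d(\underline{\blambda},\alpha)$ gives $\Pr[d(\underline{\blambda},\alpha) > \eps] \leq 1/3$. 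This is the standard constant-probability form of "$\eps$-accurate learning"; if one prefers an expectation-only statement, even $n = \lceil d^2/\eps \rceil$ copies suffice.

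There is no real obstacle here beyond invoking the prior results in the correct order; the substantive content lies entirely in Theorem~\ref{thm:chi-squared}, whose proof is where the new majorization and row-bound machinery (Theorems~\ref{thm:row-mean}, \ref{thm:mean-squared-intro}, \ref{thm:catan-intro}, \ref{thm:ITW-intro}) enters. One small bookkeeping remark worth making in the writeup: the three divergences are not all equivalent up to constants in general, so it is important that Theorem~\ref{thm:chi-squared} establishes the $d^2/n$ bound for each of them individually (chi-squared being the strongest, and in particular dominating both KL and Hellinger-squared, which makes the chi-squared bound alone already imply the other two up to the same $O(d^2/\eps)$ copy complexity).
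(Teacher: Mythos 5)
Your proposal matches the paper's intended argument exactly: the paper states that Theorem~\ref{thm:chi-squared} (itself derived from the chi-squared bound via Proposition~\ref{prop:comparing-distances}) ``immediately implies'' Theorem~\ref{thm:spectrum} via Fact~\ref{fact:schur-weyl} and the EYD estimator, and your Markov-inequality bookkeeping is the standard way to convert the expectation bound into $\eps$-accurate learning.
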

\noindent
Previously, it was known from the works of Hayashi and Matsumoto~\cite{HM02}
and Christandl and Mitcheson~\cite{CM06} that $n = O(d^2/\eps) \cdot \log(d/\eps^2)$ copies sufficed for KL divergence (and hence for Hellinger-squared).
We note that Theorem~\ref{thm:lp-norm} from~\cite{OW16}
gave learning bounds of $O(d/\eps)$ and $O(d^2/\eps^2)$
for spectrum learning under $\ell_2^2$ and $\ell_1$ distance, respectively.
Combined with the lower bound from~\cite{OW15} showing that the EYD algorithm
requires $n = \Omega(d^2/\eps^2)$ copies for $\ell_1$ learning,
we have given optimal bounds for the EYD algorithm in terms of all five distance metrics.

For the more difficult problem of quantum tomography,
the optimal number of copies needed to learn~$\rho$ in trace distance was recently determined to be $n= \Theta(d^2/\eps^2)$ --- the upper bound from our previous work~\cite{OW16}
and the lower bound from the independent work of Haah et al.~\cite{HHJ+16}.
The optimal complexity of learning~$\rho$ in infidelity --- i.e., outputting an estimate~$\widehat{\brho}$ such that $1-F(\rho, \widehat{\brho}) \leq \eps$ ---
remains open, however.
Essentially the best prior result is by Haah~et~al.~\cite{HHJ+16}, who showed that $n = O(d^2/\eps) \cdot \log(d/\eps)$ copies suffice.
For our results, we find it convenient to work with the very closely related
\emph{quantum Hellinger-squared distance} $\dhellsq{\cdot}{\cdot}$. This is known to be the same as infidelity $1-F(\rho, \widehat{\brho})$ up to a factor of~$2$ (see Section~\ref{sec:prelims} for details), and hence learning in quantum Hellinger-squared distance is equivalent to learning in infidelity up to a small constant.
We show the following theorem.
\begin{theorem}
A state $\rho \in \C^{d \times d}$ can be learned in quantum Hellinger-squared distance with copy complexity
\begin{equation*}
n = O\parens*{\min\left\{\frac{d^2}{\eps} \log\left(\frac{d}{\eps}\right), \ \frac{d^3}{\eps}\right\}}.
\end{equation*}
\end{theorem}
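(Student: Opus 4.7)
The plan is to use the canonical strong Schur sampling tomography algorithm: first perform weak Schur sampling on $\rho^{\otimes n}$ to obtain $\blambda \sim \SW{n}{\alpha}$, then within the $\Specht{\blambda} \otimes \Weyl{\blambda}{d}$ block apply the Keyl POVM, obtaining a unitary $\wh{\bU}$, and finally output $\wh{\brho} = \wh{\bU} \diag(\underline{\blambda}) \wh{\bU}^*$. I would then use a triangle-style inequality for quantum Hellinger-squared distance (valid up to constant factors via its equivalence with infidelity) to split the error as
\[
    \dhellsq{\rho}{\wh{\brho}} \;\lesssim\; \dhellsq{\alpha}{\underline{\blambda}} \;+\; \text{(basis misalignment between $\wh{\bU}$ and the eigenbasis of $\rho$)},
\]
where the first term is purely a classical Hellinger-squared distance between the true and EYD-estimated spectra, and the second captures the error in the Keyl-reconstructed eigenbasis (weighted by the spectrum).

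For the spectrum contribution, Theorem~\ref{thm:chi-squared} immediately gives $\E[\dhellsq{\alpha}{\underline{\blambda}}] \leq d^2/n$, which is $\leq \eps$ once $n = \Omega(d^2/\eps)$. This matches both bounds in the theorem. For the basis contribution, I would obtain the first bound $n = O((d^2/\eps)\log(d/\eps))$ by adapting the Keyl-POVM analysis of Haah~et~al.~\cite{HHJ+16}: they show that strong Schur sampling attains infidelity~$\eps$ with this many copies, and since $\dhellsq{\cdot}{\cdot}$ and $1-F(\cdot,\cdot)$ agree up to a factor of~$2$, the bound transfers directly. The small improvements relative to their proof (removing some stray constants and tightening the range of~$\lambda$ for which the basis analysis is run) come from plugging in our sharper row-length bound Theorem~\ref{thm:row-mean} and mean-squared bound Theorem~\ref{thm:mean-squared-intro} in place of the cruder Schur--Weyl concentration inequalities they used.

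For the alternative bound $n = O(d^3/\eps)$, the goal is to avoid the logarithmic factor, which in the Haah~et~al.\ argument arises from a union bound over eigenvalue levels when controlling the off-diagonal behavior of the Keyl POVM. The idea is to trade this union bound for an~$\ell_2$-type aggregation over the $d$ eigenvectors, each of whose angular deviation from~$v_i$ can be bounded in expectation by $O(d/n)$ using our per-row mean-squared bound (Theorem~\ref{thm:mean-squared-intro}) applied to adjacent rows of~$\blambda$. Summing the~$d$ contributions gives basis error $O(d^3/n)$, which is $\leq \eps$ when $n = \Omega(d^3/\eps)$, and the final bound follows by taking the better of the two estimates.

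The main obstacle is the basis analysis when $\alpha$ has nearly-degenerate entries, in which case the Keyl POVM's concentration constants degrade (and quantities like $\ITW{k}{\alpha}$ from Theorem~\ref{thm:ITW-intro} blow up). I would handle this in the usual way: use Theorem~\ref{thm:coupling} to replace~$\alpha$ with a slightly perturbed, well-separated majorant $\alpha' \succ \alpha$, carry out the eigenbasis analysis with respect to~$\alpha'$, and then pull the Hellinger-squared bound back to~$\alpha$ via the coupling. Optimizing the size of the perturbation against the resulting overhead is where most of the care is needed to recover clean $d^2/\eps \cdot \log(d/\eps)$ and $d^3/\eps$ scalings.
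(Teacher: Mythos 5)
Your high-level outline --- apply the Keyl algorithm, split the error into a spectrum term controlled by Theorem~\ref{thm:chi-squared} and a basis-misalignment term --- matches the rough shape of the paper's argument, but the concrete steps are either circular or too vague to constitute a proof.

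For the $\frac{d^2}{\eps}\log\frac{d}{\eps}$ bound you propose to ``adapt the Keyl-POVM analysis of Haah~et~al.,'' which amounts to re-citing their theorem rather than giving the new proof the paper advertises; and nothing you mention (plugging in Theorem~\ref{thm:row-mean} or Theorem~\ref{thm:mean-squared-intro} for ``cruder concentration inequalities'') is how the paper actually gets this bound. More seriously, for the genuinely new $\frac{d^3}{\eps}$ bound your strategy of an ``$\ell_2$-type aggregation over the $d$ eigenvectors, each of whose angular deviation from~$v_i$ can be bounded in expectation by $O(d/n)$ using our per-row mean-squared bound applied to adjacent rows'' has no backing. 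Theorem~\ref{thm:mean-squared-intro} controls the error in the row lengths $\blambda_i$, which is a statement about the \emph{spectrum}; it says nothing about angular deviations of the Keyl basis, and no mechanism is offered to pass from one to the other. Likewise, your degeneracy fix via a perturbed majorant $\alpha' \succ \alpha$ and Theorem~\ref{thm:coupling} does not work here, because the coupling theorem couples two Schur--Weyl draws and says nothing about the induced Keyl unitaries $\bV$; the paper instead handles small or degenerate $\alpha_i$ by clamping at a threshold $\zeta$ inside the rearrangement step (Corollary~\ref{cor:rearrangement}).

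The paper's actual argument never touches eigenvector angles. It expands $\Dhellsq{\bLambda}{\bR} = \tr\bLambda + \tr\bR - 2\tr(\sqrt{\bLambda}\sqrt{\bR})$, uses $(\sqrt{\bR})_{ii} = \E_{\bpi\sim\bV}[\sqrt{\alpha_{\bpi(i)}}]$ to reduce the cross term to the \emph{classical} quantity $\E[\dhellsq{\alpha\circ\bpi}{\alpha}]$, and controls the latter via the new rearrangement inequality (Lemma~\ref{lem:rearrangement}), which converts it into a weighted sum of partial-sum deficits $\sum_{i\le j}(\alpha_i - \alpha_{\bpi(i)})$. Those deficits are in turn bounded by the excess $\E[\seqhead{\ul{\blambda}}{j} - \seqhead{\alpha}{j}]$ via Corollary~\ref{cor:R} (the one structural fact about the Keyl POVM the proof needs), and then by $\ITW{j}{\alpha}/n$ via Theorem~\ref{thm:our-ITW}. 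The $\min\{d,\ln n\}$ factor comes from Lemma~\ref{lem:log}, and the final bound $\sum_j \frac{\alpha_j-\alpha_{j+1}}{\alpha_j}\ITW{j}{\alpha} \le kd$ is a non-obvious double-sum telescoping that cancels the potential blow-up of $\ITW{j}{\alpha}$ against the small multipliers $\alpha_j-\alpha_{j+1}$. None of these ingredients --- the trace expansion, the $\bR_{ii}$ bound, the rearrangement lemma, the clamping, or the telescoping --- appear in your proposal, and without replacements for them (the rearrangement lemma and Corollary~\ref{cor:R} especially) the basis contribution is not controlled.
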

\noindent
The left-hand term in the min gives a new proof of the fidelity bound of Haah~et~al.~\cite{HHJ+16}.
 The right-hand term in the min is new; previously it was known only how to learn~$\rho$ in fidelity using $n=O(h(d)/{\eps})$ copies for some unspecified function~$h(\cdot)$ (see the list of citations in~\cite{HHJ+16}).
Along with our trace distance bound of $n= O(d^2/\eps^2)$ --- which implies a fidelity bound of $O(d^2/\eps^2)$ --- we
now have three incomparable upper bounds on the complexity of fidelity tomography,
none of which match the best known lower bound of $\Omega(d^2/\eps)$ from~\cite{HHJ+16}.
Settling the complexity of fidelity learning remains an important open problem.

To perform full-state tomography, we analyze \emph{Keyl's algorithm}~\cite{Key06}.
After performing weak Schur sampling and receiving a random $\blambda$,
it performs a subsequent measurement in the $\blambda$-subspace
whose measurement outcomes correspond to $d \times d$ unitary matrices.
We denote by $\Keyl{\blambda}{\rho}$ the distribution on unitary matrices observed given $\blambda$ and $\rho$.
The algorithm receives a random \mbox{$\bV \sim \Keyl{\blambda}{\rho}$}
from this measurement and then outputs the density matrix
$\bV \mathrm{diag}(\blambda/n) \bV^\dagger$.
We will only require one fact about this algorithm from~\cite{OW16},
and so we defer the full description of Keyl's measurement and algorithm
to the papers~\cite{Key06, OW16}.

\subsection{Principal component analysis}

Next, we consider natural ``principal component analysis" (PCA)-style versions of the above problems.
Here, rather than learning the whole state or spectrum, the goal
is to learn the ``largest" $k$-dimensional part of the state or spectrum.
These problems arise naturally when the state is ``fundamentally" low rank,
but has been perturbed by a small amount of noise.
For spectrum estimation, this involves learning the first~$k$ $\alpha_i$'s
under the ordering $\alpha_1 \geq \cdots \geq \alpha_d$.
Previous work~\cite{OW16}
used Theorem~\ref{thm:truncated-ellone}
to learn the first~$k$ $\alpha_i$'s
in trace distance using $n = O(k^2/\eps^2)$ copies.
Using our Theorems~\ref{thm:truncated-elltwo} and~\ref{thm:truncated-chi}, we extend this result to other distance measures.
\begin{theorem}
The first~$k$ $\alpha_i$'s can be learned in Hellinger-squared distance or chi-squared divergence using $n=O(kd/\eps)$ copies, and in $\ell_2^2$ distance using $n=O(k/\eps)$ copies.
\end{theorem}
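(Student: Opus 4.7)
The plan is to analyze the natural \emph{truncated} empirical Young diagram (EYD) algorithm: perform weak Schur sampling on $\rho^{\otimes n}$ to obtain a random Young diagram $\blambda$, and output the top $k$ normalized row lengths $(\blambda_1/n, \ldots, \blambda_k/n)$ as the estimate for $(\alpha_1, \ldots, \alpha_k)$. By Fact~\ref{fact:schur-weyl}, $\blambda$ is distributed exactly as $\SW{n}{\alpha}$, so the analysis reduces to bounding the expected truncated distance between $\underline{\blambda}$ and $\alpha$ in the relevant metric.

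For the $\ell_2^2$ bound, I would invoke Theorem~\ref{thm:truncated-elltwo} directly, which gives
\[
\E_{\blambda \sim \SW{n}{\alpha}} \dltwosqk{k}{\underline{\blambda}}{\alpha} \leq \frac{46k}{n}.
\]
Choosing $n = \lceil 46k/\eps \rceil = O(k/\eps)$ makes this expectation at most $\eps$, giving $\eps$-accuracy in expectation; a standard Markov boosting argument (repeating the algorithm a constant number of times and picking the median estimate) then gives high-probability $\eps$-accuracy with the same asymptotic copy cost.

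For the Hellinger-squared and chi-squared statements, the argument is identical except that Theorem~\ref{thm:truncated-chi} is applied: since
\[
\E_{\blambda \sim \SW{n}{\alpha}} \dhellsqk{k}{\underline{\blambda}}{\alpha} \leq \frac{46kd}{n} \quad \text{and} \quad \E_{\blambda \sim \SW{n}{\alpha}} \dchik{k}{\underline{\blambda}}{\alpha} \leq \frac{46kd}{n},
\]
the choice $n = \lceil 46 kd/\eps \rceil = O(kd/\eps)$ yields expected distance at most $\eps$, and Markov boosting converts this to a high-probability guarantee.

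There is essentially no obstacle to overcome here: all the work is already packaged into Theorems~\ref{thm:truncated-elltwo} and~\ref{thm:truncated-chi} from the nonasymptotic Schur--Weyl theory developed earlier. The only subtlety worth flagging is verifying that the truncated distance measures $\dltwosqk{k}{\cdot}{\cdot}$, $\dhellsqk{k}{\cdot}{\cdot}$, and $\dchik{k}{\cdot}{\cdot}$ compare precisely the top-$k$ entries of $\underline{\blambda}$ against the top-$k$ entries of $\alpha$ (both sorted in decreasing order, which is automatic for Young diagrams and holds by the assumed sorting convention on $\alpha$), so that these bounds indeed measure exactly the error of the estimator $(\blambda_1/n, \ldots, \blambda_k/n)$ as an approximation to $(\alpha_1, \ldots, \alpha_k)$.
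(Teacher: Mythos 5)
Your proposal is correct and follows exactly the paper's intended route: the result is an immediate consequence of Fact~\ref{fact:schur-weyl} together with the truncated bounds of Theorems~\ref{thm:truncated-elltwo} and~\ref{thm:truncated-chi}, with $n$ chosen so the expected truncated distance is at most~$\eps$. The closing remark about verifying that the truncated distances compare the sorted top-$k$ entries of $\underline{\blambda}$ and $\alpha$ (and hence exactly measure the estimator's error) is the right point to check, and the Markov/median boosting aside is unnecessary for the claim as stated but does no harm.
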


For full-state PCA, the natural variant is to output a rank-$k$ matrix~$\widehat{\rho}$
which is almost as good as the best rank-$k$ approximation to~$\rho$.
For trace distance, the work of~\cite{OW16}
showed that $n=O(kd/\eps^2)$ copies are sufficient to output an estimate
with error at most~$\eps$ more than the error of the best rank-$k$ approximation.
In this work, we show the following fidelity PCA result.
\begin{theorem}\label{thm:pca-intro}
	There is an algorithm that, for any $\rho \in \C^{d \times d}$ and $k \in [d]$,
	outputs a random rank-$k$ (or less) hypothesis $\wh{\brho}$ such that
    \[
    	\E[\Dhellsq{\wh{\brho}}{\rho}] \leq \seqtail{\alpha}{k} + O\parens*{\frac{kdL}{n}}
			+ O\parens*{kL \sqrt{\frac{\seqtail{\alpha}{k}}{n}}},
    \]
    where $L = \min\{k, \ln n\}$ and $\seqtail{\alpha}{k} = \alpha_{k+1} + \cdots + \alpha_d$.
\end{theorem}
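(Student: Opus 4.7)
The proposed algorithm is a rank-$k$ truncation of Keyl's measurement: given $\rho^{\otimes n}$, first perform weak Schur sampling to obtain a random $\blambda \sim \SW{n}{\alpha}$; then run Keyl's measurement in the $\blambda$-subspace to obtain a random unitary $\bV \sim \Keyl{\blambda}{\rho}$; finally, output
\[
\wh{\brho} = \bV_{[k]}\, \mathrm{diag}(\underline{\blambda}_1, \ldots, \underline{\blambda}_k)\, \bV_{[k]}^\dagger,
\]
where $\bV_{[k]}$ consists of the first $k$ columns of $\bV$. By construction $\wh{\brho}$ has rank at most $k$.

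For the analysis, let $\rho^{(k)}$ denote the best rank-$k$ approximant of $\rho$ in Hellinger-squared --- the truncation of $\rho$ to its top $k$ eigenvalues, for which $\Dhellsq{\rho^{(k)}}{\rho} = \seqtail{\alpha}{k}$ up to the conventional constant factor. Since $\rho^{(k)}$ and $\rho - \rho^{(k)}$ have orthogonal supports, one has $\sqrt{\rho} = \sqrt{\rho^{(k)}} + \sqrt{\rho - \rho^{(k)}}$, and expanding $\Dhellsq{\wh{\brho}}{\rho} = \tfrac12 \operatorname{Tr}[(\sqrt{\rho}-\sqrt{\wh{\brho}})^2]$ around this split yields the exact identity
\[
\Dhellsq{\wh{\brho}}{\rho} = \Dhellsq{\wh{\brho}}{\rho^{(k)}} + \Dhellsq{\rho^{(k)}}{\rho} - \operatorname{Tr}\bigl(\sqrt{\rho - \rho^{(k)}}\,\sqrt{\wh{\brho}}\bigr),
\]
whose last summand is the trace of a product of PSD matrices and hence nonnegative. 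Dropping it, it remains to bound $\E \Dhellsq{\wh{\brho}}{\rho^{(k)}}$. For this I would invoke the analytic lemma from~\cite{OW16} that expresses $\E_{\bV \sim \Keyl{\blambda}{\rho}}$ of any Hellinger-type quadratic in $\bV$ as a classical statistic of $\blambda$ and $\alpha$; feeding in the truncated Hellinger bound $O(kd/n)$ from Theorem~\ref{thm:truncated-chi} for the eigenvalue discrepancy between $\underline{\blambda}_1,\ldots,\underline{\blambda}_k$ and $\alpha_1,\ldots,\alpha_k$, and picking up the standard $L = \min\{k,\ln n\}$ overhead from Keyl's eigenvector analysis, should yield
\[
\E \Dhellsq{\wh{\brho}}{\rho^{(k)}} \leq O(kdL/n) + O\bigl(kL\sqrt{\seqtail{\alpha}{k}/n}\bigr).
\]
Combining this with the decomposition above yields the theorem.

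The main obstacle is extracting the cross term $O(kL\sqrt{\seqtail{\alpha}{k}/n})$ with precisely this dependence. Because Keyl's measurement is driven by all of $\rho$, not just $\rho^{(k)}$, the top-$k$ columns of $\bV$ inevitably absorb contributions from the tail eigendirections, producing leakage overlaps $|\langle u_j \mid \bV_i\rangle|^2$ for $i \leq k < j$ whose aggregate contribution to $\Dhellsq{\wh{\brho}}{\rho^{(k)}}$ scales with $\sqrt{\seqtail{\alpha}{k}}$ rather than with $\seqtail{\alpha}{k}$ itself. Decomposing these overlaps through the Gelfand--Tsetlin realization of $\Keyl{\blambda}{\rho}$ used in~\cite{OW16} and applying Cauchy--Schwarz to separate the tail factor $\sqrt{\seqtail{\alpha}{k}}$ from the $\sqrt{kL/n}$ eigenvector-estimation error is the delicate step; the remainder mirrors the fidelity tomography analysis of~\cite{OW16}, with our new truncated Theorem~\ref{thm:truncated-chi} playing the role that the full-spectrum chi-squared bound played there.
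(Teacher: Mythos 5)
The algorithm you propose is the same one the paper analyzes, and the decomposition identity you write is algebraically valid (up to a missing factor of~$2$ on the cross term, which should be $-2\tr(\sqrt{\wh{\brho}}\sqrt{\rho-\rho^{(k)}})$). However, dropping that cross term is too lossy, and the intermediate bound you assert,
\[
\E\bigl[\Dhellsq{\wh{\brho}}{\rho^{(k)}}\bigr] \leq O\parens*{\frac{kdL}{n}} + O\parens*{kL\sqrt{\frac{\seqtail{\alpha}{k}}{n}}},
\]
is false. Concretely, take $d = 2$, $k = 1$, $\alpha = (\tfrac12+\eps, \tfrac12-\eps)$ with $\eps \to 0^+$. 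Then $\rho$ is (nearly) maximally mixed and the Keyl unitary $\bV$ is (nearly) Haar-random, so $|\langle \bV_1, e_1\rangle|^2$ is uniform on $[0,1]$ with mean $\tfrac12$. A direct computation gives $\E\bigl[\Dhellsq{\wh{\brho}}{\rho^{(k)}}\bigr] \to \tfrac12 + \tfrac12 - 2\cdot\tfrac12\cdot\tfrac12 = \tfrac12$, a constant that does \emph{not} decay with $n$, whereas your claimed bound is $O(1/\sqrt{n})$. The cross term you discarded is $\E\bigl[2\tr(\sqrt{\wh{\brho}}\sqrt{\rho-\rho^{(k)}})\bigr] \to \tfrac12$, and it is precisely the cancellation against this term that brings $\E\bigl[\Dhellsq{\wh{\brho}}{\rho}\bigr]$ down to $\seqtail{\alpha}{1} \approx \tfrac12$ in this example. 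In general, whenever $\alpha_k$ and $\alpha_{k+1}$ are close, the eigenvector $\bV_k$ produced by Keyl's measurement is nearly uniformly distributed over a degenerate subspace, and so is far from the fixed top-$k$ eigenspace of $\rho^{(k)}$ in Hellinger distance; the penalty for this is exactly absorbed by the cross term. Your proposed route cannot work without retaining and analyzing that cross term, which reintroduces all the difficulty.

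The paper avoids this by never introducing $\rho^{(k)}$ as a comparator. Instead it passes to the $\bV$-rotated frame and expands $\Dhellsq{\mtxTrunc{\bLambda}{k}}{\bR}$ directly, where $\bR = \bV^\dagger \rho \bV$, using $(\sqrt{\bR})_{ii} = \E_{\bpi\sim\bV}[\sqrt{\alpha_{\bpi(i)}}]$ (Theorem~\ref{thm:tomography-technical}). The eigenvector estimation error is then encoded in the random permutation $\bpi$ induced by $\bV$ and controlled by Corollary~\ref{cor:R} applied to $\E[\sum_{i\le j}(\alpha_i - \alpha_{\bpi(i)})]$; since $\bpi$ compares $\alpha$ with a permutation of \emph{itself} rather than with a fixed hard truncation, the degenerate case poses no problem --- permuting within a block of equal $\alpha_i$'s costs nothing. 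This is a substantively different decomposition than yours, and the difference is essential.

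A secondary concern: even setting aside the cross-term issue, the Gelfand--Tsetlin / leakage-overlap argument you sketch for bounding $\E\Dhellsq{\wh{\brho}}{\rho^{(k)}}$ is not carried out. The paper instead proves the required ``rearrangement'' bound elementarily (Lemma~\ref{lem:rearrangement} and Corollary~\ref{cor:rearrangement}), converting the Hellinger sum $\sum_i(\sqrt{\alpha_i}-\sqrt{\alpha_{\bpi(i)}})^2$ into a weighted sum of the linear displacements $\sum_{i\le j}(\alpha_i - \alpha_{\bpi(i)})$ that Corollary~\ref{cor:R} controls; the $L = \min\{k,\ln n\}$ factor comes from Lemma~\ref{lem:log} applied to the weights $\frac{\alpha_j-\alpha_{j+1}}{\alpha_j}$, not from ``Keyl's eigenvector analysis'' per se, and the $\sqrt{\seqtail{\alpha}{k}/n}$ factor comes from Proposition~\ref{prop:ITW-bound-generally} applied to the excess $\E[\seqhead{\ul{\blambda}}{k} - \seqhead{\alpha}{k}]$.
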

\noindent
Let us spend time interpreting this result.
The Hellinger-squared error of the best rank-$k$ approximation to~$\rho$ --- the projection of~$\rho$ to its top-$k$ eigenspace --- is given by $\seqtail{\alpha}{k}$.
When $\rho$ is exactly of rank~$k$, then $\seqtail{\alpha}{k}=0$, and this bound tells us that
\begin{equation*}
n = O\left(\min\left\{\frac{kd}{\eps} \log\left(\frac{d}{\eps}\right), \ \frac{k^2 d}{\eps}\right\}\right)
\end{equation*}
copies are sufficient to learn~$\rho$ up to error~$\eps$.
The left-hand term in the min was shown previously by Haah et al.~\cite{HHJ+16} using different techniques,
whereas the right-hand term is new.
In the case that $\rho$ is \emph{not} rank-$k$,
let us first make the reasonable assumption that $k \leq d/\ln n$.
Then
 \[
    	\E[\Dhellsq{\wh{\brho}}{\rho}] \leq \seqtail{\alpha}{k} + Z_1 + Z_2, \text{ where }
	Z_1 = O\parens*{\frac{kd\ln n}{n}},~Z_2 = O\parens*{\sqrt{\frac{\seqtail{\alpha}{k} k d \ln n}{n}}}.
    \]
Noting that $Z_2$ is the geometric mean of $\seqtail{\alpha}{k}$ and $Z_2$, we get that for any $\delta > 0$,
\[
    	\E[\Dhellsq{\wh{\brho}}{\rho}] \leq (1+\delta) \cdot \seqtail{\alpha}{k} + O_\delta\parens*{\frac{kd\ln n}{n}}.
    \]
Hence, this tells us that $n = O(kd/\eps) \cdot \log(d/\eps)$ copies are sufficient
to learn~$\rho$ to error $(1+\delta) \cdot \seqtail{\alpha}{k} + \eps$ (essentially recovering the exactly rank-$k$ case).
Finally, in the unlikely case of $k > d/\ln n$,  a similar argument shows that
$n = O(kd/\eps) \cdot \log^2(d/\eps)$ copies are sufficient
to learn~$\rho$ to error $(1+\delta) \cdot \seqtail{\alpha}{k} + \eps$.

\subsection{Asymptotics of the Schur--Weyl distribution}\label{sec:asymptotic}

In this section, we survey the known results
on the Schur--Weyl distribution in the asymptotic setting.
Though we are primarily interested in proving convergence results with explicit error bounds
in the nonasymptotic setting,
the asymptotic regime is useful for understanding the high-level features of the Schur--Weyl distribution.
Indeed, the early quantum computing papers~\cite{ARS88,KW01} on this topic operated in this regime.

The earliest theorem in this area is due to Vershik and Kerov~\cite{VK81}, who showed the following:
\begin{theorem}[\cite{VK81}]
Let $\alpha = (\alpha_1, \ldots, \alpha_d)$ be a sorted probability distribution,
and let ${\blambda \sim \SW{n}{\alpha}}$.
Then for all $k \in [d]$, as $n \to \infty$ we have $\blambda_k/n \to \alpha_k$ in probability.
\end{theorem}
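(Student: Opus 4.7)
The plan is to derive this asymptotic statement as an immediate corollary of the much sharper nonasymptotic second-moment bound in Theorem~\ref{thm:mean-squared-intro}. That theorem gives $\E_{\blambda \sim \SW{n}{\alpha}}(\blambda_k - \alpha_k n)^2 \leq O(\nu_k n)$ where $\nu_k = \min\{1, \alpha_k d\} \leq 1$. Since $\alpha$ and $d$ are held fixed while $n \to \infty$, the right-hand side is $O(n)$ with an implicit constant depending only on $\alpha$ and $d$.

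To convert this into convergence in probability of $\blambda_k/n$ to $\alpha_k$, I would invoke Chebyshev's inequality. Fix any $\eps > 0$ and $k \in [d]$. Then
\[
\Pr\bigl[\bigl|\blambda_k/n - \alpha_k\bigr| \geq \eps\bigr]
= \Pr\bigl[(\blambda_k - \alpha_k n)^2 \geq \eps^2 n^2\bigr]
\leq \frac{\E[(\blambda_k - \alpha_k n)^2]}{\eps^2 n^2}
\leq \frac{O(\nu_k n)}{\eps^2 n^2}
= \frac{O(\nu_k)}{\eps^2 n},
\]
which tends to $0$ as $n \to \infty$. This gives exactly the desired convergence in probability for each $k$.

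There is essentially no obstacle here, since Theorem~\ref{thm:mean-squared-intro} is strictly stronger than the Vershik--Kerov asymptotic result and in fact already holds uniformly in~$n$. The only thing worth remarking is that the original 1981 proof of Vershik and Kerov predates these nonasymptotic tools, so this constitutes a new, much shorter derivation in the style of the paper's overall program: first prove sharp nonasymptotic row estimates via the majorization and excess-length machinery (Theorems~\ref{thm:catan-intro} and~\ref{thm:ITW-intro}), then read off classical asymptotic statements as one-line corollaries.
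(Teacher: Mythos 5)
The paper does not actually prove this theorem: it is cited as a background result from Vershik and Kerov~\cite{VK81} in Section~\ref{sec:asymptotic}, with the remark that it has been reproven independently in~\cite{ARS88} and~\cite{KW01}. So there is no ``paper's own proof'' to compare against, and what you have written is a fresh derivation rather than a reconstruction.

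Your argument is correct, and it is non-circular. Theorem~\ref{thm:mean-squared-intro} is established in Section~\ref{sec:schur-weyl} (as Theorem~\ref{thm:mean-squared}) using the Lipschitz/Azuma martingale bound (Propositions~\ref{prop:lipschitz} and~\ref{prop:azuma}), the lower-row majorization theorem (Theorem~\ref{thm:last-rows-majorize}), and the excess bound of Section~\ref{sec:ITW} (Theorem~\ref{thm:our-ITW}); none of these inputs invokes the Vershik--Kerov asymptotic law, so deriving the latter from the former is legitimate. The Chebyshev step is routine and correctly executed: from $\E[(\blambda_k - \alpha_k n)^2] \leq O(\nu_k n)$ with $\alpha$, $d$ (hence $\nu_k$) fixed, one gets $\Pr\bigl[\abs{\blambda_k/n - \alpha_k} \geq \eps\bigr] \leq O(\nu_k)/(\eps^2 n) \to 0$ for each fixed $k$ and $\eps > 0$. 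This is precisely what one would expect from the paper's program of proving sharp nonasymptotic estimates that subsume the classical asymptotic statements; the authors simply elected to cite~\cite{VK81} for historical attribution rather than spell out this one-line reduction. One cosmetic remark: since $\nu_k \leq 1$ always, you may as well drop it and report the cleaner bound $O\bigl(1/(\eps^2 n)\bigr)$.
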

\noindent
This theorem has been reproven in a variety of works,
including independently by~\cite{ARS88} and~\cite{KW01}
in the quantum computing literature.

Subsequent work determined the lower-order asymptotics of the Schur--Weyl distribution.
As it turns out, the qualitative features of the distribution depend on whether~$\alpha$ has any repeated values.
The simplest case, when all the $\alpha_i$'s are distinct, was first handled
in the work of Alicki, Rudnicki, and Sadowski~\cite{ARS88}.
\begin{theorem}[\cite{ARS88}]\label{thm:distinct}
Let $\alpha = (\alpha_1, \ldots, \alpha_d)$ be a sorted probability distribution in which every entry is distinct.
Let $\blambda \sim \SW{n}{\alpha}$, and let $(\bg_1, \dots, \bg_d)$ be centered jointly Gaussian random variables with $\Var[\bg_i] = \alpha_i(1-\alpha_i)$ and $\Cov[\bg_i, \bg_j] = -\alpha_i\alpha_j$,
for $i \neq j$.\footnote{This is a degenerate Gaussian distribution, supported on $\sum_i \bg_i = 0$.}
Then as $n \rightarrow \infty$,
\begin{equation*}
\left(\frac{\blambda_1 - \alpha_1 n}{\sqrt{n}}, \ldots, \frac{\blambda_d-\alpha_d n}{\sqrt{n}}\right)
\rightarrow
(\bg_1, \ldots, \bg_d)
\end{equation*}
in distribution.
\end{theorem}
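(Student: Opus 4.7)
The plan is to leverage Theorem~\ref{thm:ITW-intro} to reduce the row fluctuations to those of the multinomial letter-count vector, and then invoke the classical multinomial central limit theorem.

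Let $N_i$ denote the number of occurrences of letter~$i$ in the $\alpha$-random word~$\bw$, and write $N_{(i)}$ for the $i$th largest among the $N_j$'s. Since every constant-letter subsequence is (weakly) increasing, Greene's Theorem~\ref{thm:grn} yields the pointwise lower bound
\[
\blambda_1 + \cdots + \blambda_k \;\geq\; N_{(1)} + \cdots + N_{(k)} \qquad \text{for each } k \in [d],
\]
obtained by taking the~$k$ constant-letter subsequences of the $k$ most frequent letters. Set $\xi_k := (\blambda_1 + \cdots + \blambda_k) - (N_{(1)} + \cdots + N_{(k)}) \geq 0$. Jensen's inequality applied to the convex function $x \mapsto x_{(1)} + \cdots + x_{(k)}$ gives $\E[N_{(1)} + \cdots + N_{(k)}] \geq \alpha_1 n + \cdots + \alpha_k n$, while Theorem~\ref{thm:ITW-intro} upper bounds $\E[\blambda_1 + \cdots + \blambda_k] \leq \alpha_1 n + \cdots + \alpha_k n + \ITW{k}{\alpha}$. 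Subtracting shows $\E[\xi_k] \leq \ITW{k}{\alpha}$, a \emph{finite} constant because the $\alpha_i$'s are distinct; by Markov's inequality, $\xi_k$ is therefore bounded in probability.

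Writing $\blambda_k = N_{(k)} + (\xi_k - \xi_{k-1})$ (with $\xi_0 := 0$) and dividing by $\sqrt{n}$, we obtain
\[
\frac{\blambda_k - \alpha_k n}{\sqrt{n}} \;=\; \frac{N_{(k)} - \alpha_k n}{\sqrt{n}} + o_P(1).
\]
Because the $\alpha_i$'s are distinct and $N_i = \alpha_i n + O_P(\sqrt{n})$ by the multinomial CLT, the event $\{N_{(i)} = N_i \text{ for every } i\}$ has probability tending to~$1$; on this event we may replace $N_{(k)}$ by $N_k$. The multinomial CLT then gives joint convergence of $((N_i - \alpha_i n)/\sqrt{n})_{i \in [d]}$ to precisely the degenerate Gaussian vector $(\bg_1, \ldots, \bg_d)$ described in the theorem, and a componentwise application of Slutsky's theorem transfers this joint convergence to $((\blambda_k - \alpha_k n)/\sqrt{n})_{k \in [d]}$.

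The main load-bearing step is Theorem~\ref{thm:ITW-intro}, whose proof does the real combinatorial work; once that is in hand, the entire argument reduces to a clean sandwiching whose slack $\xi_k$ is $O_P(1)$. Distinctness of the $\alpha_i$'s enters in two essential places: the finiteness of $\ITW{k}{\alpha} = \sum_{i \leq k < j} \alpha_j/(\alpha_i - \alpha_j)$, and the asymptotic identification of $N_{(i)}$ with $N_i$. When either fails, the row fluctuations are no longer governed by the multinomial CLT but by random-matrix ensembles, as surveyed in Section~\ref{sec:asymptotic}.
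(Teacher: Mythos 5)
The paper does not prove Theorem~\ref{thm:distinct}; it is cited from~\cite{ARS88} as classical background, so there is no in-paper proof to compare against. Your derivation is nonetheless correct and is a nice self-contained re-proof using the machinery of this very paper. The key sandwich is sound: Greene's Theorem gives the pointwise lower bound $\seqhead{\blambda}{k} \geq N_{(1)} + \cdots + N_{(k)}$ (take the $k$ constant-letter subsequences of the $k$ most frequent letters); Jensen's inequality applied to the convex function ``sum of the $k$ largest coordinates'' gives $\E[N_{(1)}+\cdots+N_{(k)}] \geq \seqhead{\alpha}{k}\,n$; and Theorem~\ref{thm:ITW-intro} gives the matching upper bound $\E[\seqhead{\blambda}{k}] \leq \seqhead{\alpha}{k}\,n + \ITW{k}{\alpha}$, with $\ITW{k}{\alpha}$ finite precisely because the $\alpha_i$ are distinct. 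This makes the nonnegative slack $\xi_k$ bounded in $L^1$ uniformly in~$n$, hence $\xi_k/\sqrt{n} \to 0$ in probability by Markov. The telescoping identity $\blambda_k = N_{(k)} + (\xi_k - \xi_{k-1})$, the asymptotic identification $N_{(i)} = N_i$ (valid w.h.p.\ because the gaps $\alpha_i - \alpha_{i+1}$ are constants while the counts fluctuate by $O_P(\sqrt n)$), the multinomial CLT with exactly the stated covariance, and the multivariate form of Slutsky's theorem then close the argument. Two small presentational notes: you should cite the \emph{multivariate} Slutsky theorem rather than describing the last step as ``componentwise'' (joint convergence is what you need and what you in fact have, since the error vector tends to $0$ in probability jointly); and the lower bound $\seqhead{\blambda}{k} \geq N_{(1)} + \cdots + N_{(k)}$ is just the majorization $\blambda \succ \bh$ already noted in Section~\ref{sec:ITW}, which you could cite directly. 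This argument is likely quite different from the original one in~\cite{ARS88} and gives an appealing illustration of what Theorem~\ref{thm:ITW-intro} buys: the entire content of the asymptotic CLT for the Schur--Weyl rows (in the distinct-spectrum case) is reduced to the boundedness in $n$ of the excess $\Exc{n}{k}{\alpha}$.
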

\noindent
In other words, in the case of distinct~$\alpha_i$'s, the Schur--Weyl distribution
acts in the asymptotic regime
like the multinomial distribution with parameter~$\alpha$.
The intuition is that given an $\alpha$-random word~$\bw$,
the count of $1$'s is so much greater than the count of any other letter that the longest increasing subsequence
is not much longer than the all-$1$'s subsequence.
Similarly, the longest pair of disjoint increasing subsequences
is not much longer than the all-$1$'s and all-$2$'s subsequences, and so forth.
This theorem has been reproven many times,
such as by~\cite{HX13, Buf12,Mel12,FMN13}.

On the other hand, when $\alpha$ is \emph{degenerate},
i.e.\ the $\alpha_i$'s are \emph{not} all distinct,
then $\SW{n}{\alpha}$ has a surprisingly non-Gaussian limiting behavior.
The first paper along this line of work was by Baik, Deift, and Johannson~\cite{BDJ99};
it characterized the Plancherel distribution (a special case of the Schur--Weyl distribution)
in terms of the eigenvalues of the \emph{Gaussian unitary ensemble}.
\begin{definition}
The \emph{Gaussian unitary ensemble} $\gue{d}$ is the distribution on $d \times d$ Hermitian matrices $\bX$ in which
(i) $\bX_{i,i} \sim \calN(0,1)$ for each $i \in [d]$,
and (ii) $\bX_{i,j}\sim \calN(0,1)_{\C}$ and $\bX_{j,i} = \overline{\bX_{i,j}}$ for all $i < j \in [d]$.
Here $\calN(0,1)_{\C}$ refers to the \emph{complex} standard Gaussian, distributed as $\calN(0, \frac{1}{2}) + i \calN(0,\frac{1}{2})$.
The \emph{traceless GUE}, denoted $\traceless{d}$,
is the probability distribution on $d \times d$ Hermitian matrices $\bY$ given by
\begin{equation*}
\bY = \bX - \frac{\trace(\bX)}{d} \cdot I,
\end{equation*}
where $\bX \sim \gue{d}$.
\end{definition}
The next fact characterizes the eigenvalues of the traceless GUE in the limit (cf.~\cite{HX13}).
\begin{fact}\label{fact:circle-law}
Given $\bY \sim \traceless{d}$, then as $d \rightarrow \infty$,
\begin{equation*}
\left(\frac{\eig_1(\bY)}{\sqrt{d}}, \ldots, \frac{\eig_d(\bY)}{\sqrt{d}}\right)
\end{equation*}
converges almost surely to the semicircle law with density $\sqrt{4-x^2}/2\pi$, $-2 \leq x \leq 2$,
where $\eig_i(\bY)$ denotes the $i$th largest eigenvalue of~$\bY$.
\end{fact}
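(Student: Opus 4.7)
The plan is to reduce the statement to the classical Wigner semicircle law for the GUE, treating Wigner as a black box, and then argue that the traceless shift is negligible after normalization by $\sqrt{d}$.

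First, I would invoke Wigner's semicircle law for the standard GUE: if $\bX \sim \gue{d}$, then the empirical spectral measure $\bmu_d := \frac{1}{d}\sum_{i=1}^d \delta_{\eig_i(\bX)/\sqrt{d}}$ converges almost surely (weakly) to the semicircle law of density $\sqrt{4-x^2}/(2\pi)$ on $[-2,2]$; this is the form in which the stated fact must be interpreted, since the left-hand side of the display is a random $d$-tuple being compared to a deterministic measure. Second, since $\bY = \bX - (\trace(\bX)/d)\, I$ is just a scalar shift of $\bX$, the eigenvalues satisfy $\eig_i(\bY) = \eig_i(\bX) - \trace(\bX)/d$, and so dividing by $\sqrt{d}$ gives
\[
\frac{\eig_i(\bY)}{\sqrt{d}} = \frac{\eig_i(\bX)}{\sqrt{d}} - \bs_d, \quad\text{where } \bs_d := \frac{\trace(\bX)}{d^{3/2}}.
\]

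Third, I would show $\bs_d \to 0$ almost surely. Since $\trace(\bX) = \sum_{i=1}^d \bX_{i,i}$ is a sum of $d$ i.i.d.\ $\calN(0,1)$ variables, $\bs_d \sim \calN(0, 1/d^2)$. The Gaussian tail bound gives $\Pr[|\bs_d| > \eps] \leq 2\exp(-\eps^2 d^2/2)$, which is summable in $d$; by Borel--Cantelli (applied on the standard coupled probability space carrying all the entries of the nested GUEs), $\bs_d \to 0$ a.s. Finally, the empirical measure of the $\eig_i(\bY)/\sqrt{d}$ is the pushforward of $\bmu_d$ under translation by $-\bs_d$. Since translation is continuous in the weak topology and $\bs_d \to 0$ a.s., the empirical measure of $\bY/\sqrt{d}$ converges a.s.\ to the same semicircle limit.

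The main obstacle is not analytic but rather interpretational: making precise the ``almost sure'' coupling across different dimensions~$d$, and confirming that the ``convergence to the semicircle law'' in the statement really refers to weak convergence of the empirical spectral measure (the only reasonable reading, given that the LHS is a random tuple). Once that reading is fixed, both the invocation of Wigner's law and the Borel--Cantelli step are entirely routine, and no further input beyond standard Gaussian concentration is needed.
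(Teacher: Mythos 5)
The paper does not prove this fact; it is stated as a known result with a citation to~\cite{HX13}, so there is no internal proof to compare your proposal against. That said, your argument is correct and gives a clean, self-contained derivation from the classical Wigner semicircle law for the GUE. The reduction is right: $\eig_i(\bY) = \eig_i(\bX) - \trace(\bX)/d$ since subtracting $(\trace(\bX)/d) I$ shifts every eigenvalue by the same scalar, so after rescaling by $\sqrt{d}$ the discrepancy is exactly $\bs_d = \trace(\bX)/d^{3/2}$. Your variance computation $\bs_d \sim \calN(0, 1/d^2)$ is correct (the $d$ diagonal entries are i.i.d.\ real standard Gaussians), the tail bound $2\exp(-\eps^2 d^2/2)$ is summable, and Borel--Cantelli plus continuity of translation in the weak topology of probability measures finishes the argument. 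Your interpretational caveat is also apt: ``converges almost surely to the semicircle law'' is shorthand for almost-sure weak convergence of the empirical spectral measure $\frac{1}{d}\sum_i \delta_{\eig_i(\bY)/\sqrt{d}}$, on a probability space carrying a coupled nested sequence of GUEs (e.g.\ via a doubly-indexed array of Gaussians). No gap.

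Two small presentational points you might tighten. First, you could note explicitly that $\bY$ and $\bX$ are simultaneously diagonalizable (trivially, since they differ by a multiple of the identity), which is what makes the eigenvalue shift identity clean rather than merely an inequality via Weyl. Second, it is worth saying in one line why the translated measures converge: if $\bmu_d \Rightarrow \mu$ a.s.\ and $\bs_d \to 0$ a.s., then for any bounded Lipschitz test function $f$ one has $\bigl|\int f(x - \bs_d)\,d\bmu_d - \int f\,d\bmu_d\bigr| \le \mathrm{Lip}(f)\,|\bs_d| \to 0$, which combined with $\int f\,d\bmu_d \to \int f\,d\mu$ gives the claim. This is the ``continuity of translation'' step made concrete.
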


The traceless GUE was first used to characterize
the Schur--Weyl distribution in the special case
when $\alpha$ is the uniform distribution,
 the homogeneous random word case.
In this case, Tracy and Widom~\cite{TW01}
showed such a characterization for just the first row~$\blambda_1$,
and Johansson~\cite{Joh01} extended their result to hold for the entire diagram~$\blambda$, as follows
(cf.\ the quantum mechanical proof of this theorem by Kuperberg~\cite{Kup02}).
\begin{theorem}[\cite{Joh01}]\label{thm:homogeneous}
Let $\alpha = (\tfrac{1}{d}, \ldots, \frac{1}{d})$ be the uniform distribution.
Let $\blambda \sim \SW{n}{\alpha}$, and let $\bX \sim \traceless{d}$.
Then as $n \rightarrow \infty$,
\begin{equation*}
\left(\frac{\blambda_1 -  n/d}{\sqrt{n/d}}, \ldots, \frac{\blambda_d-n/d}{\sqrt{n/d}}\right)
\rightarrow
(\eig_1(\bX), \ldots, \eig_d(\bX))
\end{equation*}
in distribution.
\end{theorem}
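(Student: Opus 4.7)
The plan is to derive an explicit formula for the Schur--Weyl distribution with uniform~$\alpha$ and then identify it as a classical discrete orthogonal polynomial ensemble whose large-$n$ asymptotics are already known to give GUE eigenvalues. For $\alpha = (1/d,\ldots,1/d)$, Fact~\ref{fact:schur-weyl} together with Schur--Weyl duality gives $\Pr[\blambda = \lambda] = d^{-n} \dim(\Specht{\lambda}) \dim(\Weyl{\lambda}{d})$. Substituting the Weyl dimension formula $\dim(\Weyl{\lambda}{d}) = \prod_{i<j}(l_i - l_j)/\prod_{i<j}(j-i)$ and the Frobenius formula $\dim(\Specht{\lambda}) = n!\prod_{i<j}(l_i - l_j)/\prod_i l_i!$, where $l_i := \lambda_i + d - i$, the joint law rearranges to
\[
	\Pr[\blambda = \lambda] \;\propto\; V(l)^2 \prod_{i=1}^d \frac{(n/d)^{l_i}}{l_i!} \qquad \bigl(V(l) := \textstyle\prod_{i<j}(l_i - l_j)\bigr),
\]
on the slice $\{l : l_1 > l_2 > \cdots > l_d,\; \sum_i l_i = n + \binom{d}{2}\}$. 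This is precisely a \emph{Charlier orthogonal polynomial ensemble} of rate $n/d$ conditioned on the total $\sum_i l_i$ being prescribed; equivalently, a discrete determinantal point process with Poisson single-particle weights on a fixed-total slice.

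Next I would invoke (or reprove via saddle-point analysis of the Charlier orthogonal polynomials) the known limit of this ensemble: as $n \to \infty$ with $d$ held fixed, the recentered rescaled coordinates $(l_i - n/d)/\sqrt{n/d}$ converge jointly in distribution to the ordered eigenvalues of a $d \times d$ GUE matrix. Since $\lambda_i = l_i - d + i$, the deterministic $O(1)$ shift between $\lambda_i$ and $l_i$ vanishes after division by $\sqrt{n/d}$ and does not affect the limit. The ``traceless'' qualifier enters naturally from the sum constraint: $\sum_i \blambda_i = n$ pins the limiting Gaussians to the hyperplane $\sum_i \bg_i = 0$, and the unique centered Gaussian law on that hyperplane compatible with the Weyl covariance structure is exactly the law of the ordered eigenvalues of $\bX - \frac{\trace(\bX)}{d} I$ for $\bX \sim \gue{d}$, i.e.\ of $\traceless{d}$.

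The main obstacle is the random-matrix limit in the second step. Rigorously deducing the GUE scaling limit from the Charlier ensemble classically proceeds via Plancherel--Rotach asymptotics of the Charlier polynomials, yielding uniform convergence of the correlation kernel to the GUE kernel on the scale $\sqrt{n/d}$; a Riemann--Hilbert or steepest-descent analysis suffices. An alternative is a method of moments: expand $\E \sum_i \bigl((\blambda_i - n/d)/\sqrt{n/d}\bigr)^k$ using Schur--Weyl duality and character sums, and match them against the Wick-formula moments of $\traceless{d}$. Either route is technically demanding; the subtle point is to perform the asymptotics in a way that respects the trace-zero constraint so that one genuinely arrives at $\traceless{d}$ rather than the unconstrained GUE.
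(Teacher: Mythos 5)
The paper does not prove this theorem; it is cited from Johansson \cite{Joh01}, and the paper also points to a distinct ``quantum mechanical'' proof by Kuperberg \cite{Kup02}. Your sketch follows Johansson's route, and the algebraic reduction is correct: the Schur--Weyl weight $d^{-n}\dim(\Specht{\lambda})\dim(\Weyl{\lambda}{d})$ is proportional to $V(l)^2/\prod_i l_i!$ on the slice $\sum_i l_i = n + \binom{d}{2}$, which is the Charlier weight of rate $n/d$ conditioned on its total. The one caveat, which you flag as ``the subtle point'' but do not resolve, is that conditioning a determinantal ensemble on $\sum_i l_i$ destroys the determinantal structure, so Plancherel--Rotach or Riemann--Hilbert asymptotics of the Christoffel--Darboux kernel cannot be applied to the conditioned measure as stated. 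The standard resolution, consistent with Johansson's treatment, is to analyze the unconditioned, genuinely determinantal Charlier ensemble (the Poissonized Schur--Weyl measure, with $n$ random) and then depoissonize in $n$; the trace-zero constraint in the limit then emerges from the depoissonization rather than being imposed on the kernel. Your method-of-moments alternative is viable and does respect the constraint automatically, but note that $\E\sum_i\bigl((\blambda_i - n/d)/\sqrt{n/d}\bigr)^k$ only controls expected traces of powers of the empirical measure; to identify the joint law of the ordered $d$-tuple you would need all mixed moments of the power sums $p_{k_1}p_{k_2}\cdots$, plus a Carleman-type argument to conclude. In short: correct outline, matching Johansson's approach; the depoissonization (or equivalent handling of the conditioning) is the substantive gap in the sketch.
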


Using Fact~\ref{fact:circle-law},
we expect that for a typical~$\blambda$,
\begin{equation*}
\blambda_1 \approx \frac{n}{d} + 2\sqrt{n}, \qquad
\blambda_d \approx \frac{n}{d} - 2\sqrt{n}
\end{equation*}
and that the remaining $\blambda_i$'s interpolate between these two values.
(Let us also mention a line of work that has considered the case of uniform~$\alpha$ in the nonasymptotic setting.
Here, rather than fixing~$d$ and letting~$n$ tend towards infinity,
$d$ is allowed to grow to infinity while~$n$ scales as $n=O(d^2)$.
In this case, Biane~\cite{Bia01} has shown a limiting theorem for the shape of~$\blambda$,
and M\'{e}liot~\cite{Mel10a} has characterized the fluctions of~$\blambda$ around its mean
with a certain Gaussian process.
The paper of Ivanov and Olshanski~\cite{IO02}, which proves similar results for the Plancherel distribution,
serves as an excellent introduction to this area.)

In the case of general~$\alpha$ ---
the inhomogeneous random word case ---
it is convenient to group the indices $\{1, \ldots, d\}$ according to the degeneracies of~$\alpha$.

\begin{notation}
Suppose there are~$m$ distinct values among the $\alpha_i$'s, and write $\alpha^{(k)}$ for the $k$th largest distinct value.
We will block the indices as
\begin{equation*}
[1, d] = [1, d^{(1)}] \cup [d^{(1)} + 1, d^{(1)} +  d^{(2)}] \cup \cdots \cup [d-d^{(m)} + 1, d],
\end{equation*}
where every $\alpha_i$ in the $k$th block has the value $\alpha^{(k)}$.
Given a partition $\lambda$ of height~$d$, we will write $\lambda^{(k)}_i$ for the $i$th index in the $k$th block,
i.e.\ $\lambda^{(k)}_i = \lambda_{\seqheadstrict{d}{k}+i}$, where $\seqheadstrict{d}{k} = d^{(1)} + \cdots + d^{(k-1)}$.
(We will only use this notation in this subsection; in particular, for Theorem~\ref{thm:inhomogeneous}.)
\end{notation}
\noindent
In the inhomogeneous case,
Its, Tracy, and Widom~\cite{ITW01} gave a limiting characterization for the first row~$\blambda_1$,
and Houdr\'{e} and Xu~\cite{HX13}, in a work that first appeared in~$2009$,
extended their result to hold for the entire diagram~$\blambda$.
Roughly, their characterization shows that within each block,
$\blambda$ acts GUE-like, as in Theorem~\ref{thm:homogeneous},
but across blocks $\blambda$ acts Gaussian-like, as in Theorem~\ref{thm:distinct}.
We cite here a related theorem of M\'{e}liot~\cite{Mel12}, which cleanly decouples these two limiting effects.
\begin{theorem}[\cite{Mel12}]\label{thm:inhomogeneous}
Let $\alpha = (\alpha_1, \ldots, \alpha_d)$ be a sorted probability distribution.
Let $\blambda \sim \SW{n}{\alpha}$,
let $(\bg_1, \ldots, \bg_m)$ be centered jointly Gaussian random variables with  covariances $\delta_{k \ell} d^{(k)} - d^{(k)} d^{(\ell)} \sqrt{\alpha^{(k)}\alpha^{(\ell)}}$,
and let $\bY^{(k)} \sim \traceless{d^{(k)}}$, for each $k \in [m]$.
Then as $n \rightarrow \infty$,
\begin{equation*}
\left\{\frac{\blambda_i^{(k)}-\alpha^{(k)}n}{\sqrt{\alpha^{(k)}n}}\right\}_{k \in [m], i \in [d^{(k)}]}
\rightarrow
\left\{ \frac{\bg_k}{d^{(k)}} + \eig_i(\bY^{(k)})\right\}_{k \in [m], i \in [d^{(k)}]}
\end{equation*}
in distribution.
\end{theorem}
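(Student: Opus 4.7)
The plan is to decompose this inhomogeneous limit into $m$ independent homogeneous problems (Theorem~\ref{thm:homogeneous}), one per block of equal $\alpha$-values, with the cross-block Gaussian fluctuations $\bg_k$ emerging from the multinomial counts of letters across blocks. I would sample $\bw$ in two stages: draw block-assignments $\bB_1, \ldots, \bB_n \in [m]$ independently with $\Pr[\bB_i = k] = d^{(k)} \alpha^{(k)}$, and then conditional on $\bB$ fill each position with a uniformly random letter from the $d^{(k)}$ letters in its assigned block. Writing $\bn_k$ for the count of positions in block $k$, the vector $(\bn_1, \ldots, \bn_m)$ is multinomial, and the multivariate CLT gives $(\bn_k - nd^{(k)}\alpha^{(k)})/\sqrt{n\alpha^{(k)}} \to \bg_k$ jointly in distribution --- a direct check of the rescaled multinomial covariance matches the $\delta_{k\ell}d^{(k)} - d^{(k)}d^{(\ell)}\sqrt{\alpha^{(k)}\alpha^{(\ell)}}$ required by the theorem.

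The crux is then a combinatorial reduction: the rows of $\blambda$ sitting inside block $k$ are determined, up to $o_P(\sqrt{n})$ correction, by the sub-word $\bw^{(k)}$ consisting of the letters assigned to block $k$ (taken in their original relative order). Writing $\bmu^{(k)} = \shRSK{\bw^{(k)}}$, I would show $\blambda^{(k)}_i = \bmu^{(k)}_i + o_P(\sqrt{n})$ for each $i \in [d^{(k)}]$. The natural route is Greene's theorem (Theorem~\ref{thm:grn}): the prefix sum $\blambda_1 + \cdots + \blambda_{\seqheadstrict{d}{k}+i}$ equals the length of the longest union of $\seqheadstrict{d}{k}+i$ disjoint weakly increasing subsequences in $\bw$, whose asymptotically optimal realization takes all letters in blocks $1, \ldots, k-1$ (partitioned into $\seqheadstrict{d}{k}$ monochromatic subsequences, one per value) and plays the best $i$-disjoint-increasing game inside $\bw^{(k)}$. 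Cross-interactions --- extending earlier subsequences with block-$k$ letters, or pushing block-$k$ letters up into rows hosting blocks $<k$ --- are controlled by applying Theorem~\ref{thm:row-mean} to the relevant sub-problems together with the gap $\alpha^{(k-1)} - \alpha^{(k)} > 0$, yielding contributions of size $O_P(\sqrt{n})$ that moreover concentrate as a deterministic function of $(\bn_1, \ldots, \bn_{k-1})$ up to $o_P(\sqrt{n})$.

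With the reduction in hand, Theorem~\ref{thm:homogeneous} applied to $\bw^{(k)}$ --- which, conditional on $\bn_k$, is a uniform word of length $\bn_k$ on $d^{(k)}$ letters --- gives $(\bmu^{(k)}_i - \bn_k/d^{(k)})/\sqrt{\bn_k/d^{(k)}} \to \eig_i(\bY^{(k)})$ with $\bY^{(k)} \sim \traceless{d^{(k)}}$, independently across $k$ by the conditional independence of the sub-words. Writing
\[
\frac{\blambda^{(k)}_i - \alpha^{(k)} n}{\sqrt{\alpha^{(k)} n}} = \frac{\bn_k - nd^{(k)}\alpha^{(k)}}{d^{(k)}\sqrt{\alpha^{(k)} n}} + \frac{\bmu^{(k)}_i - \bn_k/d^{(k)}}{\sqrt{\alpha^{(k)} n}} + o_P(1),
\]
the first summand converges to $\bg_k/d^{(k)}$ and the second (using $\bn_k/(n\alpha^{(k)}) \to d^{(k)}$) to $\eig_i(\bY^{(k)})$, with joint independence of the GUE term from the Gaussian $\bg_k$ inherited from the conditional-on-$\bn_k$ structure.

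The main obstacle is the combinatorial reduction $\blambda^{(k)}_i = \bmu^{(k)}_i + o_P(\sqrt{n})$: one must establish that cross-block boundary effects are of order $o_P(\sqrt{n})$ rather than contributing additional $\Theta(\sqrt{n})$ fluctuations that would pollute the GUE limit. A careful Greene-theorem accounting using the spectral gap $\alpha^{(k-1)} > \alpha^{(k)} > \alpha^{(k+1)}$ together with the concentration bounds of Theorem~\ref{thm:mean-squared-intro} should deliver this, but the delicacy of pinning down a $\sqrt{n}$-scale quantity is where the argument really lives; the character-theoretic machinery of~\cite{Mel12} presumably side-steps this combinatorial question by working directly with Schur function asymptotics.
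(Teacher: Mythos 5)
The paper cites Theorem~\ref{thm:inhomogeneous} from~\cite{Mel12} without supplying a proof (and notes that~\cite{HX13} earlier proved a closely related result), so there is no in-paper argument to compare against; what follows assesses your plan on its own terms. Your two-stage block decomposition, the multinomial CLT for the block counts (the rescaled covariance does correctly reproduce $\delta_{k\ell}d^{(k)} - d^{(k)}d^{(\ell)}\sqrt{\alpha^{(k)}\alpha^{(\ell)}}$), the conditional application of Theorem~\ref{thm:homogeneous} to each block's sub-word, and the resulting independence structure are all sound. The route is genuinely different from M\'{e}liot's Schur-function/character asymptotics and is closer in spirit to the RSK-based analysis of Houdr\'{e}--Xu.

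The gap you flag is, however, larger than your sketch allows for, and the paper's own tools will not close it. Writing $N_{<k}$ for the count of block-$<k$ letters and $\bmu^{(k)}$ for $\shRSK{\bw^{(k)}}$, Greene's Theorem does immediately give $\seqhead{\blambda}{\seqheadstrict{d}{k}+i} \geq N_{<k} + \seqhead{\bmu^{(k)}}{i}$ (take the $\seqheadstrict{d}{k}$ monochromatic chains from earlier blocks plus the best $i$ chains inside block $k$). But the matching upper bound is \emph{not} a ``careful Greene accounting'': splitting an optimal union of $\seqheadstrict{d}{k}+i$ disjoint increasing subsequences into its block-$<k$ and block-$\geq k$ parts yields a block-$\geq k$ part that is a union of up to $\seqheadstrict{d}{k}+i$ (not $i$) disjoint increasing chains, overcounting by $\Theta(n)$. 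Theorem~\ref{thm:ITW-intro} supplies the required $O(1)$ control of the excess only at indices where $\alpha_j > \alpha_{j+1}$, i.e.\ at block boundaries $j = \seqheadstrict{d}{k}$ and $j = \seqheadstrict{d}{k}+d^{(k)}$; for interior $0 < i < d^{(k)}$ the quantity $\ITW{\seqheadstrict{d}{k}+i}{\alpha}$ is infinite. Theorem~\ref{thm:mean-squared-intro} bounds deviations only at scale $\Theta(\sqrt{n})$, which is the very scale of the GUE fluctuations you are trying to isolate, so it cannot certify that the cross-block excess is $o(\sqrt{n})$. Pinning down those interior partial sums to $o(\sqrt{n})$ precision is essentially the whole theorem --- it is exactly what the determinantal/Riemann--Hilbert machinery of~\cite{HX13}, or the character expansion of~\cite{Mel12}, is built to deliver --- and it does not follow from the nonasymptotic bounds developed in this paper.
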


Note that this theorem recovers Theorem~\ref{thm:distinct}
in the case when all the~$\alpha_i$'s are distinct
and Theorem~\ref{thm:homogeneous}
in the case when~$\alpha$ is the uniform distribution.
More generally, if we define $\blambda[k] = \blambda^{(k)}_1 + \cdots + \blambda^{(k)}_{d^{(k)}}$,
then the random variables $(\blambda[k] - \alpha^{(k)} d^{(k)} n)/\sqrt{\alpha^{(k)} d^{(k)} n}$
converge to Gaussian random variables with covariance $\delta_{k \ell} - \sqrt{\alpha^{(k)}d^{(k)} \alpha^{(\ell)}d^{(\ell)}}$.
Hence, within blocks,~$\blambda$ experiences GUE~fluctuations,
whereas across blocks,~$\blambda$ experiences Gaussian fluctuations.

Theorem~\ref{thm:inhomogeneous} predicts
qualitatively different limiting behaviors
between the cases when two $\alpha_i$'s are exactly equal
and when two $\alpha_i$'s are unequal, even if they are close.
Hence its convergence rate naturally depends on quantities like
\begin{equation*}
\max_{i: \alpha_i \neq \alpha_{i+1}} \left(\frac{1}{\alpha_i - \alpha_{i+1}}\right),
\end{equation*}
and it is therefore not applicable in the nonasymptotic regime.
Nevertheless, we have found it useful when reasoning about the Schur--Weyl distribution;
in particular, by disregarding the Gaussian term in Theorem~\ref{thm:inhomogeneous},
we have our ansatz.

\subsection{Future work}
Bavarian, Mehraban, and Wright
have used the
techniques in this work to study
the accuracy of the empirical entropy estimator
for learning the von Neumann entropy.
In preliminary work~\cite{BMW16}, they have shown the following
result:
\begin{theorem}
The bias of the empirical entropy estimator satisfies
\begin{equation*}
H(\alpha) - \frac{3d^2}{2n} \leq \E_{\blambda \sim\SW{n}{\alpha}} H(\underline{\blambda})\leq H(\alpha).
\end{equation*}
Furthermore, the estimator has mean absolute error
\begin{equation*}
\E_{\blambda \sim \SW{n}{\alpha}}|H(\underline{\blambda})-H(\alpha)|
\leq \frac{3d^2}{2n} + \sqrt{\frac{2+\log(d+e)^2}{n}}.
\end{equation*}
Hence, the empirical entropy
is $\eps$-close to the true von Neumann entropy with high probability
when $n = O(d^2/\eps + \log(d)^2/\eps^2)$.
\end{theorem}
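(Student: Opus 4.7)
The plan is to decompose the entropy deviation via the algebraic identity
\begin{equation*}
H(\alpha) - H(\underline{\blambda}) = \dkl{\underline{\blambda}}{\alpha} + \sum_{i=1}^d \log \alpha_i \cdot (\underline{\blambda}_i - \alpha_i),
\end{equation*}
an easy consequence of $\dkl{\underline{\blambda}}{\alpha} = -H(\underline{\blambda}) - \sum \underline{\blambda}_i \log \alpha_i$ and $H(\alpha) = -\sum \alpha_i \log \alpha_i$. Taking expectations, the bias becomes the sum of $\E[\dkl{\underline{\blambda}}{\alpha}]$ and a ``linear correction'' that I rewrite by Abel summation as $\sum_{i=1}^{d-1} \log(\alpha_i/\alpha_{i+1}) \cdot T_i$, where $T_i := \Exc{n}{i}{\alpha}/n$.

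For the upper bound $\E[H(\underline{\blambda})] \leq H(\alpha)$, both pieces are visibly non-negative in this form: KL always is; each $\log(\alpha_i/\alpha_{i+1}) \geq 0$ since $\alpha$ is sorted; and each $T_i \geq 0$ by Greene's theorem (Theorem~\ref{thm:grn}) applied to the disjoint weakly-increasing subsequences given by the positions of each letter $1,2,\dots,i$ in $\bw$ (these are $i$ disjoint weakly-increasing subsequences of total length $|\{t : \bw_t \leq i\}|$, whose expectation is $n(\alpha_1+\cdots+\alpha_i)$).

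For the lower bound, I apply Theorem~\ref{thm:chi-squared} to get $\E[\dkl{\underline{\blambda}}{\alpha}] \leq d^2/n$, and Theorem~\ref{thm:ITW-intro} to get $T_i \leq \ITW{i}{\alpha}/n$. Substituting into the Abel form and interchanging the double sum rearranges the correction as a sum over pairs $k<j$:
\begin{equation*}
\sum_{i=1}^{d-1} \log(\alpha_i/\alpha_{i+1}) T_i \leq \frac{1}{n}\sum_{k<j} \frac{\alpha_j \log(\alpha_k/\alpha_j)}{\alpha_k - \alpha_j}.
\end{equation*}
Setting $r = \alpha_j/\alpha_k \in (0,1]$, each summand equals $(-r\log r)/(1-r)$, which is at most~$1$ by the elementary inequality $\log(1/r) \leq 1/r - 1$. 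There are at most $\binom{d}{2} \leq d^2/2$ pairs, giving correction $\leq d^2/(2n)$ and total bias $\leq 3d^2/(2n)$. The main subtlety is that $\ITW{i}{\alpha}$ can individually blow up when two adjacent $\alpha_j$'s coincide; I would handle this by perturbing $\alpha$ to a nearby $\tilde\alpha \succ \alpha$ with all distinct entries, invoking Theorem~\ref{thm:coupling} to transfer bounds from $\SW{n}{\tilde\alpha}$ to $\SW{n}{\alpha}$, and observing that singular terms come paired with a $\log(\alpha_k/\alpha_j)$ factor that vanishes in the limit.

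For the mean absolute error, the triangle inequality reduces the task to bounding $\E|H(\underline{\blambda}) - \E H(\underline{\blambda})|$; combining with the bias estimate yields the claimed sum. Since $\blambda = \shRSK{\bw}$ is a function of $n$ i.i.d.\ letters, I use McDiarmid's bounded differences inequality: standard RSK stability shows that substituting a single letter alters the output shape by at most one inserted box and one bumped box, so $\|\underline{\blambda} - \underline{\blambda}'\|_1 \leq 2/n$; and $H$ is $O(\log d)$-Lipschitz under small $\ell_1$ perturbations on the probability simplex, so a single-letter swap changes $H(\underline{\blambda})$ by $O(\log(d)/n)$. McDiarmid then yields sub-Gaussian concentration with variance proxy $O(\log^2(d)/n)$, producing the stated $\sqrt{(2+\log^2(d+e))/n}$ bound after careful constant tracking. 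The high-probability corollary follows by Markov applied to this mean-absolute-error estimate with $n = O(d^2/\eps + \log^2(d)/\eps^2)$.
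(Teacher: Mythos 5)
First, a point of context: this theorem is \emph{not proved in the paper}.  It appears in Section~1.6 under ``Future work'' and is explicitly attributed to preliminary work of Bavarian, Mehraban, and Wright~\cite{BMW16}, which the paper cites but does not reproduce.  There is therefore no ``paper's own proof'' to compare against; I am evaluating your proposal on its own terms.

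Your bias bound is essentially correct and is a nice piece of work. The decomposition $H(\alpha) - H(\ul{\blambda}) = \dkl{\ul{\blambda}}{\alpha} + \sum_i(\ul{\blambda}_i - \alpha_i)\ln\alpha_i$, the Abel rewrite of the linear correction as $\sum_{i<d}\ln(\alpha_i/\alpha_{i+1})\,\Exc{n}{i}{\alpha}/n$, the sign analysis (all three factors nonnegative) for the upper bound $\E H(\ul{\blambda}) \le H(\alpha)$, and the use of Theorem~\ref{thm:chi-squared} together with Theorem~\ref{thm:ITW-intro} and the resummation over pairs with the scalar inequality $r\ln(1/r)/(1-r) \le 1$ all check out, giving $\tfrac{d^2}{n} + \tfrac{1}{n}\binom{d}{2} \le \tfrac{3d^2}{2n}$.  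The degeneracy issue you flag can also be handled more simply than by perturbation: whenever $\alpha_i = \alpha_{i+1}$, the factor $\ln(\alpha_i/\alpha_{i+1})$ is zero and the $i$th term vanishes outright, while whenever $\alpha_i > \alpha_{i+1}$ one has $\alpha_k > \alpha_j$ for all $k \le i < j$ so $\ITW{i}{\alpha}$ is finite; so the bound goes through directly with the convention $0 \cdot \infty = 0$ (the same convention the paper uses in the proof of Theorem~\ref{thm:tomtom}).

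Your mean-absolute-error argument, however, has a genuine gap in the bounded-differences step.  You assert that substituting one letter ``alters the output shape by at most one inserted box and one bumped box, so $\|\ul{\blambda} - \ul{\blambda}'\|_1 \le 2/n$.''  This is false, and in fact the paper's own Remark immediately following Proposition~\ref{prop:lipschitz} gives the counterexample $\shRSK{232122} = (4,1,1)$ versus $\shRSK{233122} = (3,3,0)$, a single-letter change with $\ell_1$-distance $4$, not $2$.  What Proposition~\ref{prop:lipschitz} actually gives is $|\lambda_k - \lambda'_k| \le 2$ for each $k$ and $|\seqhead{\lambda}{k} - \seqhead{\lambda'}{k}| \le 1$ for each $k$; neither of these yields an $\ell_1$ bound independent of $d$, since the shape differences can oscillate in sign across many rows. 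A second, independent issue is the claim that ``$H$ is $O(\log d)$-Lipschitz under small $\ell_1$ perturbations'': entropy is not Lipschitz near the boundary of the simplex, and the sharp Fannes-type bound $|H(p)-H(q)| \le \delta\log(d-1) + h(\delta)$ (for $\delta = \tfrac12\|p-q\|_1$) carries a $h(\delta) \approx \delta\log(1/\delta)$ term that, at scale $\delta = \Theta(1/n)$, contributes a $\log n$ factor absent from the claimed $\sqrt{(2 + \log^2(d+e))/n}$.  As written, the McDiarmid argument therefore does not reach the stated MAE bound, and neither does the high-probability corollary that depends on it.  You would need either a sharper structural fact about how RSK shapes move under a single-letter change, or a different route to concentration — for instance one that exploits the $1$-Lipschitzness of the partial sums $\seqhead{\blambda}{k}$ directly via the Abel-summed form of the entropy, rather than the $\ell_1$-Lipschitzness of the row lengths.
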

\noindent
This gives an expression similar to both the bias and the mean absolute error of the classical empirical entropy estimator~\cite{WY16}.

\subsection{Organization}

Section~\ref{sec:prelims} contains the preliminaries,
Section~\ref{sec:ITW} contains the proof of Theorem~\ref{thm:ITW-intro},
Section~\ref{sec:schur-weyl} contains our results on the concentration of the Schur-Weyl distribution,
Section~\ref{sec:tomography} contains our tomography results,
and Section~\ref{sec:catan} contains our lower-rows majorization theorem.

\paragraph{Acknowledgments}

We would like to thank Mohammad Bavarian and Saeed Mehraban
for allowing us to mention the entropy estimation result.

\section{Preliminaries} \label{sec:prelims}

Please refer to Section~2 of~\cite{OW15} for many of the definitions and notations used in this paper.  We will also introduce additional notation in this section, and establish some simple results.

\begin{notation}
	Given a sequence $\eta = (\eta_1, \dots, \eta_d)$ we write $\seqhead{\eta}{k} = \eta_1 + \cdots + \eta_k$ and we write $\seqtail{\eta}{k} = \eta_{k+1} + \cdots + \eta_d$.
\end{notation}

The following observation concerning Lipschitz constants of the RSK algorithm is very similar to one made in~\cite[Proposition~2.1]{BL12}:
\begin{proposition}\label{prop:lipschitz}
	Suppose $w,w' \in [d]^n$ differ in exactly one coordinate. Write $\lambda = \shRSK{w}$, $\lambda' = \shRSK{w'}$.  Then:
    \begin{itemize}
    	\item $\abs*{\seqhead{\lambda}{k} - \seqhead{\lambda'}{k}} \leq 1$ for every $k \in [d]$.
        \item $\abs{\lambda_k - \lambda'_k} \leq 2$ for every $k \in [d]$.
    \end{itemize}
\end{proposition}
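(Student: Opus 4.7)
The plan is to derive both bullet points from Greene's Theorem (Theorem~\ref{thm:grn}), which gives a clean combinatorial characterization of the partial sums $\seqhead{\lambda}{k}$.

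For the first bullet, I would fix $k \in [d]$ and let $U'$ be a collection of $k$ disjoint weakly increasing subsequences in $w'$ realizing the maximum total length $\seqhead{\lambda'}{k}$. Let $i$ denote the unique coordinate at which $w$ and $w'$ disagree. Consider the same set of index positions used by $U'$, but now viewed inside $w$. At most one of the $k$ subsequences uses position $i$, and discarding that single position breaks it into at most two still-weakly-increasing pieces (in $w$), which together with the other $k-1$ untouched subsequences give a union of $\le k+1$ disjoint weakly increasing subsequences in $w$ of total length at least $\seqhead{\lambda'}{k} - 1$. We can then merge any two of these subsequences into a single (not necessarily increasing) sequence --- but that isn't allowed. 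A cleaner approach: simply delete the one index $i$ entirely from $U'$ to obtain $k$ disjoint weakly increasing subsequences in $w$ of combined length at least $\seqhead{\lambda'}{k} - 1$, giving $\seqhead{\lambda}{k} \geq \seqhead{\lambda'}{k} - 1$ via Greene's Theorem. By symmetry (swap the roles of $w$ and $w'$) we also get $\seqhead{\lambda'}{k} \geq \seqhead{\lambda}{k} - 1$, which establishes $\bigl|\seqhead{\lambda}{k} - \seqhead{\lambda'}{k}\bigr| \leq 1$.

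For the second bullet, I would simply use the identity $\lambda_k = \seqhead{\lambda}{k} - \seqhead{\lambda}{k-1}$ (and similarly for $\lambda'$), yielding
\[
  \lambda_k - \lambda'_k = \bigl(\seqhead{\lambda}{k} - \seqhead{\lambda'}{k}\bigr) - \bigl(\seqhead{\lambda}{k-1} - \seqhead{\lambda'}{k-1}\bigr),
\]
so by the triangle inequality and the first bullet, $|\lambda_k - \lambda'_k| \leq 1 + 1 = 2$. (The $k=1$ case is handled by taking $\seqhead{\lambda}{0} = \seqhead{\lambda'}{0} = 0$.)

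The only real content is the first bullet, and the only subtle point there is being careful that ``deleting one index from a weakly increasing subsequence in $w'$ yields a weakly increasing subsequence in $w$.'' This is true because after removing the single differing position $i$, the remaining indices take identical values in $w$ and $w'$, so monotonicity is inherited. Everything else is bookkeeping. I do not anticipate a genuine obstacle --- this is essentially the observation that Greene's partial-sum statistics are $1$-Lipschitz under single-letter edits, and the individual-row statistics are correspondingly $2$-Lipschitz.
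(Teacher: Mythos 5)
Your proposal is correct and matches the paper's proof essentially verbatim: apply Greene's Theorem, take an optimal union of $k$ disjoint increasing subsequences, delete the single differing position, appeal to symmetry for the first bullet, and use the telescoping identity $\lambda_k = \seqhead{\lambda}{k} - \seqhead{\lambda}{k-1}$ with the triangle inequality for the second. The brief detour about splitting a subsequence into two pieces is a dead end (as you noticed), but the ``cleaner approach'' you settle on is exactly right.
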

\begin{proof}
	It suffices to prove the first statement; then, using the $k$ and $k-1$ cases, we get the second statement via the triangle inequality.  Also, by interchanging the roles of $w$ and $w'$, it suffices to prove $\seqhead{\lambda}{k} - \seqhead{\lambda'}{k} \leq 1$.  This follows from Greene's Theorem: $\seqhead{\lambda}{k}$ is the length of the longest disjoint union~$U$ of $k$ increasing subsequences in~$w$. If $w'$ is formed by changing one letter in~$w'$, we can simply delete this letter from~$U$ (if it appears) and get a disjoint union of $k$ increasing subsequences in~$w'$ of length at least $\seqhead{\lambda}{k}-1$.  But Greene's Theorem  implies this is a lower bound on $\seqhead{\lambda'}{k}$.
\end{proof}
\begin{remark}
	The bound of~$2$ in the second statement may be tight; e.g., $\shRSK{232122} = (4,1,1)$, $\shRSK{233122} = (3,3,0)$.
\end{remark}
\begin{proposition}	 \label{prop:alpha-lipschitz}
	Let $\alpha$, $\alpha'$ be probability distributions on~$[d]$ and let $\blambda \sim \SW{n}{\alpha}$, $\blambda' \sim \SW{n}{\alpha'}$. Then:
    \begin{itemize}
    	\item $\abs*{\E[\seqhead{\ul{\blambda}}{k}] - \E[\seqhead{\ul{\blambda}'}{k}]} \leq \dtv{\alpha}{\alpha'}$ for every $k \in [d]$.
        \item $\abs*{\E[\ul{\blambda}_{k}] - \E[\ul{\blambda}'_{k}]} \leq 2\dtv{\alpha}{\alpha'}$ for every $k \in [d]$.
    \end{itemize}
\end{proposition}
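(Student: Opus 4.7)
The plan is to prove both inequalities by a coupling argument, using Proposition~\ref{prop:lipschitz} to bound the effect of each letter disagreement. Recall that by definition of total variation distance, there exists a coupling $(\ba, \ba')$ of one draw from $\alpha$ and one draw from $\alpha'$ for which $\Pr[\ba \neq \ba'] = \dtv{\alpha}{\alpha'}$ (the maximal coupling). Applying this per-letter coupling independently across the $n$ coordinates produces a coupling $(\bw, \bw')$ with $\bw \sim \alpha^{\otimes n}$ and $\bw' \sim (\alpha')^{\otimes n}$, such that the Hamming distance $\bD := \abs{\{i : \bw_i \neq \bw'_i\}}$ satisfies $\E[\bD] = n \cdot \dtv{\alpha}{\alpha'}$.

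Now I would upgrade Proposition~\ref{prop:lipschitz}, which is stated for words differing in one coordinate, to arbitrary Hamming distance by a straightforward interpolation: fix any sequence $\bw = \bw^{(0)}, \bw^{(1)}, \dots, \bw^{(\bD)} = \bw'$ in which consecutive words differ in exactly one coordinate (flipping disagreements one at a time). Applying Proposition~\ref{prop:lipschitz} to each adjacent pair and using the triangle inequality yields
\[
\abs*{\seqhead{\blambda}{k} - \seqhead{\blambda'}{k}} \leq \bD, \qquad \abs{\blambda_k - \blambda'_k} \leq 2\bD,
\]
where $\blambda = \shRSK{\bw}$ and $\blambda' = \shRSK{\bw'}$ have the desired marginal distributions $\SW{n}{\alpha}$ and $\SW{n}{\alpha'}$ respectively.

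Taking expectations, pulling the absolute values outside via Jensen, and dividing by $n$ (to pass from $\blambda$ to $\ul{\blambda}$) gives
\[
\abs*{\E[\seqhead{\ul{\blambda}}{k}] - \E[\seqhead{\ul{\blambda}'}{k}]} \leq \tfrac{1}{n}\E[\bD] = \dtv{\alpha}{\alpha'},
\]
and analogously $\abs*{\E[\ul{\blambda}_k] - \E[\ul{\blambda}'_k]} \leq \tfrac{2}{n}\E[\bD] = 2\dtv{\alpha}{\alpha'}$, completing both bounds.

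There is no real obstacle here; the only point requiring a moment's care is justifying the interpolation step, but this is immediate since the RSK algorithm is a deterministic function of its input so one can literally form the intermediate words and apply Proposition~\ref{prop:lipschitz} term by term. The argument is essentially a standard application of the tensorization of total variation under the maximal coupling, combined with the per-coordinate Lipschitz bounds just established.
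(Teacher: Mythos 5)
Your proof is correct and follows essentially the same route as the paper's: form the $n$-fold maximal coupling so that the expected Hamming distance between the coupled words is $n\cdot\dtv{\alpha}{\alpha'}$, then apply Proposition~\ref{prop:lipschitz} coordinate-by-coordinate (the ``interpolation'' you describe is exactly what the paper means by ``repeated application'') and take expectations. The only cosmetic difference is that the paper deduces the second bullet from the first via $\blambda_k = \seqhead{\blambda}{k} - \seqhead{\blambda}{k-1}$, while you invoke the second item of Proposition~\ref{prop:lipschitz} directly; both are fine.
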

\begin{proof}
	Again, it suffices to prove the first statement, as the second one easily follows. Write $\eps = \dtv{\alpha}{\alpha'}$.  Thus there is a coupling $(\ba, \ba')$ such that $\ba \sim \alpha$, $\ba' \sim \alpha'$, and $\Pr[\ba \neq \ba'] = \eps$.  Making~$n$ independent draws from the coupled distribution and calling the resulting words $(\bw, \bw')$, it follows that $\E[\triangle(\bw,\bw')] = \eps n$, where $\triangle$ denotes Hamming distance.  Thus repeated application of Proposition~\ref{prop:lipschitz} yields $\abs*{\E[\seqhead{{\blambda}}{k}] - \E[\seqhead{{\blambda'}}{k}]} \leq \eps n$, where $\blambda = \shRSK{\bw}$, $\blambda' = \shRSK{\bw'}$.  But now $\blambda \sim \SW{n}{\alpha}$, $\blambda' \sim \SW{n}{\alpha'}$, so the result follows after dividing through by~$n$.
\end{proof}

The following lemma, while simple, is crucial for our nonasymptotic estimates:
\begin{lemma}										\label{lem:increasing}
	Let $\alpha$ be a probability distribution on $[d]$ with $\alpha_1 \geq \alpha_2 \geq \cdots \geq \alpha_d$.  Fix $k \in [d]$.  Then
    \[
    	\E_{\blambda \sim \SW{n}{\alpha}}\left[\seqhead{\blambda}{k} - \seqhead{\alpha}{k} n\right]
    \]
    is a nondecreasing function of~$n$.
\end{lemma}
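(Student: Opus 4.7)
The plan is to reduce the lemma to a deterministic pointwise inequality about prepending a letter, and then prove that inequality via Greene's theorem and a combinatorial ``shift'' argument. Write $L_k(w) = \seqhead{\shRSK{w}}{k}$ and $f_k(n) = \E_{\blambda \sim \SW{n}{\alpha}}[\seqhead{\blambda}{k}]$; the statement is equivalent to $f_k(n+1) - f_k(n) \geq \seqhead{\alpha}{k}$. Using exchangeability of iid coordinates, I realize $\bw \sim \alpha^{\otimes(n+1)}$ as $\bc \cdot \widetilde{\bw}$, with $\bc \sim \alpha$ and $\widetilde{\bw} \sim \alpha^{\otimes n}$ independent. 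The lemma will then follow from the deterministic inequality
\[
    L_k(c \cdot w) \geq L
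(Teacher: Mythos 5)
Your reduction is sound: by exchangeability, $\bw \sim \alpha^{\otimes(n+1)}$ has the same distribution as $(\bc,\widetilde{\bw})$ with $\bc\sim\alpha$ and $\widetilde{\bw}\sim\alpha^{\otimes n}$ independent, so the lemma reduces to showing that prepending one random letter increases $\E[\seqhead{\blambda}{k}]$ by at least $\seqhead{\alpha}{k}$, i.e.\ $f_k(n+1)-f_k(n)\geq\seqhead{\alpha}{k}$. But the proposal breaks off before stating, let alone proving, the deterministic inequality, which presumably is $L_k(c\cdot w)\geq L_k(w)+\bone[c\in[k]]$, with $L_k(w):=\seqhead{\shRSK{w}}{k}$ and $[k]$ the $k$ heaviest letters in the sorted order. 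That inequality \emph{is} true, but it is the entire content of the proof, and the ``combinatorial shift argument'' you gesture at is not obviously there: taking an optimal family of $k$ disjoint weakly increasing subsequences and naively swapping tails (say, between two subsequences that both start with a value $<c$) does not automatically yield increasing subsequences, so the usual Greene-theorem exchange argument needs more care. The clean route is via column insertion: $\shRSK{c\cdot w}$ is obtained from $\shRSK{w}$ by column-inserting $c$; the column-insertion bumping path is weakly non-increasing in row index, and since column~$1$ is strictly increasing with entries $\geq 1$ its first bump is at row $\leq c$, so the new box lands in row $\leq c\leq k$ and $L_k(c\cdot w)=L_k(w)+1$ whenever $c\in[k]$. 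Until you supply such an argument, there is a genuine gap.

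For comparison, the paper's proof is essentially the mirror image of this plan: it \emph{appends} letters (so ordinary RSK row insertion applies step by step), after first relabeling the alphabet so that the $k$ heaviest letters are the $k$ \emph{largest}. Then row-inserting one of those letters has a bumping chain of length at most $k$, so it surely adds a box in the top $k$ rows, and one obtains directly that $\E[\seqhead{\blambda}{k}] - \seqhead{\alpha}{k}n$ equals a sum over $t\in[n]$ of nonnegative probabilities, hence is nondecreasing. Your prepend-and-column-insert route, once completed, proves the equivalent statement that the successive differences of $f_k$ are all $\geq\seqhead{\alpha}{k}$; neither version avoids a bumping-chain observation, the paper's is just the one that aligns with the sequential definition of RSK.
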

\begin{proof}
	We begin by ``reversing'' the alphabet~$[d]$, so that $1 > 2 > \dots > d$; recall that this does not change the distribution of $\blambda$. Further, we will consider all~$n$ simultaneously by letting $\bLambda$ be drawn from the Schur--Weyl process associated to $\bw \sim \alpha^{\otimes \infty}$.  %Thus we wish to show
%    \[
%    	\E\left[\seqhead{\bLambda^{(n)}}{k} - \seqhead{\alpha}{k} n\right]
%    \]
%    is increasing in~$n$.
    Now
    \[	\E\left[\seqhead{\bLambda^{(n)}}{k}\right] = \sum_{t=1}^n \Pr[t^{\text{th}} \text{ letter of } \bw \text{ creates a box in the first $k$ rows}].
    \]
    If $\bw_t \in [k]$ (i.e., it is among the $k$ largest letters), then it will surely create a box in the first~$k$ rows. Since this occurs with probability $\seqhead{\alpha}{k}$ for each~$t$, we conclude that
    \[	\E\left[\seqhead{\bLambda^{(n)}}{k} - \seqhead{\alpha}{k} n\right] = \sum_{t=1}^n \Pr[\bw_t > k \text{ and it creates a box in the first $k$ rows}].
    \]
    This is evidently a nondecreasing function of~$n$.
\end{proof}

\subsection{Distance measures}

\begin{definition}
Let $\alpha, \beta \in \R^d$ be probability distributions.  Then
the truncated \emph{Hellinger-squared distance} is given by
\begin{equation*}
\dhellsqk{k}{\alpha}{\beta}
= \dhellk{k}{\alpha}{\beta}^2
= \sum_{i=1}^k (\sqrt{\alpha_i} - \sqrt{\beta_i})^2,
\end{equation*}
and the $k=d$ case gives $\dhell{\alpha}{\beta} = \dhellk{d}{\alpha}{\beta}$
and $\dhellsq{\alpha}{\beta} = \dhellsqk{d}{\alpha}{\beta}$.
The truncated \emph{chi-squared divergence} is given by
\begin{equation*}
\dchik{k}{\alpha}{\beta}
= \sum_{i=1}^k \beta_i \left(\frac{\alpha_i}{\beta_i}-1\right)^2,
\end{equation*}
and the $k=d$ case gives $\dchi{\alpha}{\beta} = \dchik{d}{\alpha}{\beta}$.
The truncated \emph{$\ell_2^2$ distance} is given by
\begin{equation*}
\dltwosqk{k}{\alpha}{\beta} = \sum_{i=1}^k (\alpha_i - \beta_i)^2,
\end{equation*}
and the $k=d$ case gives $\dltwosq{\alpha}{\beta} = \dltwosqk{d}{\alpha}{\beta}$.
Finally, the \emph{Kullback-Liebler (KL) divergence} is given by
\begin{equation*}
\dkl{\alpha}{\beta} = \sum_{i=1}^d \alpha_i \ln\left(\frac{\alpha_i}{\beta_i}\right).
\end{equation*}
\end{definition}

\begin{proposition}\label{prop:comparing-distances}
These distance measures are related as
\begin{equation*}
\dhellsq{\alpha}{\beta} \leq \dchi{\alpha}{\beta},
\quad
\dhellsqk{k}{\alpha}{\beta} \leq \dchik{k}{\alpha}{\beta},
\quad
\text{and }\dkl{\alpha}{\beta} \leq \dchi{\alpha}{\beta}.
\end{equation*}
\end{proposition}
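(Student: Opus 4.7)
The plan is to verify all three inequalities elementarily: the two Hellinger-squared bounds follow from a single termwise inequality, and the KL bound follows from the standard estimate $\ln x \leq x - 1$.

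For the Hellinger-squared bounds, I would begin by rationalizing each summand:
\[
(\sqrt{\alpha_i} - \sqrt{\beta_i})^2 \;=\; \frac{(\alpha_i - \beta_i)^2}{(\sqrt{\alpha_i} + \sqrt{\beta_i})^2}.
\]
Since $\alpha_i \geq 0$, the denominator satisfies $(\sqrt{\alpha_i} + \sqrt{\beta_i})^2 \geq \beta_i$, so the right-hand side is at most $(\alpha_i - \beta_i)^2/\beta_i = \beta_i(\alpha_i/\beta_i - 1)^2$. (For $\beta_i = 0$, this is handled by the natural convention that the right-hand side is $+\infty$ when $\alpha_i > 0$ and $0$ when $\alpha_i = 0$, so the inequality is trivial in that case.) Summing this termwise inequality over $i \in [k]$ immediately yields $\dhellsqk{k}{\alpha}{\beta} \leq \dchik{k}{\alpha}{\beta}$; the untruncated bound $\dhellsq{\alpha}{\beta} \leq \dchi{\alpha}{\beta}$ is then the $k = d$ case.

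For the KL bound, set $x_i = \alpha_i/\beta_i$ (again handling $\beta_i = 0$ by convention). Applying $\ln x \leq x - 1$ gives
\[
x_i \ln x_i \;\leq\; x_i(x_i - 1) \;=\; (x_i - 1)^2 + (x_i - 1).
\]
Multiplying through by $\beta_i$ and summing,
\[
\dkl{\alpha}{\beta} \;=\; \sum_{i} \beta_i \, x_i \ln x_i \;\leq\; \sum_{i} \beta_i (x_i - 1)^2 \,+\, \sum_{i} (\alpha_i - \beta_i) \;=\; \dchi{\alpha}{\beta},
\]
since $\sum_i \alpha_i = \sum_i \beta_i = 1$ kills the linear term.

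There is no real obstacle here: the proposition is a standard collection of information-theoretic inequalities, and each reduces to a one-line algebraic or calculus estimate. The only minor subtlety is the handling of coordinates where $\beta_i = 0$, which is routine under the usual conventions for these divergences.
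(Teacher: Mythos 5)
Your proof is correct and takes essentially the same approach as the paper: a term-by-term comparison using $(\sqrt{\alpha_i}+\sqrt{\beta_i})^2 \geq \beta_i$ for the Hellinger/chi-squared bound (the paper phrases this as multiplying by $(\sqrt{\alpha_i/\beta_i}+1)^2 \geq 1$, an equivalent manipulation), and $\ln x \leq x - 1$ plus normalization of $\alpha,\beta$ for the KL bound. The only cosmetic difference is that the paper simplifies the KL right-hand side to $\sum_i \alpha_i^2/\beta_i - 1$ before identifying it as $\dchi{\alpha}{\beta}$, whereas you expand $x(x-1) = (x-1)^2 + (x-1)$ directly.
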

\begin{proof}
For the first and second inequalities, the bound follows term-by-term:
\begin{equation*}
(\sqrt{\alpha_i} - \sqrt{\beta_i})^2
= \beta_i \left(\sqrt{\frac{\alpha_i}{\beta_i}} - 1\right)^2
\leq \beta_i \left(\sqrt{\frac{\alpha_i}{\beta_i}} - 1\right)^2\left(\sqrt{\frac{\alpha_i}{\beta_i}} + 1\right)^2
= \beta_i \left(\frac{\alpha_i}{\beta_i} - 1\right)^2.
\end{equation*}
On the other hand, the third inequality is proven considering the whole sum at once:
\begin{equation*}
\sum_{i=1}^d \alpha_i \ln\left(\frac{\alpha_i}{\beta_i}\right)
\leq \sum_{i=1}^d \alpha_i \left(\frac{\alpha_i}{\beta_i} - 1\right)
= \sum_{i=1}^d \frac{\alpha_i^2}{\beta_i} - 1,
\end{equation*}
where the inequality uses $\ln(x) \leq x-1$ for all $x > 0$,
and it can be checked that the right-most quantity is equal to~$\dchisq{\alpha}{\beta}$.
\end{proof}

\begin{definition}
Let $\rho, \sigma$ be density matrices.  The \emph{fidelity} is given by
$
F(\rho, \sigma) = \Vert \sqrt{\rho}\sqrt{\sigma}\Vert_1.
$
Related is the \emph{affinity}, given by $A(\rho, \sigma) = \mathrm{tr}(\sqrt{\rho}\sqrt{\sigma})$.
Finally, the \emph{quantum Hellinger-squared distance} is given by
\begin{equation*}
\dhellsq{\rho}{\sigma} = \dhell{\rho}{\sigma}^2= \mathrm{tr}((\sqrt{\rho}-\sqrt{\sigma})^2) = 2  - 2 A(\rho, \sigma).\footnote{We note that the quantum and classical Hellinger-squared distance are often defined with factors of $\frac{1}{2}$ in front. We have omitted them for simplicity.}
\end{equation*}
By definition, $\dhell{\rho}{\sigma} = \Vert \sqrt{\rho}-\sqrt{\sigma}\Vert_F$, and hence it satisfies the triangle inequality.
\end{definition}

\begin{proposition}
These distance measures are related as $F(\rho, \sigma)^2 \leq A(\rho, \sigma) \leq F(\rho, \sigma)$.
As a result, if $\dhellsq{\rho}{\sigma} = \eps$, then $1- \eps/2 \leq F(\rho, \sigma)$.
\end{proposition}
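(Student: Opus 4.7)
The plan is to address the three claims in turn, abbreviating $M = \sqrt{\rho}\sqrt{\sigma}$ so that $A = \tr(M)$ and $F = \|M\|_1$.  Before the two main inequalities, I would record that $A$ is automatically real and nonnegative: by the cyclic property of the trace,
\[
    A = \tr(\sqrt{\rho}\sqrt{\sigma}) = \tr(\rho^{1/4}\sqrt{\sigma}\rho^{1/4}),
\]
and the matrix $\rho^{1/4}\sqrt{\sigma}\rho^{1/4}$ is PSD (a Hermitian conjugation of the PSD operator $\sqrt{\sigma}$), so its trace is a sum of nonnegative reals.

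For the inequality $A \leq F$, I would invoke the elementary bound $|\tr N| \leq \|N\|_1$ that holds for any square matrix~$N$.  A short proof: writing the polar decomposition $N = V|N|$ with $V$ unitary and $|N| = \sum_i s_i |v_i\rangle\langle v_i|$, one has $\tr N = \sum_i s_i \langle v_i|V|v_i\rangle$, and since $|\langle v_i|V|v_i\rangle| \leq 1$ for every $i$, the triangle inequality gives $|\tr N| \leq \sum_i s_i = \|N\|_1$.  Applied to $N = M$ together with $A \geq 0$, this yields $A \leq F$.

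The inequality $F^2 \leq A$ is the delicate part and is where I expect the main obstacle, since $F = \sum_i s_i(M)$ depends on the singular values of~$M$ while $A = \sum_i \mu_i$ depends on its (real, nonnegative) eigenvalues — and the general log-majorization $\mu \prec_{\log} s$ supplies only the opposite direction $\sum_i \mu_i \leq \sum_i s_i$.  To make up the gap I would exploit the specific product structure $M = \sqrt{\rho}\sqrt{\sigma}$, in which each factor is a contraction (so $\|M\|_\infty \leq 1$ and hence $s_i \in [0,1]$).  Concretely, my plan is to use the purification picture: the canonical purifications $|\psi_\rho\rangle = (\sqrt{\rho}\otimes I)|\Omega\rangle$ and $|\psi_\sigma\rangle = (\sqrt{\sigma}\otimes I)|\Omega\rangle$ satisfy $\langle \psi_\rho|\psi_\sigma\rangle = A$, while Uhlmann's theorem gives $F = \max_U |\langle \psi_\rho|(I\otimes U)|\psi_\sigma\rangle|$.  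The bound $F^2 \leq A$ then reduces to a Cauchy--Schwarz-type comparison of these three unit vectors (equivalently, positive semidefiniteness of the Gram matrix of $|\psi_\rho\rangle$, $|\psi_\sigma\rangle$, and the optimizer $(I\otimes U^*)|\psi_\sigma\rangle$), combined with the contractivity of $\sqrt{\rho}$ and $\sqrt{\sigma}$.

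For the corollary, $\dhellsq{\rho}{\sigma} = \eps$ rewrites as $A = 1 - \eps/2$ via the definition, and the inequality $A \leq F$ already established immediately yields $F(\rho,\sigma) \geq 1 - \eps/2$.
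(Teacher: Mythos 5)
Your argument for $A \leq F$ coincides with the paper's one-line proof (both invoke $\mathrm{tr}(M) \leq \Vert M\Vert_1$), and your derivation of the corollary from $A \leq F$ together with $\dhellsq{\rho}{\sigma} = 2 - 2A(\rho,\sigma)$ is correct. The paper, however, does not prove $F^2 \leq A$ at all: it simply cites Equations (28) and (32) of [ANSV08]. So for that part you are attempting something genuinely different, and this is where there is a gap.

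The proposed reduction of $F^2 \leq A$ to positivity of the Gram matrix of $|\psi_\rho\rangle$, $|\psi_\sigma\rangle$, and $(I\otimes U)|\psi_\sigma\rangle$ cannot work as stated. Setting (with appropriate phases) $A = \langle\psi_\rho|\psi_\sigma\rangle$, $F = |\langle\psi_\rho|(I\otimes U)|\psi_\sigma\rangle|$ for the Uhlmann-optimal $U$, and $\mu = \langle\psi_\sigma|(I\otimes U)|\psi_\sigma\rangle$, the determinant constraint from positive semidefiniteness of the $3\times 3$ Gram matrix is $A^2 + F^2 - 2AF\,\mathrm{Re}(\mu) \leq 1 - |\mu|^2$. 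No choice of $\mu$ makes this imply $F^2 \leq A$; for instance $\mu = 0$ yields only $A^2 + F^2 \leq 1$, and $\mu$ near $1$ forces $A \approx F$. The vague appeal to ``contractivity of $\sqrt{\rho}$ and $\sqrt{\sigma}$'' does not plug into this Gram-matrix inequality in any obvious way, so as written the plan does not close. A concrete argument in the spirit you intend is this: take the optimal unitary $W$ with $F = |\mathrm{tr}(\sqrt{\rho}\sqrt{\sigma}\,W)|$, rewrite by cyclicity as $|\mathrm{tr}\bigl(\rho^{1/4}\sigma^{1/4}\cdot\sigma^{1/4}W\rho^{1/4}\bigr)|$, apply Cauchy--Schwarz for the Hilbert--Schmidt inner product to obtain $F^2 \leq \mathrm{tr}(\sqrt{\rho}\sqrt{\sigma})\cdot\mathrm{tr}(\sqrt{\rho}\,W^\dagger\sqrt{\sigma}W)$, and then bound the second factor by another Cauchy--Schwarz, $\mathrm{tr}(\sqrt{\rho}\,W^\dagger\sqrt{\sigma}W) \leq \sqrt{\mathrm{tr}\rho}\sqrt{\mathrm{tr}\sigma} = 1$. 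The crucial ingredient is the \emph{asymmetric} split into fourth roots, which purifications of $\rho$ and $\sigma$ (as opposed to $\sqrt{\rho}$ and $\sqrt{\sigma}$) do not expose.
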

\begin{proof}
The upper bound $A(\rho, \sigma) \leq F(\rho, \sigma)$ is immediate,
as $\mathrm{tr}(M) \leq\Vert M \Vert_1$ for any matrix~$M$.
As for the lower bound, it follows from Equations~$(28)$ and~$(32)$ from~\cite{ANSV08}.
\end{proof}
As a result,  fidelity and affinity are essentially equivalent in the ``$1-\eps$" regime,
and further it suffices to upper bound the Hellinger-squared distance if we want to lower bound the fidelity.
For other properties of the affinity, see~\cite{LZ04,MM15}.
Though we only ever apply the fidelity to density matrices,
we will sometimes apply the affinity to arbitrary positive semidefinite matrices, as in Theorem~\ref{thm:pca-intro}.

\section{Bounding the excess}\label{sec:ITW}
In this section we will study the quantity
\[
    \Exc{n}{k}{\alpha} = \E_{\blambda \sim \SW{n}{\alpha}}\left[\seqhead{\blambda}{k} - \seqhead{\alpha}{k} n\right],
\]
where $\alpha$ is a sorted probability distribution~$[d]$, and $k \in [d]$.  One way to think about this quantity is as
\[
	\E_{\bw \sim \alpha^{\otimes n}}[\seqhead{\blambda}{k} - \seqhead{\bh}{k}],
\]
where $\blambda = \shRSK{\bw}$ and $\bh = \text{Histogram}(\bw)$, i.e.\ $\bh_i$ is the number of~$i$'s in~$\bw$.  By Greene's Theorem we know that $\blambda \succ \bh$ always; thus $\Exc{n}{k}{\alpha} \geq 0$.  We are therefore concerned with upper bounds, trying to quantify how ``top-heavy'' $\blambda$ is on average (compared to a typical~$\bh$).  

As we will see (and as implicitly shown in work of Its, Tracy, and Widom~\cite{ITW01}), the distribution of $\blambda \sim \SW{n}{\alpha}$ is very close to that of a certain modification of the multinomial distribution that favors top-heavy Young diagrams.
\begin{definition}
	For a sorted probability distribution $\alpha$ with all $\alpha_i$'s distinct, define the  function $\Modpdf{n}{\alpha} : \R^n \to \R$ by
\[
	\Modpdf{n}{\alpha}(h) = 1 +  \sum_{1 \leq i < j \leq d} \frac{\alpha_j}{\alpha_i - \alpha_j}\left(\frac{h_i}{\alpha_in} - \frac{h_j}{\alpha_jn}\right).
\]
\end{definition}
For $\bh \sim \Mult{n}{\alpha}$ we have $\E[\bh_\ell] = \alpha_\ell n$; thus $\E[\Modpdf{n}{\alpha}(\bh)] = 1$.  We may therefore  think of $\Modpdf{n}{\alpha}(h)$ as a \emph{relative density} with respect to the $\Mult{n}{\alpha}$ distribution --- except for the fact that we don't necessarily have $\Modpdf{n}{\alpha}(h) \geq 0$ always.  That will not bother us, though; we will only ever compute expectations relative to this density.
\begin{definition}
	We define the \emph{modified $\alpha$-multinomial (signed) distribution} on size-$n$, $d$-letter histograms~$h$ by $\Modpdf{n}{\alpha}(h)\Multpdf{n}{\alpha}(h)$,
	where $\Multpdf{n}{\alpha}(h)$ is the probability of $h$ under $\Mult{n}{\alpha}$.
	 We use the notation
\[
    \E_{\bh \sim \ModMult{n}{\alpha}}[F(\bh)] = \sum_{h} \Modpdf{n}{\alpha}(h)\Multpdf{n}{\alpha}(h) F(h) = \E_{\bh \sim \Mult{n}{\alpha}}[\Modpdf{n}{\alpha}(\bh)F(\bh)].
\]
\end{definition}
As we will see in the proof of Theorem~\ref{thm:our-ITW} below, for each $\lambda \vdash n$,
\begin{equation} \label{eqn:pseudo-approx}
	\text{``}\Pr_{\blambda\sim \SW{n}{\alpha}}[\blambda = \lambda] \approx \Pr_{\bh \sim \ModMult{n}{\alpha}}[\bh = \lambda]\text{''}.
\end{equation}
\begin{remark}
	The modified $\alpha$-multinomial distribution is only defined when $\alpha_1 > \alpha_2 > \cdots > \alpha_d$.  Note that under this condition, a draw $\bh \sim \Mult{n}{\alpha}$ will have $\bh_1 \geq \bh_2 \geq \cdots \geq \bh_d$ with ``very high'' probability, and thus be a genuine partition $\bh \vdash n$.  (The ``very high'' here is only when $n$ is sufficiently large compared to all of the $\frac{1}{\alpha_k - \alpha_{k+1}}$ values, though.)
\end{remark}
The approximation~\eqref{eqn:pseudo-approx} is consistent with the ansatz. One can see from the $\left(\frac{\lambda_i}{\alpha_i n} - \frac{\lambda_j}{\alpha_j n} \right)$ part of the formula for $\Modpdf{n}{\alpha}(\lambda)$ that it emphasizes $\lambda$'s that are ``top-heavy''. That is, it gives more probability to $\lambda$'s that exceed their multinomial-expectation at low indices and fall short of their multinomial-expectation at high indices.  Furthermore, one can see from the $\frac{\alpha_j}{\alpha_i - \alpha_j}$ part of the formula that this effect becomes more pronounced when two or more $\alpha_\ell$'s tend toward equality.

The utility of~\eqref{eqn:pseudo-approx} is that we can compute certain expectations under the modified multinomial distribution easily and exactly, since it has a simple formula. Of course, we have to concern ourselves with the approximation in~\eqref{eqn:pseudo-approx}; in fact, the error can be quite unpleasant in that it depends on~$d$, and even worse, on the gaps $\alpha_k - \alpha_{k+1}$.  Nonetheless, when it comes to using~\eqref{eqn:pseudo-approx} to estimate~$\Exc{n}{k}{\alpha}$, we will see that the increasing property (Lemma~\ref{lem:increasing}) will let us evade the approximation error.  Toward that end, let us make a definition and some calculations:
\begin{notation}
	For any sorted probability distribution $\alpha$ on $[d]$ and any $k \in [d]$ we write
    \[
    	\ITW{k}{\alpha} = \sum_{i \leq k < j} \frac{\alpha_j}{\alpha_i - \alpha_j}.
    \]
\end{notation}
\begin{remark} \label{rem:ITW-gap}
    We have $\ITW{k}{\alpha} = 0$ if $k = d$, and otherwise $\ITW{k}{\alpha}$ is continuous away from $\alpha_k = \alpha_{k+1}$, where it blows up to~$\infty$.  We also have the following trivial bound, which is useful if the gap $\alpha_k - \alpha_{k+1}$ is large:
	\begin{equation} \label{eqn:triv-ITW-bound}
		\ITW{k}{\alpha} \leq k \seqtail{\alpha}{k}/(\alpha_{k} - \alpha_{k+1}).
    \end{equation}
\end{remark}

Although their proof was a little more elaborate, Its, Tracy, and Widom~\cite{ITW01} proved the following result in the special case of $k = 1$:
\begin{proposition}	\label{prop:excess}
	If $\alpha$ is a sorted probability distribution on~$[d]$ with all $\alpha_i$'s distinct, then
    \[
    	\E_{\blambda \sim \ModMult{n}{\alpha}}\bracks[\big]{\Exc{n}{k}{\alpha}} = \ITW{k}{\alpha}.
    \]
\end{proposition}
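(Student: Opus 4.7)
\medskip
\noindent\textbf{Proof proposal for Proposition~\ref{prop:excess}.}
The plan is a direct computation exploiting linearity together with the explicit formula for~$\Modpdf{n}{\alpha}$ and the standard multinomial covariance structure. Expanding the definition of $\ModMult{n}{\alpha}$,
\[
    \E_{\bh \sim \ModMult{n}{\alpha}}[\seqhead{\bh}{k} - \seqhead{\alpha}{k} n]
    = \E_{\bh \sim \Mult{n}{\alpha}}\bracks*{\Modpdf{n}{\alpha}(\bh)\cdot\bigl(\seqhead{\bh}{k} - \seqhead{\alpha}{k} n\bigr)}.
\]
Since $\E_{\Mult}[\bh_\ell] = \alpha_\ell n$, the ``$1$'' part of $\Modpdf{n}{\alpha}$ contributes $0$. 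Writing $X_\ell = \bh_\ell - \alpha_\ell n$ so that $\frac{h_i}{\alpha_i n} - \frac{h_j}{\alpha_j n} = \frac{X_i}{\alpha_i n} - \frac{X_j}{\alpha_j n}$ and $\seqhead{\bh}{k} - \seqhead{\alpha}{k}n = \sum_{\ell=1}^k X_\ell$, the task reduces to showing
\[
    \sum_{1 \leq i < j \leq d} \frac{\alpha_j}{\alpha_i - \alpha_j}\cdot \E_{\Mult}\bracks*{\parens*{\tfrac{X_i}{\alpha_i n} - \tfrac{X_j}{\alpha_j n}}\sum_{\ell=1}^k X_\ell} = \ITW{k}{\alpha}.
\]

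The next step is to evaluate each inner expectation using the well-known multinomial moments $\E[X_\ell^2] = n\alpha_\ell(1-\alpha_\ell)$ and $\E[X_a X_b] = -n\alpha_a\alpha_b$ for $a \neq b$. A short case analysis shows
\[
    \E\bracks*{\tfrac{X_i}{\alpha_i n}\sum_{\ell=1}^k X_\ell} = \begin{cases} \seqtail{\alpha}{k} & \text{if } i \leq k, \\ -\seqhead{\alpha}{k} & \text{if } i > k,\end{cases}
\]
and similarly with $j$ in place of $i$. Subtracting, the $(i,j)$-expectation equals $0$ if $i$ and $j$ lie on the same side of $k$, and equals exactly $\seqtail{\alpha}{k} + \seqhead{\alpha}{k} = 1$ if $i \leq k < j$. (The other ``mixed'' case $j \leq k < i$ is vacuous because $i < j$.)

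Plugging this back in gives exactly
\[
    \sum_{i \leq k < j} \frac{\alpha_j}{\alpha_i - \alpha_j}\cdot 1 \;=\; \ITW{k}{\alpha},
\]
as required. There is no real obstacle here: the calculation is routine, and the whole point is that the weights $\frac{\alpha_j}{\alpha_i-\alpha_j}$ appearing in $\Modpdf{n}{\alpha}$ are precisely designed so that, against the multinomial's centered covariance matrix, the $k$-dependent terms collapse to the clean indicator $\mathbf{1}[i \leq k < j]$. The only thing to double-check is that one need not worry about the regime where $\bh \not\vdash n$ in the sorted sense --- but this is irrelevant since the identity above is an identity of expectations under $\Mult{n}{\alpha}$, summed over all histograms, and requires no partition assumption on~$\bh$.
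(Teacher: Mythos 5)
Your proof is correct, and it computes the same quantity using the same basic ingredients as the paper (the rewrite of $\Modpdf{n}{\alpha}$ in centered form and the multinomial covariance structure $\E[X_\ell^2] = n\alpha_\ell(1-\alpha_\ell)$, $\E[X_aX_b] = -n\alpha_a\alpha_b$). The one organizational difference is worth noting: the paper instead computes the single-row expectation $\E_{\ModMult{n}{\alpha}}[\blambda_\ell - \alpha_\ell n] = \sum_{j>\ell}\frac{\alpha_j}{\alpha_\ell-\alpha_j} - \sum_{i<\ell}\frac{\alpha_\ell}{\alpha_i-\alpha_\ell}$ for each fixed $\ell$ and then leaves it to the reader to sum over $\ell \leq k$ and observe the telescoping cancellation of the intra-block ($i,j \leq k$) terms. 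You instead keep $\sum_{\ell\leq k}X_\ell$ intact and group by the pair $(i,j)$, which collapses the inner expectations immediately to the indicator $\mathbf{1}[i\leq k<j]$. Your version is arguably cleaner: it makes the mechanism (that the weights in $\Modpdf{n}{\alpha}$ are tuned so the centered covariance collapses to a block indicator) entirely explicit, rather than implicitly via a double-sum cancellation. Both are one-line computations once the covariance facts are in hand; the content is the same.
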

\begin{proof}
	By definition we have
    \[
    	\E_{\blambda \sim \ModMult{n}{\alpha}}\bracks[\big]{\blambda_k - \alpha_k n} = \E_{\bh \sim \Mult{n}{\alpha}}[\Modpdf{n}{\alpha}(\bh)(\bh_k - \alpha_k n)].
    \]
    It's convenient to write
    \[
	\Modpdf{n}{\alpha}(h) = 1 +  \sum_{1 \leq i < j \leq d} \frac{\alpha_j}{\alpha_i - \alpha_j}\left(\frac{h_i - \alpha_i n}{\alpha_in} - \frac{h_j - \alpha_j n}{\alpha_jn}\right).
	\]
    Then using the fact that for $\bh \sim \Mult{n}{\alpha}$ we have $\E[\bh_k] = \alpha_k n$, $\Var[\bh_k] = \alpha_k(1-\alpha_k) n$, $\Cov[\bh_k,\bh_\ell]=-\alpha_k\alpha_\ell n$, we easily obtain:
    \[
    	\E_{\blambda \sim \ModMult{n}{\alpha}}\bracks[\big]{\blambda_k - \alpha_k n} = \sum_{j > k} \frac{\alpha_j}{\alpha_k - \alpha_j} - \sum_{i < k} \frac{\alpha_k}{\alpha_i - \alpha_k}.
    \]
    The result follows.
\end{proof}

We now come to the main result of this section:
% For  $\alpha$ a probability distribution on~$[d]$ with $\alpha_1 > \alpha_2 > \cdots > \alpha_d$, Its, Tracy, and Widom~\cite[(3-12)]{ITW01} proved that
% \begin{equation}	\label{eqn:itw1}
% 	\E_{\blambda \sim \SW{n}{\alpha}}[\blambda_1 - \alpha_1 n] = \ITW{1}{\alpha} \pm O(1/\sqrt{n}).
% \end{equation}
% However, this was a purely asymptotic result: the constant hidden in the $O(\cdot)$ depends in an unspecified way on~$\alpha$; in particular, on~$d$ and on the gaps $\alpha_k - \alpha_{k+1}$. An essential observation in our work is that, by leveraging the fact (Lemma~\ref{lem:increasing}) that $\E[\blambda_1 - \alpha_1 n]$ is \emph{increasing} in~$n$, the bound~\eqref{eqn:itw1} actually implies $\ITW{1}{\alpha}$ is an upper bound for \emph{every}~$n$.  In fact, Lemma~\ref{lem:increasing} gives us the same increasing property for every $\E[\seqhead{\blambda}{k} - \seqhead{\alpha}{k} n]$, and by generalizing the asymptotic analysis from~\cite{ITW01} yielding~\eqref{eqn:itw1}, we obtain the following:

\begin{theorem} \label{thm:our-ITW}
	Let $\alpha$ be a sorted probability distribution on~$[d]$ and let $k \in [d]$.  Then for all $n \in \N$,
    \[
    	\E_{\blambda \sim \SW{n}{\alpha}}\left[\seqhead{\blambda}{k} - \seqhead{\alpha}{k} n\right] \leq \ITW{k}{\alpha}.
    \]
    Furthermore, using the notation $\Exc{n}{k}{\alpha}$ for the left-hand side, it holds that $\Exc{n}{k}{\alpha} \nearrow \ITW{k}{\alpha}$ as $n \to \infty$ provided that all $\alpha_i$'s are distinct.
\end{theorem}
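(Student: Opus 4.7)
My plan is to first establish the asymptotic convergence $\Exc{n}{k}{\alpha} \to \ITW{k}{\alpha}$ when all $\alpha_i$'s are distinct, then invoke Lemma~\ref{lem:increasing} to promote this limit into the desired upper bound at every finite~$n$, and finally handle the degenerate case by perturbation.  The monotonicity step is free: Lemma~\ref{lem:increasing} says $\Exc{n}{k}{\alpha}$ is nondecreasing in $n$, so once the limit is shown to equal $\ITW{k}{\alpha}$, we immediately get $\Exc{n}{k}{\alpha} \nearrow \ITW{k}{\alpha}$, and in particular $\Exc{n}{k}{\alpha} \leq \ITW{k}{\alpha}$ for every finite~$n$.

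The main obstacle is the asymptotic statement itself, which I would prove in the spirit of Its--Tracy--Widom~\cite{ITW01}, but leveraging Proposition~\ref{prop:excess} to avoid redoing the final expectation calculation for general~$k$.  Concretely, the Weyl character formula writes $s_\lambda(\alpha) = \det(\alpha_i^{\lambda_j + d - j})/\prod_{i<j}(\alpha_i - \alpha_j)$, and combined with the hook-length formula for $f^\lambda$, the identity $\Pr_{\SW{n}{\alpha}}[\blambda = \lambda] = s_\lambda(\alpha) f^\lambda$ yields an explicit closed form for the ratio $\Pr_{\SW{n}{\alpha}}[\blambda = \lambda]/\Multpdf{n}{\alpha}(\lambda)$.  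A Taylor expansion around the multinomial-typical region $\lambda_i \approx \alpha_i n$ (where $\Mult{n}{\alpha}$ places essentially all of its mass) reveals that this ratio equals $\Modpdf{n}{\alpha}(\lambda) + o_n(1)$.  After arguing that the atypical tails contribute negligibly (using concentration of $\Mult{n}{\alpha}$ together with the fact that $\seqhead{\lambda}{k}$ is bounded by~$n$), one obtains
\[
    \E_{\blambda \sim \SW{n}{\alpha}}[\seqhead{\blambda}{k} - \seqhead{\alpha}{k} n] \;=\; \E_{\blambda \sim \ModMult{n}{\alpha}}[\seqhead{\blambda}{k} - \seqhead{\alpha}{k} n] + o(1),
\]
and the right-hand expectation is exactly $\ITW{k}{\alpha}$ by Proposition~\ref{prop:excess}.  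Crucially, the passage from the $k=1$ case of~\cite{ITW01} to general~$k$ demands no new identity, since the combinatorial bookkeeping — evaluating the excess against the modified multinomial — has already been packaged inside Proposition~\ref{prop:excess}.  The delicate point I expect to sweat over is controlling the tail contribution: one must show that even though $\seqhead{\blambda}{k} - \seqhead{\alpha}{k} n$ can be as large as $\Theta(n)$ pointwise, the product of this quantity with the difference $\Pr_{\SW{n}{\alpha}}[\blambda=\lambda] - \Modpdf{n}{\alpha}(\lambda)\Multpdf{n}{\alpha}(\lambda)$, summed over all $\lambda$, goes to~$0$.

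Finally, for a general sorted $\alpha$ possibly with repeated entries, the bound is trivial whenever $\alpha_k = \alpha_{k+1}$, since then $\ITW{k}{\alpha} = +\infty$ by Remark~\ref{rem:ITW-gap}.  Otherwise, choose a sequence $\alpha^{(m)} \to \alpha$ of sorted probability distributions on~$[d]$, obtained by a small generic perturbation so that all components of each $\alpha^{(m)}$ are distinct and $\alpha^{(m)}_k - \alpha^{(m)}_{k+1}$ is bounded below.  Then $\ITW{k}{\alpha^{(m)}} \to \ITW{k}{\alpha}$ by continuity of the defining formula away from the singular locus, while Proposition~\ref{prop:alpha-lipschitz} (applied with $n$ fixed) gives $\Exc{n}{k}{\alpha^{(m)}} \to \Exc{n}{k}{\alpha}$.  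Passing to the limit in the already-proved inequality $\Exc{n}{k}{\alpha^{(m)}} \leq \ITW{k}{\alpha^{(m)}}$ completes the proof.
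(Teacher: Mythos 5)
Your proposal is correct and follows essentially the same strategy as the paper's proof: use Lemma~\ref{lem:increasing} to reduce the finite-$n$ upper bound to the asymptotic limit, compute that limit via an Its--Tracy--Widom-style local-limit approximation of the Schur--Weyl probabilities by the modified-multinomial density (invoking Proposition~\ref{prop:excess} for the exact expectation there), and handle degenerate $\alpha$ by a continuity/perturbation argument. The only notable differences are cosmetic: the paper imports the precise expansion from~\cite{ITW01} directly (Equation~\eqref{eqn:ITW-formula}) rather than redoing the Weyl-character/hook-length Taylor expansion you sketch, and the paper disposes of the degenerate case in one line by noting $\ITW{k}{\alpha} - \Exc{n}{k}{\alpha}$ is continuous in $\alpha$ for fixed $n$, where you more explicitly appeal to Proposition~\ref{prop:alpha-lipschitz} — both are the same perturbation argument. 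Your caveat about the tail contribution is exactly the delicate point; the paper handles it via the subexponential-functional / Chernoff argument leading from~\eqref{eqn:approx-mod-multinomial} to the final display, absorbing $e^{O(\sqrt n)}$ factors into $e^{-\Omega(n)}$ errors, and noting that $\E_{\Mult{n}{\alpha}}[|\Exc{n}{k}{\blambda}|] = O(\sqrt n)$ so the $O(1/n)$ relative error contributes only $O(1/\sqrt n)$ to the final estimate.
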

\begin{remark}
	We expect that $\Exc{n}{k}{\alpha} \nearrow \ITW{k}{\alpha}$ for all~$\alpha$; however we did not prove this.
\end{remark}
\begin{proof}
	Lemma~\ref{lem:increasing} tells us that $\Exc{n}{k}{\alpha}$ is nondecreasing in~$n$ for all~$\alpha$ and~$k$; thus $\Exc{n}{k}{\alpha} \nearrow L_k(\alpha)$ for some $L_k(\alpha) \in \R \cup \{\infty\}$.  The main claim that will complete the proof is the following (the $k = 1$ case of which was proven in~\cite{ITW01}):
\begin{claim}    \label{claim:ITW}
    For fixed $\alpha$ and~$k$,
    \[%\begin{equation}	\label{eqn:ITW-claim}
    	\Exc{n}{k}{\alpha} = \ITW{k}{\alpha} \pm O(1/\sqrt{n})  \quad \text{provided the $\alpha_i$'s are distinct,}
    \]%\end{equation}
    where the constant hidden in the $O(\cdot)$ may depend on~$\alpha$ in an arbitrary way.
\end{claim}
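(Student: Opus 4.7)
The plan is to leverage Proposition~\ref{prop:excess}: since the expectation of $F(\lambda) := \seqhead{\lambda}{k} - \seqhead{\alpha}{k}n$ under the modified multinomial $\ModMult{n}{\alpha}$ equals $\ITW{k}{\alpha}$ exactly, it suffices to show that
\[
    \E_{\blambda \sim \SW{n}{\alpha}}[F(\blambda)] = \E_{\blambda \sim \ModMult{n}{\alpha}}[F(\blambda)] + O_\alpha(1/\sqrt{n}),
\]
which I will establish via a pointwise density comparison combined with sub-Gaussian concentration of both distributions. (Throughout, $O_\alpha(\cdot)$ hides constants depending on $\alpha$, including on $\min_i(\alpha_i - \alpha_{i+1})$.)

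The main step is an asymptotic expansion of the Schur--Weyl probabilities. By Schur--Weyl duality, $p_n(\lambda) := \Pr_{\SW{n}{\alpha}}[\blambda = \lambda] = f^\lambda \, s_\lambda(\alpha)$, where $f^\lambda = \dim \Specht{\lambda}$ is given by the hook length formula and $s_\lambda(\alpha)$ by the bialternant formula $\det(\alpha_i^{\lambda_j + d - j})/\det(\alpha_i^{d-j})$. After dividing by the multinomial density $\Multpdf{n}{\alpha}(\lambda)$, I would apply Stirling to the factorial ratios $\lambda_i!/(\lambda_i + d - i)!$ and expand the bialternant around the identity permutation. When the $\alpha_i$'s are distinct and $\lambda$ lies in a bulk region $B_C := \{|\lambda_i - \alpha_i n| \leq C\sqrt{n} \text{ for all } i\}$, every non-identity permutation $\sigma$ in the bialternant numerator is exponentially suppressed by a factor $e^{-\Omega_\alpha(n)}$ (by the rearrangement inequality, the identity uniquely maximizes the contribution to leading order in $n$). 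Collecting the $O(1/\sqrt{n})$-order terms surviving from the identity contribution, I expect to identify the correction factor exactly as $\Modpdf{n}{\alpha}(\lambda)$, yielding
\[
    p_n(\lambda) = \Modpdf{n}{\alpha}(\lambda) \cdot \Multpdf{n}{\alpha}(\lambda) \cdot \bigl(1 + O_\alpha(1/n)\bigr)
\]
uniformly over $\lambda \in B_C$. This mirrors the contour-integral / saddle-point computation carried out in~\cite{ITW01} for $k = 1$; passing to general~$k$ only changes the linear functional $F$ against which we integrate, not the density comparison itself.

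Concentration then finishes the proof. Both $\ModMult{n}{\alpha}$ and $\SW{n}{\alpha}$ are sub-Gaussian at scale $\sqrt{n}$: for Schur--Weyl this follows from McDiarmid's inequality applied through the Lipschitz bound of Proposition~\ref{prop:lipschitz}, and for the modified multinomial from the standard multinomial sub-Gaussian bound together with $|\Modpdf{n}{\alpha}(\lambda)| = O_\alpha(1)$ on $B_C$. Choosing $C$ large enough in terms of $\alpha$ that the tail contributions outside $B_C$ to both expectations are $o(1/\sqrt{n})$, and using $\E_{\Mult{n}{\alpha}}[|F(\blambda)|] = O(\sqrt{n})$, we obtain
\[
    \Exc{n}{k}{\alpha} - \ITW{k}{\alpha} = \sum_{\lambda \in B_C} F(\lambda)\bigl(p_n(\lambda) - \Modpdf{n}{\alpha}(\lambda)\Multpdf{n}{\alpha}(\lambda)\bigr) + o(1/\sqrt{n}) = O_\alpha(1/n) \cdot O(\sqrt{n}) = O_\alpha(1/\sqrt{n}).
\]
The main obstacle is the pointwise expansion itself: tracking the $O(1/\sqrt{n})$-order correction precisely enough to recognize it as $\Modpdf{n}{\alpha}$ requires careful combinatorial bookkeeping of the Stirling and Vandermonde corrections together, though the exponential suppression of non-identity bialternant terms keeps the qualitative analysis clean and reduces the work to an essentially one-permutation computation.
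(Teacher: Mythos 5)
Your high-level plan matches the paper's: reduce $\E_{\SW{n}{\alpha}}[F]$ to $\E_{\ModMult{n}{\alpha}}[F] = \ITW{k}{\alpha}$ (Proposition~\ref{prop:excess}) through a pointwise density comparison, then bound the residual by $O(1/n)\cdot \E_{\Mult{n}{\alpha}}[|F|] = O(1/n)\cdot O(\sqrt{n}) = O(1/\sqrt{n})$. You correctly observe that the density comparison does not involve~$k$, so the argument specializes the $k=1$ analysis of~\cite{ITW01} without new asymptotic work.

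The difference lies in how the pointwise expansion~\eqref{eqn:approx-mod-multinomial} is obtained. The paper does not re-derive it: it quotes the expansion~\eqref{eqn:ITW-formula} directly from the top of page~255 of~\cite{ITW01}, and the only new step is substituting $\bxi_\ell = (\lambda_\ell - \alpha_\ell n)/\sqrt{\alpha_\ell n}$ and rearranging to recognize $\Modpdf{n}{\alpha}$. Your proposal instead sketches a from-scratch derivation via the hook-length formula, the bialternant determinant, Stirling's approximation, and exponential suppression of non-identity permutations in the bialternant. That outline is reasonable and is essentially what Its, Tracy, and Widom do, but --- as you yourself flag --- extracting the sub-leading correction precisely enough to recognize it as $\Modpdf{n}{\alpha}$ is the entire substance of the step, and ``I expect to identify the correction factor'' leaves it unverified. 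To complete the proof you would need either to carry out that bookkeeping or, more economically, to cite the formula from~\cite{ITW01} as the paper does. Your bulk-region plus sub-Gaussian tail handling is a valid alternative to the paper's Chernoff-plus-Cauchy--Schwarz argument (around~\eqref{eqn:unsorted} and~\eqref{eqn:SW-approx1}); for $C$ sufficiently large relative to the gaps it also automatically disposes of the unsorted-histogram subtlety that the paper treats separately.
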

 This claim establishes $L_k(\alpha) = \ITW{k}{\alpha}$ whenever the $\alpha_i$'s are all distinct.  It remains to observe that when the $\alpha_i$'s are not all distinct, $L_k(\alpha) > \ITW{k}{\alpha}$ is impossible; this is because $\ITW{k}{\alpha}-\Exc{n}{k}{\alpha}$ is a continuous function of~$\alpha$ for each fixed~$n$ (unless $\alpha_k = \alpha_{k+1}$, but in this case $\ITW{k}{\alpha} = \infty$ and there is nothing to prove).

    We now focus on proving Claim~\ref{claim:ITW}, following the analysis in~\cite{ITW01}. We emphasize that the $\alpha_i$'s are now assumed distinct, and the constants hidden in all subsequent $O(\cdot)$ notation may well depend on the $\alpha_i$'s.

    As computed in~\cite[top of p.~255]{ITW01}, for each $\lambda \vdash n$ we have
    %the exact formula
%    \begin{equation}	\label{eqn:ITW-formula}
%    	\Pr_{\blambda \sim \SW{n}{\alpha}}[\blambda = \lambda] = \frac{\Delta(\ell)}{\Delta(\alpha)}\frac{1}{\prod_{1 \leq i \leq j  < d} (\lambda_i + d - j)}\cdot \sum_{\pi \in \symm{d}} \sgn(\pi)\left( \prod_{i=1}^d \alpha_i^{d-\pi^{-1}(i)}\right)M_{\alpha \circ \pi}(\lambda),
%    \end{equation}
%	\[
%	    \Pr_{\blambda \sim \SW{n}{\alpha}}[\blambda = \lambda] = \frac{\Delta(\ell)}{\Delta(\alpha)}\cdot \frac{1}{\prod_{1 \leq i \leq j  < d} (\lambda_i + d - j)}\cdot \left(\prod_{i=1}^d \alpha_i^{d-i}\right)\cdot M_{\alpha}(\lambda) \pm \exp(-\Omega(n)),
%	\]
%    where the following notation has been employed: $\ell = (\ell_1, \dots, \ell_d)$ where $\ell_i = \lambda_i + d - i$; $\Delta(x)$~denotes the Vandermonde determinant $\prod_{i < j} (x_i - x_j)$; and, $M_\alpha(\lambda)$ is the multinomial probability mass function --- i.e., the probability that word $\bw \sim \alpha^{\otimes n}$ has $\lambda_i$ copies of letter~$i$ for $1 \leq i \leq d$.
	\begin{equation} \label{eqn:ITW-formula}
	    \Pr_{\blambda \sim \SW{n}{\alpha}}[\blambda = \lambda] = \left(1 + \frac{1}{\sqrt{n}}\left(\sum_{1 \leq i < j \leq d} \sqrt{\frac{\alpha_j}{\alpha_i}}\frac{\sqrt{\alpha_j} \bxi_i - \sqrt{\alpha_i} \bxi_j}{\alpha_i - \alpha_j}\right) \pm  O(\tfrac{1}{n})\right) \Multpdf{n}{\alpha}(\lambda) \pm e^{-\Omega(n)},
	\end{equation}
    where $\bxi_\ell = (\blambda_\ell - \alpha_\ell n)/\sqrt{\alpha_\ell n}$.  If we now simply substitute in the definition of~$\bxi_\ell$ and do some simple arithmetic, we indeed get the following precise form of~\eqref{eqn:pseudo-approx}:
	\begin{equation}
    	\Pr_{\blambda \sim \SW{n}{\alpha}}[\blambda = \lambda] = \Modpdf{n}{\alpha}(\lambda)\Multpdf{n}{\alpha}(\lambda) \pm  O\parens[\big]{\tfrac{\Multpdf{n}{\alpha}(\lambda)}{n}} \pm e^{-\Omega(n)}. \label{eqn:approx-mod-multinomial}
   \end{equation}
%    This estimate is only proven for all genuine partitions $\lambda \vdash n$.  However it still holds for those $d$-letter, size-$n$ histograms~$\lambda$ that are not partitions (i.e., not sorted).   To verify this, first note that for non-partitions~$\lambda$ the left-hand side of~\eqref{eqn:approx-mod-multinomial} is~$0$. On the other hand, using the fact that $\alpha_1 > \alpha_2 > \cdots > \alpha_d$, a~simple Chernoff/union bound shows that $\Multpdf{n}{\alpha}(\lambda) \leq e^{-\Omega(n)}$ for non-partitions (where certainly the constant in the $\Omega(\cdot)$ depends on all the gaps $\alpha_\ell - \alpha_{\ell+1}$).  Using also that $\Modpdf{n}{\alpha}(\lambda) \leq O(1)$, this completes the verification of~\eqref{eqn:approx-mod-multinomial} for all histograms~$\lambda$.

   Given this, let $F$ be any functional on partitions of~$n$ that is subexponentially bounded in~$n$ (meaning $|F(\lambda)| \leq e^{o(n)}$ for all~$\lambda \vdash n$).  Then
   \begin{align*}
   		\E_{\blambda \sim \SW{n}{\alpha}}[F(\blambda)]
        &=  \E_{\blambda \sim \ModMult{n}{\alpha}}[\bone_{\{\blambda\text{ is sorted}\}} \cdot F(\blambda)] \\ & \qquad \pm O(\tfrac{1}{n}) \cdot  \E_{\blambda \sim \Mult{n}{\alpha}}[\bone_{\{\blambda\text{ is sorted}\}} \cdot |F(\blambda)|] \pm e^{-\Omega(n)},
   \end{align*}
   where in the final error $e^{\Omega(n)}$ we used the subexponential bound on $|F(\lambda)|$ and also absorbed a factor of $e^{O(\sqrt{n})}$, the number of partitions of~$n$. We can further simplify this: Using $\alpha_1 > \alpha_2 > \cdots > \alpha_d$, an easy Chernoff/union bound gives that
   \begin{equation} \label{eqn:unsorted}
   	\Pr_{\blambda \sim \Mult{n}{\alpha}}[\blambda \text{ is not sorted}] \leq e^{\Omega(n)}
   \end{equation}
   (where certainly the constant in the $\Omega(\cdot)$ depends on all the gaps $\alpha_\ell - \alpha_{\ell+1}$).  Thus
   \begin{align}
	   \abs*{\E_{\blambda \sim \ModMult{n}{\alpha}}[\bone_{\{\blambda\text{ is unsorted}\}} \cdot F(\blambda)]} &= \abs*{ 	   \E_{\blambda \sim \Mult{n}{\alpha}}[\bone_{\{\blambda\text{ is unsorted}\}} \cdot \Modpdf{n}{\alpha}(\blambda)F(\blambda)]} \nonumber\\
       &\leq \sqrt{ \E_{\blambda \sim \Mult{n}{\alpha}}[\bone^2_{\{\blambda\text{ is unsorted}\}}]}\sqrt{\Modpdf{n}{\alpha}(\blambda)^2F(\blambda)^2} \leq e^{-\Omega(n)}, \label{eqn:SW-approx1}
   \end{align}
   where we used~\eqref{eqn:unsorted}, the subexponential bound on~$F$, and $\Modpdf{n}{\alpha}(\blambda) \leq O(1)$.  A similar but simpler analysis applies to the first middle term in~\eqref{eqn:SW-approx1}, and we conclude the following attractive form of~\eqref{eqn:pseudo-approx} for subexponentially-bounded~$F$:
   \begin{align*}
   		\E_{\blambda \sim \SW{n}{\alpha}}[F(\blambda)]
        &=  \E_{\blambda \sim \ModMult{n}{\alpha}}[F(\blambda)]   \qquad \pm O(\tfrac{1}{n}) \cdot  \E_{\blambda \sim \Mult{n}{\alpha}}[ |F(\blambda)|] \pm e^{-\Omega(n)}.
   \end{align*}
   Finally, Claim~\eqref{claim:ITW} now follows from Proposition~\ref{prop:excess}, together with the fact that for $\blambda \sim \Mult{n}{\alpha}$,
   \[
   	\E[|\Exc{n}{k}{\blambda}|] \leq \sum_{i=1}^k \sqrt{\E[(\blambda_i - \alpha_i n)^2]} = \sum_{i=1}^k \stddev[\blambda_i] = \sum_{i=1}^k \sqrt{n}\sqrt{\alpha_i(1-\alpha_i)} = O(\sqrt{n}). \qedhere
   \]
\end{proof}

\section{Convergence of the Schur--Weyl distribution}\label{sec:schur-weyl}

In this section, we derive consequences of Theorem~\ref{thm:ITW-intro} and Theorem~\ref{thm:catan-intro}.
To begin, it will help to define two restrictions of a word~$w$.
\begin{notation}
Let $w \in [d]^n$ and let $\lambda = \shRSK{w}$.
We use boldface $\bw$ if $\bw \sim \alpha^{\otimes n}$,
in which case $\blambda \sim \SW{n}{\alpha}$.
\begin{itemize}
\item Write $\geqw{k}$ for the string formed from $w$ by deleting all letters smaller than~$k$, and let $\geqlambda{k} = \shRSK{\geqw{k}}$.
	Then the random variable $\geqblambda{k}$ is distributed
	as $\SW{\geqn}{\geqalpha{k}}$,
	where $\geqn \sim \mathrm{Binomial}(n, \geqsum{k})$ and $\geqalpha{k}= (\alpha_i/\geqsum{k})_{i=k}^d$.
\item Write $\leqw{k}$ for the string formed from $w$ by deleting all letters larger than~$k$,
	and let $\leqlambda{k} = \shRSK{\leqw{k}}$.
	Note that if $(P, Q) = \rsk{w}$, then $\leqlambda{k}$ is the shape of the diagram formed by deleting
	all boxes containing letters larger than~$k$ from~$P$, and hence $\leqlambda{k}_i \leq \lambda_i$ for all~$i$.
	Then the random variable $\leqblambda{k}$ is distributed
	as $\SW{\leqn}{\leqalpha{k}}$,
	where $\leqn \sim \mathrm{Binomial}(n, \leqsum{k})$ and $\leqalpha{k} = (\alpha_i/\leqsum{k})_{i=1}^k$.
\end{itemize}
\end{notation}

We will mainly use the following weaker version of Theorem~\ref{thm:John-conjecture}.
\begin{theorem}\label{thm:last-rows-majorize}
Let $\lambda[k{:}]$ denote the Young diagram formed by rows $k, k+1, k+2, \ldots$ of $\lambda$.
Then $\geqlambda{k} \unrhd_w \lambda[k{:}]$.
\end{theorem}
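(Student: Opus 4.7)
My plan is to chain Theorem~\ref{thm:catan-intro} together with the standard subsequence-monotonicity of the RSK shape map (a direct consequence of Greene's Theorem), bridged by the observation that every letter bumped past row~$k-1$ is forced to have value at least~$k$.

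I would begin by applying Theorem~\ref{thm:catan-intro} with row index~$k-1$ to the word~$w$. Writing $w^{(k-1)}$ for the sequence of letters bumped from row~$k-1$ into row~$k$ in bump order, and $\ol{w}$ for the subsequence of~$w$ consisting of those same letters in their original positional order, the theorem yields $\shRSK{\ol{w}} \unrhd \shRSK{w^{(k-1)}}$. The RSK sub-process confined to rows $k, k+1, \ldots$ is driven exactly by the sequence of insertions into row~$k$, which is $w^{(k-1)}$ in its bump order; hence $\shRSK{w^{(k-1)}} = \lambda[k{:}]$.

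The key observation is that every letter in~$\ol{w}$ has value~$\geq k$, so that $\ol{w}$ is a subsequence of~$\geqw{k}$. Each such letter gets bumped from row~$k-1$ into row~$k$, after which it can only move further down (into rows $k+1, k+2, \ldots$), settling in some row~$r \geq k$ of~$\rsk{w}$. Column strictness of the (semistandard) output forces every entry in row~$r$ to be at least~$r$, and in particular~$\geq k$, proving the claim.

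Finally, subsequence monotonicity of RSK (via Greene's Theorem: any union of~$j$ disjoint increasing subsequences in a subsequence~$s'$ of~$s$ is also one in~$s$, so $\shRSK{s} \unrhd_w \shRSK{s'}$) yields $\geqlambda{k} = \shRSK{\geqw{k}} \unrhd_w \shRSK{\ol{w}}$. Chaining with the earlier inequality gives the desired $\geqlambda{k} \unrhd_w \lambda[k{:}]$. All the technical difficulty is hidden inside Theorem~\ref{thm:catan-intro}, whose proof is deferred to Section~\ref{sec:catan}; the assembly above is essentially routine, with the only subtle point being the column-strictness argument that lets us replace the ``bumped-from-row-$(k-1)$'' subsequence~$\ol{w}$ by the value-based subsequence~$\geqw{k}$.
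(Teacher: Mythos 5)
Your proof is correct and takes the same route as the paper's (which is considerably terser): apply Theorem~\ref{thm:catan-intro} at row index $k-1$, observe that $\ol{w}$ is a subsequence of $\geqw{k}$, and chain with the subsequence-monotonicity of $\shRSK{\cdot}$. The column-strictness justification you give for $\ol{w} \subseteq \geqw{k}$ is the right one and is left implicit in the paper.
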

\begin{proof}
This follows by applying Theorem~\ref{thm:catan-intro} to~$w$ and noting that
the string $\ol{w}$ in that theorem is a substring of $\geqw{k}$.
Hence weak majorization holds trivially.
\end{proof}

\subsection{Bounds on the first and last rows}

\begin{theorem}\label{thm:row-one-upper}
Let $\alpha \in \R^d$ be a sorted probability distribution.  Then
\begin{equation*}
\E_{\blambda \sim \SW{n}{\alpha}} \blambda_1 \leq \alpha_1 n + 2 \sqrt{n}.
\end{equation*}
\end{theorem}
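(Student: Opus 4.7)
The plan is to combine the upper bound $\Exc{n}{1}{\alpha} \leq \ITW{1}{\alpha}$ from Theorem~\ref{thm:ITW-intro} with the majorization coupling of Theorem~\ref{thm:coupling}. Applying the ITW bound directly to $\alpha$ itself is useless when $\alpha_1$ and $\alpha_2$ are close, since $\ITW{1}{\alpha} = \sum_{j>1} \alpha_j/(\alpha_1 - \alpha_j)$ blows up as $\alpha_2 \to \alpha_1$. Instead I would perturb $\alpha$ to a sorted distribution $\alpha'$ whose top coordinate is forcibly separated from the rest, apply the ITW bound to $\alpha'$, and transfer the estimate back to $\alpha$ via coupling.

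Concretely, fix a parameter $s > 0$ to be optimized, and define $\alpha'$ by adding $s$ to $\alpha_1$ and removing a total mass of $s$ from the bottom of $\alpha$ (subtract from $\alpha_d$ first, then $\alpha_{d-1}$, and so on, truncating at zero). This construction has three key features: (i) $\alpha'$ is a sorted probability distribution on $[d]$, since the top entry only grows while entries are only trimmed from the bottom; (ii) $\alpha' \succ \alpha$, because every prefix sum weakly increases; and (iii) $\alpha'_1 - \alpha'_j \geq s$ for every $j \geq 2$, because $\alpha'_1 = \alpha_1 + s$ while $\alpha'_j \leq \alpha_j \leq \alpha_1$. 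Theorem~\ref{thm:coupling} then gives $\E[\blambda_1] \leq \E[\blambda'_1]$ for $\blambda' \sim \SW{n}{\alpha'}$, and combining Theorem~\ref{thm:ITW-intro} at $k=1$ with property~(iii) yields
\[
    \E[\blambda'_1] \leq \alpha'_1 n + \sum_{j > 1} \frac{\alpha'_j}{\alpha'_1 - \alpha'_j} \leq (\alpha_1 + s)n + \frac{1-\alpha_1-s}{s} \leq \alpha_1 n + sn + \frac{1}{s}.
\]
Choosing $s = 1/\sqrt{n}$ balances the two error terms and produces $\E[\blambda_1] \leq \alpha_1 n + 2\sqrt{n}$, as desired.

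The only subtlety is the degenerate regime $\alpha_1 + 1/\sqrt{n} > 1$, in which the perturbation cannot literally be carried out; but there $\alpha_1 > 1 - 1/\sqrt{n}$ forces $\E[\blambda_1] \leq n \leq \alpha_1 n + \sqrt{n}$ trivially, so the bound holds a fortiori. The main conceptual point --- and the only real obstacle --- is finding a perturbation that simultaneously majorizes $\alpha$, keeps $\alpha'_1$ within $O(1/\sqrt{n})$ of $\alpha_1$, and forces a gap of at least $s$ between $\alpha'_1$ and every other $\alpha'_j$; moving mass purely from the tail of $\alpha$ to its top entry achieves all three at once, after which the optimization in $s$ is routine and yields the sharp constant~$2$.
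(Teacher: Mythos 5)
Your proposal is correct and takes essentially the same approach as the paper: perturb $\alpha$ to a sorted majorizing distribution with a forced gap of $g = 1/\sqrt{n}$ between the top entry and the rest, transfer via the coupling theorem, apply the ITW bound, and balance the two error terms. The only difference is cosmetic---you give an explicit recipe for the perturbed distribution (trim mass from the tail), whereas the paper simply asserts the existence of a suitable $\beta$ with $\beta_1 = \alpha_1 + g$, $\beta_2 \leq \alpha_2$, $\beta \succ \alpha$.
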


\begin{proof}
Write $g = 1/\sqrt{n}$.
We assume that $\alpha_1 + 2 g \leq 1$, as otherwise the theorem is vacuously true.
Let $\beta\in \R^{d}$ be a sorted probability distribution for which
$\beta_1 = \alpha_1 + g$, $\beta_2 \leq \alpha_2$, and $\beta \succ \alpha$.
Then
\begin{equation*}
\E_{\blambda \sim \SW{n}{\alpha}} \blambda_1
\leq \E_{\bmu \sim \SW{n}{\beta}} \bmu_1
\leq \beta_1 n + \sum_{j > 1} \frac{\beta_j}{\beta_1 - \beta_j}
\leq \beta_1 n + \frac{1}{g}
= \alpha_1 n + n g + \frac{1}{g}
= \alpha_1 n + 2\sqrt{n},
\end{equation*}
where the first step is by Theorem~\ref{thm:coupling} and the second is by Theorem~\ref{thm:ITW-intro}.
\end{proof}

\begin{theorem}\label{thm:row-d-lower}
Let $\alpha \in \R^d$ be a sorted probability distribution.  Then
\begin{equation*}
\E_{\blambda \sim \SW{n}{\alpha}} \blambda_d \geq \alpha_d n - 2 \sqrt{\alpha_d d n}.
\end{equation*}
\end{theorem}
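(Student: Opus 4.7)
The plan is to reduce the question to a bound on the ``excess'' quantity studied in Theorem~\ref{thm:ITW-intro}, then perturb~$\alpha$ to avoid the degeneracy issue of~$\ITW{d-1}{\alpha}$. Since $\blambda$ partitions~$n$, we have $\blambda_d = n - \seqhead{\blambda}{d-1}$, and since $\seqhead{\alpha}{d-1} = 1 - \alpha_d$, a one-line calculation gives
\[
	\E[\blambda_d] = \alpha_d n - \Exc{n}{d-1}{\alpha}.
\]
So it suffices to show $\Exc{n}{d-1}{\alpha} \leq 2\sqrt{\alpha_d d n}$.

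Applying Theorem~\ref{thm:ITW-intro} directly yields $\Exc{n}{d-1}{\alpha} \leq \ITW{d-1}{\alpha} = \sum_{i<d} \alpha_d/(\alpha_i - \alpha_d)$, which is useless when $\alpha_{d-1} = \alpha_d$. As in the proof of Theorem~\ref{thm:row-one-upper}, the remedy is to perturb~$\alpha$.  I take a parameter $\delta > 0$ (to be optimized) and define
\[
	\beta = (\alpha_1 + \delta,\ \alpha_2,\ \dots,\ \alpha_{d-1},\ \alpha_d - \delta),
\]
which is a sorted probability distribution whenever $\delta \leq \alpha_d$, and satisfies $\beta \succ \alpha$: the partial sums $\seqhead{\beta}{k}$ exceed $\seqhead{\alpha}{k}$ by exactly~$\delta$ for $1 \leq k \leq d-1$, and agree at $k = d$. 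Crucially, $\beta_i - \beta_d \geq \delta$ for every $i < d$, so the ITW-quantity for~$\beta$ is well controlled: $\ITW{d-1}{\beta} \leq (d-1)\beta_d/\delta \leq (d-1)\alpha_d/\delta$.

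By Theorem~\ref{thm:coupling} there is a coupling $(\blambda, \bmu)$ of $\SW{n}{\alpha}$ and $\SW{n}{\beta}$ with $\bmu \unrhd \blambda$ pointwise, so $\seqhead{\blambda}{d-1} \leq \seqhead{\bmu}{d-1}$. Taking expectations and rearranging, $\Exc{n}{d-1}{\alpha} \leq \Exc{n}{d-1}{\beta} + \delta n$, which combined with Theorem~\ref{thm:ITW-intro} applied to~$\beta$ yields
\[
	\Exc{n}{d-1}{\alpha} \;\leq\; \delta n + (d-1)\alpha_d/\delta.
\]
Choosing $\delta = \sqrt{(d-1)\alpha_d/n}$ balances the two terms and produces the desired bound $\Exc{n}{d-1}{\alpha} \leq 2\sqrt{(d-1)\alpha_d n} \leq 2\sqrt{\alpha_d d n}$.

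The only subtlety, and really the only obstacle, is the edge case where the optimal~$\delta$ violates $\delta \leq \alpha_d$ so that $\beta$ is not a valid probability distribution. But $\delta > \alpha_d$ is equivalent to $\alpha_d n < d-1$, in which case $\alpha_d n - 2\sqrt{\alpha_d d n} < 0$ and the claim holds trivially from $\blambda_d \geq 0$. (Note that $\delta \leq \alpha_d$ automatically ensures $\alpha_1 + \delta \leq 1$, since $\alpha_1 + \alpha_d \leq 1$.)
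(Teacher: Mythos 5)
Your proof is correct and takes essentially the same route as the paper: both shift mass $\delta \approx \sqrt{\alpha_d d/n}$ from the bottom entry to the top to create a perturbed sorted distribution $\beta \succ \alpha$ with a controlled gap $\beta_{d-1} - \beta_d$, then chain Theorem~\ref{thm:coupling} with the ITW bound of Theorem~\ref{thm:ITW-intro}, and handle the small-$\alpha_d n$ regime by observing the right-hand side is negative. Your bookkeeping (the explicit $\Exc{n}{d-1}{\alpha} \leq \Exc{n}{d-1}{\beta} + \delta n$ step, the $d-1$ instead of $d$, and the precise threshold $\alpha_d n \geq d-1$) is a touch more careful than the paper's, but the argument is the same.
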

\begin{proof}
Write $g = \sqrt{\alpha_d d/n}$.
We assume that $\alpha_d - 2g \geq 0$, as otherwise the theorem is vacuously true.
Let $\beta \in \R^d$ be a sorted probability distribution
for which $\beta_d = \alpha_d - g$, $\beta_{d-1} \geq \alpha_{d-1}$, and $\beta \succ \alpha$.
Then
\begin{multline*}
\E_{\blambda \sim \SW{n}{\alpha}}[\blambda_1 + \cdots + \blambda_{d-1}]
\leq \E_{\bmu \sim \SW{n}{\beta}}[\bmu_1 + \cdots + \bmu_{d-1}]
\leq \beta_1n + \cdots + \beta_{d-1}n + \sum_{i < d} \frac{\beta_d}{\beta_i - \beta_d}\\
\leq \beta_1n + \cdots + \beta_{d-1}n + \frac{d \alpha_d}{g}
=  \alpha_1n + \cdots + \alpha_{d-1}n + 2\sqrt{\alpha_d d n},
\end{multline*}
where the first inequality is by Theorem~\ref{thm:coupling},
and the second inequality is by Theorem~\ref{thm:ITW-intro}.
As $\blambda_1 + \cdots + \blambda_d = n$, this implies that $\E \blambda_d \geq \alpha_d n - 2\sqrt{\alpha_d d n}$.
\end{proof}

We note that  Theorem~\ref{thm:coupling}
can be replaced by Proposition~\ref{prop:alpha-lipschitz} at just a constant-factor expense.

\subsection{Bounds for all rows}

\begin{theorem}\label{thm:row-k-lower}
Let $\alpha \in \R^d$ be a sorted probability distribution.  Then
\begin{equation*}
\displaystyle \alpha_k n - 2\sqrt{\alpha_k k n} \leq  \E_{\blambda \sim \SW{n}{\alpha}} \blambda_k \leq \alpha_k n + 2 \sqrt{\seqtaileq{\alpha}{k}n}.
\end{equation*}
\end{theorem}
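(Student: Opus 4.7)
The plan is to handle the upper and lower bounds separately, each by reducing to one of the already-proved extremal-row bounds (Theorem~\ref{thm:row-one-upper} or Theorem~\ref{thm:row-d-lower}) applied to an alphabet-restricted version of the random word, with the binomial length of the restricted word absorbed via Jensen's inequality.

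For the upper bound I would first invoke Theorem~\ref{thm:last-rows-majorize}, which gives $\geqblambda{k} \unrhd_w \blambda[k{:}]$; since both sides are sorted Young diagrams, comparing the first parts yields $\geqblambda{k}_1 \geq \blambda_k$. Conditioned on its (random) length $\geqn$, $\geqblambda{k}$ is distributed as $\SW{\geqn}{\geqalpha{k}}$, and its leading symbol probability is $\geqalpha{k}_1 = \alpha_k/\seqtaileq{\alpha}{k}$. Applying Theorem~\ref{thm:row-one-upper} conditionally on $\geqn$ gives
\[
  \E[\geqblambda{k}_1 \mid \geqn] \;\leq\; \tfrac{\alpha_k}{\seqtaileq{\alpha}{k}}\,\geqn + 2\sqrt{\geqn}.
\]
Taking an outer expectation, using $\E[\geqn] = \seqtaileq{\alpha}{k}\,n$ and concavity of $\sqrt{\cdot}$ (so that $\E[\sqrt{\geqn}] \leq \sqrt{\E[\geqn]}$), yields the claimed upper bound $\alpha_k n + 2\sqrt{\seqtaileq{\alpha}{k}\,n}$.

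For the lower bound I would mirror the argument of Theorem~\ref{thm:row-d-lower}, applied to $\leqblambda{k}$ in place of $\blambda$. By the definition of $\leqlambda{k}$ recalled in the notation, $\leqblambda{k}_k \leq \blambda_k$; and conditioned on $\leqn$, the diagram $\leqblambda{k}$ is distributed as $\SW{\leqn}{\leqalpha{k}}$, which has height at most $k$, so its $k$th row is its last row. Applying Theorem~\ref{thm:row-d-lower} with $d$ replaced by $k$ and $\alpha$ by $\leqalpha{k}$ (noting $\leqalpha{k}_k = \alpha_k/\seqhead{\alpha}{k}$) gives, conditionally on $\leqn$,
\[
  \E[\leqblambda{k}_k \mid \leqn] \;\geq\; \tfrac{\alpha_k}{\seqhead{\alpha}{k}}\,\leqn - 2\sqrt{\tfrac{\alpha_k}{\seqhead{\alpha}{k}}\,k\,\leqn}.
\]
Taking an outer expectation with $\E[\leqn] = \seqhead{\alpha}{k}\,n$ and applying concavity of $\sqrt{\cdot}$ inside the subtracted term, the factors of $\seqhead{\alpha}{k}$ cancel and yield the claimed lower bound $\alpha_k n - 2\sqrt{\alpha_k k n}$.

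There is no real obstacle: the argument is a mechanical composition of prior results once one identifies the correct alphabet restriction on each side ($\geqw{k}$ for the upper bound, $\leqw{k}$ for the lower) and uses Theorem~\ref{thm:last-rows-majorize} to transfer $\blambda_k$ onto the first row of $\geqblambda{k}$. The only mild subtlety is absorbing the binomial fluctuation of $\geqn$ and $\leqn$ inside a square root, but Jensen's inequality handles this cleanly since the relevant means combine with the prefactors to reproduce exactly the target expressions.
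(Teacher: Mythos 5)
Your proof is correct and matches the paper's own argument essentially step for step: Theorem~\ref{thm:last-rows-majorize} plus Theorem~\ref{thm:row-one-upper} plus Jensen for the upper bound, and $\blambda_k \geq \leqblambda{k}_k$ plus Theorem~\ref{thm:row-d-lower} plus Jensen for the lower bound, with the binomial-length prefactors cancelling exactly as you describe.
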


\begin{proof}
For the upper bound, we use Theorem~\ref{thm:last-rows-majorize}:
\begin{equation*}
\E_{\blambda \sim \SW{n}{\alpha}} \blambda_k
\leq \E_{\geqn, \geqblambda{k}} \geqblambda{k}_1
\leq \E_{\geqn} \left[\geqalpha{k}_k\geqn + 2 \sqrt{\geqn}\right]
\leq \geqalpha{k}_k \E \geqn + 2 \sqrt{\E \geqn}
\leq \alpha_k n + 2\sqrt{\geqsum{k}n},
\end{equation*}
where the second step is by Theorem~\ref{thm:row-one-upper} and the third is by Jensen's inequality.

For the lower bound, we use the fact that $\blambda_k \geq \leqblambda{k}_k$:
\begin{equation*}
\E_{\blambda \sim \SW{n}{\alpha}} \blambda_k \geq \E_{\leqn,\leqblambda{k}} \leqblambda{k}_k
\geq \E_{\leqn}\left[\leqalpha{k}_k \leqn - 2 \sqrt{\leqalpha{k}_k k \leqn}\right]
\geq \leqalpha{k}_k \E \leqn - 2\sqrt{\leqalpha{k}_k k \E \leqn}
= \alpha_k n - 2\sqrt{\alpha_k k n},
\end{equation*}
where the second inequality is by Theorem~\ref{thm:row-d-lower},
and the third is by Jensen's inequality.
\end{proof}

Theorem~\ref{thm:row-mean} follows from the fact that $\alpha_k k, \seqtaileq{\alpha}{k} \leq \min\{1, \alpha_k d\}$.

\subsection{Chi-squared spectrum estimation}

\begin{theorem}\label{thm:truncated-squares}
Let $\alpha \in \R^d$ be a sorted probability distribution.  Then for any $k \in [d]$,
\begin{equation*}
\E_{\blambda \sim \SW{n}{\alpha}} \sum_{i=k}^d \blambda_i^2 \leq \sum_{i=k}^d (\alpha_i n)^2 + d \geqsum{k} n.
\end{equation*}
\end{theorem}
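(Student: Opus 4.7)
The plan is to reduce the problem to a bound on $\E\|\geqblambda{k}\|_2^2$ via the new majorization tool, and then expand around the multinomial mean. Specifically, Theorem~\ref{thm:last-rows-majorize} gives $\geqblambda{k} \unrhd_w \blambda[k{:}]$, and since $x \mapsto x^2$ is increasing and convex on $\R_{\geq 0}$, this weak majorization implies $\sum_{i=k}^d \blambda_i^2 \leq \sum_j (\geqblambda{k})_j^2$. Writing $\mu = \geqblambda{k}$, which conditional on $N = \geqn$ has law $\SW{N}{\beta}$ with $\beta = \geqalpha{k}$ on $\eta = d - k + 1$ letters and $N \sim \mathrm{Binomial}(n, \geqsum{k})$, it suffices to bound $\E\|\mu\|_2^2$.

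Working conditionally on $N$, I would use the identity $\|\mu\|_2^2 = \|\mu - \beta N\|_2^2 + 2N\,\mu\cdot\beta - N^2\|\beta\|_2^2$. Applying Theorem~\ref{thm:lp-norm} to $\mu$ on its $\eta$-letter alphabet gives $\E[\|\mu - \beta N\|_2^2 \mid N] \leq \eta N$. For the cross term, Abel summation (pairing $\beta_j - \beta_{j+1}$ against cumulative sums of $\nu_j - \beta_j N$, where $\nu = \E[\mu\mid N]$) yields
\[
    \E[\mu\cdot\beta \mid N] - N\|\beta\|_2^2 \;=\; \sum_{j=1}^{\eta-1}(\beta_j - \beta_{j+1})\,\Exc{N}{j}{\beta} \;\leq\; \sum_{j=1}^{\eta-1}(\beta_j - \beta_{j+1})\,\ITW{j}{\beta}
\]
by Theorem~\ref{thm:ITW-intro}. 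The key clean combinatorial step is the identity
\[
    \sum_{j=1}^{\eta-1}(\beta_j - \beta_{j+1})\,\ITW{j}{\beta} = \sum_{i < l}\beta_l = \sum_l (l-1)\beta_l \leq \eta - 1,
\]
which falls out by swapping the order of summation inside $\ITW{j}{\beta} = \sum_{i \leq j < l}\beta_l/(\beta_i - \beta_l)$ and observing that the inner sum over $j$ telescopes to exactly $\beta_i - \beta_l$, cancelling the denominator.

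Taking expectation over $N$ using the binomial moments $\E N = \geqsum{k} n$ and $\E N^2 \leq \geqsum{k} n + (\geqsum{k} n)^2$, together with $(\geqsum{k})^2 n^2 \|\beta\|_2^2 = n^2 \sum_{i \geq k}\alpha_i^2$, should then assemble a bound of the form $\E\|\mu\|_2^2 \leq n^2 \sum_{i=k}^d \alpha_i^2 + O(d \geqsum{k} n)$, as desired.

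The main obstacle will be matching the stated constant $1$ on $d \geqsum{k} n$ rather than a larger $O(1)$ multiple. Summing the three contributions naively---$\eta N$ from the variance piece, $2N(\eta-1)$ from the cross piece, and $\Var(N)\|\beta\|_2^2$ slack---yields a leading coefficient of $3\eta - 1 \approx 3d$ on $\geqsum{k} n$. Getting down to the sharper $d$ likely requires a more delicate joint bound on $\|\mu - \beta N\|_2^2$ and $\langle \nu - \beta N, \beta\rangle$ rather than bounding them separately, or a sharper form of the Abel estimate that accounts for near-degeneracies of $\beta$ (where $\Exc{N}{j}{\beta}$ is much smaller than $\ITW{j}{\beta}$, which blows up as $\beta_j - \beta_{j+1} \to 0$).
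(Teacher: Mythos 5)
Your first step---reducing to $\E\bigl[\sum_j(\geqblambda{k}_j)^2\bigr]$ via Theorem~\ref{thm:last-rows-majorize} and Schur-convexity of $\|\cdot\|_2^2$---is exactly what the paper does. But the paper then simply cites Lemma~3.1 of~\cite{OW16}, which is precisely the $k=1$ statement $\E\sum_i\blambda_i^2 \leq \sum_i(\alpha_i n)^2 + dn$, and feeds the Binomial conditional through in one line. You instead attempt to re-derive that base case from Theorem~\ref{thm:lp-norm} plus the ITW bound, which is a self-contained but lossier route. Note in particular that Theorem~\ref{thm:lp-norm} alone does \emph{not} imply Lemma~3.1 of~\cite{OW16}: expanding $\E\|\blambda-\alpha n\|_2^2\leq dn$ leaves a cross term $2n\sum_i\alpha_i(\E\blambda_i-\alpha_i n)$ of the wrong sign, which is exactly the quantity your Abel-summation step is trying to control.

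Your handling of that cross term is the genuinely nice part of the proposal. The Abel identity $\E[\mu\cdot\beta\mid N]-N\|\beta\|_2^2=\sum_{j=1}^{\eta-1}(\beta_j-\beta_{j+1})\,\Exc{N}{j}{\beta}$ is correct (using $\sum_i(\mu_i-\beta_i N)=0$ to kill the boundary term), and the telescoping
\[
\sum_{j=1}^{\eta-1}(\beta_j-\beta_{j+1})\,\ITW{j}{\beta}
=\sum_{i<\ell}\frac{\beta_\ell}{\beta_i-\beta_\ell}\sum_{j=i}^{\ell-1}(\beta_j-\beta_{j+1})
=\sum_{i<\ell}\beta_\ell=\sum_{\ell}(\ell-1)\beta_\ell\leq \eta-1
\]
is exactly right; this is a genuinely clean and reusable calculation that does not appear in the paper.

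The gap is the one you identify: assembling the three pieces gives a conditional bound of about $\E[\|\mu\|_2^2\mid N]\leq N^2\|\beta\|_2^2+(3\eta-2)N$, and after taking expectation over $N$ you land at roughly $n^2\sum_{i\geq k}\alpha_i^2+3d\,\geqsum{k}n$ --- a factor of about $3$ too large in the linear term. So you prove a statement of the same \emph{form} but not the stated constant. The paper avoids this entirely because Lemma~3.1 of~\cite{OW16} already packages the $k=1$ bound with the sharp constant~$1$; without access to that lemma, your variance-plus-cross-term decomposition appears to genuinely lose a constant (the variance piece alone already costs $\eta N$, and the cross term costs another $2(\eta-1)N$). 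Closing that gap would require either importing the cited lemma as the paper does, or a more clever joint analysis of the two pieces rather than bounding them separately --- as you note.
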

\begin{proof}
When $k=1$, this statement is equivalent to Lemma~$3.1$ from~\cite{OW16}.
Hence, we may assume $k > 1$.
By Theorem~\ref{thm:last-rows-majorize},
\begin{multline*}
\E_{\blambda \sim \SW{n}{\alpha}} \sum_{i=k}^d \blambda_i^2
\leq \E_{\geqn, \geqblambda{k}} \sum_{i=1}^{d-k+1} (\geqblambda{k}_i)^2
\leq \E_{\geqn}\left[\sum_{i=k}^{d} (\geqalpha{k}_i \geqn)^2 + (d-k+1) \geqn\right]\\
= \sum_{i=k}^d (\alpha_i n)^2 + \sum_{i=k}^d \alpha_i^2 n \left(\frac{1}{\geqsum{k}}-1\right) + (d-k+1)\geqsum{k}n
\leq \sum_{i=k}^d (\alpha_i n)^2 + d \geqsum{k} n.
\end{multline*}
Here the second inequality used Lemma~$3.1$ from~\cite{OW16},
and the third inequality used $\geqsum{k} \geq \alpha_i$ and $k >1$.
\end{proof}

\begin{theorem}\label{thm:chi-squared-spec}
$\displaystyle
\E_{\blambda \sim \SW{n}{\alpha}} \dchi{\underline{\blambda}}{\alpha} \leq \frac{d^2}{n}.
$
\end{theorem}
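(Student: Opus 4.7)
The plan is to reduce Theorem~\ref{thm:chi-squared-spec} to Theorem~\ref{thm:truncated-squares} via a clean telescoping identity on the weights $1/\alpha_i$. First I would rewrite
\[
    \dchi{\underline{\blambda}}{\alpha} = \sum_{i=1}^d \frac{(\blambda_i/n - \alpha_i)^2}{\alpha_i} = \frac{1}{n^2}\sum_{i=1}^d \frac{\blambda_i^2}{\alpha_i} - 1,
\]
so the goal becomes showing $\E\bigl[\sum_i \blambda_i^2/\alpha_i\bigr] \leq n^2 + d^2 n$.

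The key move is a discrete ``integration by parts'': since $\alpha_1 \geq \cdots \geq \alpha_d$, I can write
\[
    \frac{1}{\alpha_i} = \sum_{k=1}^i \left(\frac{1}{\alpha_k} - \frac{1}{\alpha_{k-1}}\right)
\]
with the convention $1/\alpha_0 = 0$, and each increment $1/\alpha_k - 1/\alpha_{k-1}$ is nonnegative. Swapping the order of summation yields
\[
    \sum_{i=1}^d \frac{\blambda_i^2}{\alpha_i} = \sum_{k=1}^d \left(\frac{1}{\alpha_k} - \frac{1}{\alpha_{k-1}}\right)\sum_{i=k}^d \blambda_i^2.
\]
Now, because the increments are nonnegative, I can apply Theorem~\ref{thm:truncated-squares} to each tail $\sum_{i=k}^d \blambda_i^2$ to get a term-by-term upper bound in expectation:
\[
    \E\left[\sum_{i=1}^d \frac{\blambda_i^2}{\alpha_i}\right] \leq \sum_{k=1}^d \left(\frac{1}{\alpha_k} - \frac{1}{\alpha_{k-1}}\right)\left[\sum_{i=k}^d (\alpha_i n)^2 + d\, \geqsum{k}\, n\right].
\]

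The final step is to observe that each of the two resulting double sums telescopes cleanly. Reversing the order of summation in the first, $\sum_i \alpha_i^2 \sum_{k \leq i}(1/\alpha_k - 1/\alpha_{k-1}) = \sum_i \alpha_i^2/\alpha_i = 1$, giving a total of $n^2$. For the second, writing $\geqsum{k} = \sum_{j \geq k} \alpha_j$ and again swapping yields $\sum_j \alpha_j \sum_{k \leq j}(1/\alpha_k - 1/\alpha_{k-1}) = \sum_j \alpha_j/\alpha_j = d$, giving a total of $d^2 n$. Combined, $\E\bigl[\sum_i \blambda_i^2/\alpha_i\bigr] \leq n^2 + d^2 n$, which divided by $n^2$ and reduced by $1$ yields the claim.

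I don't anticipate a real obstacle here: the telescoping identity is elementary and its monotonicity is precisely what is needed to apply Theorem~\ref{thm:truncated-squares} inside the sum. The only subtle point is noticing that the right-hand side of that theorem itself telescopes back to something tight (so the ``$d^2$'' constant comes out exactly, matching the $d/n$ constant in Theorem~\ref{thm:lp-norm}), which is the natural ansatz-predicted scaling.
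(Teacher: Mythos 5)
Your proposal is correct and matches the paper's proof essentially line-for-line: the same reduction to bounding $\E\sum_i \blambda_i^2/\alpha_i$, the same Abel-summation step using the nonnegativity of the increments $1/\alpha_k - 1/\alpha_{k-1}$ to apply Theorem~\ref{thm:truncated-squares} to each tail, and the same final telescoping to obtain $n^2 + d^2 n$. No gaps.
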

\begin{proof}
Write the expectation as
\begin{equation*}
\E \dchi{\underline{\blambda}}{\alpha}
= \frac{1}{n^2} \cdot \E \sum_{i=1}^k \frac{\blambda_i^2}{\alpha_i} - 1.
\end{equation*}
To upper bound the expectation, we can apply Theorem~\ref{thm:truncated-squares}.
\begin{align*}
\E \sum_{i=1}^d \frac{\blambda_i^2}{\alpha_i}
= \sum_{i=1}^d \left(\frac{1}{\alpha_i} - \frac{1}{\alpha_{i-1}}\right) \cdot \E \sum_{j=i}^d \blambda_j^2
&\leq \sum_{i=1}^d \left(\frac{1}{\alpha_i} - \frac{1}{\alpha_{i-1}}\right) \cdot  \sum_{j=i}^d \left((\alpha_jn)^2 + d \alpha_j n\right)\\
& = \sum_{j=1}^d \left((\alpha_j n)^2 + d \alpha_j n\right) \cdot \sum_{i=1}^j \left(\frac{1}{\alpha_i} - \frac{1}{\alpha_{i-1}}\right)\\
&= \sum_{j=1}^d\left((\alpha_j n)^2 + d \alpha_j n\right) \cdot \frac{1}{\alpha_j}
= n^2 + d^2 n.
\end{align*}
Dividing through by $n^2$ and subtracting one completes the proof.
\end{proof}

Combined with~Proposition~\ref{prop:comparing-distances}, Theorem~\ref{thm:chi-squared-spec} implies Theorem~\ref{thm:chi-squared}.

\subsection{Concentration bounds}

In this section,
we show that each row $\blambda_i$
concentrates exponentially around its mean.
We do so using the method of bounded differences.

\begin{proposition}\label{prop:azuma}
Let $\alpha \in \R^d$ be a probability distribution.  Then for any $k \in [d]$,
\begin{equation*}
\Var_{\blambda \sim \SW{n}{\alpha}}[\blambda_k] \leq 16n.
\end{equation*}
\end{proposition}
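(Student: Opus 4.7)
The plan is to prove this via the method of bounded differences, as the section prologue advertises. The function of interest is $f(\bw_1, \dots, \bw_n) = \blambda_k$, where $\blambda = \shRSK{\bw}$ and the coordinates $\bw_i$ are independent draws from $\alpha$. The crucial Lipschitz input is already in hand: Proposition~\ref{prop:lipschitz} tells us that changing any single letter of the word perturbs $\blambda_k$ by at most~$2$. Hence $f$ satisfies the bounded-differences condition with constants $c_i = 2$ for all $i \in [n]$.

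First, I would apply McDiarmid's inequality (i.e., Azuma-Hoeffding for the Doob martingale of $f$ with respect to the natural filtration of $\bw_1, \dots, \bw_n$) to conclude
\[
    \Pr\bigl[|\blambda_k - \E \blambda_k| \geq t\bigr] \leq 2 \exp\bigl(-t^2 / (2 \textstyle\sum_i c_i^2)\bigr) = 2 \exp(-t^2/(8n))
\]
for every $t \geq 0$. This is the standard one-sided-then-union-bound application; no subtlety since the coordinates are independent and the bounded-differences constants are uniform.

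Next, I would convert this sub-Gaussian tail into a variance bound by integrating. Using the layer-cake identity
\[
    \Var[\blambda_k] \leq \E\bigl[(\blambda_k - \E\blambda_k)^2\bigr] = \int_0^\infty 2s \cdot \Pr\bigl[|\blambda_k - \E\blambda_k| \geq s\bigr]\,ds,
\]
plugging in the McDiarmid tail and carrying out the elementary Gaussian-type integral $\int_0^\infty 2s \cdot 2 e^{-s^2/(8n)}\,ds = 16n$ yields the claim.

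There is essentially no obstacle here: all the work was done in Proposition~\ref{prop:lipschitz}, and the rest is a textbook application of McDiarmid followed by tail integration. The only minor choice is whether to bound the variance directly through the Efron-Stein inequality (which would give the sharper constant $2n$) or through the concentration-to-variance route; I would follow the latter since the section title is ``Concentration bounds'' and Proposition~\ref{prop:azuma} appears intended as a stepping stone to stating the tail bound itself, with the constant $16$ arising naturally from integrating the sub-Gaussian estimate.
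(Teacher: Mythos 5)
Your proof is correct and follows essentially the same route as the paper: form the Doob martingale $\bX^{(i)} = \E[\blambda_k \mid \bw_1,\dots,\bw_i]$, invoke Proposition~\ref{prop:lipschitz} for the increment bound of~$2$, apply Azuma to get $\Pr[|\blambda_k - \E\blambda_k|\geq t] \leq 2\exp(-t^2/(8n))$, and integrate the tail to obtain~$16n$. (One small quibble: the inequality you wrote, with denominator $2\sum_i c_i^2 = 8n$, is the Azuma form rather than McDiarmid's sharper $\exp(-2t^2/\sum c_i^2)$; either works, and the paper uses the Azuma form, so the match is exact.)
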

\begin{proof}
Let $\bw \sim \alpha^{\otimes n}$, and consider the martingale $\bX^{(0)}, \ldots, \bX^{(n)}$ defined as
\begin{equation*}
\bX^{(i)} := \E[\blambda_k \mid \bw_1, \ldots, \bw_i].
\end{equation*}
Note that $\bX^{(0)} = \E \blambda_k$ and $\bX^{(n)} = \shRSK{\bw}_k$.
Furthermore, by Proposition~\ref{prop:lipschitz}, we have that $|\bX^{(i)}-\bX^{(i-1)}| \leq 2$ always, for all $i \in [n]$.
Thus, if we write $\nu_k := \E \blambda_k = \bX^{(0)}$, then by Azuma's inequality
\begin{equation*}
\Pr[ |\blambda_k - \nu_k| \geq t] \leq 2 \exp\left(\frac{-t^2}{8n}\right).
\end{equation*}
We can therefore calculate $\Var[\blambda_k]$ as
\begin{equation*}
\E \left(\blambda_k - \nu_k\right)^2
= \int_{t = 0}^\infty 2t \cdot \Pr[|\blambda_k - \nu_k| \geq t] \cdot \mathrm{d}t
\leq \int_{t = 0}^\infty 4t \exp\left(\frac{-t^2}{8n}\right)\cdot \mathrm{d}t
= \left.-16n \exp\left(\frac{-t^2}{8n}\right)\right|_{t=0}^\infty = 16n.\qedhere
\end{equation*}
\end{proof}

\subsection{Truncated spectrum estimation}

\begin{lemma}\label{lem:reduce-to-lessthan-k}
Let $1 \leq i \leq k \leq d$.  Then
\begin{equation*}
\E_{\blambda \sim \SW{n}{\alpha}} (\blambda_i - \alpha_i n)^2
\leq 2 \E_{\leqn, \leqblambda{k}}(\leqblambda{k}_i - \leqalpha{k}_i \leqn)^2
	+ 44 \geqsum{i}n.
\end{equation*}
\end{lemma}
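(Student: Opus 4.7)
The plan is to compare $\blambda_i$ with the restricted diagram $\leqblambda{k}_i$ via a simple algebraic decomposition, and then bound the excess contributed by the letters of~$\bw$ exceeding~$k$. Define $\bR_i := \blambda_i - \leqblambda{k}_i$. Since $\leqblambda{k}$ is exactly the subdiagram of the $P$-tableau of~$\bw$ formed by entries~$\leq k$, the quantity $\bR_i$ counts the $>k$ entries in row~$i$; in particular $\bR_i \geq 0$. A Greene's-theorem splitting argument---cut each of~$i$ weakly increasing subsequences of~$\bw$ into its $\leq k$ and $>k$ parts---yields $\seqhead{\blambda}{i} \leq \seqhead{\leqblambda{k}}{i} + \seqhead{\geqblambda{k+1}}{i}$, and hence $\bR_i \leq \seqhead{\bR}{i} \leq \seqhead{\geqblambda{k+1}}{i} \leq \geqn$.

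First I would set down the algebraic identity
\[
\blambda_i - \alpha_i n \;=\; (\leqblambda{k}_i - \leqalpha{k}_i \leqn) + D, \qquad D \;:=\; \bR_i - \frac{\alpha_i}{\leqsum{k}}\bigl(\geqn - n\geqsum{k+1}\bigr),
\]
which is immediate from $\leqn + \geqn = n$, $\leqsum{k} + \geqsum{k+1} = 1$, and $\leqalpha{k}_i = \alpha_i/\leqsum{k}$. Applying $(x+y)^2 \leq 2x^2 + 2y^2$ in expectation reduces the lemma to proving $\E[D^2] \leq 22 \geqsum{i} n$.

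Writing $c := \alpha_i/\leqsum{k} \leq 1$ (using $\alpha_i \leq \leqsum{k}$), I expand
\[
\E[D^2] \;=\; \E[\bR_i^2] - 2c\,\Cov[\bR_i, \geqn] + c^2\,\Var[\geqn].
\]
The variance piece is harmless: $c^2\Var[\geqn] = c^2 n\leqsum{k}\geqsum{k+1} \leq \alpha_i n\geqsum{k+1} \leq n\geqsum{i}$. The covariance piece can be bounded by $|2c\Cov[\bR_i,\geqn]| \leq \E[\bR_i^2] + c^2\Var[\geqn]$ via Cauchy--Schwarz followed by AM--GM (or simply dropped if one shows $\Cov[\bR_i,\geqn]\geq 0$ by a monotone-coupling argument---adding a $>k$ letter should only push $\bR_i$ upward on average). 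It therefore remains only to prove $\E[\bR_i^2] \leq 10\,\geqsum{i} n$, and this is where I expect the main obstacle to lie: the na\"ive $\bR_i \leq \geqn$ gives only $\E[\bR_i^2] \leq \E[\geqn^2] \leq n\geqsum{k+1} + n^2\geqsum{k+1}^2$, whose second term is spurious. To remove this quadratic piece I would use the tighter majorization $\bR_i \leq \seqhead{\geqblambda{k+1}}{i}$ derived above and work conditionally on~$\geqn$: apply Theorem~\ref{thm:truncated-squares} (and, if needed, the concentration bound of Proposition~\ref{prop:azuma}) to the restricted Schur--Weyl distribution $\SW{\geqn}{\geqalpha{k+1}}$, then average over $\geqn \sim \mathrm{Binomial}(n, \geqsum{k+1})$. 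The various $\alpha$-factors should collapse to give $\E[\bR_i^2] = O(\geqsum{i} n)$ with an absolute constant, which combined with the above calculations yields the stated $44\geqsum{i} n$.
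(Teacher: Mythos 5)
Your algebraic decomposition is clean and correct, and the reduction to bounding $\E[D^2]$ is fine. The problem is that the argument then hinges entirely on the claim $\E[\bR_i^2]=O(\seqtaileq{\alpha}{i}\,n)$, which you explicitly flag as ``where I expect the main obstacle to lie,'' and the route you sketch for it does not work. The issue is one of centering. Your sharper bound $\bR_i \le \seqhead{\geqblambda{k+1}}{i}$ is genuinely tighter than $\bR_i \le n^{>k}$ pointwise, but squaring and taking expectations it still produces a term of order $n^2$: conditionally on $n^{>k}$, the diagram $\geqblambda{k+1}\sim\SW{n^{>k}}{\geqalpha{k+1}}$ has $\E[\seqhead{\geqblambda{k+1}}{i}\mid n^{>k}]\ge \seqhead{\geqalpha{k+1}}{i}\cdot n^{>k}$, so $\E[(\seqhead{\geqblambda{k+1}}{i})^2]\gtrsim (\seqhead{\geqalpha{k+1}}{i})^2\,(\seqtaileq{\alpha}{k+1})^2\,n^2$ --- exactly the ``spurious'' quadratic piece you were trying to kill. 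Neither Theorem~\ref{thm:truncated-squares} (which bounds $\E\sum_j\blambda_j^2$, a different functional) nor Proposition~\ref{prop:azuma} (which gives $\Var[\blambda_j]\le 16n$, with no $\seqtaileq{\alpha}{i}$ factor) repairs this. So the hard part is asserted, not proved.

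The paper's proof sidesteps $\bR_i$ altogether by splitting on the sign of $\blambda_i - \alpha_i n$. When $\blambda_i \ge \alpha_i n$ it compares with $\geqblambda{i}_1$ --- the first row of the RSK shape on the letters $\ge i$, \emph{not} $>k$ --- via the majorization $\blambda_i \le \geqblambda{i}_1$ from Theorem~\ref{thm:last-rows-majorize}. This choice is what makes the comparison quantity centered: $\E[\geqblambda{i}_1\mid n^{\ge i}]\approx \geqalpha{i}_1\,n^{\ge i}$ has expectation $\alpha_i n$, so $(\geqblambda{i}_1 - \alpha_i n)$ has no deterministic $\Theta(n)$ offset and its second moment can be controlled by Theorem~\ref{thm:row-one-upper} plus Azuma. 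When $\blambda_i < \alpha_i n$ it compares with $\leqblambda{k}_i$, which is again a centered quantity with respect to $\SW{\leqn}{\leqalpha{k}}$, and this is where the $2\E(\leqblambda{k}_i-\leqalpha{k}_i\leqn)^2$ term comes from. If you want to salvage your decomposition, you would essentially have to import that sign-conditioning idea, or find a direct proof that $\E[(\blambda_i-\leqblambda{k}_i)^2]=O(\seqtaileq{\alpha}{i}\,n)$ --- which is not a consequence of the tools you cite.
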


\begin{proof}
Write $\calG$ for the event that $\blambda_i \geq \alpha_i n$.  Then
\begin{equation}\label{eq:split}
\E_{\blambda} (\blambda_i - \alpha_i n)^2
=
\E_{\blambda}[(\blambda_i - \alpha_i n)^2\cdot \bone[\calG]]
+
\E_{\blambda}[(\blambda_i - \alpha_i n)^2\cdot \bone[\overline{\calG}]].
\end{equation}
When~$\calG$ occurs,
then $(\blambda_i - \alpha_i n)^2 \leq (\geqblambda{i}_1 - \alpha_i n)^2$.  Hence
\begin{multline*}
\E[(\blambda_i - \alpha_i n)^2\cdot \bone[\calG]]
\leq \E(\geqblambda{i}_1 - \alpha_i n)^2
= \E(\geqblambda{i}_1 -\geqalpha{i}_1\geqn + \geqalpha{i}_1\geqn - \alpha_i n)^2\\
\leq 2\E(\geqblambda{i}_1 -\geqalpha{i}_1\geqn)^2 + 2\E(\geqalpha{i}_1\geqn - \alpha_i n)^2
\leq 2\E(\geqblambda{i}_1 -\geqalpha{i}_1\geqn)^2  + \frac{2 n \alpha_i^2}{\geqsum{i}},
\end{multline*}
where the second inequality uses $(x+y)^2 \leq 2x^2 + 2y^2$ for all $x, y \in \R$,
and the third inequality is because $\geqn$ is distributed as $\mathrm{Binomial}(n, \geqsum{i})$.
Given $\geqn$, define $\bnu = \E[\geqblambda{i}_1\mid \geqn]$.
Then
\begin{multline}\label{eq:azuma-it-up}
\E (\geqblambda{i}_1 -\geqalpha{i}_1\geqn)^2
= \E(\geqblambda{i}_1 - \bnu + \bnu - \geqalpha{i}_1\geqn)^2\\
= \E[(\geqblambda{i}_1 - \bnu)^2 + (\bnu - \geqalpha{i}_1\geqn)^2
	+ 2(\geqblambda{i}_1 - \bnu)(\bnu - \geqalpha{i}_1\geqn)]
= \E[(\geqblambda{i}_1 - \bnu)^2 + (\bnu - \geqalpha{i}_1\geqn)^2]\\
\leq 16 \E_{\geqn} \geqn + \E_{\geqn} (\bnu - \geqalpha{i}_1\geqn)^2
= 16 \geqsum{i} n + \E_{\geqn} (\bnu - \geqalpha{i}_1\geqn)^2,
\end{multline}
where the inequality uses Proposition~\ref{prop:azuma}.
Next, we note that because $\geqblambda{i}$
is distributed as $\SW{\geqn}{\geqalpha{i}}$,
$\bnu$ is at least $\geqalpha{i}_1 \geqn$.
Hence, to  upper-bound $ (\bnu - \geqalpha{i}_1\geqn)^2$
we must upper-bound~$\bnu$,
and this can be done by Theorem~\ref{thm:row-one-upper}:
$\bnu = \E[\geqblambda{i}_1\mid \geqn]
\leq \geqalpha{i}_1 \geqn + 2\sqrt{\geqn}$.
Thus, \eqref{eq:azuma-it-up} can be bounded as
\begin{equation*}
16 \geqsum{i} n + \E_{\geqn} (\bnu - \geqalpha{i}_1\geqn)^2
\leq 16 \geqsum{i} n + \E_{\geqn} 4\geqn
= 20 \geqsum{i} n.
\end{equation*}
In summary,
the term in~\eqref{eq:split} corresponding to~$\calG$
is at most $42 \geqsum{i} n$.

As for the other term,
when $\calG$ does not occur,
then $(\blambda_i - \alpha_i n)^2 \leq (\leqblambda{k}_i - \alpha_i n)^2$.
Hence
\begin{multline*}
\E[(\blambda_i - \alpha_i n)^2\cdot \bone[\overline{\calG}]]
\leq
\E (\leqblambda{k}_i - \alpha_i n)^2
=
\E (\leqblambda{k}_i -  \leqalpha{k}_i \leqn + \leqalpha{k}_i \leqn - \alpha_i n)^2\\
\leq
2 \E(\leqblambda{k}_i - \leqalpha{k}_i \leqn)^2 + 2\E(\leqalpha{k}_i \leqn - \alpha_i n)^2
\leq
2 \E (\leqblambda{k}_i - \leqalpha{k}_i \leqn)^2 + \frac{2n\alpha_i^2}{\leqsum{k}},
\end{multline*}
where the second inequality uses $(x+y)^2 \leq 2x^2 + 2y^2$ for all $x, y \in \R$,
and the third inequality is because $\leqn$ is distributed as $\mathrm{Binomial}(n, \leqsum{k})$.
As $\alpha_i^2/\leqsum{k}\leq \geqsum{i}$, the proof is complete.
\end{proof}

\begin{theorem}\label{thm:trunc-elltwo}
$\displaystyle
\E_{\blambda \sim \SW{n}{\alpha}} \dltwosqk{k}{\underline{\blambda}}{\alpha} \leq \frac{46 k}{n}.
$
\end{theorem}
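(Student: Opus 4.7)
My plan is to reduce the truncated $\ell_2^2$ question to (i) the untruncated $\ell_2^2$ bound from Theorem~\ref{thm:lp-norm} applied to the ``lower'' sub-word $\leqw{k}$, and (ii) a crude tail bound coming from Lemma~\ref{lem:reduce-to-lessthan-k}. The overall strategy is to expand
\[
    \E \dltwosqk{k}{\underline{\blambda}}{\alpha} = \frac{1}{n^2}\sum_{i=1}^k \E(\blambda_i - \alpha_i n)^2,
\]
apply Lemma~\ref{lem:reduce-to-lessthan-k} to each term, and bound the two resulting sums separately.

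For the ``tail'' piece, I simply sum the $44 \geqsum{i} n$ contributions over $i \in [k]$. Since $\geqsum{i} \leq 1$ for every $i$, this is at most $44kn$. (One could in principle squeeze the constant a bit by interchanging the order of summation, but this is not needed for the stated $46$.)

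For the ``main'' piece, I use the fact that conditional on $\leqn$, the diagram $\leqblambda{k}$ is distributed as $\SW{\leqn}{\leqalpha{k}}$ over the \emph{$k$-letter} alphabet $[k]$. Applying the $\ell_2^2$ bound from Theorem~\ref{thm:lp-norm} to this distribution (with ``$d$'' taken to be $k$) gives
\[
    \E\Bigl[\sum_{i=1}^k (\leqblambda{k}_i/\leqn - \leqalpha{k}_i)^2 \,\Big|\, \leqn\Bigr] \leq \frac{k}{\leqn},
\]
so that $\E\sum_{i=1}^k (\leqblambda{k}_i - \leqalpha{k}_i \leqn)^2 \leq k \cdot \E[\leqn] = k \leqsum{k} n \leq k n$. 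Doubling this per Lemma~\ref{lem:reduce-to-lessthan-k} gives a contribution of $2kn$.

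Adding the two pieces yields $\sum_{i=1}^k \E(\blambda_i - \alpha_i n)^2 \leq 2kn + 44kn = 46kn$, and dividing by $n^2$ finishes the proof. There is no real obstacle here: the novelty is in Lemma~\ref{lem:reduce-to-lessthan-k} (which depends on Theorem~\ref{thm:last-rows-majorize} and thus ultimately on the lower-rows majorization theorem), so the step that required the most work was already done. The only mild subtlety is that one must apply Theorem~\ref{thm:lp-norm} to the \emph{conditional} distribution of $\leqblambda{k}$ given $\leqn$ and then take an outer expectation over the binomial variable $\leqn$, rather than attempting to apply it to $\leqblambda{k}$ directly with ``$n$ copies''.
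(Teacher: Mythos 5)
Your proof is correct and follows essentially the same route as the paper: both apply Lemma~\ref{lem:reduce-to-lessthan-k} termwise, bound the $44\geqsum{i}n$ contributions by $44kn$, and handle the main piece by applying the $\ell_2^2$ bound of Theorem~\ref{thm:lp-norm} to the conditional distribution of $\leqblambda{k}$ given $\leqn$ (a $k$-letter alphabet), then taking the outer expectation over the binomial $\leqn$ to get $2k\E[\leqn]\leq 2kn$. The subtlety you flag about conditioning on $\leqn$ before applying the $\ell_2^2$ bound is exactly what the paper does.
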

\begin{proof}
Applying Lemma~\ref{lem:reduce-to-lessthan-k} with $j = k$ for all $i \in [k]$,
\begin{multline*}
n^2 \E \dltwosqk{k}{\underline{\blambda}}{\alpha}
=
\E \sum_{i=1}^k (\blambda_i - \alpha_i n)^2
\leq
2\E \sum_{i=1}^k (\leqblambda{k}_i - \leqalpha{k}_i \leqn)^2 + 44 kn\\
=
2\E_{\leqn}\left[\leqn^2 \E_{\leqblambda{k}} \Vert \lequblambda{k} - \leqalpha{k} \Vert_2^2\right]  + 44 kn
\leq
2\E_{\leqn}\left[ \leqn^2 \left(\frac{k}{\leqn}\right)\right] + 44 kn
\leq
46 kn,
\end{multline*}
where the second inequality is by Theorem~$1.1$ of~\cite{OW16}.
The theorem follows by dividing through by~$n^2$.
\end{proof}

\begin{theorem}	\label{thm:trunc-chi}
$\displaystyle
\E_{\blambda \sim \SW{n}{\alpha}} \dchik{k}{\underline{\blambda}}{\alpha} \leq \frac{46 kd}{n}.
$
\end{theorem}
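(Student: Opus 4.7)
The plan is to mirror the proof of Theorem~\ref{thm:trunc-elltwo}, replacing its use of the $\ell_2^2$ bound from~\cite{OW16} with the chi-squared bound of Theorem~\ref{thm:chi-squared-spec}. After multiplying through by~$n^2$, the target becomes
\[
	\sum_{i=1}^k \frac{\E(\blambda_i - \alpha_i n)^2}{\alpha_i} \leq 46\,kdn.
\]
My first step is to apply Lemma~\ref{lem:reduce-to-lessthan-k} term by term; after dividing through by $\alpha_i$ and summing, this gives the upper bound
\[
	2 \sum_{i=1}^k \frac{\E(\leqblambda{k}_i - \leqalpha{k}_i\leqn)^2}{\alpha_i} \;+\; 44 n \sum_{i=1}^k \frac{\geqsum{i}}{\alpha_i}.
\]
The error sum is handled by sortedness: since $\alpha_j \leq \alpha_i$ for $j \geq i$, we have $\geqsum{i} \leq d\alpha_i$, so the second term above is at most $44\,kdn$.

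For the main term, I use the identity $\alpha_i = \leqsum{k} \cdot \leqalpha{k}_i$ (for $i \leq k$) to rewrite it as
\[
	\frac{2}{\leqsum{k}} \cdot \E_{\leqn}\left[\leqn^2 \cdot \E_{\leqblambda{k}} \dchi{\lequblambda{k}}{\leqalpha{k}}\right],
\]
where given~$\leqn$, $\leqblambda{k} \sim \SW{\leqn}{\leqalpha{k}}$ is a $k$-dimensional Schur--Weyl variable. Applying Theorem~\ref{thm:chi-squared-spec} in dimension~$k$ bounds the inner expectation by $k^2/\leqn$, and since $\E[\leqn] = \leqsum{k} n$, this makes the main term at most $2 k^2 n \leq 2\,kdn$. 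Summing with the error contribution yields $46\,kdn$, and dividing by $n^2$ completes the proof.

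I don't anticipate a serious obstacle: Lemma~\ref{lem:reduce-to-lessthan-k} is tailored for precisely this kind of truncated decomposition, the error sum collapses to $\leq kd$ by sortedness of~$\alpha$, and the main term inherits its bound from the $k$-dimensional chi-squared spectrum result already established. The one point worth noting is that the truncated chi-squared costs an extra factor of~$d$ over the truncated $\ell_2^2$ result, and this factor enters entirely through the bound $\geqsum{i}/\alpha_i \leq d$ used to control the error sum.
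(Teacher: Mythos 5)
Your proposal is correct and follows the same route as the paper: apply Lemma~\ref{lem:reduce-to-lessthan-k} to each summand, bound the error sum via $\geqsum{i} \leq d\alpha_i$, rewrite the main term using $\alpha_i = \leqsum{k}\leqalpha{k}_i$, and invoke Theorem~\ref{thm:chi-squared-spec} on the $k$-dimensional restriction $\SW{\leqn}{\leqalpha{k}}$ to get $2k^2 n \leq 2kdn$. This matches the paper's proof essentially line for line, including the final $2k^2n + 44kdn \leq 46kdn$ step via $k \leq d$.
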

\begin{proof}
Applying Lemma~\ref{lem:reduce-to-lessthan-k} with $j = k$ for all $i \in [k]$,
\begin{multline*}
n^2 \E \dchik{k}{\underline{\blambda}}{\alpha}
=
\E \sum_{i=1}^k \frac{1}{\alpha_i} \left(\blambda_i-\alpha_i n\right)^2
\leq
2\E \sum_{i=1}^k \frac{1}{\alpha_i}(\leqblambda{k}_i - \leqalpha{k}_i \leqn)^2
	+ \sum_{i=1}^k \frac{44 \geqsum{i}n}{\alpha_i}\\
\leq
2\E_{\leqn}\left[\frac{\leqn^2}{\leqsum{k}} \E_{\leqblambda{k}} \dchi{\lequblambda{k}}{\leqalpha{k}}\right]
	+44kdn
\leq
2 \E_{\leqn}\left[\frac{\leqn^2}{\leqsum{k}} \cdot \frac{k^2}{\leqn}\right] + 44kdn
=
2 k^2 n + 44 kdn,
\end{multline*}
where the second inequality is because $\geqsum{i} \leq \alpha_i d$,
and the third inequality is by Theorem~\ref{thm:chi-squared-spec}.
The theorem follows from $k \leq d$
and by dividing through by~$n^2$.
\end{proof}

\subsection{Mean squared error}

\begin{theorem}\label{thm:mean-squared}
$\displaystyle
\E_{\blambda \sim \SW{n}{\alpha}} (\blambda_k - \alpha_k n)^2 \leq 42 \alpha_k k n+ 42\geqsum{k}n.
$
\end{theorem}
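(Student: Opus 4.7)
The plan is to reduce to the $k$-letter sub-problem by applying Lemma~\ref{lem:reduce-to-lessthan-k} with both of its indices set equal to $k$, which immediately yields
\[
\E_{\blambda \sim \SW{n}{\alpha}}(\blambda_k - \alpha_k n)^2 \leq 2\,\E(\leqblambda{k}_k - \leqalpha{k}_k \leqn)^2 + 44\,\geqsum{k} n.
\]
This isolates the mean-squared error into two pieces: the $44\,\geqsum{k} n$ tail contribution, which is already at the target scale, and the within-$[k]$ mean-squared error of the \emph{last} row of a Schur--Weyl diagram drawn from $\SW{\leqn}{\leqalpha{k}}$ on the $k$-letter alphabet. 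The remaining work is therefore to bound $\E(\leqblambda{k}_k - \leqalpha{k}_k \leqn)^2$ by $O(\alpha_k k n)$.

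For the inner term I would condition on $\leqn = m$ and decompose into variance plus squared bias. For the squared bias, both directions of Theorem~\ref{thm:row-k-lower} apply to the subproblem on $[k]$ (with $\seqtaileq{\leqalpha{k}}{k} = \leqalpha{k}_k$ since $\leqalpha{k}$ is supported on $k$ letters), giving $|\E[\leqblambda{k}_k \mid \leqn = m] - \leqalpha{k}_k m| \leq 2\sqrt{\leqalpha{k}_k k m}$, and hence a squared bias of at most $4\,\leqalpha{k}_k k m$; averaging over $\leqn$ using the identity $\leqalpha{k}_k\,\leqsum{k} = \alpha_k$ produces the clean quantity $4\,\alpha_k k n$. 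For the variance, the key observation is that although $\blambda_k$ is only $2$-Lipschitz in general, the last row $\leqblambda{k}_k = \leqn - \seqhead{\leqblambda{k}}{k-1}$ is $1$-Lipschitz in each letter of $\leqbw{k}$ by Proposition~\ref{prop:lipschitz}, so the proof of Proposition~\ref{prop:azuma} with the halved Lipschitz constant yields $\Var[\leqblambda{k}_k \mid \leqn = m] \leq 4m$.

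Plugging these two conditional bounds back into Lemma~\ref{lem:reduce-to-lessthan-k} produces an overall bound of the form $8\,\alpha_k k n + 8\,\leqsum{k} n + 44\,\geqsum{k} n$. The stated constants $42\,\alpha_k k n + 42\,\geqsum{k} n$ follow once the spurious $\leqsum{k} n$ term is absorbed, which is the main obstacle. In the skewed regime where $\alpha_1 \approx 1$ and $\alpha_2, \dots, \alpha_d$ are tiny (with $k \geq 2$), one has $\leqsum{k} \approx 1$ while $\alpha_k k + \geqsum{k} \approx 0$, so Azuma's universal $4m$ variance is genuinely too crude. To close this gap I would sharpen the variance step using that $\leqblambda{k}_k \leq \bh_k$ pointwise, where $\bh_k$ is the count of $k$'s in $\leqbw{k}$ (a Binomial with $\Var \bh_k \leq \leqalpha{k}_k m$): writing $\Var \leqblambda{k}_k \leq \E(\bh_k - \leqblambda{k}_k)^2 + \Var \bh_k + \text{cross terms}$ and bounding $\E(\bh_k - \leqblambda{k}_k)^2 \leq \E[\bh_k(\bh_k - \leqblambda{k}_k)]$ via the positive-correlation lower bound on $\E[\bh_k \leqblambda{k}_k]$ (both quantities are monotone non-decreasing in the indicator variables $\bone[\bw_i = k]$) should convert the $4m$ factor into something proportional to $\leqalpha{k}_k m$, collapsing $\leqsum{k} n$ into $\alpha_k n \leq \geqsum{k} n$ and delivering the $42{+}42$ constants.
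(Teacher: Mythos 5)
Your opening reduction via Lemma~\ref{lem:reduce-to-lessthan-k}, the bias computation via Theorem~\ref{thm:row-k-lower}, and the sharp $1$-Lipschitz observation for the last row on a $k$-letter alphabet are all correct, and you've correctly diagnosed that this route leaves behind a spurious $\leqsum{k} n$ term from the Azuma variance step. Unfortunately, the patch you propose does not close the gap. The inequality $\E[(\bh_k - \leqblambda{k}_k)^2] \leq \E[\bh_k(\bh_k - \leqblambda{k}_k)]$ is fine (it follows pointwise from $0 \leq \leqblambda{k}_k \leq \bh_k$), but to extract anything useful from the right-hand side you need the FKG-type lower bound $\E[\bh_k \leqblambda{k}_k] \geq \E[\bh_k]\,\E[\leqblambda{k}_k]$, and the monotonicity hypothesis underlying it is false: $\leqblambda{k}_k$ is \emph{not} non-decreasing in the indicators $\bone[\bw_i = k]$. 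A two-letter counterexample: on alphabet $\{1,2\}$ with $k=2$, the word $21$ has $\shRSK{21} = (1,1)$ so $\lambda_2 = 1$, while flipping the $1$ to a $2$ gives $22$ with $\shRSK{22} = (2,0)$ so $\lambda_2 = 0$ — the last row \emph{decreased} while $\bh_2$ increased. So the positive-correlation step has no foundation, and the $\leqsum{k} n$ obstruction remains.

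The paper's actual proof takes a fundamentally different route to kill this term. Rather than trying to sharpen the variance bound for $\SW{\leqn}{\leqalpha{k}}$ directly, it borrows the mixture trick from Lemma~5.1 of~\cite{OW16}: write $\leqalpha{k} = p_1\calD_1 + p_2\calD_2$ with $\calD_1$ supported on $[k-1]$, $\calD_2$ uniform on $[k]$, and $p_2 = \leqalpha{k}_k k$. Partition the $\leqn$ draws accordingly into words $\bw^{(1)},\bw^{(2)}$ and use Greene's theorem to show $\seqhead{\leqblambda{k}}{z} \leq \seqhead{\blambda^{(1)}}{z} + \seqhead{\blambda^{(2)}}{z}$ for all $z$; combined with $|\leqblambda{k}| = |\blambda^{(1)}| + |\blambda^{(2)}|$ and the fact that $\blambda^{(1)}$ has height $\leq k-1$, this yields $\blambda^{(2)}_k \leq \leqblambda{k}_k$. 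Now $\blambda^{(2)} \sim \SW{\bu}{\frac1k}$ with $\bu \sim \mathrm{Binomial}(\leqn, p_2)$, so $\E[\bu] = \leqsum{k} n \cdot \leqalpha{k}_k k = \alpha_k k n$ — the crude Azuma variance bound $O(\bu)$ is now automatically at the right scale. The key idea you are missing is exactly this mixture decomposition: it converts a variance bound that naively scales with $\leqsum{k} n$ into one that scales with $\alpha_k k n$ by exhibiting $\leqblambda{k}_k$ as dominated by the last row of a much smaller Schur--Weyl diagram. (A secondary point: the paper does not apply Lemma~\ref{lem:reduce-to-lessthan-k} verbatim as you do, but reuses its intermediate inequality, keeping the center at $\alpha_k n$ and retaining the indicator $\bone[\overline{\calG}]$; this choice is what makes the domination $\blambda^{(2)}_k \leq \leqblambda{k}_k < \alpha_k n$ point the right way to compare the squared errors.)
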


\begin{proof}
Following the proof of Lemma~\ref{lem:reduce-to-lessthan-k} for $i=k$, we have that
\begin{equation}\label{eq:reuse-old-lemma}
\E_{\blambda} (\blambda_k - \alpha_i n)^2
\leq
\E[(\leqblambda{k}_k - \alpha_k n)^2 \cdot \bone[\overline{\calG}]]
+
42 \geqsum{k} n,
\end{equation}
where $\calG$ is the event that $\blambda_k \geq \alpha_i n$.
Now we borrow a step from the proof of Lemma~$5.1$ in~\cite{OW16}.
Because it has support size~$k$, $\leqalpha{k}$ can be expressed as a mixture
    \begin{equation}        \label{eqn:mixture}
        \leqalpha{k} = p_1 \cdot \calD_1 + p_2 \cdot \calD_2,
    \end{equation}
    of a certain distribution $\calD_1$ supported on~$[k-1]$ and the uniform distribution $\calD_2$ on~$[k]$.
    It can be checked that $p_2 = \leqalpha{k}_k k$.
    We may therefore think of a draw $\leqblambda{k}$ from $\SW{\leqn}{\leqalpha{k}}$ occurring as follows.  First, $[\leqn]$ is partitioned into two subsets $\bI_1, \bI_2$ by including each $i \in [\leqn]$ into $\bI_j$ independently with probability~$p_j$. Next we draw strings $\bw^{(j)} \sim \calD_j^{\otimes \bI_j}$ independently for $j \in [2]$.  Finally, we let $\leqbw{k} = (\bw^{(1)}, \bw^{(2)}) \in [d]^n$ be the natural composite string and define $\leqblambda{k} = \shRSK{\leqbw{k}}$.  Let us also write $\blambda^{(j)} = \shRSK{\blambda^{(j)}}$ for $j \in [2]$.  We now claim that
\begin{equation}\label{eq:split-into-two}
        \sum_{i=1}^z \leqblambda{k}_i \leq \sum_{i=1}^z \blambda_i^{(1)}  + \sum_{i=1}^z \blambda_i^{(2)}
\end{equation}
    always holds. Indeed, this follows from Greene's Theorem: the left-hand side is $|\bs|$, where $\bs \in [d]^n$ is a maximum-length disjoint union of~$z$ increasing subsequences in~$\bw$; the projection of $\bs^{(j)}$ onto coordinates~$\bI_j$ is a disjoint union of~$z$ increasing subsequences in $\bw^{(j)}$ and hence the right-hand side is at least $|\bs^{(1)}| + |\bs^{(2)}| = |\bs|$.

Applying~\eqref{eq:split-into-two} in the $z=k-1$ case,
and using the facts that (i) $|\leqblambda{k}| = |\blambda^{(1)}| + |\blambda^{(2)}|$,
and (ii) $\lambda^{(1)}$ has height at most~$k-1$,
we see that $\blambda_k^{(2)}\leq \leqblambda{k}_k$.
Hence
\begin{equation*}
\E [(\leqblambda{k}_k - \alpha_k n)^2\cdot \bone[\overline{\calG}]]
\leq
\E (\blambda_k^{(2)}- \alpha_k n)^2
=
\E_{\leqn, \bu, \bmu}(\bmu_k- \alpha_k n)^2,
\end{equation*}
where $\bu \sim \mathrm{Binomial}(\leqn,p_2)$ and $\bmu \sim \SW{\bu}{\frac{1}{k}}$.
Hence
\begin{multline*}
\E_{\leqn, \bu, \bmu}(\bmu_k- \alpha_k n)^2
= \E_{\leqn, \bu, \bmu}(\bmu_k- \tfrac{1}{k}\bu + \tfrac{1}{k}\bu - \alpha_k n)^2\\
\leq 2 \E_{\leqn, \bu, \bmu}(\bmu_k- \tfrac{1}{k}\bu)
	+ 2\E_{\leqn, \bu}(\tfrac{1}{k}\bu - \alpha_k n)^2
\leq 2 \E_{\leqn, \bu, \bmu}(\bmu_k- \tfrac{1}{k}\bu)^2
	+ \frac{2 n\alpha_k}{k},
\end{multline*}
where the first inequality used $(x + y)^2 \leq 2 x^2 + 2y^2$.
Given $\bu$, define $\bnu = \E[\bmu_k \mid \bu]$.
Then
\begin{multline}\label{eq:azuma-again???}
\E_{\leqn, \bu, \bmu}(\bmu_k- \tfrac{1}{k}\bu)^2
= \E_{\leqn, \bu, \bmu}(\bmu_k-\bnu + \bnu - \tfrac{1}{k}\bu)^2\\
= \E_{\leqn, \bu, \bmu}[(\bmu_k- \bnu)^2 + (\bnu - \tfrac{1}{k}\bu)^2
	+ 2 (\bmu_k- \bnu)(\bnu - \tfrac{1}{k}\bu)]
= \E_{\leqn, \bu, \bmu}[(\bmu_k- \bnu)^2 + (\bnu - \tfrac{1}{k}\bu)^2]\\
\leq 16\E_{\leqn, \bu} \bu + \E_{\leqn , \bu} (\bnu - \tfrac{1}{k}\bu)^2
= 16\alpha_k k n + \E_{\leqn , \bu} (\bnu - \tfrac{1}{k}\bu)^2,
\end{multline}
where the inequality uses Proposition~\ref{prop:azuma}.
Next, we note that because $\bmu$
is distributed as $\SW{\bu}{\frac{1}{k}}$,
$\bnu$ is at most $\tfrac{1}{k} \bu$.
Hence, to  upper-bound $ (\bnu - \frac{1}{k} \bu)^2$
we must lower-bound~$\bnu$,
and this can be done by Theorem~\ref{thm:row-d-lower}:
$\bnu = \E[\bmu_k \mid \bu]
\geq \frac{1}{k} \bu - 2\sqrt{\bu}$.
Thus, \eqref{eq:azuma-again???} can be bounded as
\begin{equation*}
16 \alpha_k k n + \E_{\leqn, \bu} (\bnu - \tfrac{1}{k}\bu)^2
\leq 16 \alpha_k k n + \E_{\leqn, \bu} 4\bu
= 20 \alpha_k k n.
\end{equation*}
In summary,
the term in~\eqref{eq:reuse-old-lemma} corresponding to~$\overline{\calG}$
is at most $42 \geqsum{k} n$.
\end{proof}

Using the fact that $\alpha_k k, \geqsum{k}\leq \min\{1, \alpha_k d\}$,
this implies Theorem~\ref{thm:mean-squared-intro}.

\subsection{An alternate bound on $\Exc{n}{k}{\alpha}$}

If the gap $\alpha_k - \alpha_{k+1}$ is very tiny (or zero), $\ITW{k}{\alpha}$ will not be a good bound on $\Exc{n}{k}{\alpha}$.  In~\cite{OW16} we gave the following bound:
\begin{proposition}										\label{prop:OW-Exc}
	(\cite[Lemma~5.1]{OW16}.)      $\displaystyle \E_{\blambda \sim \SW{n}{\alpha}}[\seqhead{\ul{\blambda}}{k} - \seqhead{\alpha}{k}] \leq \frac{2\sqrt{2} k}{\sqrt{n}}$.
\end{proposition}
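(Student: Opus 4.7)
The proposition is quoted from~\cite{OW16}, so strictly no proof is needed here; but within the present paper's framework it admits a short derivation by summing the per-row upper bound of Theorem~\ref{thm:row-k-lower}. My plan is to apply that theorem once per row for $i = 1, \ldots, k$ and sum.

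Specifically, Theorem~\ref{thm:row-k-lower} gives $\E[\blambda_i] \leq \alpha_i n + 2\sqrt{\seqtaileq{\alpha}{i}\, n}$ for every $i$. Summing over $i \in [k]$, invoking linearity of expectation, and using the trivial bound $\seqtaileq{\alpha}{i} \leq 1$ will yield
\[
    \E[\seqhead{\blambda}{k}] - \seqhead{\alpha}{k}\, n \;\leq\; \sum_{i=1}^{k} 2\sqrt{\seqtaileq{\alpha}{i}\, n} \;\leq\; 2k\sqrt{n},
\]
which after dividing by $n$ gives a normalized bound of $2k/\sqrt{n}$, in fact slightly sharper than the stated $2\sqrt{2}\,k/\sqrt{n}$.

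There is no serious obstacle along this route: all the work is absorbed into Theorem~\ref{thm:row-k-lower}, which in turn relies on the new lower-rows majorization Theorem~\ref{thm:last-rows-majorize} to reduce $\blambda_k$ to a first-row question on the subword $\geqbw{k}$, after which Theorem~\ref{thm:row-one-upper} applies. Since the lower-rows majorization postdates~\cite{OW16}, the original proof of Lemma~5.1 there must take a different path. The natural alternative would be a direct coupling argument: I would build a sorted $\beta \succ \alpha$ that shifts $g$ units of mass from $\alpha$'s tail into $\alpha_1$, forcing a gap $\beta_k - \beta_{k+1} \gtrsim g$, then combine Theorem~\ref{thm:coupling} with Theorem~\ref{thm:ITW-intro} and the trivial bound $\ITW{k}{\beta} \leq O(k/g)$, optimizing $g \approx \sqrt{k/n}$. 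The main difficulty in this alternative is that when $\alpha$'s tail mass is spread very thinly the perturbation must be distributed across many indices, which erodes the manufactured gap and forces a separate case analysis --- precisely the sort of case distinction the summation route above sidesteps.
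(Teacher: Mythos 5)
Your summation route is correct and is exactly what the paper itself observes in the remark immediately following the proposition: summing the per-row bound of Theorem~\ref{thm:row-k-lower} (or Theorem~\ref{thm:row-mean}) over $i \in [k]$ and using $\seqtaileq{\alpha}{i} \leq 1$ yields the sharper constant~$2$ in place of~$2\sqrt{2}$. (Your aside speculating that the original~\cite{OW16} proof uses a gap-manufacturing coupling plus Theorem~\ref{thm:ITW-intro} is not quite right --- as Proposition~\ref{prop:ITW-bound-generally}'s proof indicates, \cite{OW16} actually decomposes $\alpha$ as a three-part mixture and tracks probabilities $p_2$, $p_3$ --- but that does not affect the validity of your derivation.)
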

By summing our Theorem~\ref{thm:row-mean} over all $i \in [k]$,
we can replace the constant~$2\sqrt{2}$ by~$2$.
We now observe that this bound can also be improved so that it tends to~$0$ with  $\seqtail{\alpha}{k}$.

\begin{proposition}										\label{prop:ITW-bound-generally}
    $\displaystyle \E_{\blambda \sim \SW{n}{\alpha}}[\seqhead{\ul{\blambda}}{k} - \seqhead{\alpha}{k}] \leq O\parens*{\frac{k \sqrt{\seqtail{\alpha}{k}}}{ \sqrt{n}}}$.
\end{proposition}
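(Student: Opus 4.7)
The plan is to apply the perturbation-and-coupling strategy from the proofs of Theorems~\ref{thm:row-one-upper} and~\ref{thm:row-d-lower}, now targeting a gap at position~$k$. For a parameter $g > 0$ to be chosen, I would construct a sorted probability distribution $\beta \succ \alpha$ such that $\seqhead{\beta}{k} = \seqhead{\alpha}{k} + g$ and $\beta_k - \beta_{k+1} \geq g/k$. Given such a $\beta$, Theorem~\ref{thm:coupling} produces a coupled $\bmu \sim \SW{n}{\beta}$ with $\bmu \unrhd \blambda$, and then Theorem~\ref{thm:ITW-intro} together with the bound~\eqref{eqn:triv-ITW-bound} give
\begin{equation*}
    \E[\seqhead{\blambda}{k}] \leq \E[\seqhead{\bmu}{k}] \leq \seqhead{\beta}{k} n + \ITW{k}{\beta} \leq \seqhead{\alpha}{k} n + g n + \frac{k \seqtail{\beta}{k}}{\beta_k - \beta_{k+1}} \leq \seqhead{\alpha}{k} n + g n + \frac{k^2 \seqtail{\alpha}{k}}{g}.
\end{equation*}
Dividing by $n$ and choosing $g = k\sqrt{\seqtail{\alpha}{k}/n}$ yields the claimed bound $2k\sqrt{\seqtail{\alpha}{k}/n}$.

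To produce $\beta$, let $\ell \leq k$ be the smallest index with $\alpha_\ell = \alpha_k$, so that $\alpha_\ell = \cdots = \alpha_k$. Set $\beta_i = \alpha_i$ for $i < \ell$, raise the equal-value block uniformly via $\beta_i = \alpha_k + g/(k - \ell + 1)$ for $\ell \leq i \leq k$, and shrink the tail proportionally via $\beta_j = \alpha_j \cdot (\seqtail{\alpha}{k} - g)/\seqtail{\alpha}{k}$ for $j > k$. Provided $g \leq (k - \ell + 1)(\alpha_{\ell - 1} - \alpha_k)$ (vacuous if $\ell = 1$) and $g \leq \seqtail{\alpha}{k}$, one checks routinely that $\beta$ is a sorted probability distribution with $\beta \succ \alpha$, that $\seqhead{\beta}{k} = \seqhead{\alpha}{k} + g$, and that $\beta_k - \beta_{k+1} \geq g/(k - \ell + 1) \geq g/k$.

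The main obstacle is verifying that the optimal $g^* = k\sqrt{\seqtail{\alpha}{k}/n}$ is admissible in all parameter regimes. When $\seqtail{\alpha}{k} \leq k^2/n$ one has $g^* > \seqtail{\alpha}{k}$, but then the trivial bound $\Exc{n}{k}{\alpha}/n \leq \seqtail{\alpha}{k}$ already implies the claim because $\seqtail{\alpha}{k} \leq k\sqrt{\seqtail{\alpha}{k}/n}$ precisely in this regime. The harder case is when $g^*$ exceeds $(k - \ell + 1)(\alpha_{\ell - 1} - \alpha_k)$: here I would flatten a longer block $[m, k]$ with $m < \ell$, choosing $m$ so that $\beta \succ \alpha$ can still be maintained while the top-of-block constraint loosens. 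The bookkeeping for this adaptive choice is the only nonroutine step; the rest is a direct appeal to the earlier tools.
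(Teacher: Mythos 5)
Your proposal is correct, and it takes a genuinely different route from the paper. The paper first uses Theorem~\ref{thm:coupling} to reduce to the special case where $\alpha_k = \alpha_{k+1} = \cdots = \alpha_{k+m-1}$ and $\alpha_{k+m+1} = \cdots = \alpha_d = 0$, and then does a case split on whether $m < k$ or $m \geq k$: the first case is handled by summing the $k$th-row bound of Theorem~\ref{thm:row-k-lower} over the tail indices, and the second case defers to a re-inspection of the proof of~\cite[Lemma~5.1]{OW16}. Your approach instead mirrors the proofs of Theorems~\ref{thm:row-one-upper} and~\ref{thm:row-d-lower}: perturb $\alpha$ upward to a $\beta \succ \alpha$ so that there is a gap $\beta_k - \beta_{k+1} \geq g/k$, then invoke Theorem~\ref{thm:coupling}, Theorem~\ref{thm:ITW-intro}, and the trivial bound~\eqref{eqn:triv-ITW-bound}, and optimize over $g$. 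This is more unified and self-contained (no case split, no deferral to the earlier paper) and in fact yields a sharper explicit constant ($2$ rather than the paper's $4\sqrt{2}$).

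The one step you flag as nonroutine --- constructing $\beta$ when the block $[\ell,k]$ is not wide enough to absorb the increment --- does work, and there is a clean way to see it that sidesteps the adaptive choice of $m$. Pick the unique ``water level'' $c > \alpha_k$ with $\sum_{i=1}^k \max\{0,\, c - \alpha_i\} = g$ (this exists and is unique because the left side is continuous and strictly increasing in $c$, going from $0$ to $\infty$; uniqueness requires $g \leq \seqtail{\alpha}{k}$ only insofar as you also need $\seqhead{\beta}{k} \leq 1$, which you have already handled). Set $\beta_i = \max\{\alpha_i, c\}$ for $i \leq k$ and $\beta_j = \alpha_j\cdot(\seqtail{\alpha}{k}-g)/\seqtail{\alpha}{k}$ for $j > k$. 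Then $\beta$ is automatically sorted, $\beta_i \geq \alpha_i$ for $i \leq k$ gives the head part of $\beta \succ \alpha$, the proportional tail shrinkage gives the tail part (as you checked), and since each of the $k$ summands $\max\{0, c-\alpha_i\}$ is at most $c - \alpha_k$, we get $g \leq k(c - \alpha_k)$, i.e.\ $\beta_k = c \geq \alpha_k + g/k$, so $\beta_k - \beta_{k+1} \geq g/k$ with no case analysis at all. With that replacement, your argument closes completely and is arguably tidier than the paper's.
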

\begin{proof}
By Theorem~\ref{thm:coupling}
we may assume that for some~$m \geq 1$ we have $\alpha_k = \alpha_{k+1} = \alpha_{k+2} = \cdots = \alpha_{k+m-1}$ and $\alpha_{k+m+1} = \alpha_{k+m+2} = \cdots = \alpha_d = 0$.
    \paragraph{Case 1:} $m < k$.  In this case, Theorem~\ref{thm:row-k-lower} tells us that
    $\E[\alpha_\ell - \ul{\blambda}_\ell] \leq 2 \sqrt{\alpha_\ell \ell/n}$.
    If $m = 1$ then we are done.  Otherwise, $\alpha_k m \leq 2\seqtail{\alpha}{k}$, and so
    \begin{equation*}
    \E[\seqtail{\alpha}{k} - \seqtail{\ul{\blambda}}{k}]
    	\leq m \cdot 2\sqrt{2} \sqrt{\alpha_k k /n}
	\leq 2\sqrt{2} k \sqrt{\alpha_k m/n}
	\leq 4 k \sqrt{\seqtail{\alpha}{k}/n},
    \end{equation*}
    which is equivalent to our desired bound.

    \paragraph{Case 2:} $m \geq k$. In this case we follow the proof of Proposition~\ref{prop:OW-Exc} from~\cite{OW16}.  Inspecting that proof, we see that in fact the following stronger statement is obtained:
    \[
    	\E_{\blambda \sim \SW{n}{\alpha}}[\seqhead{\ul{\blambda}}{k} - \seqhead{\alpha}{k}] \leq 2k\sqrt{p_2/n} + 2k\sqrt{p_3/n},
    \]
    where it is easy to check (from~\cite[(25)]{OW16}) that %$p_2$ and $p_3$ satisfy $\frac{p_3}{k+m} = \alpha_{k+m}$ and $\frac{p_2}{k+m-1} + \frac{p_3}{k+m} = \alpha_k$.  Thus
    $p_2 +p_3 =  k\alpha_k + \seqtail{\alpha}{k}$. Now
    \[
    	\sqrt{p_2} + \sqrt{p_3} \leq 2\sqrt{p_2 + p_3} \leq 2\sqrt{m \alpha_k + \seqtail{\alpha}{k}} \leq 2\sqrt{2\seqtail{\alpha}{k}}
    \]
    where the middle inequality is because we're in Case~2.  Combining this with the previous inequality completes the proof.
\end{proof}

\section{Tomography with Hellinger/infidelity error} \label{sec:tomography}

\subsection{Setup and notation} \label{sec:fidelity-notation}
In this section we study the number of samples needed in quantum tomography to achieve small quantum Hellinger error (equivalently, infidelity).  Throughout this section we consider the Keyl algorithm, in which
\[
	\blambda \sim \SW{n}{\alpha}, \quad \bV \sim \Keyl{\blambda}{\rho}, \quad \bLambda = \text{diag}(\blambda/n), \quad \text{and the hypothesis is } \wh{\brho} = \bV \bLambda \bV^\dagger,
\]
Later, we will also consider ``PCA''-style results where the output is required to be of rank~$k$.  In that case, $\bLambda$ will be replaced by $\mtxTrunc{\bLambda}{k} = \diag(\blambda_1/n, \dots, \blambda_k/n, 0, \dots, 0)$.

Since the Hellinger distance is unitarily invariant, we have $\Dhellsq{\wh{\brho}}{\rho} = \Dhellsq{\bLambda}{\bR}$, where $\bR = \bV^\dagger \rho \bV$.  It is also an immediate property of the Keyl distribution that the distribution of~$\bR$ depends only on the spectrum of~$\rho$. Thus we may henceforth assume, without loss of generality, that
\begin{equation} \label{eqn:my-notation}
	\rho = A = \diag(\alpha), \quad \text{so } \bV \sim \Keyl{\blambda}{A}, \quad \bR = \bV^\dagger A \bV,
\end{equation}
and our goal is to bound
\begin{equation}	\label{eqn:tomography-target}
	\E_{\blambda, \bV}[\Dhellsq{\bLambda}{\bR}].
\end{equation}
We introduce one more piece of notation. Every outcome $\bV = V$ is a unitary matrix, for which the matrix $(|V_{ij}|^2)_{ij}$ is doubly-stochastic and hence a convex combination of permutation matrices.  We think of $V$ as inducing a random permutation $\bpi$  on~$[d]$, which we write as
\[
	\bpi \sim V.
\]
This arises in expressions like $\bR_{ii} = (\bV^\dagger A \bV)_{ii}$ and $(\sqrt{\bR})_{ii} = (\bV^\dagger \sqrt{A} \bV)_{ii}$, which, by explicit computation, are
\begin{equation}	\label{eqn:Rii}
	\bR_{ii} = \sum_{j=1}^d \abs*{\bV_{ji}}^2 \alpha_j = \E_{\bpi \sim \bV}[\alpha_{\bpi(i)}], \qquad (\sqrt{\bR})_{ii} = \E_{\bpi \sim \bV}[\sqrt{\alpha_{\bpi(i)}}].
\end{equation}
In addition to~\eqref{eqn:my-notation}, we will henceforth always assume $\bpi \sim \bV$.

\subsection{Preliminary tools}
We will require the following theorem from~\cite{OW16}. It is not explicitly stated therein, but it is derived within its ``Proof of Theorem~1.5'' (between the lines labeled ``(30)'' and ``(by (8) again)'').
\begin{theorem}		\label{thm:R}
Let $\rho$ be a $d$-dimensional density matrix~$\rho$ with sorted spectrum~$\alpha$, let $j \in [d]$, let $\blambda \sim \SW{n}{\alpha}$ and let $\bV \sim \Keyl{\blambda}{\rho}$.  Then for  $\bR = \bV^\dagger \rho \bV$, it holds that
\[
	\E\bracks*{\sum_{i=1}^j \bR_{ii}} \geq 2\sum_{i=1}^j \alpha_i - \E_{\blambda' \sim \SW{n+1}{\alpha}}\bracks*{\sum_{i=1}^j \frac{d-i+\blambda'_i}{n+1}}.
\]
\end{theorem}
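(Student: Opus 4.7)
The statement is, as the authors explicitly note, already derived inside the proof of Theorem~1.5 of~\cite{OW16}, between their equations (30) and ``(by~(8) again)''. My plan is therefore to retrace that passage and distill out exactly the inequality stated here.

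The first step is to obtain a representation-theoretic closed form for the Keyl conditional expectation $\E[\bR_{ii} \mid \blambda = \lambda]$. Because Keyl's POVM conditional on $\lambda$ is built from the $\mathrm{GL}(d)$-irrep $\Weyl{\lambda}{d}$, integrating the observable $\bR_{ii} = \sum_j \abs{\bV_{ji}}^2 \alpha_j$ against the POVM yields an expression in terms of the Schur polynomial $s_\lambda$ evaluated at $\alpha$. Averaging over $\blambda \sim \SW{n}{\alpha}$, whose probability is $\propto s_\lambda(\alpha)\dim(\Specht{\lambda})$, cancels the $s_\lambda(\alpha)$ denominator and leaves a clean combinatorial sum over $\lambda \vdash n$.

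The second step is to apply the ``add-a-box'' identity, essentially Pieri's rule --- this is the identity labeled~(8) in~\cite{OW16}. Pieri converts a level-$n$ Schur sum weighted by an $\alpha_i$ into a level-$(n+1)$ Schur sum ranging over diagrams $\mu$ obtained from $\lambda$ by adding one box, and so recasts $\E[\bR_{ii}]$ at level~$n$ as a statistic of $\blambda' \sim \SW{n+1}{\alpha}$. The shift $d - i$ in the final answer arises because the Weyl dimension of $\Weyl{\lambda'}{d}$ involves the row content $\lambda'_i + d - i$; the factor of~$2$ in front of $\seqhead{\alpha}{j}$ comes from applying Pieri in a symmetric way (peeling a box off~$\lambda'$ in row~$i$), and the inequality, rather than equality, comes from discarding a nonnegative remainder term that appears in the manipulation.

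The main obstacle will be reconstructing the precise algebraic chain in equations (29)--(30) of~\cite{OW16} that yields the factor of~$2$ together with the particular correction $(d-i+\blambda'_i)/(n+1)$, as opposed to some other combinatorial expression. As sanity checks I would verify (i)~the $j = d$ case, where $\sum_i \bR_{ii} \equiv 1$ and the right-hand side collapses to $1 - \tfrac{d(d-1)}{2(n+1)}$ using $\seqhead{\blambda'}{d} = n+1$, and (ii)~the $n \to \infty$ limit, where $\blambda'_i/(n+1) \to \alpha_i$ forces the bound to tighten to $\E[\sum_{i \leq j}\bR_{ii}] \geq \seqhead{\alpha}{j}$, which is the best one can hope for from a consistent estimator.
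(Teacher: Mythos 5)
Your proposal takes the paper's approach exactly: the paper gives no independent proof of this theorem, offering only the same pointer to the passage in~\cite{OW16} between the lines labeled~(30) and ``(by~(8) again)'' that you identify, and neither you nor the paper reproduces the actual algebraic chain. Your two sanity checks --- the $j=d$ case reducing to $1 \geq 1 - \tfrac{d(d-1)}{2(n+1)}$, and the $n\to\infty$ limit recovering $\E\bigl[\sum_{i\leq j}\bR_{ii}\bigr] \geq \sum_{i\leq j}\alpha_i$ --- are both correct and are a useful supplement that the paper itself does not provide.
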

We will slightly simplify the bound above:
\begin{corollary}	\label{cor:R}
	In the setting described in Section~\ref{sec:fidelity-notation}, and for any $j \in [d]$,
    \[
    	\E\bracks*{\sum_{i=1}^j (\alpha_i - \bR_{ii})} = \E\bracks*{\sum_{i=1}^j (\alpha_i - \alpha_{\bpi(i)})} \leq\E[\seqhead{\ul{\blambda}}{j} - \seqhead{\alpha}{j}] + \frac{jd}{n}.
    \]
\end{corollary}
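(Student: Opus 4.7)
The first equality in the statement is immediate from~\eqref{eqn:Rii}, which gives $\bR_{ii} = \E_{\bpi \sim \bV}[\alpha_{\bpi(i)}]$, so $\alpha_i - \bR_{ii} = \E_{\bpi \sim \bV}[\alpha_i - \alpha_{\bpi(i)}]$ and we take outer expectations over $\blambda, \bV$. All the work goes into the inequality, and my plan is to start from Theorem~\ref{thm:R} and massage its right-hand side.

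Applying Theorem~\ref{thm:R} and rearranging gives
\[
    \E\bracks*{\sum_{i=1}^j (\alpha_i - \bR_{ii})} \leq -\seqhead{\alpha}{j} + \sum_{i=1}^j \frac{d-i}{n+1} + \E_{\blambda' \sim \SW{n+1}{\alpha}}\bracks*{\sum_{i=1}^j \frac{\blambda'_i}{n+1}}.
\]
The deterministic term is easy: $\sum_{i=1}^j (d-i) = jd - j(j+1)/2 \leq jd$, so $\sum_{i=1}^j (d-i)/(n+1) \leq jd/(n+1)$. The real work is to compare the expectation over $\blambda' \sim \SW{n+1}{\alpha}$ back to the expectation over $\blambda \sim \SW{n}{\alpha}$, since the right-hand side of the desired bound involves $\E[\seqhead{\ul{\blambda}}{j}] = \E[\seqhead{\blambda}{j}]/n$.

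For this comparison step, I would use the one-letter Lipschitz property of RSK from Proposition~\ref{prop:lipschitz} (or rather, its monotone-under-extension variant, which is equally immediate from Greene's Theorem): appending one letter to a word can increase $\seqhead{\lambda}{j}$ by at most~$1$. Coupling a length-$n$ $\alpha$-random word to a length-$(n+1)$ one by appending a single independent $\alpha$-letter yields
\[
    \E[\seqhead{\blambda^{(n+1)}}{j}] \leq \E[\seqhead{\blambda^{(n)}}{j}] + 1,
\]
and dividing by $n+1$ and using $A/(n+1) \leq A/n$ for $A \geq 0$ gives
\[
    \E_{\blambda'}\bracks*{\sum_{i=1}^j \frac{\blambda'_i}{n+1}} = \frac{\E[\seqhead{\blambda^{(n+1)}}{j}]}{n+1} \leq \E[\seqhead{\ul{\blambda}}{j}] + \frac{1}{n+1}.
\]

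Plugging this in and combining, the extra $1/(n+1)$ gets absorbed into $jd/(n+1)$ via $jd - j(j+1)/2 + 1 \leq jd$ (valid for $j \geq 1$), and a final $\frac{jd}{n+1} \leq \frac{jd}{n}$ yields exactly the claimed bound $\E[\seqhead{\ul{\blambda}}{j} - \seqhead{\alpha}{j}] + jd/n$. There is no real obstacle here beyond bookkeeping; the only nontrivial move is recognizing that the $n \to n+1$ shift produced by Theorem~\ref{thm:R} can be absorbed cheaply via the Lipschitz/monotonicity property of the first $j$ rows under letter-insertion.
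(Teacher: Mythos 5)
Your proof is correct and follows essentially the same route as the paper's one-line argument: rearrange Theorem~\ref{thm:R}, compare the $\SW{n+1}{\alpha}$ expectation back to $\SW{n}{\alpha}$ via the one-box-per-inserted-letter property of RSK, and use $\frac{1}{n+1}\leq\frac{1}{n}$. The only cosmetic difference is bookkeeping: the paper applies $\E[\blambda'_i]\leq\E[\blambda_i]+1$ row by row and absorbs each $+1$ into $d-i+1\leq d$, whereas you bound the aggregate $\E[\seqhead{\blambda^{(n+1)}}{j}]\leq\E[\seqhead{\blambda^{(n)}}{j}]+1$ (which is slightly sharper) and absorb the $+1$ against $-j(j+1)/2$.
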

\begin{proof}
	Here we simply used $\E[\blambda_i'] \leq \E[\blambda_i] + 1$, $d - i + 1 \leq d$, $\frac{1}{n+1} \leq \frac{1}{n}$, and then did some rearranging.
\end{proof}

When it comes to analyzing the quantum Hellinger error of the algorithm, we will end up needing to bound expressions like
\[
	\E\bracks*{\sum_{i=1}^d \parens*{\sqrt{\alpha_i} - \sqrt{\alpha_{\bpi(i)}}}^2} = \E\bracks*{\dhellsq{\alpha \circ \bpi}{\alpha}}.
\]
Ultimately, all we will use about the Keyl distribution on~$\bV$ (and hence the distributions of $\bR$, $\bpi$) is that Corollary~\ref{cor:R} holds.  This motivates the following lemma:
\begin{lemma}	\label{lem:rearrangement}
	Let $\alpha_1 \geq \alpha_2 \geq \cdots \geq \alpha_d > 0$ and let $\beta_1, \dots, \beta_d$ be a permutation of $\alpha_1, \dots, \alpha_d$.  Then
    \[
    	\sum_{i=1}^d (\sqrt{\alpha_i} - \sqrt{\beta_i})^2 \leq 2\sum_{j = 1}^{d-1}\frac{\alpha_j - \alpha_{j+1}}{\alpha_j} \sum_{i=1}^j(\alpha_i - \beta_i).
    \]
\end{lemma}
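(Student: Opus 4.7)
The plan is to convert the left-hand side into a form to which I can apply Abel summation, then reduce to a clean pairing inequality. First, using $\sum_i \alpha_i = \sum_i \beta_i$ and $\alpha_i - \sqrt{\alpha_i\beta_i} = \sqrt{\alpha_i}(\sqrt{\alpha_i}-\sqrt{\beta_i})$, I rewrite
\[
    \sum_i (\sqrt{\alpha_i}-\sqrt{\beta_i})^2 = 2\sum_i\alpha_i - 2\sum_i \sqrt{\alpha_i\beta_i} = 2\sum_i \sqrt{\alpha_i}\,(\sqrt{\alpha_i} - \sqrt{\beta_i}).
\]
Set $T_j := \sum_{i=1}^j (\sqrt{\alpha_i} - \sqrt{\beta_i})$ and $S_j := \sum_{i=1}^j (\alpha_i - \beta_i)$; note $T_0 = T_d = 0$ since $\beta$ is a permutation of $\alpha$. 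Abel summation then gives
\[
    \sum_{i=1}^d \sqrt{\alpha_i}\,(\sqrt{\alpha_i}-\sqrt{\beta_i}) = \sum_{j=1}^{d-1} (\sqrt{\alpha_j} - \sqrt{\alpha_{j+1}})\, T_j,
\]
so it suffices to prove, termwise in $j$, that $(\sqrt{\alpha_j}-\sqrt{\alpha_{j+1}})\,T_j \leq \frac{\alpha_j - \alpha_{j+1}}{\alpha_j}\, S_j$.

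The key step is a pairing lemma: $T_j \leq S_j/\sqrt{\alpha_j}$. To prove it, write $\beta_i = \alpha_{\sigma(i)}$ for a permutation $\sigma$, and put $P_j = [j] \setminus \sigma([j])$ and $Q_j = \sigma([j]) \setminus [j]$, which have equal size $m_j$. Listing them in increasing order as $p_1 < \cdots < p_{m_j}$ and $q_1 < \cdots < q_{m_j}$, I get
\[
    S_j = \sum_{s=1}^{m_j}(\alpha_{p_s} - \alpha_{q_s}), \qquad T_j = \sum_{s=1}^{m_j}(\sqrt{\alpha_{p_s}} - \sqrt{\alpha_{q_s}}).
\]
Since $p_s \leq j < q_s$ and $\alpha$ is sorted, $\sqrt{\alpha_{p_s}} \geq \sqrt{\alpha_j}$, hence $\sqrt{\alpha_{p_s}} + \sqrt{\alpha_{q_s}} \geq \sqrt{\alpha_j}$. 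Therefore
\[
    \sqrt{\alpha_{p_s}} - \sqrt{\alpha_{q_s}} = \frac{\alpha_{p_s} - \alpha_{q_s}}{\sqrt{\alpha_{p_s}} + \sqrt{\alpha_{q_s}}} \leq \frac{\alpha_{p_s} - \alpha_{q_s}}{\sqrt{\alpha_j}},
\]
and summing over $s$ yields $T_j \leq S_j/\sqrt{\alpha_j}$.

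Finally, I combine this with the elementary comparison $\frac{\sqrt{\alpha_j} - \sqrt{\alpha_{j+1}}}{\sqrt{\alpha_j}} \leq \frac{\alpha_j - \alpha_{j+1}}{\alpha_j}$, which after rearrangement reduces to $\sqrt{\alpha_{j+1}/\alpha_j} \geq \alpha_{j+1}/\alpha_j$ and is immediate since $\alpha_{j+1}/\alpha_j \in [0,1]$. Chaining:
\[
    (\sqrt{\alpha_j} - \sqrt{\alpha_{j+1}})\, T_j \;\leq\; (\sqrt{\alpha_j} - \sqrt{\alpha_{j+1}}) \cdot \frac{S_j}{\sqrt{\alpha_j}} \;\leq\; \frac{\alpha_j - \alpha_{j+1}}{\alpha_j}\, S_j.
\]
Summing in $j$ and multiplying by $2$ delivers the claim. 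I expect the termwise pairing inequality $T_j \leq S_j/\sqrt{\alpha_j}$ to be the main obstacle — one needs the correct reindexing of $P_j, Q_j$ to see that each contributing pair $(p_s, q_s)$ straddles index $j$, which is exactly what forces the lower bound $\sqrt{\alpha_{p_s}} + \sqrt{\alpha_{q_s}} \geq \sqrt{\alpha_j}$. Everything else is summation by parts and one-line algebra, and no sign issues arise since $\alpha_d > 0$.
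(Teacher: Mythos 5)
Your proof is correct and takes a genuinely different route from the paper's. The paper argues by successive transpositions: it bubble-sorts $\beta$ toward $\alpha$ one swap at a time (always fixing the least index~$q$ with $\alpha_q\neq\beta_q$) and shows that the resulting decrease in the left side is dominated by the decrease in the right side, reducing in the end to the one-line inequality $(\sqrt{\alpha_q}-\sqrt{\alpha_r})(\sqrt{\alpha_q}-\sqrt{\alpha_s})\leq (\alpha_q-\alpha_r)(\alpha_q-\alpha_s)/\alpha_q$. You instead rewrite the left side as $2\sum_i\sqrt{\alpha_i}(\sqrt{\alpha_i}-\sqrt{\beta_i})$, apply Abel summation (using $T_0=T_d=0$) to express both sides as weighted sums of the prefix differences $T_j$, $S_j$, and then win termwise from the pairing bound $T_j\leq S_j/\sqrt{\alpha_j}$ together with the elementary comparison $(\sqrt{\alpha_j}-\sqrt{\alpha_{j+1}})/\sqrt{\alpha_j}\leq(\alpha_j-\alpha_{j+1})/\alpha_j$. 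The pairing bound is clean: every difference-pair contributing to $T_j$ and $S_j$ straddles index~$j$ (note you do not actually need the increasing enumeration of $P_j,Q_j$ — any bijection $P_j\to Q_j$ works, since the only facts used are $p\leq j<q$). One small point worth stating explicitly, though it is immediate, is that $S_j\geq 0$ for every~$j$ (since $\alpha_1+\cdots+\alpha_j$ maximizes the sum of any $j$ entries of a permutation of~$\alpha$); this nonnegativity is what allows you to replace the coefficient $(\sqrt{\alpha_j}-\sqrt{\alpha_{j+1}})/\sqrt{\alpha_j}$ by the larger $(\alpha_j-\alpha_{j+1})/\alpha_j$ in the final chain. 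Your approach is arguably tidier: it avoids the induction over swaps and isolates the essential inequality in a single localized lemma, whereas the paper's exchange argument has the virtue of being entirely self-contained algebraically.
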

\begin{proof}
	Let us write $\text{LHS}$ and $\text{RHS}$ for the left-hand side and right-hand side above.  Also, denoting
    $\displaystyle
    	T_i = \sum_{j = i}^{d-1}2\frac{\alpha_j-\alpha_{j+1}}{\alpha_j},
    $
    we have
    $\displaystyle
    	\text{RHS} = \sum_{i=1}^{d-1}(\alpha_i - \beta_i) T_i.
    $

    If $\beta_1, \dots, \beta_d$ is identical to $\alpha_1, \dots, \alpha_d$ then $\text{LHS} = \text{RHS} = 0$.  Otherwise, suppose that $q$ is the least index such that $\alpha_q \neq \beta_q$.  Let $r, s > q$ be such that $\beta_q = \alpha_r$ and $\beta_s = \alpha_q$; then let $\beta'$ denote the permutation~$\beta$ with its $q$th and $s$th entries swapped, so $\beta'_q = \alpha_q$, $\beta'_s = \alpha_r$.  Writing $\text{LHS}'$ and $\text{RHS}'$ for the new values of $\text{LHS}$, $\text{RHS}$, we will show that
    \begin{equation*}	\label{eqn:maple-conj}
    	\text{LHS} - \text{LHS}' \leq \text{RHS} - \text{RHS}'.
    \end{equation*}
    Repeating this argument until $\beta$ is transformed into~$\alpha$ completes the proof.   We have
    \begin{align*}
    	\text{LHS} - \text{LHS}' &= (\sqrt{\alpha_q} - \sqrt{\alpha_r})^2 + (\sqrt{\alpha_q} - \sqrt{\alpha_s})^2 - (\sqrt{\alpha_s} - \sqrt{\alpha_r})^2 = 2(\sqrt{\alpha_q} - \sqrt{\alpha_r})(\sqrt{\alpha_q} - \sqrt{\alpha_s}), \\
    	\text{RHS} - \text{RHS}' &= (\alpha_q-\alpha_r)T_q + (\alpha_s-\alpha_q)T_s - (\alpha_s-\alpha_r)T_s = (\alpha_q - \alpha_r)(T_q-T_s).
    \end{align*}
    Since $s > q$ we have
    \[
    	T_q - T_s = \sum_{j=q}^{s-1}2\frac{\alpha_j - \alpha_{j+1}}{\alpha_j} \geq \frac{2}{\alpha_q} \sum_{j=q}^{s-1}(\alpha_j - \alpha_{j+1}) = \frac{2}{\alpha_q}(\alpha_q - \alpha_s).
    \]
    Thus it remains to show
    \[
    	(\sqrt{\alpha_q} - \sqrt{\alpha_r})(\sqrt{\alpha_q} - \sqrt{\alpha_s}) \leq \frac{(\alpha_q - \alpha_r)(\alpha_q - \alpha_s)}{\alpha_q} = (\sqrt{\alpha_q} - \alpha_r/\sqrt{\alpha_q})(\sqrt{\alpha_q} - \alpha_s/\sqrt{\alpha_q}).
    \]
    This indeed holds, because $\alpha_q \geq \alpha_r \implies \sqrt{\alpha_r} \geq \alpha_r/\sqrt{\alpha_q}$, and similarly for~$s$.
\end{proof}

It will be useful to have a bound on the sum of the ``multipliers'' appearing in Lemma~\ref{lem:rearrangement}.
\begin{lemma}										\label{lem:log}
	Suppose $\alpha_1 \geq \alpha_2 \geq \cdots \geq \alpha_d > 0$.  Let $L = \min\{d, \ln(\alpha_1/\alpha_d)\}$.  Then
    \[
    	\sum_{i=1}^{d-1} \frac{\alpha_i - \alpha_{i+1}}{\alpha_i} \leq L.
    \]
\end{lemma}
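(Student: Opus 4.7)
The plan is to establish the two bounds $\sum_{i=1}^{d-1} \frac{\alpha_i - \alpha_{i+1}}{\alpha_i} \leq d$ and $\sum_{i=1}^{d-1} \frac{\alpha_i - \alpha_{i+1}}{\alpha_i} \leq \ln(\alpha_1/\alpha_d)$ separately, and then take the minimum.

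For the first bound I would simply observe that each summand equals $1 - \alpha_{i+1}/\alpha_i$, which lies in $[0,1]$ since the $\alpha_i$ are positive and nonincreasing. There are only $d-1$ terms, so the sum is at most $d-1 \leq d$.

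For the second bound, which is the interesting one, I would use the elementary inequality $1 - x \leq \ln(1/x)$ valid for all $x > 0$ (equivalently, $\ln(1+t) \leq t$). Applied with $x = \alpha_{i+1}/\alpha_i \in (0,1]$, this gives
\[
\frac{\alpha_i - \alpha_{i+1}}{\alpha_i} = 1 - \frac{\alpha_{i+1}}{\alpha_i} \leq \ln\!\left(\frac{\alpha_i}{\alpha_{i+1}}\right).
\]
Summing over $i = 1, \dots, d-1$, the right-hand side telescopes:
\[
\sum_{i=1}^{d-1} \ln\!\left(\frac{\alpha_i}{\alpha_{i+1}}\right) = \ln\!\left(\frac{\alpha_1}{\alpha_d}\right).
\]
Combining the two bounds yields the claim. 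There is no significant obstacle here; the only ``trick'' is recognizing that $1-x \leq -\ln x$ produces exactly the telescoping structure needed.
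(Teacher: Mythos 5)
Your proof is correct and is essentially identical to the paper's: the paper phrases the key inequality in exponential form ($\exp(1-z) \leq 1/z$ for $z \in (0,1]$) and telescopes a product, while you phrase it logarithmically ($1 - x \leq \ln(1/x)$) and telescope a sum, but these are the same argument.
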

\begin{proof}
	The bound of~$d$ is obvious.  Otherwise, the bound involving $\ln(\alpha_1/\alpha_d)$  is equivalent to
    \[
    	\frac{\alpha_1}{\alpha_d} \geq \exp\parens*{\sum_{i=1}^{d-1} \frac{\alpha_i - \alpha_{i+1}}{\alpha_i}} =  \prod_{i=1}^{d-1} \exp\parens*{1 - \frac{\alpha_{i+1}}{\alpha_i}}.
    \]
    But this follows from $\exp(1-z) \leq 1/z$ for $z \in (0,1]$, and telescoping.
\end{proof}

Finally, we also have a variant of Lemma~\ref{lem:rearrangement} that can help if some of the $\alpha_i$'s are very small:
\begin{corollary}										\label{cor:rearrangement}
	Let $\alpha_1 \geq \alpha_2 \geq \cdots \geq \alpha_k > \zeta \geq \alpha_{k+1} \geq \cdots \geq \alpha_d$, where $k < d$, and let $\beta_1, \dots, \beta_d$ be a permutation of $\alpha_1, \dots, \alpha_d$.  Let $\alpha'_i$ be the same as $\alpha_i$ for $i \leq k$, but let $\alpha'_{k+1} = \zeta$.
    \[
    	\sum_{i=1}^d (\sqrt{\alpha_i} - \sqrt{\beta_i})^2 \leq 4\sum_{j = 1}^{k}\frac{\alpha'_j - \alpha'_{j+1}}{\alpha'_j} \sum_{i=1}^j(\alpha_i - \beta_i) + d\zeta + 8kL\zeta,
    \]
    where $L = \min\{k, \ln(\alpha_1/\zeta)\}$.
\end{corollary}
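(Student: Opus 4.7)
The plan is to reduce to Lemma~\ref{lem:rearrangement} by ``capping'' both sequences at~$\zeta$. Define $\widetilde{\alpha}_i = \max(\alpha_i, \zeta)$ and $\widetilde{\beta}_i = \max(\beta_i, \zeta)$. Because $\beta$ is a permutation of $\alpha$, $\widetilde{\beta}$ is a permutation of $\widetilde{\alpha}$: both have the same multiset of entries, namely $\{\alpha_1, \dots, \alpha_k\}$ together with $d-k$ copies of~$\zeta$. Moreover $\widetilde{\alpha}$ is already sorted and agrees with the extended sequence $(\alpha'_1, \dots, \alpha'_k, \zeta, \dots, \zeta)$; in particular $\widetilde{\alpha}_j - \widetilde{\alpha}_{j+1} = 0$ for $j > k$.

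Apply Lemma~\ref{lem:rearrangement} to the pair $(\widetilde{\alpha}, \widetilde{\beta})$. The outer sum collapses onto $j \in [k]$, and for such~$j$ the factor $(\widetilde{\alpha}_j - \widetilde{\alpha}_{j+1})/\widetilde{\alpha}_j$ equals $(\alpha'_j - \alpha'_{j+1})/\alpha'_j$.  For the inner sum, observe that when $i \leq k$ we have $\widetilde{\alpha}_i = \alpha_i$ and $\widetilde{\beta}_i = \max(\beta_i, \zeta) \geq \beta_i$, hence $\widetilde{\alpha}_i - \widetilde{\beta}_i \leq \alpha_i - \beta_i$ pointwise.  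Since both cumulative sums $\sum_{i \leq j}(\widetilde{\alpha}_i - \widetilde{\beta}_i)$ and $\sum_{i \leq j}(\alpha_i - \beta_i)$ are nonnegative (each being a rearrangement slack), this substitution preserves the inequality and yields
\[
    \sum_{i=1}^d (\sqrt{\widetilde{\alpha}_i} - \sqrt{\widetilde{\beta}_i})^2 \leq 2\sum_{j=1}^{k} \frac{\alpha'_j - \alpha'_{j+1}}{\alpha'_j} \sum_{i=1}^j (\alpha_i - \beta_i),
\]
which is the target first term (modulo a factor of~$2$ to be absorbed below).

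It then remains to compare $(\sqrt{\alpha_i} - \sqrt{\beta_i})^2$ with $(\sqrt{\widetilde{\alpha}_i} - \sqrt{\widetilde{\beta}_i})^2$ index by index, splitting by whether each of $\alpha_i, \beta_i$ sits above or below~$\zeta$. Positions where both are above~$\zeta$ contribute identically. Positions where both are below (at most $d-k$ of them) contribute at most $\zeta$ each, by $(\sqrt{a}-\sqrt{b})^2 \leq \max(a,b)$, for a total of at most~$d\zeta$. For each ``crossover'' position (exactly one of $\alpha_i, \beta_i$ above $\zeta$) I would use
\[
    (\sqrt{a}-\sqrt{b})^2 \leq 2(\sqrt{a}-\sqrt{\zeta})^2 + 2(\sqrt{\zeta}-\sqrt{b})^2,
\]
where the first summand equals the capped-sum term and the second is at most $2\zeta$ since the smaller of $a,b$ lies in $[0,\zeta]$.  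The factor-$2$ blowup on the capped terms doubles the main bound (producing the factor~$4$ in the statement), and the per-crossover $\zeta$-overhead contributes an additive $O(k\zeta)$.

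The delicate step is sharpening this crossover overhead to the stated $8kL\zeta$ when $L$ is small.  For that I would replace the crude factor-$2$ inequality by the finer estimate $(\sqrt{\alpha_i} - \sqrt{\beta_i})^2 - (\sqrt{\widetilde{\alpha}_i} - \sqrt{\widetilde{\beta}_i})^2 \leq 2\sqrt{\alpha_i \zeta}$ (obtained by direct expansion of the difference in the $\alpha_i \geq \zeta \geq \beta_i$ case, and symmetrically in the other crossover case), and then distribute the resulting $\sqrt{\alpha_i/\zeta}$ factors against the telescoping quantity $\sum_{j=1}^k (\alpha'_j - \alpha'_{j+1})/\alpha'_j$, which is controlled by $L$ via Lemma~\ref{lem:log}.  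This combination of the rearrangement inequality with the log-telescoping bound is where the $kL$ factor emerges, and it will be the main technical calculation to execute carefully.
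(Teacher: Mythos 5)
Your overall strategy---cap both sequences at~$\zeta$, apply Lemma~\ref{lem:rearrangement} to $(\widetilde{\alpha}, \widetilde{\beta})$, then compare with the original term by term---is exactly the paper's approach. Moreover, your observation that $\widetilde{\alpha}_i - \widetilde{\beta}_i = \alpha_i - \max(\beta_i,\zeta) \leq \alpha_i - \beta_i$ for $i \leq k$, hence $\sum_{i \leq j}(\widetilde{\alpha}_i - \widetilde{\beta}_i) \leq \sum_{i \leq j}(\alpha_i - \beta_i)$ with no two-sided correction, is correct and is actually a genuine improvement over the paper's proof. (The paper instead bounds $\bigl|\sum_{i\leq j}(\alpha'_i - \beta'_i) - \sum_{i\leq j}(\alpha_i - \beta_i)\bigr| \leq 2k\zeta$ and feeds this through Lemma~\ref{lem:log}; that is precisely where the $8kL\zeta$ term originates, and your observation shows it is unnecessary. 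The parenthetical about both cumulative sums being nonnegative is a red herring, though---all you need is that the multiplier $(\alpha'_j - \alpha'_{j+1})/\alpha'_j$ is nonnegative.)

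The trouble is in your final paragraph: you seem to think you \emph{must} manufacture an $8kL\zeta$ error term, and the ``sharpening'' you sketch is both unneeded and unlikely to succeed. A crossover position inherently costs $\Omega(\zeta)$: with $\alpha_i$ just above $\zeta$ and $\beta_i=0$, the overhead $(\sqrt{\alpha_i}-\sqrt{\beta_i})^2 - (\sqrt{\widetilde{\alpha}_i}-\sqrt{\widetilde{\beta}_i})^2$ is about~$\zeta$, so there is no factor of~$L$ to extract from it. The clean finish is simply to apply $(\sqrt{\alpha_i}-\sqrt{\beta_i})^2 \leq 2(\sqrt{\widetilde{\alpha}_i}-\sqrt{\widetilde{\beta}_i})^2 + 2\zeta$ uniformly; this follows from $(x+y)^2\leq 2x^2+2y^2$ together with $(\sqrt{\zeta}-\sqrt{b})^2 \leq \zeta$ in the crossover case. (Note the additive constant must be $2\zeta$, not~$\zeta$: the variant with~$+\zeta$ fails already at $\alpha_i=4\zeta$, $\beta_i=0$, which is in fact a small slip in the paper's own proof.) Summing and combining with your cumulative-sum observation gives
\[
    \sum_{i=1}^d(\sqrt{\alpha_i}-\sqrt{\beta_i})^2 \leq 4\sum_{j=1}^k\frac{\alpha'_j-\alpha'_{j+1}}{\alpha'_j}\sum_{i=1}^j(\alpha_i-\beta_i) + 2d\zeta,
\]
which has the same main term as the statement with a simpler error, $2d\zeta$ in place of $d\zeta + 8kL\zeta$. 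The two error terms are incomparable in general, but either is entirely adequate for the paper's downstream uses, all of which are inside $O(\cdot)$ bounds. So your approach is essentially correct once you trim the misdirected last step.
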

\begin{proof}
	Extend the notation $\alpha'$ by defining ${\alpha}'_i = \max\{\alpha_i, \zeta\}$, and similarly define $\beta'_i$.  Applying Lemma~\ref{lem:rearrangement} we get
    \[
    	\sum_{i=1}^d\parens*{\sqrt{\alpha'_i} - \sqrt{{\beta}'_i}}^2 \leq 2\sum_{j=1}^{k}\frac{\alpha'_j - \alpha'_{j+1}}{\alpha'_j} \sum_{i=1}^j(\alpha'_i - \beta'_i).
    \]
    On the left we can use
    \[
    	\parens*{\sqrt{\alpha_i} - \sqrt{{\beta}_i}}^2 \leq 2\parens*{\sqrt{\alpha'_i} - \sqrt{{\beta}'_i}}^2 +\zeta,
    \]
    which is easy to verify by case analysis.  On the right we can use
\[
\abs*{\sum_{i=1}^j (\alpha_i' - \beta'_i) - \sum_{i=1}^j (\alpha_i - \beta_i)} \leq 2k\zeta
\]
and then the bound from Lemma~\ref{lem:log} applied to the sequence $(\alpha'_1, \dots, \alpha'_{k+1})$.
\end{proof}

\subsection{Tomography analysis}
We begin with with a partial analysis of the general PCA algorithm.

\begin{theorem}										\label{thm:tomography-technical}
	Let $\rho$ be a $d$-dimensional density matrix~$\rho$ with sorted spectrum~$\alpha$, and let $k \in [d]$. Suppose we perform the Keyl algorithm and produce the rank-$k$ (or less) hypothesis $\wh{\brho} = \bV \mtxTrunc{\bLambda}{k}\bV^\dagger$, as described in Section~\ref{sec:fidelity-notation}.  Then
	\[
    	\E[\Dhellsq{\wh{\brho}}{\rho}] \leq \seqtail{\alpha}{k}  + 2\E\bracks*{\dhellsqk{k}{\alpha \circ \bpi}{\alpha}} + 2\E\bracks*{\seqhead{\ul{\blambda}}{k} - \seqhead{\alpha}{k}} + O\parens*{\frac{kd}{n}}.
    \]
\end{theorem}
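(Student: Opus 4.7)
The plan is to derive a clean identity for $\Dhellsq{\mtxTrunc{\bLambda}{k}}{\bR}$, split the resulting Hellinger-squared sum by a triangle-type inequality, and control each piece using Corollary~\ref{cor:R} and Theorem~\ref{thm:truncated-chi}.

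First I would expand the squared Hellinger distance using its trace definition.  Since $\sqrt{\mtxTrunc{\bLambda}{k}}$ is diagonal with entries $\sqrt{\blambda_i/n}$ for $i \leq k$ and zero thereafter,
\[
    \Dhellsq{\mtxTrunc{\bLambda}{k}}{\bR} = \seqhead{\ul{\blambda}}{k} + 1 - 2\sum_{i=1}^k \sqrt{\blambda_i/n}\,(\sqrt{\bR})_{ii}.
\]
Substituting $(\sqrt{\bR})_{ii} = \E_{\bpi \sim \bV}[\sqrt{\alpha_{\bpi(i)}}]$ from~\eqref{eqn:Rii} and applying the scalar identity $2\sqrt{xy} = x + y - (\sqrt{x}-\sqrt{y})^2$ with $x = \blambda_i/n$, $y = \alpha_{\bpi(i)}$ inside the $\bpi$-expectation, the $\seqhead{\ul{\blambda}}{k}$ terms cancel and I arrive at the clean identity
\[
    \Dhellsq{\mtxTrunc{\bLambda}{k}}{\bR} = 1 - \sum_{i=1}^k \bR_{ii} + \E_{\bpi \sim \bV}\left[\sum_{i=1}^k (\sqrt{\blambda_i/n} - \sqrt{\alpha_{\bpi(i)}})^2\right].
\]

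Next I would split the Hellinger-squared sum by inserting $\sqrt{\alpha_i}$ and applying $(a-b)^2 \leq 2(a-c)^2 + 2(c-b)^2$:
\[
    \sum_{i=1}^k (\sqrt{\blambda_i/n} - \sqrt{\alpha_{\bpi(i)}})^2 \leq 2\dhellsqk{k}{\ul{\blambda}}{\alpha} + 2\dhellsqk{k}{\alpha \circ \bpi}{\alpha}.
\]
The second summand is precisely the term appearing on the right-hand side of the target bound.  The first is $\bpi$-independent, pulls out of $\E_{\bpi}$, and by Theorem~\ref{thm:truncated-chi} together with Proposition~\ref{prop:comparing-distances} (giving $\dhellsqk{k}{\cdot}{\cdot} \leq \dchik{k}{\cdot}{\cdot}$), its overall expectation is at most $O(kd/n)$.

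Finally, I would control the leading error $1 - \sum_{i=1}^k \bR_{ii}$ by writing $1 = \seqhead{\alpha}{k} + \seqtail{\alpha}{k}$ and invoking Corollary~\ref{cor:R}:
\[
    \E\left[1 - \sum_{i=1}^k \bR_{ii}\right] = \seqtail{\alpha}{k} + \E\left[\sum_{i=1}^k(\alpha_i - \bR_{ii})\right] \leq \seqtail{\alpha}{k} + \E[\seqhead{\ul{\blambda}}{k} - \seqhead{\alpha}{k}] + \frac{kd}{n}.
\]
Summing the three contributions produces the theorem; in fact this route yields coefficient~$1$ on the excess term $\E[\seqhead{\ul{\blambda}}{k} - \seqhead{\alpha}{k}]$, which is already at most the stated coefficient~$2$ since the excess is nonnegative by Greene's theorem.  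I do not anticipate a serious obstacle: the only nonroutine step is spotting the cancellation that produces the identity in the first paragraph, after which the argument reduces to invoking the two cited results.
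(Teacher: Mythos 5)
Your proof is correct and follows essentially the same route as the paper: expand the Hellinger distance via its trace formula, substitute $\bR_{ii} = \E_{\bpi}[\alpha_{\bpi(i)}]$ and $(\sqrt{\bR})_{ii} = \E_{\bpi}[\sqrt{\alpha_{\bpi(i)}}]$, split $\sum_i(\sqrt{\ul{\blambda}_i} - \sqrt{\alpha_{\bpi(i)}})^2$ by the inequality $(a-b)^2 \leq 2(a-c)^2 + 2(c-b)^2$, and invoke Corollary~\ref{cor:R} and the truncated chi-squared/Hellinger bound. The one (favorable) difference is purely algebraic: the paper writes the ``linear'' part as $\E[\seqtail{\ul{\blambda}}{k}] + \E[\sum_{i\le k}(\ul{\blambda}_i - \alpha_{\bpi(i)})]$ and bounds these two pieces separately (using $\ul{\blambda}\succ\alpha$ for the first and Corollary~\ref{cor:R} for the second, which double-counts the excess and yields coefficient~$2$), whereas you first combine them into $1 - \sum_{i\le k}\bR_{ii} = \seqtail{\alpha}{k} + \sum_{i\le k}(\alpha_i - \alpha_{\bpi(i)})$ and apply Corollary~\ref{cor:R} once, obtaining coefficient~$1$ on $\E[\seqhead{\ul{\blambda}}{k}-\seqhead{\alpha}{k}]$ — a mild sharpening that, as you note, still implies the stated bound since the excess is nonnegative.
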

\begin{proof}
	As described in Section~\ref{sec:fidelity-notation} --- in particular, at~\eqref{eqn:tomography-target} ---  we need to bound $\E[\Dhellsq{\mtxTrunc{\bLambda}{k}}{\bR}]$.  We have
    \begin{align}
	    \E\bracks*{\Dhellsq{\mtxTrunc{\bLambda}{k}}{\bR}} &= \E\bracks*{\tr(\mtxTrunc{\bLambda}{k}) + \tr(\bR) - 2\tr\parens*{\sqrt{\mtxTrunc{\bLambda}{k}} \sqrt{\bR\vphantom{\mtxTrunc{\bLambda}{k}}}}} \nonumber\\
        &= \E\bracks*{\seqtail{\ul{\blambda}}{k} + 2\seqhead{\ul{\blambda}}{k} - 2\sum_{i=1}^k \sqrt{\ul{\blambda}_i} \cdot(\sqrt{\bR})_{ii}} \nonumber\\
        &= \E\bracks*{\seqtail{\ul{\blambda}}{k}} + \E\bracks*{2\seqhead{\ul{\blambda}}{k} - 2\sum_{i=1}^k \sqrt{\ul{\blambda}_i} \sqrt{\alpha_{\bpi(i)}}}\tag{\text{by }\eqref{eqn:Rii}} \\
        &= \E\bracks*{\seqtail{\ul{\blambda}}{k}} + \E\bracks*{\sum_{i=1}^k  \parens*{\sqrt{\ul{\blambda}_i} - \sqrt{\alpha_{\bpi(i)}}}^2} + \E\bracks*{\sum_{i=1}^k(\ul{\blambda}_i - \alpha_{\bpi(i)})}. \label{eqn:123}
 	\end{align}
    We bound the three expressions in~\eqref{eqn:123} as follows:
    \begin{align*}
    	\shortintertext{
        \[
        	 \E\bracks*{\seqtail{\ul{\blambda}}{k}} \leq  \seqtail{\alpha}{k}
 \tag{\text{since $\ul{\blambda} \succ \alpha$}}
 		\] }
    	\E\bracks*{\sum_{i=1}^k  \parens*{\sqrt{\ul{\blambda}_i} - \sqrt{\alpha_{\bpi(i)}}}^2} &\leq 2\E\bracks*{\sum_{i=1}^k  \parens*{\sqrt{\ul{\blambda}_i} - \sqrt{\alpha_{i}}}^2} + 2\E\bracks*{\sum_{i=1}^k  \parens*{\sqrt{\alpha_i} - \sqrt{\alpha_{\bpi(i)}}}^2} \\
        &= 2\E\bracks*{\dhellsqk{k}{\ul{\blambda}}{\alpha}} + 2\E\bracks*{\dhellsqk{k}{\alpha \circ \bpi}{\alpha}}\\
        &\leq  O\parens*{\frac{kd}{n}} + 2\E\bracks*{\dhellsqk{k}{\alpha \circ \bpi}{\alpha}} \tag{\text{by Theorem~\ref{thm:trunc-chi}}}
    	\shortintertext{
        \[
\E\bracks*{\sum_{i=1}^k(\ul{\blambda}_i - \alpha_{\bpi(i)})} = \E\bracks*{\seqhead{\ul{\blambda}}{k} - \seqhead{\alpha}{k}}  + \E\bracks*{\sum_{i=1}^k(\alpha_i - \alpha_{\bpi(i)})} \leq 2\E\bracks*{\seqhead{\ul{\blambda}}{k} - \seqhead{\alpha}{k}} + \frac{kd}{n} \tag{\text{by Corollary~\ref{cor:R}}} \]}
	\end{align*}
    Combining these bounds completes the proof.
\end{proof}

We can now give our analysis for full tomography:
\begin{theorem}	\label{thm:tomtom}
	Suppose $\rho$ has rank~$r$.  Then the hypothesis of the Keyl algorithm satisfies
    \[
    	\E[\Dhellsq{\wh{\brho}}{\rho}] \leq O\parens*{\frac{rd}{n}} \cdot \min\{r, \ln n\}.
    \]
\end{theorem}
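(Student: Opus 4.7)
The plan is to invoke Theorem~\ref{thm:tomography-technical} in its rank-$r$ PCA variant, setting $\wh{\brho} = \bV \mtxTrunc{\bLambda}{r} \bV^\dagger$. Since $\rho$ has rank $r$, the sorted spectrum satisfies $\alpha_{r+1}=\cdots=\alpha_d=0$, making $\seqtail{\alpha}{r}=0$ and $\seqhead{\alpha}{r}=1$; combined with the deterministic bound $\seqhead{\ul{\blambda}}{r}\leq 1$, this causes the $\seqtail{\alpha}{k}$ and $\E[\seqhead{\ul{\blambda}}{k}-\seqhead{\alpha}{k}]$ terms of that theorem to vanish or become nonpositive, leaving
\[
	\E[\Dhellsq{\wh{\brho}}{\rho}] \leq 2\,\E[\dhellsqk{r}{\alpha \circ \bpi}{\alpha}] + O(rd/n).
\]
It thus suffices to prove $\E[\dhellsqk{r}{\alpha \circ \bpi}{\alpha}] = O(rd/n)\cdot\min\{r,\ln n\}$.

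To bound this expected Hellinger, I upper-bound $\dhellsqk{r}$ by the full Hellinger $\dhellsq$ (all summands being nonnegative) and apply Corollary~\ref{cor:rearrangement} with threshold $\zeta \asymp 1/n$ and the corollary's index parameter set to the largest $m$ with $\alpha_m>\zeta$ (so $m\leq r$ and $L=\min\{m,\ln(\alpha_1/\zeta)\}\leq\min\{r,\ln n\}$). Taking expectations and invoking Corollary~\ref{cor:R} on each inner sum $\sum_{i\leq j}(\alpha_i-\alpha_{\bpi(i)})$, the bound splits into three pieces: (i)~the offset $d\zeta+8mL\zeta=O(rdL/n)$, using $L\leq r\leq d$; (ii)~the ``linear'' piece $\sum_{j\leq m}c_j\cdot jd/n \leq O(rdL/n)$ via Lemma~\ref{lem:log}, where $c_j=(\alpha'_j-\alpha'_{j+1})/\alpha'_j$; and (iii)~the ``excess'' piece $\sum_{j\leq m}c_j\,E_j$, with $E_j=\E[\seqhead{\ul{\blambda}}{j}-\seqhead{\alpha}{j}]$.

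Piece~(iii) will be the main technical obstacle. Combining Proposition~\ref{prop:ITW-bound-generally}'s bound $E_j = O(j\sqrt{\seqtail{\alpha}{j}/n})$ with the double-summation estimate $\sum_j c_j\seqtail{\alpha}{j}\leq L$ (obtained by swapping the sums and applying Lemma~\ref{lem:log} to each inner partial sum) yields at best $O(rL/\sqrt n)$ by Cauchy--Schwarz, which is too weak once $n\gg d^2$. Tightening to $O(rdL/n)$ will require interpolating with the trivial upper bound $E_j\leq\seqtail{\alpha}{j}$, crucially using that $\seqtail{\alpha}{r}=0$ and that $\seqtail{\alpha}{j}$ decreases monotonically to $0$ as $j\to r$; this forces the $E_j$'s to be small precisely for the indices $j$ that matter most in the $c_j$-weighted sum.

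Finally, the degenerate case $\alpha_r\leq 1/n$, in which one cannot simultaneously take $m=r$ and $\zeta=1/n$ in Corollary~\ref{cor:rearrangement}, is handled via Theorem~\ref{thm:coupling}: couple to a nearby rank-$r$ distribution $\beta\succ\alpha$ with $\beta_r\geq 2/n$ obtained by reallocating $O(1/n)$ mass from $\alpha_1$ to $\alpha_r$; the resulting TV perturbation $O(1/n)$ changes the Hellinger bound by only $O(1/n)$, which is absorbed into $O(rdL/n)$.
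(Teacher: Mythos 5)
Your proposal correctly reproduces the paper's structure through the reduction to bounding $\E[\dhellsqk{r}{\alpha\circ\bpi}{\alpha}]$ and the application of Corollary~\ref{cor:rearrangement} plus Corollary~\ref{cor:R}, which gives the three pieces you describe. Pieces (i) and (ii) are handled essentially as in the paper (the paper takes $\zeta = r/n$ rather than $1/n$, but either choice is harmless).

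The gap is in piece (iii), and it is a real one. You propose to control $\sum_j c_j E_j$ (with $c_j = (\alpha'_j-\alpha'_{j+1})/\alpha'_j$ and $E_j = \E[\seqhead{\ul{\blambda}}{j}-\seqhead{\alpha}{j}]$) by interpolating between Proposition~\ref{prop:ITW-bound-generally}'s $E_j = O(j\sqrt{\seqtail{\alpha}{j}/n})$ and the trivial $E_j \leq \seqtail{\alpha}{j}$. This cannot yield $O(rd/n)$. Both ingredients degrade too slowly in $n$: the trivial bound has no $n$-dependence at all, and Proposition~\ref{prop:ITW-bound-generally} decays only as $n^{-1/2}$, so any geometric-mean interpolation is at best $n^{-1/4}$; even taking the minimum never beats $n^{-1/2}$ when $\seqtail{\alpha}{j}$ is order one. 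A concrete obstruction: take rank $r$ with $\alpha_1 = 1/2$, $\alpha_2 = \cdots = \alpha_r = 1/(2(r-1))$. Then $c_1 \approx 1$ and $\seqtail{\alpha}{1} = 1/2$, so both your bounds give $E_1 \gtrsim 1/\sqrt{n}$, whereas the target is $O(rd/n)$; these disagree once $n \gg (rd)^2$. In truth $E_1 \leq \ITW{1}{\alpha}/n = O(1/n)$ here, but neither bound you invoke can see this. The monotone decrease of $\seqtail{\alpha}{j}$ and $\seqtail{\alpha}{r}=0$ do not rescue the argument, because the indices $j$ with large weight $c_j$ are the ones with a large \emph{gap} $\alpha_j - \alpha_{j+1}$, and for such $j$ the quantity $\seqtail{\alpha}{j}$ may still be $\Theta(1)$.

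The missing ingredient is Theorem~\ref{thm:our-ITW}, which gives $E_j \leq \ITW{j}{\alpha}/n$, exactly the $1/n$ decay one needs, at the cost that $\ITW{j}{\alpha}$ can blow up when gaps are small. The paper's proof exploits the anti-correlation between $c_j$ and $\ITW{j}{\alpha}$ via the identity
\[
\sum_{j\leq k}\frac{\alpha_j-\alpha_{j+1}}{\alpha_j}\,\ITW{j}{\alpha}
= \sum_{\substack{i\leq k\\ \ell>i}}\frac{\alpha_\ell}{\alpha_i-\alpha_\ell}\sum_{j=i}^{\min\{k,\ell-1\}}\frac{\alpha_j-\alpha_{j+1}}{\alpha_j}
\leq \sum_{\substack{i\leq k\\ \ell>i}}\frac{\alpha_\ell}{\alpha_i-\alpha_\ell}\cdot\frac{\alpha_i-\alpha_\ell}{\alpha_\ell}
\leq kd,
\]
where the swap of summation order plus the bound $\sum_{j=i}^{\ell-1}(\alpha_j-\alpha_{j+1})/\alpha_j\leq(\alpha_i-\alpha_\ell)/\alpha_\ell$ telescopes the whole thing to $O(kd)$. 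It is this cancellation --- not any interpolation of the two $E_j$ bounds you cite --- that produces the clean $O(rd/n)$ for piece (iii). Your final paragraph on the degenerate case $\alpha_r\leq 1/n$ is also unnecessary: Corollary~\ref{cor:rearrangement} already internalizes the choice of cutoff index, and the paper simply applies it directly with $\zeta=r/n$ and lets the corollary's $k$ be whatever it is (necessarily $\leq r$).
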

\begin{proof}
	When $\rho$ has rank~$r$ we know that $\blambda$ will always have at most~$r$ nonzero rows; thus $\bLambda = \mtxTrunc{\bLambda}{r}$, and we may use the bound in Theorem~\ref{thm:tomography-technical} with $k = r$.  In this case, the terms $\seqtail{\alpha}{r}$ and  $2\E\bracks*{\seqhead{\ul{\blambda}}{r} - \seqhead{\alpha}{r}}$ vanish, the $O(\frac{rd}{n})$ error is accounted for in the theorem statement,  and we will use the simple bound
    \[
    	2\E\bracks*{\dhellsqk{r}{\alpha \circ \bpi}{\alpha}} \leq 2\E\bracks*{\sum_{i=1}^d (\sqrt{\alpha_i} - \sqrt{\alpha_{\bpi(i)}})^2}.
    \]
    We now apply Corollary~\ref{cor:rearrangement} with $\zeta = \frac{r}{n}$; note that the ``$k$'' in that corollary satisfies $k \leq r = \mathop{\mathrm{rank}} \rho$.  We obtain
    \[
    	2\E\bracks*{\sum_{i=1}^d (\sqrt{\alpha_i} - \sqrt{\alpha_{\bpi(i)}})^2} \leq 8\sum_{j = 1}^{k}\frac{\alpha'_j - \alpha'_{j+1}}{\alpha'_j} \E\bracks*{\sum_{i=1}^j(\alpha_i - \alpha_{\bpi(i)})} + 2d\zeta + 16kL\zeta,
    \]
    where $L \leq \min\{k, \ln n\}$.  The latter two terms here are accounted for in the theorem statement, so it suffices to bound the first.  By Corollary~\ref{cor:R} we have
    \[
    	\sum_{j = 1}^{k}\frac{\alpha'_j - \alpha'_{j+1}}{\alpha'_j} \E\bracks*{\sum_{i=1}^j(\alpha_i - \alpha_{\bpi(i)})} \leq \sum_{j = 1}^{k}\frac{\alpha'_j - \alpha'_{j+1}}{\alpha'_j}\E[\seqhead{\ul{\blambda}}{j} - \seqhead{\alpha}{j}] + \sum_{j = 1}^{k}\frac{\alpha'_j - \alpha'_{j+1}}{\alpha'_j}\frac{jd}{n}.
    \]
    The latter quantity here is at most $\frac{kd}{n} L$ (by Lemma~\ref{lem:log}), which again is accounted for in the theorem statement. As for the former quantity, we use Theorem~\ref{thm:our-ITW} to obtain the bound
    \begin{equation} \label{eqn:final-bound}
    	\sum_{j = 1}^{k}\frac{\alpha'_j - \alpha'_{j+1}}{\alpha'_j}\E[\seqhead{\ul{\blambda}}{j} - \seqhead{\alpha}{j}] \leq \frac1n\sum_{j = 1}^{k}\frac{\alpha'_j - \alpha'_{j+1}}{\alpha'_j}\ITW{j}{\alpha}.
    \end{equation}
    (The quantity $\ITW{j}{\alpha}$ may be~$\infty$, but only if $\alpha_j = \alpha_{j+1}$ and hence $\frac{\alpha'_j - \alpha'_{j+1}}{\alpha'_j} = 0$.  The reader may check that it is sufficient for us to proceed with the convention $0 \cdot \infty = 0$.)
    As $\frac{\alpha'_j - \alpha'_{j+1}}{\alpha'_j} = 1 - \frac{\alpha'_{j+1}}{\alpha'_j} \leq 1 - \frac{\alpha_{j+1}}{\alpha_j}$, we can  replace $\alpha'$ with $\alpha$ in~\eqref{eqn:final-bound}.  Then substituting in the definition of~$\ITW{k}{\alpha}$ yields an upper bound of
    \begin{multline*}
    	\frac1n\sum_{j = 1}^{k}\frac{\alpha_j - \alpha_{j+1}}{\alpha_j}\sum_{i \leq j < \ell}\frac{\alpha_\ell}{\alpha_{i} - \alpha_{\ell}} =  \frac1n \sum_{\substack{i \leq k \\ \ell > i}} \frac{\alpha_\ell}{\alpha_i - \alpha_\ell}\sum_{j = i}^{\min\{k,\ell-1\}} \frac{\alpha_j - \alpha_{j+1}}{\alpha_j} \\
        \leq \frac1n \sum_{\substack{i \leq k \\ \ell > i}} \frac{\alpha_\ell}{\alpha_i - \alpha_\ell}\sum_{j = i}^{\ell - 1} \frac{\alpha_j - \alpha_{j+1}}{\alpha_\ell} = \frac1n\sum_{\substack{i \leq k \\ \ell >i}} 1 \leq \frac{kd}{n},
    \end{multline*}
    which suffices to complete the proof of the theorem.
\end{proof}

Finally, we give our ``PCA''-style bound.
Theorem~\ref{thm:pca-intro}
follows by applying Proposition~\ref{prop:ITW-bound-generally}.
\begin{theorem}
	For any $\rho$ and $k \in [d]$, if we apply the Keyl algorithm but output the rank-$k$ (or less) hypothesis $\wh{\brho} = \bV \mtxTrunc{\bLambda}{k}\bV^\dagger$, we have
    \[
    	\E[\Dhellsq{\wh{\brho}}{\rho}] \leq \seqtail{\alpha}{k} + O\parens*{\frac{kdL}{n}} + O(L) \cdot \E[\seqhead{\ul{\blambda}}{k} - \seqhead{\alpha}{k}],
    \]
    where $L = \min\{k, \ln n\}$.
\end{theorem}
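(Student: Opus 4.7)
The approach mirrors the proof of Theorem~\ref{thm:tomtom}, modified at one crucial point: at the final combinatorial step we keep the true random excess $\E[\seqhead{\ul{\blambda}}{k} - \seqhead{\alpha}{k}]$ instead of bounding it by $\ITW{k}{\alpha}/n$. Starting from Theorem~\ref{thm:tomography-technical}, the terms $\seqtail{\alpha}{k}$, $2\E[\seqhead{\ul{\blambda}}{k} - \seqhead{\alpha}{k}]$, and $O(kd/n)$ are already in the desired form (using $L \geq 1$), so it remains to control $\E[\dhellsqk{k}{\alpha \circ \bpi}{\alpha}]$.

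For this, bound $\dhellsqk{k}{\cdot}{\cdot} \leq \dhellsq{\cdot}{\cdot}$ and apply Corollary~\ref{cor:rearrangement} with $\zeta$ chosen so its truncation index coincides with our $k$ (perturbing $\alpha$ infinitesimally via Proposition~\ref{prop:alpha-lipschitz} if $\alpha_k = \alpha_{k+1}$). Then apply Corollary~\ref{cor:R} to convert the expectations $\E[\sum_{i=1}^j (\alpha_i - \alpha_{\bpi(i)})]$ into $\E[X_j] + jd/n$, using the shorthand $X_j := \seqhead{\ul{\blambda}}{j} - \seqhead{\alpha}{j}$. The resulting $jd/n$ contributions, weighted by $c_j = (\alpha'_j - \alpha'_{j+1})/\alpha'_j$, aggregate to $O(kdL/n)$ via Lemma~\ref{lem:log} and the bound $j \leq k$.

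The main sum $\sum_{j=1}^k c_j \E[X_j]$ is then split at $j = k$. For $j < k$, use Theorem~\ref{thm:our-ITW} to bound $\E[X_j] \leq \ITW{j}{\alpha}/n$, and reuse the exact combinatorial calculation already carried out in Theorem~\ref{thm:tomtom}'s proof to conclude $\sum_{j=1}^{k-1} c_j \ITW{j}{\alpha}/n \leq kd/n$. For $j = k$, simply retain $\E[X_k]$ and use $c_k \leq 1 \leq L$. Combining the pieces yields $\E[\dhellsqk{k}{\alpha \circ \bpi}{\alpha}] \leq O(kdL/n) + O(L)\E[X_k] + (d\zeta + 8kL\zeta)$.

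The principal obstacle is absorbing the truncation error $d\zeta + 8kL\zeta$ coming from Corollary~\ref{cor:rearrangement} across all parameter regimes. A short case analysis on the size of $\alpha_{k+1}$ relative to $kL/n$ is needed: when $\alpha_{k+1}$ is sufficiently small, taking $\zeta$ on the order of $kL/n$ makes this additive error directly $O(kdL/n)$; when $\alpha_{k+1}$ is larger, one absorbs it into a combination of $\seqtail{\alpha}{k}$ and $O(kdL/n)$, with a fallback to the trivial bound $\Dhellsq{\cdot}{\cdot} \leq 2$ in the most extreme regime where the stated bound would otherwise exceed the universal $2$.
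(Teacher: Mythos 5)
Your proposal misses the key technical obstacle that the paper identifies and resolves, and the workaround you sketch does not actually close the resulting gap.

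The central issue: you bound $\dhellsqk{k}{\alpha \circ \bpi}{\alpha}$ by the \emph{full} Hellinger distance $\dhellsq{\alpha \circ \bpi}{\alpha}$ and then apply Corollary~\ref{cor:rearrangement} to the genuine length-$d$ permutation $\beta = \alpha \circ \pi$. Since the corollary requires its threshold $\zeta$ to satisfy $\alpha_k > \zeta \geq \alpha_{k+1}$, the additive error term $d\zeta$ is forced to be at least $d\alpha_{k+1}$, and this quantity is in general \emph{not} absorbable into $\seqtail{\alpha}{k} + O(kdL/n) + O(L)\,\E[X_k]$. Concretely, take $d$ large, $k=2$, $\alpha = (0.49,\,0.49,\,0.02,\,0,\dots,0)$, and $n$ large: then $\seqtail{\alpha}{k} = 0.02$, $\E[X_k]$ and $kdL/n$ are negligible, so the target bound is roughly $0.02$; but $d\alpha_{k+1} = 0.02\,d$ is unbounded as $d$ grows. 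Your proposed case analysis on whether $\alpha_{k+1}$ is small does not help here, nor does the fallback to $\Dhellsq{\cdot}{\cdot} \leq 2$, because the \emph{true} error in this regime is also about $0.02$ --- the claimed bound is far below $2$, so the trivial fallback never gets to trigger. A further, independent problem is that when $\alpha_k$ itself is small (below your intended $\zeta \approx kL/n$), you cannot choose $\zeta$ in the required window $[\alpha_{k+1}, \alpha_k)$ at all.

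The paper's proof avoids this by \emph{not} passing to the full length-$d$ Hellinger. It keeps the truncated $k$-term sum, notes explicitly that Lemma~\ref{lem:rearrangement} cannot be applied directly because $\beta_1,\dots,\beta_k$ need not be a permutation of $\alpha_1,\dots,\alpha_k$, and then repairs this by appending to each truncated list exactly the ``missing'' elements, yielding sequences $\ol{\alpha},\ol{\beta}$ of length $K = k + k' \leq 2k$ that \emph{are} permutations of one another. This is the decisive step: when Corollary~\ref{cor:rearrangement} is applied to these length-$K$ sequences with $\zeta = k/n$, the error term becomes $K\zeta \leq 2k^2/n = O(kd/n)$, independent of the ambient dimension $d$. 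The paper then uses the observation (its~\eqref{eqn:ub}) that $\sum_{i=1}^j (\ol{\alpha}_i - \ol{\beta}_i) \leq \sum_{i=1}^k(\alpha_i - \beta_i)$ for $j > k$, together with Lemma~\ref{lem:log}, to control the added indices $k < j \leq K$ via $O(L)\cdot \sum_{i=1}^k(\alpha_i - \beta_i)$, and finishes with Corollary~\ref{cor:R} and the combinatorial $\sum_j c_j \ITW{j}{\alpha} \leq kd$ calculation you correctly recall from Theorem~\ref{thm:tomtom}. Your proposal captures the final steps correctly but is missing this extension-to-$\ol{\alpha},\ol{\beta}$ idea, which is precisely what makes the argument work for all parameter regimes.
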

\begin{proof}
	The proof proceeds similarly to that of Theorem~\ref{thm:tomtom} but with $k$ in place of~$r$.  Now the terms $\seqtail{\alpha}{k}$ and $\E[\seqhead{\ul{\blambda}}{k} - \seqhead{\alpha}{k}]$ do not vanish but instead go directly into the error bound.  It remains to bound
\[
	2\E\bracks*{\sum_{i=1}^k (\sqrt{\alpha_i} - \sqrt{\alpha_{\bpi(i)}})^2}.
\]

Fix an outcome for~$\bpi = \pi$ and write the associated permutation $\alpha \circ \pi$ as~$\beta$.  Unfortunately we cannot apply Lemma~\ref{lem:rearrangement} because the subsequence $\beta_1, \dots, \beta_k$ is not necessarily a permutation of the subsequence $\alpha_1, \dots, \alpha_k$.  What we can do instead is the following.  Suppose that some $k'$ of the numbers $\beta_1, \dots, \beta_k$ do not appear within $\alpha_1, \dots, \alpha_k$.  Place these missing numbers, in decreasing order, at the end of the $\alpha$-subsequence, forming the new decreasing subsequence $\ol{\alpha}_1, \dots, \ol{\alpha}_{K}$, where $K = k + k' \leq 2k$ and $\ol{\alpha}_i = \alpha_i$ for $i \leq k$.  Similarly extend the $\beta$-subsequence to $\ol{\beta}_1, \dots, \ol{\beta}_{K}$ by adding in the ``missing'' $\alpha_i$'s; the newly added elements can be placed in any order.  Note that all of the elements added to the $\alpha$-subsequence are less than all the elements added to the $\beta$-subsequence (because $\alpha_k$ must be between them).  Thus we have
\begin{equation} \label{eqn:ub}
	\sum_{i=1}^j (\ol{\alpha}_i - \ol{\beta}_i) \leq \sum_{i=1}^k (\alpha_i - \beta_i) \quad \forall\ j > k.
\end{equation}
We may now apply Corollary~\ref{cor:rearrangement} to $\ol{\alpha}$ and $\ol{\beta}$, with its ``$d$'' set to~$K$ and with $\zeta = \frac{k}{n}$.  We get
\[
   	\sum_{i=1}^K \parens*{\sqrt{\ol{\alpha}_i} - \sqrt{\ol{\beta}_i}}^2 \leq 4\sum_{j = 1}^{K}\frac{\ol{\alpha}'_j - \ol{\alpha}'_{j+1}}{\ol{\alpha}'_j} \sum_{i=1}^j(\ol{\alpha}_i - \ol{\beta}_i) + O\parens*{\frac{k^2L}{n}},
\]
    where $L = \min\{k, \ln n\}$.  We can split the sum over $1 \leq j \leq K$ into $1 \leq j \leq k$ and $k < j \leq K$. The latter sum can be bounded by $4L \sum_{i=1}^k (\alpha_i - \beta_i)$, using Lemma~\ref{lem:log} and~\eqref{eqn:ub}.  The former sum is what we ``would have gotten'' had we been able to directly apply Corollary~\ref{cor:rearrangement}.  That is, we have established
    \[
    	\sum_{i=1}^k \parens*{\sqrt{{\alpha}_i} - \sqrt{{\beta}_i}}^2 \leq \sum_{i=1}^K \parens*{\sqrt{\ol{\alpha}_i} - \sqrt{\ol{\beta}_i}}^2 \leq 4 \sum_{j=1}^k \frac{\alpha_j' - \alpha_{j+1}'}{\alpha'_j} \sum_{i=1}^j(\alpha_i - \beta_j) + O\parens*{\frac{k^2L}{n}} + O(L)\cdot \sum_{i=1}^k(\alpha_i - \beta_i).
    \]
    Now taking this in expectation over~$\bpi$ yields
	\[
    	2\E\bracks*{\sum_{i=1}^k (\sqrt{\alpha_i} - \sqrt{\alpha_{\bpi(i)}})^2} \leq 8\sum_{j = 1}^{k}\frac{\alpha'_j - \alpha'_{j+1}}{\alpha'_j} \E\bracks*{\sum_{i=1}^j(\alpha_i - \alpha_{\bpi(i)})} + O\parens*{\frac{k^2L}{n}} + O(L) \cdot \E\bracks*{\sum_{i=1}^k (\alpha_i - \alpha_{\bpi(i)})}.
    \]
    The first term above is handled exactly as in the proof of Theorem~\ref{thm:tomtom}, and
    Corollary~\ref{cor:R} takes care of the last term.
\end{proof}

\section{The lower-row majorization theorem} \label{sec:catan}

The following theorem refers (in~\eqref{eqn:first-chain-condition}) to some terminology ``curves''.  We will actually not define this terminology until inside the proof of theorem; the reader can nevertheless follow the logical flow without knowing the definition.
\begin{theorem}                             \label{thm:catan-conj}
    Let $b \in \calA^m$ be a string of distinct letters and let $A \subseteq \calA$ be a set of letters in~$b$ deemed ``admissible''.  Let $I_1, \dots, I_c$ be disjoint increasing subsequences (possibly empty) in~$b$ of total length~$L$; we assume also that these subsequences are ``admissible'', meaning they consist only of admissible letters.  Finally, assume the following condition:
    \begin{multline} \label{eqn:first-chain-condition}
        \textnormal{\qquad ``one can draw a set of curves through the $I$'s such that} \\
        \textnormal{all inadmissible letters in~$b$ are southeast of the first curve''.\qquad}
    \end{multline}
    (As mentioned, the terminology used here will be defined later.)

    Let $w \in \calA^{m'}$ be a string of distinct letters with the following property: When the RSK algorithm is applied to~$w$, the letters that get bumped into the second row form the string~$b$ (in the order that they are bumped).

    Then there exists a new set of ``admissible'' letters $A' \supseteq A$ for~$w$, with $|A'| = |A| + \Delta$  (so $\Delta \in \N$), along with disjoint admissible (with respect to~$A'$) increasing subsequences $J_1, \dots, J_c$ in~$w$ of total length $L+\Delta$, such that~\eqref{eqn:first-chain-condition} holds for $w$ and the $J$'s with respect $A'$.
\end{theorem}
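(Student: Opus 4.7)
The plan is to analyze the RSK bumping structure from row $1$ to row $2$ and translate the geometric ``curves'' condition from the $b$-string to the $w$-string. For each position $p \in [m]$, let $\sigma(p) \in [m']$ denote the position in $w$ where $b_p$ was originally inserted (so $w_{\sigma(p)} = b_p$), and let $\tau(p) > \sigma(p)$ denote the time when $b_p$ was bumped out of row $1$ by the letter $w_{\tau(p)} < b_p$. A preliminary task is to formalize the ``curves'' terminology; I anticipate that each curve $C_i$ will be a monotone lattice path through the points $\{(p, b_p) : p \in I_i\}$ in the position-value plane, with ``first curve'' $C_1$ being the topmost such path, and ``southeast of $C_1$'' meaning weakly right and weakly below.

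With this setup, the key step is the lift. For each $I_i = (b_{p_{i,1}}, \ldots, b_{p_{i,s_i}})$, I would form an initial candidate $\widetilde{J}_i$ by ordering the same letters by their position in $w$ (i.e., the natural order of the $\sigma(p_{i,\cdot})$'s). The curve hypothesis on $b$ constrains the relative geometry of the $I_i$'s with respect to inadmissible letters, which in turn constrains the bumping dynamics. Specifically, I want to show: if $p < p'$ appear in a common $I_i$ with $b_p < b_{p'}$, then $\sigma(p) < \sigma(p')$, so that $\widetilde{J}_i$ is indeed an increasing subsequence of $w$. Arguing by contradiction, if $\sigma(p) > \sigma(p')$, then at time $\sigma(p)$ the letter $b_{p'}$ still sits in row $1$; inserting $b_p < b_{p'}$ either bumps $b_{p'}$ directly (forcing $\tau(p') = \sigma(p) < \tau(p)$, contradicting the bumping order) or bumps some intermediate letter $x$ with $b_p < x < b_{p'}$. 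In the latter case, $x$ becomes some $b_{p''}$ with $p'' < p$, and tracking the position of $(p'', x)$ in the $b$-plane --- which lies northwest of the curve through $I_i$ at the points $(p, b_p)$ and $(p', b_{p'})$ --- should produce an inadmissible letter violating the curve hypothesis, assuming $x \notin A$. An extra lemma handling the case $x \in A$ (where $x$ is already counted) will be needed to close the argument.

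Having lifted to $\widetilde{J}_i$'s that are increasing in $w$ of total length $L$, the final task is to satisfy the lifted curve condition. Pushing the curves forward via $(p, b_p) \mapsto (\sigma(p), w_{\sigma(p)})$ produces curves in the $(t, w_t)$-plane, but letters $w_t$ with $t \notin \sigma([m])$ --- that is, letters of $w$ that were never bumped from row $1$ --- may now violate the condition by lying northwest of the lifted $C_1$. For each such offending letter, we have two responses: (i) declare $w_t$ admissible by adding it to $A'$ and inserting it into the appropriate $J_i$ to extend its length by one, or (ii) redraw the curves to route $w_t$ southeast. Iterating this repair yields the desired $A'$ with $|A'| = |A| + \Delta$ and $J_i$'s of total length $L + \Delta$. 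The main obstacle --- the part requiring the most care --- is maintaining the exact bookkeeping that each newly admissible letter contributes exactly one unit to the total subsequence length, neither more nor less; this relies on the precise monotonic structure of the curves and on the bumping-path constraints established in the previous step.
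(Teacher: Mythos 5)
Your high-level outline (lift chains from~$b$ to~$w$, possibly adding admissible letters, then re-draw curves) runs parallel to the paper's, and your choice to use a geometric picture is the right instinct. But the central step --- defining $\widetilde{J}_i$ as \emph{the same multiset of letters} as $I_i$, reordered by their positions in~$w$ --- does not work, and the ``claim'' you want to prove is simply false, even under the curve hypothesis. Take $w = (3,5,2,1,4)$. Running RSK, the bumped string is $b = (3,2,5)$ with insertion positions $\sigma(1)=1$, $\sigma(2)=3$, $\sigma(3)=2$. Declare all of $b$'s letters admissible (so the curve condition in~\eqref{eqn:first-chain-condition} is vacuous) and take $I_1 = (b_2, b_3) = (2,5)$. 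Here $p=2 < p'=3$ and $b_p = 2 < 5 = b_{p'}$, yet $\sigma(p)=3 > 2 = \sigma(p')$. Reordering $\{2,5\}$ by position in~$w$ yields $(w_2, w_3) = (5,2)$, which is not increasing; there is no increasing subsequence of~$w$ consisting of exactly these two letters. Your contradiction argument gets stuck in exactly the branch you flag (``the bumped intermediate $x$ is admissible''): in this example $x = 3 \in A$, and nothing is violated --- the situation is just legal. The ``extra lemma'' you hope to supply for that branch cannot exist, because the statement it would certify fails here.

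What actually saves the day in the paper's proof is that the lifted chains $J_i$ are \emph{not} required to use the same letters as the $I_i$. The proof works in Viennot's jump-line picture: beads sitting at black points (graph of~$b$) are slid northwest along their jump lines until they land on white points (graph of~$w$). Crucially, a bead starting at the black point for $b_p$ may slide past the white point $(\sigma(p), b_p)$ and come to rest at an entirely different, larger, admissible letter of~$w$ on the same jump line --- in the example above, the bead at $b_2 = 2$ ends up at $w_1 = 3$, and the final chain $J_1 = (3, 5)$ is a valid increasing subsequence. The admissibility and total-length bookkeeping then come from the Key Claim (all white points northwest of the first curve are admissible after Phase~1) together with the non-crossing property of jump lines. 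So the gap is not in the bookkeeping details you flag at the end; it is one step earlier: the attempt to transport $I_i$ letter-by-letter into~$w$ is the wrong primitive, and must be replaced by a sliding operation that is allowed to change letters. (As a smaller point, your guess for the conventions is also reversed: the paper's ``first curve'' is the \emph{southeasternmost} one, making the hypothesis that inadmissible letters lie southeast of it a genuinely restrictive condition, not a permissive one.)
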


We will prove this theorem, as well as the following lemma, later.
\begin{lemma}           \label{lem:catan-init}
    Let $b \in \calA^m$ be a string of distinct letters and let $I_1, \dots, I_c$ be disjoint increasing subsequences in~$b$.  Then there are disjoint increasing subsequences $I'_1, \dots, I'_c$ consisting of the same letters as $I_1, \dots, I_c$, just grouped differently, such that it is possible to draw a ``set of curves'' through $I'_1, \dots, I'_c$.
\end{lemma}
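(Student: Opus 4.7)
The plan is to use a greedy re-partitioning procedure that preserves the underlying set of letters $S = I_1 \cup \cdots \cup I_c$ but redistributes them into (possibly different) disjoint increasing subsequences whose associated paths can be drawn without crossings. I will picture the letters of $b$ as points in the plane, placing the $i$-th letter at position $(i, b_i)$, so that an increasing subsequence corresponds to a set of points traversed in the northeast direction, and a ``curve through'' an increasing subsequence is a weakly monotone path visiting those points.

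First, I will apply patience sorting to the letters of $S$, processed in order of their positions in $b$. Specifically, I maintain piles $I'_1, I'_2, \dots$; when a new letter is read, it is placed on the pile with the smallest index whose current top is strictly less than the letter, starting a new pile only if no existing pile accepts it. Each resulting pile is an increasing subsequence by construction. Since $S$ can already be decomposed into $c$ increasing subsequences (namely $I_1, \dots, I_c$), a standard Dilworth/patience-sorting argument (the number of piles equals the length of the longest strictly decreasing subsequence of $S$, which is at most~$c$) guarantees that at most $c$ piles are used; if fewer are used, I pad with empty subsequences to get exactly $c$.

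Next, I will argue that the piles $I'_1, \dots, I'_c$ admit a non-crossing set of curves, with $I'_1$ drawn as the topmost curve. The key invariant is that at the moment any letter $\ell$ is placed on pile $I'_j$, the current tops of piles $I'_1, \dots, I'_{j-1}$ are all $\geq \ell$. From this one can show inductively that as one scans positions from left to right, the ``current $y$-coordinates'' of the piles respect the order $I'_1 \geq I'_2 \geq \cdots$, so one can interpolate monotone curves through each pile in a way that makes them pairwise non-crossing.

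The principal obstacle is that the notion of ``set of curves'' is not formally introduced until inside the proof of Theorem~\ref{thm:catan-conj}, so I will need to align the patience-sorting construction with the authors' eventual definition. If the formal definition turns out to require more than mere non-crossingness (for instance, a specific staircase shape or a particular endpoint convention), I will fall back on an iterative uncrossing argument: whenever two curves drawn through $I_a$ and $I_b$ cross, swap their tails past the crossing point; because both subsequences were increasing, the two swapped subsequences remain increasing, the multiset of letters is preserved, and a suitable potential function (e.g.\ the number of inversions between pile index and height) strictly decreases, guaranteeing termination at a non-crossing configuration.
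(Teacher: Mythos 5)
Your patience-sorting construction is correct and gives a genuinely different, more constructive proof than the paper's. The paper's own argument coincides with your \emph{fallback} plan: draw arbitrary increasing curves through the given chains $I_1,\dots,I_c$, and whenever two curves cross, cut at the intersection and reassemble into a ``lower'' and an ``upper'' curve; this re-partitions the beads while preserving them, and finitely many such surgeries (the paper glosses the termination argument that you make explicit via an inversion-counting potential) yield a nonintersecting family. (The paper also sketches a second alternative via Wernisch's maximum $c$-chain algorithm, applicable when the initial chains have maximum total length.) Your main argument instead exhibits a canonical re-grouping in one shot: greedy patience sorting of $S = I_1 \cup \cdots \cup I_c$ by position, placing each letter on the first pile whose top is strictly smaller. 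Both steps you assert check out: the number of piles equals the longest strictly decreasing subsequence of $S$, which is at most $c$ since $S$ is covered by $c$ increasing subsequences; and the invariant that the current pile tops satisfy $t_1 > t_2 > \cdots$ at all times is preserved by the greedy step (the receiving pile's old top is below $\ell$, hence so are all later tops; all earlier tops exceed $\ell$). From this invariant a nonintersecting family of strictly increasing curves can be threaded through the piles --- the $x$-positions are integer-separated and the heights distinct, so a sufficiently fine staircase-smoothing keeps consecutive curves strictly separated. What your approach buys is a closed-form re-partition instead of an iterative uncrossing procedure, at the modest cost of invoking a Dilworth-type pile count; what the paper's approach buys is that it requires no such combinatorial input and works directly on the given chains. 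Either proof suffices for the role Lemma~\ref{lem:catan-init} plays in the proof of Theorem~\ref{thm:John-conjecture}, where only the existence of \emph{some} ``set of curves'' through \emph{some} re-grouping is needed.
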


Let us now see what Theorem~\ref{thm:catan-conj} and Lemma~\ref{lem:catan-init} imply:
\begin{theorem}                                     \label{thm:John-conjecture}
    Fix an integer $k \geq 1$. Consider the RSK algorithm applied to some string $x \in \calA^n$.  During the course of the algorithm, some letters of~$x$ get bumped from the $k$th row and inserted into the $(k+1)$th row.  Let $x^{(k)}$ denote the string formed by those letters \emph{in the order they are so bumped}.  On the other hand, let~$\ol{x}$ be the subsequence of~$x$ formed by the letters of~$x^{(k)}$ \emph{in the order they appear in~$x$}.  Then $\shRSK{\ol{x}} \unrhd \shRSK{x^{(k)}}$.
\end{theorem}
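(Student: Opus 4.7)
The plan is to derive Theorem~\ref{thm:John-conjecture} from Theorem~\ref{thm:catan-conj} and Lemma~\ref{lem:catan-init} via Greene's theorem. By Greene, the desired majorization $\shRSK{\ol{x}} \unrhd \shRSK{x^{(k)}}$ is equivalent to the assertion that for every $c \geq 1$, the maximum total length of $c$ disjoint (weakly) increasing subsequences in $\ol{x}$ is at least the analogous quantity in $x^{(k)}$; equality of total sums (the ``$c = \infty$'' case) is automatic, since $\ol{x}$ and $x^{(k)}$ contain the same multiset of letters. A routine standardization lets me reduce to the case where $x$ has distinct letters, so that each $x^{(i)}$ also has distinct letters, weakly-increasing and strictly-increasing subsequences coincide, and the hypotheses of Theorem~\ref{thm:catan-conj} and Lemma~\ref{lem:catan-init} are applicable.

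Fix $c \geq 1$ and an optimal family $I_1,\dots,I_c$ of disjoint increasing subsequences in $x^{(k)}$ of total length~$L$. The key idea is to \emph{lift} this family, step by step, back through $x^{(k-1)}, x^{(k-2)}, \dots, x^{(0)} = x$, exploiting the crucial fact that $x^{(i)}$ is exactly the sequence of letters bumped into the second row when the RSK algorithm is applied to $x^{(i-1)}$. I initialize with admissible set $A_k := \{\text{letters of } x^{(k)}\}$, so that the ``inadmissible southeast of the first curve'' part of~\eqref{eqn:first-chain-condition} is vacuous, and invoke Lemma~\ref{lem:catan-init} to regroup $I_1,\dots,I_c$ into disjoint increasing subsequences in $x^{(k)}$ through which a set of curves can be drawn. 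Then for $i = k, k-1, \dots, 1$, I apply Theorem~\ref{thm:catan-conj} with $w = x^{(i-1)}$ and $b = x^{(i)}$, obtaining an enlarged admissible set $A_{i-1} \supseteq A_i$ with $|A_{i-1}| = |A_i| + \Delta_i$, together with disjoint admissible increasing subsequences in $x^{(i-1)}$ of total length $L + \Delta_k + \cdots + \Delta_i$, which still satisfy the curve condition and hence feed directly into the next iteration.

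After all $k$ iterations I arrive at $c$ disjoint increasing subsequences in $x = x^{(0)}$ of total length $L + \sum_{i=1}^k \Delta_i$, whose letters lie in $A_0 \supseteq A_k$. The pivotal accounting is that $|A_0 \setminus A_k| = |A_0| - |A_k| = \sum_i \Delta_i$ (since $A_0 \supseteq A_k$), so deleting from each subsequence every letter outside~$A_k$---which preserves the increasing property because the letters of~$x$ are now distinct---shortens the total length by at most $\sum_i \Delta_i$. What remains is $c$ disjoint increasing subsequences in~$x$ using only letters of $x^{(k)}$, of total length at least~$L$; these are exactly $c$ disjoint increasing subsequences in $\ol{x}$ of total length $\geq L$, and Greene's theorem finishes the proof. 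I expect the main obstacle in the overall project to lie entirely in establishing Theorem~\ref{thm:catan-conj} itself (the ``curves'' machinery has not even been defined yet in the excerpt), whose proof should require a delicate combinatorial analysis of how a curve drawn through a bump string $b$ can be lifted to a curve in the source word~$w$; conditional on that theorem, the derivation above is essentially bookkeeping, the decisive feature being the exact match at each step between the number of admissible letters gained and the total subsequence length gained.
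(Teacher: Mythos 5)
Your proof is correct and is essentially identical to the paper's: both standardize to distinct letters, reduce via Greene's Theorem to lifting $c$ disjoint increasing subsequences of $x^{(k)}$ back through $x^{(k-1)}, \dots, x^{(0)} = x$ via $k$ successive applications of Theorem~\ref{thm:catan-conj} (seeded by Lemma~\ref{lem:catan-init} with all letters of $x^{(k)}$ admissible), and then delete the newly admissible letters, with the count $\sum_i \Delta_i$ of such letters exactly matching the length gained. Your explicit observation that the equality in the majorization (the $k = d$ case of $\unrhd$) is automatic since $\ol{x}$ and $x^{(k)}$ share the same multiset of letters is a small but valid point the paper leaves implicit.
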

\begin{proof}
    We may assume all the letters in $x$ are distinct, by the usual trick of ``standardization''; this does not affect the operation of RSK on~$x$ or~$\ol{x}$.   When RSK is applied to $x$, let us write more generally $x^{(j)}$ ($1 \leq j \leq k$) for the sequence of letters bumped from the~$j$th row and inserted into the $(j+1)$th row, in the order they are bumped.  We also write $x^{(0)} = x$.

    We will show $\shRSK{\ol{x}} \unrhd \shRSK{x^{(k)}}$ using Greene's Theorem; it suffices to show that if $I^{(k)}_1, \dots, I^{(k)}_c$ are any disjoint increasing subsequences in $x^{(k)}$ of total length~$L$, there are some~$c$ disjoint increasing subsequences $\ol{I}_1, \dots, \ol{I}_c$ of total length at least~$L$ in~$\ol{x}$.  We will find these subsequences by applying Theorem~\ref{thm:catan-conj} $k$ times in succession, with $(b,w)$ equal to $(x^{(j)}, x^{(j-1)})$ for $j = k, {k-1}, {k-2}, \dots, 1$.

    In the first application, with $b = x^{(k)}$ and $w = x^{(k-1)}$,  we will declare \emph{all} letters appearing in~$b$ to be ``admissible''.  In particular this means that $I^{(k)}_1, \dots, I^{(k)}_c$ are automatically admissible. After reorganizing these subsequences using Lemma~\ref{lem:catan-init} (if necessary), we may draw \emph{some} ``set of curves'' through them.  Condition~\eqref{eqn:first-chain-condition} is then vacuously true, as there are no inadmissible letters.    Theorem~\ref{thm:catan-conj} thus gives us some $\Delta_k$ newly admissible letters, as well as admissible disjoint increasing subsequences $I^{(k-1)}_1, \dots, I^{(k-1)}_c$ in $x^{(k-1)}$ of total length $L+\Delta_k$, such that condition~\eqref{eqn:first-chain-condition} still holds.

    We now continue applying Theorem~\ref{thm:catan-conj}, $k-1$ more times, until we end up with admissible disjoint increasing subsequences $I^{(0)}_1, \dots, I^{(0)}_c$ in $x^{(0)} = x$ of total length $L+\ol{\Delta}$, where $\ol{\Delta} = \Delta_1 + \cdots + \Delta_k$ is the number of newly admissible letters in~$x$, beyond those letters originally appearing in~$x^{(k)}$.  Finally, we delete all of these newly admissible letters wherever they appear in $I^{(0)}_1, \dots, I^{(0)}_c$, forming $\ol{I}_1, \dots, \ol{I}_c$; these are then disjoint increasing subsequences of total length at least~$L$.  But they also consist only of letters that were originally admissible, i.e.\ in~$x^{(k)}$; hence $\ol{I}_1, \dots, \ol{I}_c$ are subsequences of $\ol{x}$ and we are done.
\end{proof}

We now come to the proof of Theorem~\ref{thm:catan-conj} (which includes the definition of ``curves'').
\begin{proof}[Proof of Theorem~\ref{thm:catan-conj}]
    Our proof of the theorem uses Viennot's geometric interpretation of the RSK process~\cite{Vie81}; see e.g.,~\cite[Chapter 3.6]{Sag01},~\cite[Chapter~2]{Wer94}) for descriptions.  We may assume that $\calA = [D]$ for some~$D \in \N$.  The word~$w = (w_1, \dots, w_n)$ is then identified with its ``graph''; i.e., the set of points $(i, w_{i})$ in the $2$-dimensional plane. We will call these the ``white points''.  (Since $w$ has distinct letters, no two white points are at the same height.)  Viennot's construction then produces a set of ``shadow lines'' through the points; we will call them ``jump lines'', following the terminology in Wernisch~\cite{Wer94}.  The points at the northeast corners are called the ``skeleton'' of~$w$; we will also call these the ``black points''. They are the graph of the string~$b$ (if we use~$w$'s indices when indexing~$b$).

    Note that an increasing subsequence in~$b$ (respectively,~$w$) corresponds to an increasing sequence of black (respectively, white) points; i.e., a sequence in which each successive point is to the northeast of the previous one.  We will call such a sequence a \emph{chain}. In Theorem~\ref{thm:catan-conj} we are initially given~$c$ disjoint sequences/chains $I_1, \dots, I_c$ in~$b$, of total length~$L$.  To aid in the geometrical description, we will imagine that $L$~``beads'' are placed on all of these chain points.\footnote{As a technical point, the theorem allows some of these chains to be empty.  We will henceforth discard all such ``empty'' chains, possibly decreasing~$c$.  In the end we will also allow ourselves to produce fewer than~$c$ chains $J_i$ in~$w$.  But this is not a problem, as we can always artificially introduce more empty chains.}

    We are also initially given a set~$A$ of admissible ``letters'' in~$b$; in the geometric picture, these correspond to ``admissible black points''.  We are initially promised that all the beads are at admissible black points.  The end of the theorem statement discusses admissible letters~$A'$ in~$w$; these will correspond to ``admissible white points''.  Since $A' \supseteq A$, every white point that is directly west of an admissible black point will be an admissible white point.  But in addition, the theorem allows us to designate some~$\Delta$ additional white points as ``newly admissible''.

    Our final goal involves finding~$c$ chains of admissible white points, of total length $L+\Delta$.  The outline for how we will do this is as follows: First we will add some number $\Delta$ of beads to the initial chains, and at the same time designate some~$\Delta$ white points as newly admissible.  Then we will ``slide'' each bead some amount west and north along its jump line, according to certain rules.  At the end of the slidings, all the beads will be on admissible white points, and we will show that they can be organized into~$c$ chains.  Finally, we must show that the postcondition~\eqref{eqn:first-chain-condition} concerning ``curves'' is still satisfied, given that it was satisfied initially.

    Let's now explain what is meant by ``curves'' and ``sets of curves''.  A curve will refer to an infinite southwest-to-northeast curve which is the graph of a strictly increasing continuous function that diverges to $\pm \infty$ in the limits to $\pm \infty$.  It will typically pass through the beads of a chain.  When condition~\eqref{eqn:first-chain-condition} speaks of a ``set of curves'' passing through chains $I_1, \dots, I_c$, it is meant that for each  chain we have have a single curve passing through the beads of that chain, and that the curves do not intersect.  With this definition in place, the reader may now like to see the proof of Lemma~\ref{lem:catan-init} at the end of this section.

    Given such a set of curves, we can and will always perturb them slightly so that the only black or white diagram points that the curves pass through are points with beads.  As the curves do not intersect, we may order them as $Q_1, Q_2, \dots, Q_{c}$ from the southeasternmost to the northwesternmost; we can assume that the chains are renumbered correspondingly.  The curves thereby divide the plane into \emph{regions}, between the successive curves.

    Before entering properly into the main part of the proof of Theorem~\ref{thm:catan-conj} we need one more bit of terminology.  Whenever a curve intersects a jump line, we call the intersection point a ``crossing''.  We will further categorize each crossing as either ``horizontal'' or ``vertical'', according to whether the point is on a horizontal or vertical jump line segment.  (Cf.~the ``chain curve'' crossings in~\cite[Section~3.1]{Wer94}.)  In case the crossing is at a jump line corner (as happens when the curves passes through a bead), we classify the crossing as \emph{vertical}.

    We now come to the main part of the proof: In the geometric diagram we have chains of beads $I_1, \dots, I_c$ of total length~$L$, as well as a set of curves $Q_1, Q_2, \dots, Q_{c'}$  passing through them, with all inadmissible black points to the southeast of~$Q_1$.  Our proof will be algorithmic, with four \emph{phases}.
    \begin{itemize}
        \item \textbf{Phase 1:} In which new beads are added, and new white points are deemed admissible.
        \item \textbf{Phase 2:} In which beads are partly slid, to ``promote'' them to new chains.
        \item \textbf{Phase 3:} In which the new chains, $J_1, \dots, J_c$, are further slid to white points.
        \item \textbf{Phase 4:} In which a new set of curves is drawn through $J_1, \dots, J_c$, to satisfy~\eqref{eqn:first-chain-condition}.
    \end{itemize}
    We now describe each of the phases in turn.
    \paragraph{Phase 1.}  In this phase we consider the horizontal crossings, if any, of the first (i.e., southeasternmost) curve~$Q_1$.  For each horizontal crossing, we consider the white point immediately westward.  If that point is currently inadmissible, we declare it to be admissible, and we add a new bead at the crossing point.  Certainly this procedure introduces as many beads~$\Delta$ as it does newly admissible white points.  Also, although the new beads are not (yet) at points in the Viennot diagram, we \emph{can} add them to the first chain~$I_1$, in the sense that they fit into the increasing sequence of beads already in~$I_1$.  (This is because the curve~$Q_1$ is increasing from southwest to northeast.)      We now want to make the following claim:\\

    \emph{Key Claim: All white points to the northwest of $Q_1$ are now admissible}. \\

    (Recall that no white point is exactly on~$Q_1$.) To see the claim, consider any white point~$p$ to the northwest of $Q_1$.  Consider the jump line segment extending east from~$p$.  If that segment crosses~$Q_1$ then it's a horizontal crossing, and Phase~$1$ makes~$p$ admissible.  Otherwise, the segment must terminate at a black point~$q$ that is on or northwest of~$Q_1$.  (The segment cannot be a half-infinite extending eastward to infinity, because of the condition that~$Q_1$ eventually extends infinitely northward.)  By the precondition~\eqref{eqn:first-chain-condition}, $q$ must be an admissible black point.  Thus~$p$ is an admissible white point, being at the same height as~$q$.

    \paragraph{Phase 2.} For this phase it will be notationally convenient to temporarily add a ``sentinel curve'' $Q_{c+1}$ that is far to the northwest of the last curve~$Q_c$; the only purpose of this curve is to create vertical crossing points on all of the northward half-infinite segments of the jump lines.  We now proceed through each bead~$x$ in the diagram and potentially ``promote'' it to a higher-numbered chain.  The algorithm is as follows:  We imagine traveling west and north from~$x$ along its jump line until the first time that a vertical crossing point~$p$ is encountered.  (Note that such a vertical crossing point must always exist because of the sentinel curve~$Q_{c+1}$.)  Let~$q$ denote the crossing point on this jump line immediately \emph{preceding}~$p$.  (This~$q$ will either be the current location of~$x$, or it will be northwest of~$x$.)  We now slide bead~$x$ to point~$q$.  If~$x$ indeed moves (i.e., $q$ is not already its current location), we say that~$x$ has been ``promoted'' to a higher curve/chain.  As mentioned, we perform this operation for every bead~$x$.

    A crucial aspect of this phase is that the beads on a single jump line never ``pass'' each other, and in particular we never try to place two beads on the same crossing point. This is because whenever we have two consecutive beads on a jump line, there is always a vertical crossing at the higher bead.  (Of course, prior to Phase~1 all beads were at vertical crossings, by definition.  After Phase~1 we may have some beads at horizontal crossings, but at most one per jump line, and only on the lowest curve~$Q_1$.)

    At the end of Phase~2, all beads end up at (distinct) crossing points; in particular, they are all on the curves $Q_1, \dots, Q_c$.  (A bead cannot move onto the sentinel curve~$Q_{c+1}$.) Thus the beads may naturally partitioned into at most~$c$ chains (increasing sequences), call them $J_1, \dots, J_c$, some of which may be empty.  We now have:\\

    \emph{Phase 2 Postcondition: For every bead, the first crossing point on its jump line to its northwest is a vertical crossing.}

    \paragraph{Phase 3.}  The goal of Phase~$3$ is to further slide the (nonempty) chains $J_1, \dots, J_c$ northwestward along their jump lines so that they end up at white points, forming white chains. Since we will continue to move beads only northwestward, all the final white resting points will be admissible, by the \emph{Key Claim} above.  Another property of Phase~3 will be that each (nonempty) chain~$J_i$ will stay strictly inside the region between curves~$Q_i$ and $Q_{i+1}$.  Because of this, we will again have the property that beads will never slide past each other or end at the same white point, and the order in which we process the chains does not matter.

    So let us fix some (nonempty) chain~$J_i$ on curve~$Q_i$ and see how its beads can be slid northwestward along their jump lines to white points forming a chain southeast of~$Q_{i+1}$.  We begin with the northeasternmost bead on the chain, which (by virtue of Phase~$2$) is either on a black point or is in the middle of a horizontal jump line segment.  In either case, we begin by sliding it west, and we deposit immediately at the first white point encountered.  We must check that this white point is still to the southeast of curve $Q_{i+1}$. This is true because otherwise the first crossing point northwest of the bead's original position would be a horizontal one, in contradiction to the \emph{Phase~2 Postcondition.}

    We handle the remaining beads in $J_i$ inductively.  Suppose we have successfully deposited the northeasternmost~$t$ beads of $J_i$ at white points that form a chain southeast of $Q_{i+1}$.  Let's say the last of these (the southwesternmost of them) is bead~$x_t$, and we now want to successfully slide the next one, $x_{t+1}$.  Let $p_t$ denote the white point into which~$x_t$ was slid.  Our method will be to slide $x_{t+1}$ west and north along its jump line until the first time it encounters a white point, $p_{t+1}$ that is west of~$p_t$, depositing it there. We need to argue three things: (i)~$p_{t+1}$ exists; (ii)~$p_{t+1}$ is southeast of the curve $Q_{i+1}$; (iii)~$p_{t+1}$ is south of~$p_t$.  If these things are true then we will have deposited $x_{t+1}$ in such a way that it extends the white point chain and is still southeast of $Q_{i+1}$. This will complete the induction.

    To check the properties of $p_{t+1}$, consider also the last (northwesternmost) white point $p'_{t+1}$ on $x_{t+1}$'s jump line that is still southeast of $Q_{i+1}$. At least one must exist because the jump line crosses $Q_{i+1}$ vertically, by the \emph{Phase~2 Postcondition}.  This $p'_{t+1}$ must be west of $p_t$, as otherwise the jump line segment extending north from from it would cross $p_t$'s jump line.  It follows that $p_{t+1}$ must exist and be southeast of $Q_{i+1}$.  It remains to check that $p_{t+1}$ is indeed south of $p_t$. But if this were not true then bead $x_{t+1}$ would have slid north of $p_t$ prior to sliding west of it --- impossible, as again it would imply $x_{t+1}$'s jump line crossing $x_{t}$'s.  This completes the induction, and the analysis of Phase~3.

    \paragraph{Phase 4.}  Here we need to show that we can draw a new set of curves through the final (nonempty) chains $J_1, \dots, J_c$ such that condition~\eqref{eqn:first-chain-condition} is satisfied.  This is rather straightforward.  As shown in Phase~3, each final chain~$J_i$ is confined to a region between the old curves $Q_{i}$ and $Q_{i+1}$.  Thus there is no difficulty in drawing new nonintersecting curves through these chains, also confined within the regions, that pass through all the beads. Finally, as the ``new first curve'' is completely to the northwest of the ``old first curve''~$Q_1$, the fact that all inadmissible white points are southeast of it follows from the \emph{Key Claim}.
\end{proof}

\begin{proof}[Proof of Lemma~\ref{lem:catan-init}]
    Suppose we are given initial chains of beads $I_1, \dots, I_c$.  As the beads on each chain are increasing, southwest-to-northeast, there is no difficulty in drawing a curve through each chain.  The only catch, for the purposes of getting a ``set of curves'', is that the curves might intersect.  (We can assume by perturbation, though, that two curves never intersect at a bead.)  To correct the intersections, consider any two curves $Q_i$ and $Q_j$ that intersect.  Redefine these curves as follows: take all the ``lower'' (southeastern) segments and call that one new curve, and take all the ``upper'' (northwestern) segments and call that another new curve.  All of the beads are still on curves, and thus can still be repartitioned into~$c$ chains.  The resulting new curves still technically have points of contact but do not essentially cross each other; by a perturbation we can slightly pull apart the points of contact to make sure they become truly nonintersecting.

    (An alternative to this proof is the following:  For our overall proof of Theorem~\ref{thm:John-conjecture} we may as well assume that the initial set of~$c$ disjoint sequences in~$b$ is of maximum possible total length.  In this case, Wernisch's maximum $c$-chain algorithm~\cite[Theorem~21]{Wer94} provides a set of increasing, nonintersecting ``chain curves'', together with a maximizing set of $c$~chains that are in the associated ``regions''. This makes it easy to draw a nonintersecting set of curves through the chains, as in Phase~4 above.)
\end{proof}

\bibliographystyle{alpha}
\bibliography{tomography}

\newcommand{\etalchar}[1]{$^{#1}$}
\begin{thebibliography}{HMR{\etalchar{+}}10}

\bibitem[ANSV08]{ANSV08}
Koenraad Audenaert, Michael Nussbaum, Arleta Szko{\l}a, and Frank Verstraete.
\newblock Asymptotic error rates in quantum hypothesis testing.
\newblock {\em Communications in Mathematical Physics}, 279(1):251--283, 2008.

\bibitem[ARS88]{ARS88}
Robert Alicki, S{\l}awomir Rudnicki, and S{\l}awomir Sadowski.
\newblock Symmetry properties of product states for the system of {$N$}
  $n$-level atoms.
\newblock {\em Journal of mathematical physics}, 29(5):1158--1162, 1988.

\bibitem[BAH{\etalchar{+}}16]{BAH+16}
Michael Beverland, Gorjan Alagic, Jeongwan Haah, Gretchen Campbell, Ana~Maria
  Rey, and Alexey Gorshhkov.
\newblock Implementing a quantum algorithm for spectrum estimation with
  alkaline earth atoms.
\newblock In {\em 19th Conference on Quantum Information Processing}, 2016.
\newblock QIP 2016.

\bibitem[BDJ99]{BDJ99}
Jinho Baik, Percy Deift, and Kurt Johansson.
\newblock On the distribution of the length of the longest increasing
  subsequence of random permutations.
\newblock {\em Journal of the American Mathematical Society}, 12(4):1119--1178,
  1999.

\bibitem[Bia01]{Bia01}
Philippe Biane.
\newblock Approximate factorization and concentration for characters of
  symmetric groups.
\newblock {\em International Mathematics Research Notices}, 2001(4):179--192,
  2001.

\bibitem[BL12]{BL12}
Nayantara Bhatnagar and Nathan Linial.
\newblock On the {L}ipschitz constant of the {RSK} correspondence.
\newblock {\em Journal of Combinatorial Theory, Series A}, 119(1):63--82, 2012.

\bibitem[BMW16]{BMW16}
Mohammad Bavarian, Saeed Mehraban, and John Wright.
\newblock Personal communication, 2016.

\bibitem[BOO00]{BOO00}
Alexei Borodin, Andrei Okounkov, and Grigori Olshanski.
\newblock Asymptotics of plancherel measures for symmetric groups.
\newblock {\em Journal of the American Mathematical Society}, 13(3):481--515,
  2000.

\bibitem[Buf12]{Buf12}
Alexey Bufetov.
\newblock A central limit theorem for extremal characters of the infinite
  symmetric group.
\newblock {\em Functional Analysis and Its Applications}, 46(2):83--93, 2012.

\bibitem[CM06]{CM06}
Matthias Christandl and Graeme Mitchison.
\newblock The spectra of quantum states and the {K}ronecker coefficients of the
  symmetric group.
\newblock {\em Communications in mathematical physics}, 261(3):789--797, 2006.

\bibitem[FMN13]{FMN13}
Valentin F{\'e}ray, Pierre-Lo{\"\i}c M{\'e}liot, and Ashkan Nikeghbali.
\newblock Mod-$\phi$ convergence {I}: {N}ormality zones and precise deviations.
\newblock Technical report, arXiv:1304.2934, 2013.

\bibitem[Ful97]{Ful97}
William Fulton.
\newblock {\em Young tableaux: with applications to representation theory and
  geometry}.
\newblock Cambridge University Press, 1997.

\bibitem[Gre74]{Gre74}
Curtis Greene.
\newblock An extension of {S}chensted's theorem.
\newblock {\em Advances in Mathematics}, 14:254--265, 1974.

\bibitem[HHJ{\etalchar{+}}16]{HHJ+16}
Jeongwan Haah, Aram Harrow, Zhengfeng Ji, Xiaodi Wu, and Nengkun Yu.
\newblock Sample-optimal tomography of quantum states.
\newblock In {\em Proceedings of the 48th Annual ACM Symposium on Theory of
  Computing}, August 2016.
\newblock Preprint.

\bibitem[HM02]{HM02}
Masahito Hayashi and Keiji Matsumoto.
\newblock Quantum universal variable-length source coding.
\newblock {\em Physical Review A}, 66(2):022311, 2002.

\bibitem[HMR{\etalchar{+}}10]{HMR+10}
Sean Hallgren, Cristopher Moore, Martin R{\"o}tteler, Alexander Russell, and
  Pranab Sen.
\newblock Limitations of quantum coset states for graph isomorphism.
\newblock {\em Journal of the ACM (JACM)}, 57(6):34, 2010.

\bibitem[HRTS03]{HRT03}
Sean Hallgren, Alexander Russell, and Amnon Ta-Shma.
\newblock The hidden subgroup problem and quantum computation using group
  representations.
\newblock {\em SIAM Journal on Computing}, 32(4):916--934, 2003.

\bibitem[HX13]{HX13}
Christian Houdr\'{e} and Hua Xu.
\newblock On the limiting shape of {Y}oung diagrams associated with
  inhomogeneous random words.
\newblock In {\em High Dimensional Probability VI}, volume~66 of {\em Progress
  in Probability}, pages 277--302. Springer Basel, 2013.

\bibitem[IO02]{IO02}
Vladimir Ivanov and Grigori Olshanski.
\newblock Kerov's central limit theorem for the {P}lancherel measure on {Y}oung
  diagrams.
\newblock In {\em Symmetric functions 2001: surveys of developments and
  perspectives}, pages 93--151. Springer, 2002.

\bibitem[ITW01]{ITW01}
Alexander Its, Craig Tracy, and Harold Widom.
\newblock Random words, {T}oeplitz determinants and integrable systems {I}.
\newblock In {\em Random Matrices and their Applications}, pages 245--258.
  Cambridge University Press, 2001.

\bibitem[Joh01]{Joh01}
Kurt Johansson.
\newblock Discrete orthogonal polynomial ensembles and the {P}lancherel
  measure.
\newblock {\em Annals of Mathematics}, 153(1):259--296, 2001.

\bibitem[Key06]{Key06}
Michael Keyl.
\newblock Quantum state estimation and large deviations.
\newblock {\em Reviews in Mathematical Physics}, 18(01):19--60, 2006.

\bibitem[Kup02]{Kup02}
Greg Kuperberg.
\newblock Random words, quantum statistics, central limits, random matrices.
\newblock {\em Methods and Applications of Analysis}, 9(1):99--118, 2002.

\bibitem[KW01]{KW01}
Michael Keyl and Reinhard Werner.
\newblock Estimating the spectrum of a density operator.
\newblock {\em Physical Review A}, 64(5):052311, 2001.

\bibitem[LS77]{LS77}
Benjamin Logan and Larry Shepp.
\newblock A variational problem for random {Y}oung tableaux.
\newblock {\em Advances in Mathematics}, 26(2):206--222, 1977.

\bibitem[LZ04]{LZ04}
Shunlong Luo and Qiang Zhang.
\newblock Informational distance on quantum-state space.
\newblock {\em Physical Review A}, 69(3):032106, 2004.

\bibitem[M{\'{e}}l10]{Mel10a}
Pierre-Lo{\"{i}}c M{\'{e}}liot.
\newblock Kerov's central limit theorem for {S}chur-{W}eyl measures of
  parameter 1/2.
\newblock Technical report, arXiv:1009.4034, 2010.

\bibitem[M{\'e}l12]{Mel12}
Pierre-Lo{\"\i}c M{\'e}liot.
\newblock Fluctuations of central measures on partitions.
\newblock In {\em 24th International Conference on Formal Power Series and
  Algebraic Combinatorics}, pages 385--396, 2012.

\bibitem[MM15]{MM15}
Paulina Marian and Tudor Marian.
\newblock Hellinger distance as a measure of {G}aussian discord.
\newblock {\em Journal of Physics A: Mathematical and Theoretical},
  48(11):115301, 2015.

\bibitem[MRS08]{MRS08}
Cristopher Moore, Alexander Russell, and Leonard Schulman.
\newblock The symmetric group defies strong {F}ourier sampling.
\newblock {\em SIAM Journal on Computing}, 37(6):1842--1864, 2008.

\bibitem[OW15]{OW15}
Ryan O'Donnell and John Wright.
\newblock Quantum spectrum testing.
\newblock In {\em Proceedings of the 47th Annual ACM Symposium on Theory of
  Computing}, 2015.

\bibitem[OW16]{OW16}
Ryan O'Donnell and John Wright.
\newblock Efficient quantum tomography.
\newblock In {\em Proceedings of the 48th Annual ACM Symposium on Theory of
  Computing}, 2016.

\bibitem[Pil90]{Pil90}
Shaiy Pilpel.
\newblock Descending subsequences of random permutations.
\newblock {\em Journal of Combinatorial Theory, Series A}, 53(1):96--116, 1990.

\bibitem[Rom14]{Rom14}
Dan Romik.
\newblock {\em The surprising mathematics of longest increasing subsequences}.
\newblock Cambridge University Press, 2014.

\bibitem[Sag01]{Sag01}
Bruce~E Sagan.
\newblock {\em The symmetric group: representations, combinatorial algorithms,
  and symmetric functions}.
\newblock Springer, 2001.

\bibitem[Sch61]{Sch61}
Craige Schensted.
\newblock Longest increasing and decreasing subsequences.
\newblock {\em Canadian Journal of Mathematics}, 13(2):179--191, 1961.

\bibitem[TW01]{TW01}
Craig Tracy and Harold Widom.
\newblock On the distributions of the lengths of the longest monotone
  subsequences in random words.
\newblock {\em Probability Theory and Related Fields}, 119(3):350--380, 2001.

\bibitem[Ula61]{Ula61}
Stanislaw Ulam.
\newblock Monte {C}arlo calculations in problems of mathematical physics.
\newblock {\em Modern Mathematics for the Engineers}, pages 261--281, 1961.

\bibitem[Vie81]{Vie81}
G{\'e}rard Viennot.
\newblock \'{E}quidistribution des permutations ayant une forme donn\'ee selon
  les avances et coavances.
\newblock {\em Journal of Combinatorial Theory. Series A}, 31(1):43--55, 1981.

\bibitem[VK77]{VK77}
Anatoly Vershik and Sergei Kerov.
\newblock Asymptotic behavior of the {P}lancherel measure of the symmetric
  group and the limit form of {Y}oung tableaux.
\newblock {\em Soviet Mathematics Doklady}, 18:118--121, 1977.

\bibitem[VK81]{VK81}
Anatoly Vershik and Sergei Kerov.
\newblock Asymptotic theory of characters of the symmetric group.
\newblock {\em Functional analysis and its applications}, 15(4):246--255, 1981.

\bibitem[VK85]{VK85}
Anatoly Vershik and Sergei Kerov.
\newblock Asymptotic of the largest and the typical dimensions of irreducible
  representations of a symmetric group.
\newblock {\em Functional Analysis and its Applications}, 19(1):21--31, 1985.

\bibitem[Wer94]{Wer94}
Lorenz Wernisch.
\newblock {\em Dominance relation on point sets and aligned rectangles}.
\newblock PhD thesis, Free University Berlin, 1994.

\bibitem[WY16]{WY16}
Yihong Wu and Pengkun Yang.
\newblock Minimax rates of entropy estimation on large alphabets via best
  polynomial approximation.
\newblock {\em IEEE Transactions on Information Theory}, 62(6):3702--3720,
  2016.

\end{thebibliography}

\end{document}